\documentclass[12pt]{article}
\usepackage{amsfonts}
\usepackage[dvips]{color}
\usepackage{xcolor}

\usepackage{amssymb}
\usepackage{graphicx}
\usepackage{amsmath}
\usepackage{amsthm}
\usepackage{bbm}
\usepackage{mathrsfs}
\usepackage{natbib}
\usepackage{rotating}
\usepackage{afterpage}
\usepackage{cancel}
\usepackage{mathtools}
\usepackage{setspace}
\usepackage{multirow}
\usepackage{float}
\usepackage{booktabs}
\usepackage{tabularx}
\usepackage{threeparttable}
\usepackage{makecell}
\usepackage{enumitem}
\setlist[itemize,2]{left=0pt}

\newcommand{\be}{\begin{equation}}
\newcommand{\ee}{\end{equation}}
\newcommand{\bea}{\begin{eqnarray}}
\newcommand{\eea}{\end{eqnarray}}
\newcommand{\bean}{\begin{eqnarray*}}
\newcommand{\eean}{\end{eqnarray*}}

\usepackage{url}
\usepackage[margin=1in]{geometry}

\newtheorem{theorem}{Theorem}
\newtheorem*{theorem*}{Theorem}

\newtheorem{assumption}{Assumption}

\newtheorem{corollary}{Corollary}

\newtheorem{example}{Example}

\newtheorem{lemma}{Lemma}

\newtheorem{proposition}{Proposition}
\newtheorem{remark}{Remark}

\newcommand{\argmax}{\operatorname*{argmax}}

\newcommand{\diag}{\operatorname*{diag}}

\newcommand{\tr}{\operatorname*{tr}}

\newcommand{\E}{\mathbb E}
\newcommand{\V}{\mathbb V}
\newcommand{\Cov}{\operatorname{Cov}}
\newcommand{\Var}{\operatorname{Var}}
\newcommand{\R}{\mathbb R}

\newcommand{\Reg}{\operatorname{Reg}}

\usepackage{booktabs}
\usepackage{indentfirst}
\usepackage{hyperref}
\usepackage{caption}

\newcommand{\runin}[1]{\medskip\noindent\textbf{#1.} }

\begin{document}

\title{\vspace*{-0.5 in}  \textbf{Using Prior Studies to Design Experiments: An Empirical Bayes Approach}}

\author{Zhiheng You\thanks{
		\setlength{\baselineskip}{4mm}  Correspondence: Z. You: Department of Economics, University of Pennsylvania, Philadelphia, PA 19104-6297. Email: zhyou@sas.upenn.edu.} \\
	{\em \large University of Pennsylvania}}

\date{\large \today}
\maketitle

\begin{abstract}
We develop an empirical Bayes framework for experimental design that leverages information from prior related studies. When a researcher has access to estimates from previous studies on similar parameters, they can use empirical Bayes to estimate an informative prior over the parameter of interest in the new study. We show how this prior can be incorporated into a decision-theoretic experimental design framework to choose optimal design. The approach is illustrated via propensity score designs in stratified randomized experiments. Our theoretical results show that the empirical Bayes design achieves oracle-optimal performance as the number of prior studies grows, and characterize the rate at which regret vanishes. To illustrate the approach, we present two empirical applications---oncology drug trials and the Tennessee Project STAR experiment. Our framework connects the Bayesian meta-analysis literature to experimental design and provides practical guidance for researchers seeking to design more efficient experiments.
\end{abstract}

\medskip
\noindent \textbf{Keywords:} Experimental design, meta-analysis, empirical bayes, statistical decision theory, stratified randomized experiments 

\medskip
\noindent \textbf{JEL Classification:} C11, C93, C90, C44

\newpage

\section{Introduction}
\label{sec:introduction}

Researchers often design new randomized controlled trials (RCTs) in settings where related evidence already exists.
That evidence may come from earlier RCTs, quasi-experiments, or observational studies that estimate conceptually similar parameters in different contexts---such as labor supply elasticities, returns to schooling, or treatment effects of related interventions.
How can results from prior studies be used ex ante to make a new experiment more informative for the decisions it is meant to support?
This paper develops an empirical Bayes (EB) approach to optimal experimental design that turns collections of prior-study estimates into a data-driven prior and then optimizes the new design under that prior.

The core idea is to treat study-level parameters as draws from an unknown cross-study distribution $G$.
A collection of published estimates---often only point estimates and standard errors---contains information about $G$, and hence about the parameter in the next setting.
Empirical Bayes provides a natural way to estimate $G$ and obtain a predictive prior for the new study.
The designer can then choose sampling and assignment rules to maximize expected downstream performance under a specified objective.
For example, a planner running a stratified RCT could use prior subgroup-effect estimates to shift experimental effort toward groups where external evidence indicates higher uncertainty or larger expected gains.

Our starting point is the classical Bayesian experimental design framework of \cite{Lindley1972}, which formulates design as a statistical decision problem: the researcher first chooses a design; then observes experimental data generated under the chosen design; and finally takes an action (such as an estimator, a treatment rule, or a policy recommendation) to maximize a welfare function that depends on both the action and the unknown parameter. A design is valuable when it tends to generate data that lead to better downstream actions given uncertainty about the study parameter.

Implementing this framework hinges on the prior distribution that represents that uncertainty.
In applied work, researchers often want to avoid subjective priors, yet they also do not want to ignore the information embedded in a growing body of related studies.
We therefore learn the prior from prior-study results.
Rather than specifying a hyperprior over $G$ as in fully Bayesian hierarchical meta-analysis, we estimate $G$ from prior evidence and treat the resulting $\hat G_n$ as the prior for design and analysis of the new experiment.
We consider two practical routes: a parametric Gaussian specification and a nonparametric prior estimator given by the nonparametric maximum likelihood estimator (NPMLE) for Gaussian location mixtures \citep{KieferWolfowitz1956}.

To make the approach concrete, we study propensity-score designs for stratified randomized experiments in Section~\ref{sec:eb_pscore_design}.
A key motivation is that prior studies often report treatment effect estimates of common subgroups---such as race, income, or gender---and the new experiment is designed as a stratified RCT that adopts those subgroups as its randomization strata.
The experimenter chooses stratum-specific treatment propensities subject to feasibility constraints (e.g., a fixed overall budget), creating tradeoffs across strata.
We characterize EB-optimal propensity designs under three canonical objectives: (i) minimizing Bayes risk for estimating a target parameter such as the average treatment effect, (ii) maximizing in-experiment welfare when treatment is beneficial \citep{CariaEtAl2024}, and (iii) maximizing the expected value of a post-experiment policy adoption decision \citep{KasySautmann2021}.
Across objectives, the EB prior affects design through different channels---prior variances under estimation objectives, prior means under welfare objectives, and both means and variances under policy objectives---yielding distinct assignment rules.

We then provide theoretical guarantees for the plug-in EB design in Section~\ref{sec:theory}.
We first establish a finite-sample oracle inequality: the design regret relative to the oracle that knows $G$ is bounded by twice the uniform error in the value functional induced by replacing $G$ with $\hat G_n$.
Regret consistency then follows from the weaker, verifiable primitive that $\hat G_n$ converges weakly to $G$: under mild continuity and envelop conditions, weak convergence of the prior estimator implies that the uniform value error $\Delta_n\to 0$ in probability, which in turn drives regret to zero.
For a broad class of decision problems in which welfare is affine in the payoff-relevant state, we derive first-order regret rates that mirror the accuracy of the prior estimator: $\Reg_n=O_{p}(n^{-1/2})$ under Gaussian EB and $\Reg_n=O_{p}((\log n)^{(d+\max(d/2,4))/2}/\sqrt{n})$ under NPMLE.
Under additional smoothness and curvature conditions on the design objective, regret becomes a second-order effect and the corresponding rates improve to $O_{p}(n^{-1})$ and $O_{p}((\log n)^{d+\max(d/2,4)}/n)$, respectively. We also characterize design comparisons in the Gaussian experiment, where one observes a normal observation with design-dependent noise covariance. In particular, we show that the prior is irrelevant for design choice only in the genuinely univariate parameter case; with multivariate parameters, the optimal design generally depends on the prior, and this is precisely where empirical Bayes can improve design.

In Section~\ref{sec:emp.app}, we present two empirical applications that illustrate the practical implementation of the propensity-score design developed in Section~\ref{sec:eb_pscore_design}. The first application uses oncology drug trial results from an online clinical-trials database to construct an EB prior for the design of a new immunotherapy trial. Here, the main practical challenge is assembling a usable prior-study dataset from heterogeneous clinical studies with varying estimands and subgroup definitions. The second application treats each site in the Tennessee Project STAR class-size experiment as a prior study, and uses the cross-site distribution of stratum-level treatment effects to design a stratified RCT for a new site. This application highlights how different objectives lead to qualitatively different EB-optimal designs, as each objective channels prior information through a distinct feature of the prior distribution.

This paper relates to several strands of literature. First, the paper relates to the Bayesian optimal experimental design literature, which formalizes design as expected utility maximization under a prior \citep{Lindley1972, ChalonerVerdinelli1995,RainforthEtAl2024}. Most of this work takes the prior as given and focuses on approximating and optimizing expected utilities in complex models. Our focus is complementary: we study how the prior used for design can be learned from a growing archive of prior-study estimates via empirical Bayes, and we provide regret guarantees that quantify the impact of prior estimation on design performance.

Second, it connects to Bayesian evidence synthesis and the construction of informative priors from historical studies.
Bayesian meta-analysis uses cross-study heterogeneity to form predictive distributions for effects in new settings \citep{SuttonAbrams2001,Meager2019,Meager2022,CrostaEtAl2024,SchorfheideYou2025}, and the clinical-trials literature develops explicit ``prior-from-history'' tools such as meta-analytic-predictive (MAP) priors and robust MAP mixtures \citep{NeuenschwanderEtAl2010,SchmidliEtAl2014}, as well as commensurate and power-prior formulations \citep{HobbsEtAl2011,IbrahimChen2000,IbrahimEtAl2015}.
\citet{LinEtAl2022} study how hierarchical meta-analysis can inform phase~I trial design.
Relative to this literature, our contribution is to (i) embed the predictive distribution from prior studies directly inside an explicit optimal design problem; (ii) use empirical Bayes to estimate the cross-study distribution, avoiding hyperprior specification while still borrowing strength; and (iii) analyze a general objective class that includes not only estimation accuracy and testing but also welfare- and policy-oriented criteria.

Third, the paper is related to work on incorporating prior information into experimental practice.
\citet{IacovoneEtAl2025} illustrate how informative priors---elicited from experts and stakeholders---can sharpen Bayesian impact evaluation in small-sample field experiments.
\citet{FinanPouzo2026} develop a Bayesian model-averaging approach that combines multiple prior studies with uncertain external validity to improve learning about treatment effects, and \citet{FinanPouzo2024} study how multiple prior sources can be incorporated into the design of adaptive experiments.
We differ by focusing on the design of a new one-shot experiment when the researcher has access to a growing collection of prior-study estimates, and by providing oracle-style regret guarantees for the resulting empirical Bayes design.

Fourth, the paper relates to the empirical Bayes literature.
Foundational work includes \citet{Robbins1964} and \citet{KieferWolfowitz1956}, with modern results on rates and efficiency \citep{JiangZhang2009,SahaGuntuboyina2020}.
Empirical Bayes and related shrinkage ideas are widely used in economics to improve precision in high-dimensional estimation problems (for example, in teacher value-added and neighborhood effect measurement; see, e.g., \citealp{ChettyFriedmanRockoff2014,ChettyHendren2018}, and in experimental settings with many treatment effects to estimate; see \citealp{AzevedoEtAl2019, AdusumilliEtAl2025}). Unlike the classical compound-decision view, we use empirical Bayes to learn the prior entering the Bayesian experimental design objective. We then study how plugging in this estimated prior affects the performance of the resulting design choice, via regret relative to the oracle design that knows the true prior.

The remainder of the paper is organized as follows.
Section~\ref{sec:setup} presents the general decision-theoretic framework.
Section~\ref{sec:prior_estimation} discusses how to construct the EB prior $\hat G_n$ from prior-study summaries under parametric Gaussian and nonparametric NPMLE approaches.
Section~\ref{sec:eb_pscore_design} studies EB propensity-score designs in stratified randomized experiments under several canonical objectives.
Section~\ref{sec:theory} develops regret bounds and rates for the EB approach, and characterizes the comparison of Gaussian experiments.
Section~\ref{sec:emp.app} presents two empirical applications on drug-trials and Project STAR.

\section{General Setup}
\label{sec:setup}

This section sets up (i) how prior studies provide information about study-level parameters, (ii) a decision-theoretic criterion for evaluating designs, and (iii) an empirical Bayes procedure that estimates the cross-study distribution and then chooses a design optimal under the estimated prior.

\runin{Prior studies and a cross-study distribution}
A researcher observes results from $n$ prior independent studies, indexed by $i=1,\ldots,n$, and wishes to design a new experiment (study $n{+}1$) that targets a parameter $\theta_{n+1}\in\R^d$. Prior study $i$ reports an estimate $\hat\theta_i$ and a consistent covariance estimate $\Sigma_i\in\R^{d\times d}$, which we treat as known and condition on. We use the standard asymptotic normal approximation
\[
\hat{\theta}_i \mid \theta_i, \Sigma_i \ \sim \ \mathcal N(\theta_i,\Sigma_i).
\]
Let $\mathcal D_n=\{(\hat\theta_i,\Sigma_i)\}_{i=1}^n$ denote the historical evidence.

To connect studies, we impose exchangeability: $\theta_i \stackrel{iid}{\sim} G$ for $i=1,\ldots,n{+}1$. This assumption captures the idea that studies estimate conceptually similar but context-specific parameters, with heterogeneity summarized by $G$. A natural extension allows conditional exchangeability given study features $x_i$, i.e.\ $\theta_i\mid x_i\sim G_{x_i}$; for clarity we work with the unconditional case.

A leading parametric specification is the parametric normal model $\theta_i \mid \tau,\nu \stackrel{iid}{\sim}\mathcal N(\tau,\nu)$. Here, $\nu$ captures the degree of external validity \citep{SchorfheideYou2025}: when $\nu$ is small, historical studies are tightly clustered and are informative about the next study; when $\nu$ is large, learning from history is weak and the optimal design reverts toward a prior-free benchmark.

\runin{Design as a decision problem}
Following \cite{Lindley1972}, design is chosen before data are observed to maximize ex-ante expected welfare. The timing of the researcher is: (1) observe previous evidence $\mathcal D_n$ and choose design $\eta\in\mathcal H$; (2) observe experimental data $Y\in\mathcal Y$ generated under likelihood $p(y\mid\theta_{n+1};\eta)$; (3) choose an action $a\in\mathcal A$ based on $(\mathcal D_n,Y)$ to maximize welfare $W(a,\theta_{n+1})$. The design affects what the researcher learns about $\theta_{n+1}$, not $\theta_{n+1}$ itself. We assume that  the new data $Y$ are independent of past evidence $\mathcal D_n$ conditional on $\theta_{n+1}$. 

Given a prior $\Pi(\cdot\mid\mathcal D_n)$ over $\theta_{n+1}$, the ex-ante value of design $\eta$ is
\begin{equation}
\begin{aligned}
U(\eta)
&=
\mathbb E\!\left[
\sup_{a\in\mathcal A}\,
\mathbb E\!\left[ W(a,\theta_{n+1}) \,\middle|\, \mathcal D_n,\,Y \right]
\,\middle|\, \mathcal D_n;\eta
\right],\\
&=
\int_{\mathcal Y}
\left[
\sup_{a\in\mathcal A}
\int_{\Theta} W(a,\theta)\,p(y\mid \theta;\eta)\,\Pi(d\theta\mid \mathcal D_n)
\right]\,\lambda(dy),
\end{aligned}
\label{eq:dm.opt}
\end{equation}
where $\lambda$ is a dominating measure on $\mathcal Y$, and the second equality follows from the conditional independence assumption. Intuitively, a design is valuable if it tends to produce data that move posterior beliefs in directions that matter for downstream decisions. The exact form of $\eta$ is application-specific (treatment assignment rules, sampling algorithm, etc.); its only role here is to determine $p(\cdot\mid\theta;\eta)$.

\runin{Empirical Bayes prior construction and EB-optimal design}
A fully Bayesian approach would place a hyperprior on $G$ and update twice--first using evidence from previous studies and then using data from the new experiment. The EB approach instead estimates $G$ from $\mathcal D_n$ and plugs the estimated distribution into the design problem. Let $\hat G_n$ be an estimator of $G$ based on $\mathcal D_n$. The EB posterior after observing $Y=y$ is
\[
\hat\Pi_{\eta}(d\theta\mid \mathcal D_n,y)
:=
\frac{p(y\mid \theta;\eta)\,\hat G_n(d\theta)}
{\int_{\Theta} p(y\mid \vartheta;\eta)\,\hat G_n(d\vartheta)}.
\]
Define the plug-in EB design objective by replacing $\Pi(\cdot\mid\mathcal D_n)$ with $\hat G_n$ in \eqref{eq:dm.opt}:
\[
U_{\hat G_n}(\eta)
=
\int_{\mathcal Y}
\left[
\sup_{a\in\mathcal A}
\int_{\Theta} W(a,\theta)\,p(y\mid \theta;\eta)\,\hat G_n(d\theta)
\right]\lambda(dy).
\]
The EB-optimal design is any maximizer $\eta^{EB}\in\argmax_{\eta\in\mathcal H}U_{\hat G_n}(\eta)$.

\section{Estimating the Prior from Previous Studies}
\label{sec:prior_estimation}

The EB design approach treats the cross-study distribution $G$ as an object that can be learned from a collection of prior studies.
This section briefly discusses how $G$ can be estimated in practice from the study-level summaries that are typically available in applied work.
The goal is not to introduce new estimation methods, but rather to clarify the two common estimation approaches and their trade-offs.

\subsection{Gaussian Prior}

A widely-used working model in meta-analysis is the parametric Gaussian specification
\begin{equation}
\theta_i \ \stackrel{iid}{\sim}\ \mathcal N(\tau,V),
\qquad i=1,\dots,n,
\label{eq:gauss_re}
\end{equation}
combined with $\hat\theta_i\mid \theta_i \sim \mathcal N(\theta_i,\Sigma_i)$.
Integrating out $\theta_i$ yields the marginal model
\begin{equation}
\hat\theta_i \ \stackrel{approx}{\sim}\ \mathcal N\!\left(\tau,\ \Sigma_i+V\right).
\label{eq:gauss_marginal}
\end{equation}
The Gaussian EB estimator $(\hat\tau,\hat V)$ is typically defined as a maximizer of the profile marginal likelihood induced by~\eqref{eq:gauss_marginal}, possibly under structure on $V$ (diagonal, block-diagonal, or low-rank) to improve interpretability and stability.

A convenient feature of~\eqref{eq:gauss_marginal} is that, for fixed $V$, the mean $\tau$ has a closed-form generalized least squares (GLS) estimator:
\begin{equation}
\hat\tau(V)
=
\Bigg(\sum_{i=1}^n (\Sigma_i+V)^{-1}\Bigg)^{-1}
\Bigg(\sum_{i=1}^n (\Sigma_i+V)^{-1}\hat\theta_i\Bigg).
\label{eq:tau_gls_general}
\end{equation}
Plugging $\hat\tau(V)$ into the log marginal likelihood yields a profile objective in $V$:
\begin{equation}
\ell_n(V)
=
-\frac{1}{2}\sum_{i=1}^n
\left\{
\log\big|\Sigma_i+V\big|
+
\big(\hat\theta_i-\hat\tau(V)\big)'\big(\Sigma_i+V\big)^{-1}\big(\hat\theta_i-\hat\tau(V)\big)
\right\},
\label{eq:profile_ll_V}
\end{equation}
and $\hat V$ can be obtained by numerical maximization of~\eqref{eq:profile_ll_V}.
Given $\hat V$, the EB prior for the new study is $\hat G_n=\mathcal N(\hat\tau(\hat V),\hat V)$.

\subsection{Nonparametric Prior via NPMLE}

When the Gaussian working model is too restrictive, a natural alternative is to estimate $G$ nonparametrically.
Under $\hat\theta_i\mid\theta_i\sim \mathcal N(\theta_i,\Sigma_i)$, the marginal density of $\hat\theta_i$ is the Gaussian location mixture
$f_{G,i}(x)=\int \varphi_{\Sigma_i}(x-\vartheta)\,dG(\vartheta)$, where $\varphi_{\Sigma}$ denotes a mean-zero multivariate normal density with covariance $\Sigma$.
The Kiefer--Wolfowitz nonparametric MLE (NPMLE) estimates $G$ by maximizing the mixture likelihood over all probability measures on $\R^d$ \citep{KieferWolfowitz1956}:
\begin{equation}
\hat G_n \in \argmax_{G}\ \sum_{i=1}^n \log\!\left(\int \varphi_{\Sigma_i}\big(\hat\theta_i-\vartheta\big)\,dG(\vartheta)\right).
\label{eq:npmle_general}
\end{equation}
The optimization problem~\eqref{eq:npmle_general} is infinite-dimensional but convex in $G$.

\runin{Computation in multidimensional $\theta$}
In moderate dimension, a standard approach is to approximate~\eqref{eq:npmle_general} by restricting $G$ to a finite grid $\{\vartheta_k\}_{k\le K}\subset\R^d$ and optimizing over weights $(w_k)_{k\le K}$ on that grid:
\[
\max_{w\in\Delta_K}\ \sum_{i=1}^n \log\!\Big(\sum_{k=1}^K w_k\,\varphi_{\Sigma_i}(\hat\theta_i-\vartheta_k)\Big),
\qquad \Delta_K:=\{w\ge 0:\sum_k w_k=1\}.
\]
This yields a finite-dimensional convex program (often solved by interior-point methods or EM-type fixed point iterations) and is the basis of widely-used implementations in the univariate case \citep{KoenkerMizera2014, KoenkerGu2017}.
For multivariate and heteroscedastic Gaussian location mixtures, \citet{FengDicker2018} propose an approximate NPMLE based on convex optimization, and \citet{SoloffGuntuboyinaSen2025} develop tractable approximations with oracle properties for the resulting empirical Bayes posterior means.
Complementary theoretical results for the multivariate Gaussian location-mixture NPMLE are developed by \citet{SahaGuntuboyina2020}.
Scalable computation for large-scale mixture NPMLEs is discussed by \citet{ZhangCuiSenToh2024}.

\runin{Tradeoffs}
Relative to the parametric Gaussian specification, the NPMLE allows heavy tails, multimodality, and other shapes in $G$ that are empirically common in cross-study heterogeneity.
The tradeoff is that nonparametric estimation is computationally harder, and the resulting regret rates in Section~\ref{sec:rate_n} are typically slower than parametric rates.

\section{Propensity Score Designs in Stratified Randomized Experiments}
\label{sec:eb_pscore_design}

A recurring feature of applied empirical work is that prior studies report heterogeneous treatment effects broken down by a common set of subgroups---defined by characteristics such as race, income, gender, or age.
When a researcher plans a new experiment in a related setting, a natural design choice is to use exactly those reported subgroups as randomization strata, running a stratified RCT.
This section specializes the EB framework to that setting.
The central design problem is that resource constraints force tradeoffs across strata: increasing precision in one stratum mechanically reduces precision elsewhere. We show that an EB prior constructed from previous studies can materially change optimal treatment assignment.

\subsection{Design Class and Design-Induced Likelihood}
\label{subsec:pscore_design_class}

Let $(Y(0),Y(1),X)$ denote potential outcomes and pre-treatment covariates in the target population.
The experimenter observes $X$ prior to assignment and chooses a covariate-dependent propensity score
\[
e(x)\equiv \Pr(D=1\mid X=x), \qquad e:\mathcal X \to [\underline e,\,1-\underline e],
\]
where $\underline e\in(0,1/2)$ is an overlap bound.
Conditional on $X$, treatment is assigned independently as $D\mid X\sim \mathrm{Bernoulli}(e(X))$, and the realized outcome is
$Y = DY(1)+(1-D)Y(0)$.

\runin{Design class} Stratified randomization is implemented through a fixed partition of the covariate space that is shared with the prior studies.
Let $S=s(X)\in\{1,\dots,G\}$ denote the pre-specified stratification and let $\pi_g := \Pr(S=g)$.
A propensity-score design is a vector $e=(e_1,\dots,e_G)'\in[\underline e,1-\underline e]^G$ such that
\[
\Pr(D=1\mid S=g)=e_g,\qquad g=1,\dots,G.
\]
We impose a standard resource constraint
\begin{equation}
\sum_{g=1}^G \pi_g c_g e_g \le B,\qquad c_g\ge 0,
\label{eq:pscore_cost_constraint}
\end{equation}
and we write $\mathcal H\subset[\underline e,1-\underline e]^G$ for the resulting feasible set.

\runin{Parameter of interest} Define the design-relevant parameter as the stratum-specific average treatment effects (ATE)
\[
\tau_g := \E\!\left[Y(1)-Y(0)\mid S=g\right],\qquad g=1,\dots,G,
\]
and collect them in $\theta:=(\tau_1,\dots,\tau_G)'\in\R^G$.
From $\theta$ one can recover common estimands in applied work: the population ATE is $\tau^{ATE}=\sum_g \pi_g \tau_g=\pi'\theta$, while subgroup effects correspond to selecting coordinates of $\theta$.

\runin{Gaussian likelihood induced by the design}
Let $\hat\theta=(\hat\tau_1,\dots,\hat\tau_G)'$ denote the within-stratum difference-in-means estimators computed from the new experiment.
We adopt the following normal approximation, which is standard in large samples and convenient for design analysis.

\begin{assumption}[Gaussian sampling model for design]
\label{ass:gaussian_sampling}
Conditional on $\theta$ and a feasible design $e\in\mathcal H$,
\begin{equation}
\hat\theta \mid \theta;\,e \ \stackrel{approx}{\sim}\ \mathcal N\!\left(\theta,\ \Sigma(e)\right),
\qquad
\Sigma(e)=\mathrm{diag}\!\left(s_1^2(e_1),\dots,s_G^2(e_G)\right),
\label{eq:newstudy_sampling}
\end{equation}
where $s_g^2(e_g)$ is the (asymptotic) sampling variance in stratum $g$ as a function of $e_g$.
\end{assumption}

Under within-stratum difference-in-means, a canonical expression is
\begin{equation}
s_g^2(e_g)
=
\frac{\sigma_{1g}^2}{N_g e_g}
+
\frac{\sigma_{0g}^2}{N_g (1-e_g)},
\qquad
\sigma_{dg}^2 := \mathbb V\!\left(Y(d)\mid S=g\right),
\label{eq:sg2_main}
\end{equation}
with $N_g$ the stratum sample size (typically $N_g\approx N\pi_g$).
The nuisance quantities $\sigma_{dg}^2$ can be estimated from pilot data, administrative sources, or related experiments.
In what follows we treat them as fixed for design purposes; extending the EB prior to include them is conceptually straightforward but notationally burdensome.

\subsection{Prior-Study Information}
\label{subsec:eb_prior_formation}

The EB prior summarizes what related studies reveal about treatment effect heterogeneity in the new setting.
For the design analysis below, it is convenient to work with a harmonized parameterization: each prior study delivers an estimate of the same stratum-effect vector.

Formally, let $\theta_i\in\R^G$ denote the study-$i$ analogue of the stratum-effect vector (defined using the same stratification map $s(\cdot)$ after harmonization), and suppose study $i$ reports $(\hat\theta_i,\Sigma_i).$ We maintain the same large-sample approximation and exchangeability assumptions as before. Denote the historical evidence as
$\mathcal D_n:=\{(\hat\theta_i,\Sigma_i)\}_{i=1}^n$, and the empirical Bayes prior for the new study is
$
Q^{EB}:=\hat G_n,
$
where $\hat G_n$ is constructed from $\mathcal D_n$ using the Gaussian or NPMLE procedures in Section~\ref{sec:prior_estimation}.

In practice, prior studies rarely report a common, harmonized vector of subgroup effects.
Instead they often report noisy estimates of \emph{linear functionals} of $\theta_i$, which can be written in the generic form
\[
\hat\psi_i \mid \theta_i \ \stackrel{approx}{\sim}\ \mathcal N(R_i\theta_i,\Sigma_i),
\]
for a known reporting operator $R_i$ and reported covariance $\Sigma_i$.
This covers the main reporting regimes in applied work:
\begin{enumerate}[leftmargin=*]
\item ATE-only reporting: a study reports only an overall ATE (a single linear functional).
\item Partial subgroup tables: a study reports effects for only a subset of strata (selection matrices).
\item Heterogeneity regressions: a study reports coefficients from a heterogeneity regression (linear maps from stratum means to regression coefficients).
\item Microdata available: one can impose a harmonized definition of $S$ and compute $(\hat\theta_i,\Sigma_i)$ directly under a common analysis plan.
\end{enumerate}
The analysis in the main text assumes the harmonized case to keep the design problem transparent.
The estimation strategies in Section~\ref{sec:prior_estimation} extend to the above setting by replacing each study's likelihood contribution with the corresponding density of $\hat\psi_i$ under the assumed mixing distribution.
For Gaussian random effects, the marginal model becomes $\hat\psi_i\sim \mathcal N(R_i\tau,\Sigma_i+R_i V R_i')$.
For the NPMLE, the mixture likelihood contribution becomes $\int \varphi_{\Sigma_i}(\hat\psi_i-R_i\vartheta)\,dG(\vartheta)$.

\subsection{EB Designs and No-Information Benchmarks}
\label{subsec:eb_design_definition}

For a given prior $Q$ on $\theta$ and a given objective, define the ex-ante value of a design $e\in\mathcal H$ as
\begin{equation}
U_Q(e)
:=
\E_{\theta\sim Q}\Big[\, \E_{\hat\theta\mid\theta;\,e}\big[\, \sup_{a\in\mathcal A} W(a,\theta)\,\big]\,\Big],
\label{eq:UQ_general}
\end{equation}
where $\mathcal A$ is the post-experiment action space and $W$ is the corresponding payoff.
An optimal design under $Q$ is $$e^*(Q)\in\arg\max_{e\in\mathcal H}U_Q(e),$$ and the optimal EB design is $e^{EB}\equiv e^*(Q^{EB}).$

To make the comparison ``EB versus no information'' exact, fix a reference prior $Q^0$ that does not depend on $\mathcal D_n$ (for example, a diffuse Gaussian prior, or a symmetric benchmark prior).
The corresponding no-information design is thus $e^{NI}\equiv e^*(Q^0).$ The difference between $e^{EB}$ and $e^{NI}$ isolates how external evidence changes assignment probabilities, holding fixed the feasible design class.

\subsection{Optimal EB Designs Under Different Objectives}
\label{subsec:optimal_eb_objectives_formal}

We now characterize the optimal design under three canonical objectives that arise in applied experimental work.
Across objectives, the EB prior enters through different channels, and these differences are what generate distinct assignment rules.

\medskip

\runin{Objective I: estimation under quadratic loss}
Let the post-experiment action be an estimate $a\in\R^K$ of a target linear functional $L\theta$, where $L\in\R^{K\times G}$ is fixed (e.g., $L=\pi'$ for the ATE, or $L=I_G$ for the full heterogeneity profile).
Consider quadratic payoff
\[
W(a,\theta)=-(a-L\theta)'\Lambda(a-L\theta),\qquad \Lambda\succeq 0,
\]
so maximizing $U_Q(e)$ is equivalent to minimizing Bayes risk.

\begin{proposition}[Quadratic-loss design criterion under a Gaussian prior]
\label{prop:quadratic_design}
Suppose $Q=\mathcal N(m,V)$ and Assumption~\ref{ass:gaussian_sampling} holds.
Then the posterior covariance of $\theta$ given $(\hat\theta,e)$ is
\[
V^{post}(e) = \left(V^{-1}+\Sigma(e)^{-1}\right)^{-1},
\]
which depends on the design but not on the realized data.
Moreover, the ex-ante Bayes risk for estimating $L\theta$ equals
\begin{equation}
\mathcal R_Q(e)
=
\mathrm{tr}\!\left(\Lambda\,L V^{post}(e) L'\right).
\label{eq:bayes_risk_trace}
\end{equation}
Under a diffuse no-information prior (formally $V^{-1}\to 0$), the criterion reduces to minimizing $\mathrm{tr}(\Lambda\,L \Sigma(e) L')$.
\end{proposition}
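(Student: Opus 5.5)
The plan is to apply standard Gaussian conjugacy to the model $\theta\sim\mathcal N(m,V)$, $\hat\theta\mid\theta;e\sim\mathcal N(\theta,\Sigma(e))$ from Assumption~\ref{ass:gaussian_sampling}. First, complete the square in $\theta$ in the log of $\varphi_V(\theta-m)\,\varphi_{\Sigma(e)}(\hat\theta-\theta)$. The quadratic term is $-\tfrac12\theta'(V^{-1}+\Sigma(e)^{-1})\theta$ and the linear term is $\theta'(V^{-1}m+\Sigma(e)^{-1}\hat\theta)$. This gives the posterior
\[
\theta\mid\hat\theta;e\sim\mathcal N\big(m^{post}(\hat\theta),V^{post}(e)\big),\qquad
V^{post}(e)=\big(V^{-1}+\Sigma(e)^{-1}\big)^{-1},
\]
with $m^{post}(\hat\theta)=V^{post}(e)\big(V^{-1}m+\Sigma(e)^{-1}\hat\theta\big)$. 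The precision matrix collects only the prior and sampling precisions, and $\hat\theta$ enters only through the linear term. Hence the posterior covariance is a deterministic function of $e$ and does not depend on the realized data. Here $V\succ0$ is assumed, and $\Sigma(e)\succ0$ holds because $e_g\in[\underline e,1-\underline e]$ makes every $s_g^2(e_g)$ finite and positive.

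Second, compute the Bayes risk by solving the inner problem in \eqref{eq:UQ_general} conditional on $\hat\theta$ (read, as the paper intends, with the posterior expectation inside the supremum, as in \eqref{eq:dm.opt}). For $W(a,\theta)=-(a-L\theta)'\Lambda(a-L\theta)$, expand around the posterior mean:
\[
\E\big[(a-L\theta)'\Lambda(a-L\theta)\mid\hat\theta\big]=(a-Lm^{post})'\Lambda(a-Lm^{post})+\mathrm{tr}\big(\Lambda L V^{post}(e)L'\big).
\]
The cross term vanishes because $\E[\theta-m^{post}\mid\hat\theta]=0$. Since $\Lambda\succeq0$, the first term is minimized, with value zero, at $a=Lm^{post}(\hat\theta)$. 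When $\Lambda$ is singular the minimizer need not be unique, but the minimal value is the same. The conditional risk is therefore $\mathrm{tr}(\Lambda LV^{post}(e)L')$, which is constant in $\hat\theta$. Averaging over the preposterior distribution of $\hat\theta$ leaves it unchanged, which gives \eqref{eq:bayes_risk_trace} and shows that maximizing $U_Q$ is equivalent to minimizing $\mathcal R_Q$.

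Third, handle the diffuse limit. Write $V^{post}(e)=\Sigma(e)^{1/2}\big(I+\Sigma(e)^{1/2}V^{-1}\Sigma(e)^{1/2}\big)^{-1}\Sigma(e)^{1/2}$. By continuity of matrix inversion at $I$, $V^{post}(e)\to\Sigma(e)$ as $V^{-1}\to0$. The trace is linear and continuous, so $\mathcal R_Q(e)\to\mathrm{tr}(\Lambda L\Sigma(e)L')$.

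The only delicate point is interpreting "reduces to" as a statement about optimal designs, not just pointwise values. I would address this by noting that $\Sigma(e)$ is bounded uniformly over $\mathcal H$, because $e$ is bounded away from $0$ and $1$. The convergence above is therefore uniform in $e\in\mathcal H$, so the criteria, and hence (limits of) their minimizers, agree in the limit. Everything else is routine conjugate algebra.
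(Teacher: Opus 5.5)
Your proposal is correct and follows the paper's proof essentially step for step: Gaussian conjugacy gives $V^{post}(e)$, the bias--variance decomposition around the posterior mean gives the data-independent conditional risk $\mathrm{tr}(\Lambda L V^{post}(e)L')$, and $V^{post}(e)\to\Sigma(e)$ as $V^{-1}\to 0$. Your extra observation, that this convergence is uniform over the compact set $\mathcal H$ so that limits of optimal designs minimize $\mathrm{tr}(\Lambda L\Sigma(e)L')$, slightly strengthens the paper's pointwise argument for ``reduces to minimizing.''
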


Proposition~\ref{prop:quadratic_design} makes the role of EB transparent: with quadratic loss, design quality is entirely governed by the posterior covariance, and the prior affects the optimizer only through $V$.
When $\mathcal H$ imposes a binding constraint across strata, the optimizer must decide \emph{where} to place sampling precision; EB shifts precision toward strata whose effects remain externally uncertain.

To see the wedge relative to $e^{NI}$ in a simple case, suppose $V=\diag(v_1,\dots,v_G)$ and $L=I_G$ so the goal is to learn the heterogeneity profile.
Then the criterion is separable:
\[
\min_{e\in\mathcal H}\ \sum_{g=1}^G \lambda_g \cdot \frac{v_g\,s_g^2(e_g)}{v_g+s_g^2(e_g)},
\qquad \Lambda=\diag(\lambda_1,\dots,\lambda_G).
\]
Under the cost constraint~\eqref{eq:pscore_cost_constraint}, any interior optimum satisfies the KKT system
\begin{equation}
\lambda_g\cdot \frac{v_g^2}{\big(v_g+s_g^2(e_g)\big)^2}\cdot
\left(-\frac{\sigma_{1g}^2}{N_g e_g^2}+\frac{\sigma_{0g}^2}{N_g (1-e_g)^2}\right)
+\lambda \pi_g c_g = 0,\qquad g=1,\dots,G,
\label{eq:kkt_quadratic_formal}
\end{equation}
with $\lambda$ chosen so that $\sum_g \pi_g c_g e_g=B$ when the cost constraint binds, and truncation to $[\underline e,1-\underline e]$ applied when the interior solution violates overlap.
In the no-information limit $v_g\to\infty$, the factor $v_g^2/(v_g+s_g^2)^2\to 1$ and~\eqref{eq:kkt_quadratic_formal} collapses to the classical prior-free condition.
When EB yields heterogeneous $v_g$, the factor downweights strata with strong external evidence (small $v_g$), so the constrained optimizer reallocates informativeness toward strata where the prior leaves more to learn.

\medskip
\runin{Objective II: in-experiment welfare}
Some experiments are implemented as part of a real program rollout, so the designer values outcomes realized during the experiment.
Holding the population shares $\pi_g$ fixed, expected in-experiment net welfare is
\[
W^{exp}(e,\delta)=\sum_{g=1}^G \pi_g e_g \tau_g.
\]
Objective~II is a one-stage design problem: the designer chooses $e$ ex ante and welfare accrues during the experiment itself; the experimental data are not used for any subsequent action.
Since this objective depends on $\delta$ only through its mean, EB affects the design exclusively through predictive means $m_g:=\E_Q[\tau_g]$.

\begin{proposition}[Optimal EB propensity design for in-experiment welfare]
\label{prop:insample_formal}
Let $m_g=\E_{Q}[\tau_g]$ under a prior $Q$ on $\delta$.
Under the cost constraint~\eqref{eq:pscore_cost_constraint} and overlap bounds, an optimal design solves
\[
\max_{e\in\mathcal H}\ \sum_{g=1}^G \pi_g e_g m_g.
\]
Generically, any solution is bang-bang: there exists a cutoff $\lambda\ge 0$ such that
\[
e_g^{*}(Q)\in\arg\max_{t\in[\underline e,1-\underline e]} \{t(m_g-\lambda c_g)\}
=
\begin{cases}
1-\underline e, & m_g>\lambda c_g,\\
\underline e, & m_g<\lambda c_g,
\end{cases}
\]
with $\lambda$ chosen so that the cost constraint binds (and $\lambda=0$ if the constraint is slack).
\end{proposition}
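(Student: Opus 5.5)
The plan has three steps: reduce the objective to a linear program, apply Lagrangian duality, and read off the bang-bang structure pointwise.

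\emph{Reduction.} Since $W^{exp}(e,\theta)=\sum_g \pi_g e_g\tau_g$ does not depend on any post-experiment action or on the data, the inner supremum and the expectation over $\hat\theta\mid\theta;e$ in \eqref{eq:UQ_general} are trivial. Hence
\[
U_Q(e)=\E_{\theta\sim Q}\Big[\textstyle\sum_g\pi_g e_g\tau_g\Big]=\sum_g \pi_g e_g m_g .
\]
This uses linearity of expectation and finiteness of the means $m_g$, which I would state as a maintained integrability condition on $Q$. The design problem is therefore the linear program
\[
\max_{e}\sum_g \pi_g e_g m_g \quad\text{subject to}\quad \sum_g\pi_g c_g e_g\le B,\qquad e_g\in[\underline e,1-\underline e].
\]
Its feasible set is a compact polytope; I assume it is nonempty, i.e.\ $\underline e\sum_g\pi_g c_g\le B$. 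A maximizer therefore exists.

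\emph{Duality.} The program is a continuous knapsack-type LP with box constraints. Slater's condition holds when the budget is strictly feasible, and in any case strong duality holds for a feasible, bounded LP. So there is a multiplier $\lambda\ge0$ for the budget constraint with two properties. First, every primal optimum $e^*$ maximizes the Lagrangian
\[
\sum_g \pi_g e_g m_g-\lambda\Big(\sum_g\pi_g c_g e_g-B\Big)
\]
over the box. Second, complementary slackness holds: $\lambda\big(\sum_g\pi_g c_g e^*_g-B\big)=0$, so $\lambda=0$ whenever the constraint is slack.

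\emph{Pointwise structure.} The Lagrangian is separable across strata, so maximization over the box reduces to
\[
\max_{t\in[\underline e,1-\underline e]}\ \pi_g\, t\,(m_g-\lambda c_g)
\]
for each $g$. Since $\pi_g>0$, this gives $e_g^*=1-\underline e$ if $m_g>\lambda c_g$ and $e_g^*=\underline e$ if $m_g<\lambda c_g$, which is exactly the displayed rule.

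The only delicate point, and the source of the word ``generically,'' is the tie set $\{g: m_g=\lambda c_g\}$. On these strata the Lagrangian is flat, and the budget is met by fractional assignment. This is the usual fractional-knapsack picture: strata are ranked by the ratio $m_g/c_g$ and filled greedily. A fractional coordinate can arise only when ties occur at the cutoff, and for a given $\lambda$ the equalities $m_g=\lambda c_g$ define a measure-zero set of $(m,c)$ configurations. Absent such ties, every coordinate is at a bound.

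I would also verify the case distinction on $\lambda$ explicitly. If setting $e_g=1-\underline e$ for all $g$ with $m_g>0$ and $e_g=\underline e$ for the rest satisfies the budget, then $\lambda=0$ is valid and the constraint may be slack. Otherwise the greedy fill pins $\lambda$ at the critical ratio $m_g/c_g$ of the marginal stratum, and the constraint binds.

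The main obstacle is stating the genericity qualifier precisely. When $\lambda>0$ and the budget binds exactly, one marginal stratum typically takes an interior value unless the budget happens to be exhausted exactly at a bound. So ``generically'' should be read as ``apart from at most the marginal stratum and ties,'' and I would phrase the conclusion as membership in the argmax set, as the proposition itself does.
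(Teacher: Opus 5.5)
Your proposal is correct and follows essentially the paper's proof. Both treat the problem as a linear program with a budget multiplier $\lambda\ge 0$, so that the sign of $m_g-\lambda c_g$ pins each coordinate to a bound. The only difference is cosmetic: you dualize just the budget and maximize the Lagrangian over the box, whereas the paper also attaches multipliers $\alpha_g,\beta_g$ to the overlap bounds and reads the conclusion off complementary slackness.

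Your closing caveat on ``generically'' is sharper than the paper. The paper's proof likewise only yields bang-bang off the set $\{g: m_g=\lambda c_g\}$, and that set typically contains the marginal stratum when the budget binds (compare the interior $0.28$ stratum in the STAR application). You should therefore drop your earlier ``measure zero for a given $\lambda$'' remark, since $\lambda$ is endogenous and is set equal to that stratum's ratio $m_g/c_g$.
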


This threshold rule reflects a basic economic force: because welfare is linear in treatment probabilities, the designer targets treatment toward strata with higher predicted net gains until the aggregate constraint binds.
A symmetric no-information benchmark with $\E_{Q^0}[\tau_g]\equiv 0$ yields a flat objective in $e$, so any feasible non-targeted assignment is optimal; a canonical benchmark sets a common propensity $e_g\equiv \bar e$ chosen to satisfy $\sum_g \pi_g c_g \bar e = B$ (truncated to $[\underline e,1-\underline e]$ if needed). EB makes the objective nondegenerate by translating historical evidence into heterogeneous $m_g$, which in turn induces systematic targeting.

\medskip
\runin{Objective III: post-experiment policy choice}
A common purpose of experimentation is to inform a subsequent policy that can target treatment by covariates.
We model the post-experiment action as a stratum-specific adoption rule $a=(a_1,\dots,a_G)'\in\{0,1\}^G$, where $a_g=1$ means treat stratum $g$ in the post-experiment rollout.
Define $\kappa_g$ as the implementation cost of policy and $\delta_g=\tau_g-\kappa_g$ as net effects. Write welfare as
\[
W^{pol}(a,\delta)=\sum_{g=1}^G \pi_g a_g \delta_g.
\]
Given experimental data $D(e)$, the Bayes-optimal adoption rule under prior $Q$ is $a_{g,Q}^*(D(e))=\mathbf 1\{\E_{Q}[\delta_g\mid D(e)]>0\}$. Under Assumption~\ref{ass:gaussian_sampling}, $D(e)$ enters only through $\hat\delta_g$ and its sampling variance $s_g^2(e_g)$, so we write equivalently $a_{g,Q}^*(\hat\delta_g;\,e_g)=\mathbf 1\{\E_{Q}[\delta_g\mid \hat\delta_g;\,e_g]>0\}$.
To obtain a closed-form design criterion, we assume a Gaussian prior $\delta_g \sim \mathcal N(m_g,v_g).$

\begin{lemma}[Distribution of the posterior mean]
\label{lem:postmean_distribution}
Let $\mu_g:=\E_{Q}[\delta_g\mid \hat\delta_g;\,e_g]$ denote the posterior mean under the Gaussian model.
Then
\[
\mu_g \sim \mathcal N\!\left(m_g,\ \sigma_{B,g}^2(e_g)\right),
\qquad
\sigma_{B,g}^2(e_g):=\frac{v_g^2}{v_g+s_g^2(e_g)}.
\]
Moreover, $\E_{Q}[\delta_g\mid \mu_g]=\mu_g$.
\end{lemma}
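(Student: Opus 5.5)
We work stratum by stratum in the scalar Gaussian model
\[
\delta_g\sim\mathcal N(m_g,v_g),\qquad \hat\delta_g\mid\delta_g\sim\mathcal N\big(\delta_g,s_g^2(e_g)\big),
\]
which follows from Assumption~\ref{ass:gaussian_sampling} because $\Sigma(e)$ is diagonal. We also treat the prior as independent across strata (or note that only the marginal of $\delta_g$ and $\hat\delta_g$ matters for $\mu_g$). Write $s^2:=s_g^2(e_g)$. The plan has three steps: compute the posterior mean in closed form, read off its predictive (marginal) distribution, and then verify the ``calibration'' identity $\E_Q[\delta_g\mid\mu_g]=\mu_g$ by the tower property.

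\textbf{Step 1: the posterior mean.} By standard normal--normal conjugacy,
\[
\mu_g=m_g+\frac{v_g}{v_g+s^2}\,(\hat\delta_g-m_g)=:m_g+b\,(\hat\delta_g-m_g).
\]
This is an affine function of $\hat\delta_g$.

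\textbf{Step 2: the marginal law of $\mu_g$.} Under the prior predictive distribution, $\hat\delta_g\sim\mathcal N(m_g,v_g+s^2)$, since it is the sum of independent normals $\delta_g$ and the noise term. Hence $\mu_g$ is normal with mean $m_g$ and variance
\[
b^2(v_g+s^2)=\frac{v_g^2}{v_g+s^2}=\sigma_{B,g}^2(e_g),
\]
which gives the first claim.

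\textbf{Step 3: the calibration identity.} Whenever $b\neq0$, that is $v_g>0$, the map $\hat\delta_g\mapsto\mu_g$ is a bijection. Therefore $\sigma(\mu_g)=\sigma(\hat\delta_g)$, and
\[
\E_Q[\delta_g\mid\mu_g]=\E_Q[\delta_g\mid\hat\delta_g]=\mu_g .
\]
To avoid leaning on invertibility, we can argue more generally: by iterated expectations, since $\sigma(\mu_g)\subseteq\sigma(\hat\delta_g)$,
\[
\E[\delta_g\mid\mu_g]=\E\big[\E[\delta_g\mid\hat\delta_g]\mid\mu_g\big]=\E[\mu_g\mid\mu_g]=\mu_g .
\]
This version also covers the degenerate case $v_g=0$, where $\mu_g\equiv m_g$ and $\delta_g=m_g$ almost surely.

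The only mild subtleties are bookkeeping rather than real obstacles. The distribution must be the prior-predictive (ex-ante) one, integrating over both $\delta_g$ and the sampling noise, and not the distribution conditional on $\delta_g$. Cross-stratum prior correlation must also not alter $\mu_g$; we handle this by assuming the diagonal Gaussian prior stated before the lemma. Nothing from later results is needed; the proof uses only Gaussian conjugacy and the tower property.
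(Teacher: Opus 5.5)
Your proposal is correct. Your Steps 1 and 2 coincide with the paper's argument: the posterior mean is affine, $\mu=m+\kappa(\hat\delta-m)$ with $\kappa=v/(v+s^2)$ (your $b$), and you push it through the predictive law $\hat\delta\sim\mathcal N(m,v+s^2)$.

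For the identity $\E_Q[\delta_g\mid\mu_g]=\mu_g$ you take a different route. The paper notes that $(\delta,\mu)$ is jointly Gaussian and computes $\Cov(\delta,\mu)=\kappa v=v^2/(v+s^2)=\Var(\mu)$. It then reads off $\E_Q[\delta\mid\mu]=m+\frac{\Cov(\delta,\mu)}{\Var(\mu)}(\mu-m)=\mu$ from the Gaussian regression formula. You instead use $\sigma(\mu_g)\subseteq\sigma(\hat\delta_g)$ and the tower property.

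Your argument is shorter and strictly more general. It shows the calibration identity holds for any prior and any likelihood. Consequently, the step $\E_Q[\delta\,\mathbf 1\{\mu>0\}]=\E_Q[\mu\,\mathbf 1\{\mu>0\}]$ in the proof of Proposition~\ref{prop:policy_voI_formal} does not rely on Gaussianity; only the closed-form law of $\mu$ does. Your argument also covers the degenerate case $v_g=0$ cleanly, where the paper's regression coefficient $\Cov(\delta,\mu)/\Var(\mu)$ would be $0/0$.

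The paper's covariance computation adds nothing the lemma needs beyond making the joint Gaussian structure of $(\delta,\mu)$ explicit.
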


\begin{proposition}[Optimal design under Gaussian model]
\label{prop:policy_voI_formal}
Under the Gaussian prior and likelihood, the ex-ante expected welfare under the post-experiment optimal policy satisfies
\begin{equation}
U_Q^{pol}(e)
=
\sum_{g=1}^G \pi_g\,\Gamma_g(e_g),
\label{eq:U_policy_sum}
\end{equation}
where
\begin{equation}
\Gamma_g(e_g)
=
m_g\,\Phi\!\Big(\frac{m_g}{\sigma_{B,g}(e_g)}\Big)
+
\sigma_{B,g}(e_g)\,\phi\!\Big(\frac{m_g}{\sigma_{B,g}(e_g)}\Big),
\qquad
\sigma_{B,g}^2(e_g)=\frac{v_g^2}{v_g+s_g^2(e_g)},
\label{eq:Gamma_policy_formal}
\end{equation}
and $\phi$ and $\Phi$ denote the standard normal pdf and cdf.
If the cost constraint~\eqref{eq:pscore_cost_constraint} binds and the optimum is interior, the KKT system can be written as
\begin{equation}
\phi\!\Big(\frac{m_g}{\sigma_{B,g}(e_g)}\Big)\cdot
\frac{\partial \sigma_{B,g}(e_g)}{\partial e_g}
+\lambda c_g=0,\qquad g=1,\dots,G,
\label{eq:kkt_policy_formal}
\end{equation}
with truncation to $[\underline e,1-\underline e]$ applied at the boundary.
\end{proposition}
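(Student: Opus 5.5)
The argument has three steps: decompose the welfare across strata, compute each stratum's value with the standard formula for the expected positive part of a Gaussian, and then differentiate to get the KKT conditions.

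\textbf{Step 1: reduction to strata.} Under Assumption~\ref{ass:gaussian_sampling} the covariance $\Sigma(e)$ is diagonal. The Gaussian prior is also taken independent across strata, $\delta_g\sim\mathcal N(m_g,v_g)$. So the posterior factorizes, and $\E_Q[\delta_g\mid D(e)]=\E_Q[\delta_g\mid\hat\delta_g;e_g]=\mu_g$. Because $W^{pol}$ is additive across strata and separable in $a_g$, the inner supremum is attained coordinatewise at $a_g=\mathbf 1\{\mu_g>0\}$. This gives
\[
\sup_a \E_Q\big[W^{pol}(a,\delta)\mid D(e)\big]=\sum_g \pi_g\max(\mu_g,0).
\]

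\textbf{Step 2: the value of each stratum.} Taking expectations over $\hat\delta$, we get $U^{pol}_Q(e)=\sum_g\pi_g\,\E[\max(\mu_g,0)]$. By Lemma~\ref{lem:postmean_distribution}, $\mu_g\sim\mathcal N(m_g,\sigma_{B,g}^2(e_g))$. For $Z\sim\mathcal N(m,\sigma^2)$ the standard identity is
\[
\E[Z^+]=m\Phi(m/\sigma)+\sigma\phi(m/\sigma).
\]
To verify it, write $Z=m+\sigma\xi$ and integrate over $\xi>-m/\sigma$. This uses $\int_{-c}^\infty\xi\phi(\xi)\,d\xi=\phi(c)$ and the symmetry of $\phi$. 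The result yields \eqref{eq:U_policy_sum}--\eqref{eq:Gamma_policy_formal}.

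\textbf{Step 3: KKT conditions.} Regard $\Gamma_g$ as a function $h(\sigma)=m\Phi(m/\sigma)+\sigma\phi(m/\sigma)$. Differentiating and using $\phi'(z)=-z\phi(z)$, the terms involving $m^2/\sigma^2$ cancel:
\[
h'(\sigma)=-\frac{m^2}{\sigma^2}\phi(m/\sigma)+\phi(m/\sigma)+\frac{m^2}{\sigma^2}\phi(m/\sigma)=\phi(m/\sigma).
\]
By the chain rule, $\partial\Gamma_g/\partial e_g=\phi(m_g/\sigma_{B,g})\,\partial\sigma_{B,g}/\partial e_g$.

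Now form the Lagrangian
\[
\sum_g\pi_g\Gamma_g(e_g)-\lambda\Big(\sum_g\pi_gc_ge_g-B\Big).
\]
At an interior optimum where the constraint binds, the first-order condition is $\pi_g\big[\phi(\cdot)\,\partial_{e_g}\sigma_{B,g}-\lambda c_g\big]=0$. Dividing by $\pi_g>0$ gives \eqref{eq:kkt_policy_formal}, up to the sign convention on the multiplier: the paper's ``$+\lambda c_g$'' corresponds to a multiplier of the opposite sign, which is the same statement. When the interior solution violates overlap, the standard KKT complementary slackness conditions for the box constraints $[\underline e,1-\underline e]$ give the truncation.

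\textbf{Main obstacle.} The only delicate point is Step 1. One must justify that the posterior mean of $\delta_g$ depends only on $\hat\delta_g$, which requires independence across strata in both the prior and the likelihood. One must also check that Lemma~\ref{lem:postmean_distribution} gives the marginal (preposterior) law of $\mu_g$, which is what the expectation requires. The remaining steps are routine Gaussian calculus.
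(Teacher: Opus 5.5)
Your proposal is correct and follows essentially the same route as the paper: the stratum-wise threshold rule $\mathbf 1\{\mu_g>0\}$, the preposterior law $\mu_g\sim\mathcal N(m_g,\sigma_{B,g}^2(e_g))$ from Lemma~\ref{lem:postmean_distribution}, the Gaussian positive-part formula, $d\Gamma/d\sigma=\phi(m/\sigma)$, and the Lagrangian first-order condition. The differences are cosmetic. You reach $\E[\max(\mu_g,0)]$ directly from the definition of the ex-ante value rather than via $\E[\delta\mathbf 1\{\mu>0\}]$ and $\E_Q[\delta\mid\mu]=\mu$, you make explicit the cross-stratum prior independence the paper uses implicitly, and your multiplier sign (the natural one for $\lambda\ge 0$ in a maximization) is opposite to the paper's, which is harmless since the statement places no sign restriction on $\lambda$.
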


The form~\eqref{eq:Gamma_policy_formal} highlights an exploration--exploitation logic inside a randomized design.
The factor $\phi(m_g/\sigma_{B,g}(e_g))$ in~\eqref{eq:kkt_policy_formal} is largest when $m_g$ is close to zero relative to posterior-mean dispersion, so (holding costs fixed) experimentation concentrates where the adoption decision is most uncertain given prior evidence.
EB matters twice: $m_g$ determines how close the decision is to the threshold, and $v_g$ determines how responsive posterior beliefs are to additional data.

A useful benchmark is the symmetric no-information case with $m_g= 0$ and $v_g= \bar{v}$.
Then $\Gamma_g(e_g)=\sigma_{B,g}(e_g)\phi(0)$ is strictly increasing in informativeness, so $e^{NI}$ coincides with a variance-minimizing (information-maximizing) allocation subject to $\mathcal H$.
Under EB, heterogeneity in $(m_g,v_g)$ creates a wedge relative to $e^{NI}$: the designer explores more in strata where the policy decision is near the boundary and external evidence is imprecise, and explores less where prior evidence already makes the sign clear (large $|m_g|$) or externally precise (small $v_g$).

\medskip
\runin{Takeaway} Across objectives, EB affects the propensity design through distinct features of the EB prior.
Under quadratic-loss estimation, EB enters through the prior dispersion $V$ in the posterior variance formula~\eqref{eq:bayes_risk_trace}, reshaping constrained allocations by downweighting strata that are already well learned from external evidence.
Under in-experiment welfare, EB enters only through prior means and induces targeting toward strata with higher predicted treatment effects.
Under post-experiment policy choice, EB enters through both means and variances via~\eqref{eq:Gamma_policy_formal}, concentrating information where it is most likely to change the sign of the adoption decision.

\begin{remark}[Experiment with noncompliance]
If the experiment randomizes a voucher $Z$ rather than treatment $D$ itself, the design choice is the assignment rate
$e_g:=\Pr(Z=1\mid S=g)$.
Suppose the payoff-relevant estimand is the intent-to-treat effects (ITT),
\[
\tau_g^{ITT}:=\E[Y\mid Z=1,S=g]-\E[Y\mid Z=0,S=g],
\]
which is a natural choice for policymakers deciding whether to offer a program or not.
Relative to the perfect-compliance setting, all Objective I--III expressions are unchanged in form after substituting
$
\tau_g \ \leadsto\ \tau_g^{ITT},
$
and
$$
s_g^2(e_g)\ \leadsto\ s_{g,ITT}^2(e_g)
:=\frac{\sigma_{Y,1g}^2}{N_g e_g}+\frac{\sigma_{Y,0g}^2}{N_g(1-e_g)},
\quad \sigma_{Y,zg}^2:=\Var(Y\mid Z=z,S=g).
$$
The resource constraint is unchanged if resources are spent on vouchers. If instead resources scale with realized treatment, we need to replace $e_g$ by expected take-up
$q_g(e_g)=\E[D\mid S=g]=d_{0g}+e_g\rho_g$ with $d_{0g}:=\E[D\mid Z=0,S=g]$ and $\rho_g:=\E[D\mid Z=1,S=g]-\E[D\mid Z=0,S=g]$, which gives $\sum_g \pi_g c_g q_g(e_g)\le B$. A special case is when $d_{0g}=0$, then the constraint is simply $\sum_g \pi_g c_g e_g\rho_g\le B.$

To satisfy the exchangeability assumption on the study-level parameters, prior evidence should be encoded for the same estimand,
i.e., in terms of $\tau_g^{ITT}$ for a comparable voucher $Z$; evidence reported as effects of treatment $D$ (e.g., treatment effect on the treated TOT, local average treatment effect LATE) is not directly exchangeable across settings when compliance differs, and should only be used after mapping to ITT (e.g., $\tau_g^{ITT}=\rho_g\tau_g^{LATE}$ when the compliance rate $\rho_g$ is available).

\end{remark}

\section{Theoretical Analysis}
\label{sec:theory}

This section develops theoretical guarantees for EB design. The organizing idea is simple: a design is chosen by maximizing a value functional that depends on the prior. If we estimate the prior and then optimize, performance depends on how estimation error propagates through the optimization step.

\subsection{Regret Bounds and Gains over No Prior Evidence}
\label{subsec:oracle-inequality}

For any probability measure $Q$ on $\R^d$, define the ex-ante value of design $\eta$ under prior $Q$ as
\begin{equation}
U_Q(\eta)
:=\int \Bigg[\sup_{a\in\mathcal A}\int W(a,\theta)\,p_\eta(y\mid \theta)\,dQ(\theta)\Bigg]\,d\lambda(y),
\label{eq:value-functional}
\end{equation}
where we write $p_\eta(y|\,\theta)$ for the likelihood under design $\eta$.
The oracle design $\eta^O \in \argmax_{\eta\in\mathcal H} U_G(\eta)$ maximizes value under the true prior $G$. The EB design $\eta^{EB}\in \argmax_{\eta\in\mathcal H} U_{\hat G_n}(\eta)$ maximizes value under the estimated prior $\hat G_n$.

The first step is to ensure $U_Q(\eta)$ is well-defined and uniformly controlled over the design class. We thus impose the following assumption.

\begin{assumption}[Well-posed value]
\label{ass:value-envelope}
Let $\mathcal Q$ be a collection of probability measures on $\R^d$ such that $G\in\mathcal Q$ and $\Pr(\hat G_n\in\mathcal Q)\to 1$.
For $Q\in\mathcal Q$ and $\eta\in\mathcal H$, define
$u_Q(y;\eta):=\sup_{a\in\mathcal A}\int W(a,\theta)\,p_\eta(y\mid\theta)\,dQ(\theta)$.
Assume:
(i) for each $Q\in\mathcal Q$ and $\eta\in\mathcal H$, $u_Q(\cdot;\eta)$ is $\lambda$-measurable and finite $\lambda$-a.e., and the supremum is attained by a measurable selector $a_Q^*(y;\eta)$;
(ii) $\sup_{Q\in\mathcal Q}\sup_{\eta\in\mathcal H}\int |u_Q(y;\eta)|\,d\lambda(y)<\infty$.
\end{assumption}

Assumption~\ref{ass:value-envelope} rules out pathological cases where the Bayes act fails to exist or the value functional is not integrable. In the leading examples, it is satisfied by simple moment conditions (e.g.\ bounded $W$, or quadratic loss combined with finite second moments under $Q$).

Our first result is a finite-sample oracle inequality. It shows that the welfare loss from optimizing under $\hat G_n$ rather than $G$ is controlled by how well $U_{\hat G_n}(\eta)$ approximates $U_G(\eta)$ uniformly over $\eta$.

\begin{theorem}[Finite-sample oracle inequality]
\label{thm:oracle-ineq}
Suppose Assumption~\ref{ass:value-envelope} holds. Then, for any realization of $\mathcal D_n$,
\begin{equation}
0 \le U_G(\eta^O)-U_G(\eta^{EB}) \le 2\Delta_n,
\qquad
\Delta_n := \sup_{\eta\in\mathcal H}\big|U_{\hat G_n}(\eta)-U_G(\eta)\big|.
\label{eq:oracle-ineq}
\end{equation}
We define $\Reg_n:=U_G(\eta^O)-U_G(\eta^{EB})$ as the EB regret.
\end{theorem}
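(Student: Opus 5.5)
The argument is the standard "argmax of a perturbed criterion" decomposition. I would first use Assumption~\ref{ass:value-envelope} to make every term well defined. Part~(i) guarantees that $u_Q(\cdot;\eta)$ is measurable and that the Bayes act exists. Part~(ii) guarantees that $U_Q(\eta)=\int u_Q(y;\eta)\,d\lambda(y)$ is a finite real number, uniformly over $\eta\in\mathcal H$ and $Q\in\mathcal Q$. Hence $U_G(\eta^O)$, $U_G(\eta^{EB})$, $U_{\hat G_n}(\eta^O)$ and $U_{\hat G_n}(\eta^{EB})$ are all finite, and $\Delta_n<\infty$ on the event $\hat G_n\in\mathcal Q$. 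The inequality is purely deterministic given $\mathcal D_n$, so no probabilistic argument is needed. Off the event $\hat G_n\in\mathcal Q$, either $\Delta_n=\infty$ and the upper bound is trivial, or the same algebra applies whenever the quantities are finite.

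The lower bound $0\le \Reg_n$ is immediate from the definition of $\eta^O$ as a maximizer of $U_G$ over $\mathcal H$, since $\eta^{EB}\in\mathcal H$.

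For the upper bound I would add and subtract the estimated-prior values:
\[
U_G(\eta^O)-U_G(\eta^{EB})
=\big[U_G(\eta^O)-U_{\hat G_n}(\eta^O)\big]
+\big[U_{\hat G_n}(\eta^O)-U_{\hat G_n}(\eta^{EB})\big]
+\big[U_{\hat G_n}(\eta^{EB})-U_G(\eta^{EB})\big].
\]
The middle bracket is $\le 0$ because $\eta^{EB}$ maximizes $U_{\hat G_n}$ over $\mathcal H$ and $\eta^O\in\mathcal H$. Each outer bracket is at most $\sup_{\eta\in\mathcal H}|U_{\hat G_n}(\eta)-U_G(\eta)|=\Delta_n$, so the sum is at most $2\Delta_n$.

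There is no real obstacle here. The only delicate point is existence of the maximizers $\eta^O$ and $\eta^{EB}$, which the statement takes as given. If existence were in doubt, I would work with $\varepsilon$-maximizers instead. This yields the bound $2\Delta_n+\varepsilon$ for every $\varepsilon>0$, and hence the same conclusion for exact maximizers whenever they exist.
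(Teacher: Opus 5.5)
Your proof is correct and matches the paper's argument: the same three-term add-and-subtract decomposition around $U_{\hat G_n}(\eta^O)$ and $U_{\hat G_n}(\eta^{EB})$, with the middle term nonpositive because $\eta^{EB}$ is optimal and each outer term bounded by $\Delta_n$. Your extra remarks on finiteness off the event $\{\hat G_n\in\mathcal Q\}$ and on $\varepsilon$-maximizers are harmless refinements that the paper does not bother with.
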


The intuition for Theorem~\ref{thm:oracle-ineq} is geometric. Think of $\eta\mapsto U_G(\eta)$ and $\eta\mapsto U_{\hat G_n}(\eta)$ as two objective surfaces. If the surfaces are uniformly close, then maximizing the estimated surface cannot lead to a design that performs much worse under the true surface.

To relate EB to a benchmark that ignores previous evidence, fix a baseline prior $G_0$ independent of $\mathcal D_n$ and define the no-information design $\eta^{NI}\in\argmax_{\eta\in\mathcal H}U_{G_0}(\eta)$.

\begin{corollary}[EB versus no-information benchmark]
\label{cor:eb-vs-np}
Under Assumption~\ref{ass:value-envelope},
\[
U_G(\eta^{EB})-U_G(\eta^{NI})
\ge
\big(U_G(\eta^O)-U_G(\eta^{NI})\big)-2\Delta_n.
\]
In particular, if $\Delta_n \le \frac{1}{2}(U_G(\eta^O)-U_G(\eta^{NI}))$, then $U_G(\eta^{EB})\ge U_G(\eta^{NI})$.
\end{corollary}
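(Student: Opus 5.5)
The plan is to obtain Corollary~\ref{cor:eb-vs-np} directly from Theorem~\ref{thm:oracle-ineq} by an add-and-subtract decomposition. No new analytic input is needed. Under Assumption~\ref{ass:value-envelope}, every quantity $U_G(\eta)$ with $\eta\in\{\eta^O,\eta^{EB},\eta^{NI}\}$ is a finite real number by part (ii), so the differences below are well defined. I first write the gain of EB over the no-information design as
\[
U_G(\eta^{EB})-U_G(\eta^{NI})
=\big(U_G(\eta^O)-U_G(\eta^{NI})\big)-\big(U_G(\eta^O)-U_G(\eta^{EB})\big),
\]
which is an identity. The second bracket is exactly $\Reg_n$.

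Next I invoke the upper half of \eqref{eq:oracle-ineq}, namely $\Reg_n\le 2\Delta_n$, which holds for every realization of $\mathcal D_n$. Substituting this bound gives the displayed inequality of the corollary. Nothing about $\eta^{NI}$ is used beyond the fact that it is some fixed element of $\mathcal H$. In particular, its optimality under $G_0$ and the independence of $G_0$ from $\mathcal D_n$ play no role, so the bound holds pathwise just as the theorem does.

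For the ``in particular'' statement, I assume $\Delta_n\le \tfrac12\big(U_G(\eta^O)-U_G(\eta^{NI})\big)$. Then the right-hand side is nonnegative, since $2\Delta_n\le U_G(\eta^O)-U_G(\eta^{NI})$, and therefore $U_G(\eta^{EB})\ge U_G(\eta^{NI})$. This condition is only meaningful when the oracle gap is nonnegative, and it is, because $\eta^O$ maximizes $U_G$ over $\mathcal H\ni\eta^{NI}$.

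There is no real obstacle here. The only point requiring care is that the maximizers $\eta^O$, $\eta^{EB}$, $\eta^{NI}$ exist, since the corollary, like Theorem~\ref{thm:oracle-ineq}, is stated in terms of them. If exact maximizers fail to exist, the same argument goes through with $\epsilon$-maximizers at the cost of an additive $\epsilon$ in the bound.
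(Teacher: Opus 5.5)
Your proposal is correct and matches the paper's argument: the paper rearranges the upper bound in \eqref{eq:oracle-ineq} to $U_G(\eta^{EB})\ge U_G(\eta^O)-2\Delta_n$ and subtracts $U_G(\eta^{NI})$ from both sides, which is your add-and-subtract identity combined with $\Reg_n\le 2\Delta_n$. Your side remarks are accurate but not needed for the proof: only $\eta^{NI}\in\mathcal H$ is used, the oracle gap is nonnegative, and $\epsilon$-maximizers would suffice.
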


Corollary~\ref{cor:eb-vs-np} provides a simple sufficient condition for comparing the EB design with the no-information benchmark, expressed in terms of the uniform deviation $\Delta_n$. Because $\Delta_n$ is a worst-case measure of the maximum discrepancy between the two objective surfaces over the entire design class, the resulting sufficient condition can be conservative in applications.

\begin{remark}[A less conservative sufficient condition]
\label{rem:local-eb-ni}
Let $\delta_G:=U_G(\eta^O)-U_G(\eta^{NI})\ge 0$ denote the oracle gain over the no-information benchmark.
Then
\[
U_G(\eta^{EB})-U_G(\eta^{NI})=\delta_G-\Reg_n.
\]
Moreover, writing $x_+:=\max\{x,0\}$, the decomposition in the proof of Theorem~\ref{thm:oracle-ineq} implies the sharper bound
\[
\Reg_n
\le
\big(U_G(\eta^O)-U_{\hat G_n}(\eta^O)\big)_+
+
\big(U_{\hat G_n}(\eta^{EB})-U_G(\eta^{EB})\big)_+.
\]
Thus, a sufficient condition for $U_G(\eta^{EB})\ge U_G(\eta^{NI})$ is that the two one-sided errors on the right-hand side are jointly smaller than $\delta_G$.
This condition depends only on approximation errors evaluated at $\eta^O$ and $\eta^{EB}$, rather than the uniform deviation $\Delta_n$.
\end{remark}

To illustrate the comparison to $\eta^{NI}$ in a setting where the oracle gain $\delta_G$ and the event $\{U_G(\eta^{EB})\ge U_G(\eta^{NI})\}$ admit closed-form characterizations, we consider the following two-stratum quadratic-loss allocation problem, which is a simple special case of the quadratic-loss criterion in Proposition~\ref{prop:quadratic_design}.

\begin{example}[Two-stratum quadratic loss]
\label{emp:two_stratum_quadratic}
In the quadratic-loss Gaussian framework of Proposition~\ref{prop:quadratic_design} (Section~\ref{sec:eb_pscore_design}), impose the simplifications $G=2$, $L=I_2$, $\Lambda=I_2$, and consider stratum sample allocation designs $\eta=(N_1,N_2)$ with $N_1+N_2=N$ and within-stratum estimator variance $s^2/N_g$.
Let $(v_1,v_2)$ be the true prior variances and $(\hat v_1,\hat v_2)$ the EB plug-in estimates.
Take the no-information benchmark to be the diffuse-prior limit, so $\eta^{NI}=(N/2,N/2)$, and assume the EB allocation is interior.
Then $U_G(\eta^{EB})\ge U_G(\eta^{NI})$ if and only if
\[
\left(\frac{1}{\hat v_2} - \frac{1}{\hat v_1}\right) \left(\frac{1}{v_2} - \frac{1}{v_1}\right)\ge 0
\quad\text{and}\quad
\left|\frac{1}{\hat v_2} - \frac{1}{\hat v_1}\right|\le 2\left|\frac{1}{v_2} - \frac{1}{v_1}\right|.
\]
\end{example}

In this simple setting, EB beats the no-information design as long as it (i) gets the direction of the cross-stratum prior precision gap right (equivalently, which stratum is more externally uncertain), and (ii) does not exaggerate that gap by more than a factor of two. Appendix~\ref{app:two_stratum_quadratic} provides the full details.

\subsection{Regret Consistency}
\label{subsec:asymp-oracle}

The finite-sample oracle bound reduces the problem of regret consistency to showing $\Delta_n\to 0$. The next theorem makes that reduction explicit.

\begin{theorem}[Regret consistency]
\label{thm:regret-consistency}
Suppose Assumption~\ref{ass:value-envelope} holds and $\Delta_n\to 0$ in probability. Then $U_G(\eta^O)-U_G(\eta^{EB}) \to 0$ in probability.
\end{theorem}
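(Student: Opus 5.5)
The plan is to derive the result directly from Theorem~\ref{thm:oracle-ineq}, which holds for every realization of $\mathcal D_n$. On the event $\{\hat G_n\in\mathcal Q\}$, Assumption~\ref{ass:value-envelope} guarantees that both $U_G(\eta)$ and $U_{\hat G_n}(\eta)$ are finite for every $\eta\in\mathcal H$. Theorem~\ref{thm:oracle-ineq} then gives the deterministic sandwich
\[
0\le \Reg_n:=U_G(\eta^O)-U_G(\eta^{EB})\le 2\Delta_n .
\]
To make this self-contained, I would recall the three-term decomposition behind it:
\[
\Reg_n=\big(U_G(\eta^O)-U_{\hat G_n}(\eta^O)\big)+\big(U_{\hat G_n}(\eta^O)-U_{\hat G_n}(\eta^{EB})\big)+\big(U_{\hat G_n}(\eta^{EB})-U_G(\eta^{EB})\big).
\]
The middle term is nonpositive by the definition of $\eta^{EB}$, and each outer term is bounded by $\Delta_n$. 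Nonnegativity of $\Reg_n$ follows from the definition of $\eta^O$.

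Next I would convert the deterministic bound into a probability statement. Fix $\varepsilon>0$. Then
\[
\Pr(|\Reg_n|>\varepsilon)\le \Pr(\hat G_n\notin\mathcal Q)+\Pr(2\Delta_n>\varepsilon).
\]
The first term tends to zero by Assumption~\ref{ass:value-envelope}. The second tends to zero by the hypothesis $\Delta_n\to_p 0$. Hence $\Reg_n\to_p 0$.

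The only delicate point is measurability and well-definedness. The maximizers $\eta^O$ and $\eta^{EB}$ are assumed to exist, and $\Reg_n$ need not be a measurable random variable. I would address this in one of two ways. Either interpret ``in probability'' via outer probability, which is harmless because the bound $\Reg_n\le 2\Delta_n$ holds pointwise. Or note that if $\eta^{EB}$ is only a near-maximizer (within $\epsilon_n\to0$), the same decomposition gives $\Reg_n\le 2\Delta_n+\epsilon_n$. I expect no substantive obstacle beyond handling the event $\hat G_n\notin\mathcal Q$, where the values may be undefined; on that event one can set $\Reg_n$ arbitrarily, since the event has vanishing probability.
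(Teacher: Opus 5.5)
Your proposal is correct and follows the paper's argument: the paper gives no separate proof and treats the result as an immediate consequence of the sandwich $0\le \Reg_n\le 2\Delta_n$ from Theorem~\ref{thm:oracle-ineq}, which is exactly your reduction. Your handling of the event $\{\hat G_n\notin\mathcal Q\}$ and of measurability and near-maximizers is a harmless addition of rigor that the paper leaves implicit.
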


To show $\Delta_n\to 0$, a natural starting point is weak convergence $\hat G_n\Rightarrow G$. But weak convergence alone does not automatically imply uniform convergence of the value functional. The next assumption imposes continuity and moment conditions that allow one to pass limits through the integrals defining $U_Q(\eta)$ uniformly over $\eta$.

\begin{assumption}[Continuity and envelope]
\label{ass:cont-envelope}
Assume:
(i) $\mathcal H$ and $\mathcal A$ are compact metric spaces;
(ii) for $\lambda$-a.e.\ $y$, the map $(\eta,\theta)\mapsto p_\eta(y\mid\theta)$ is continuous, and there exists an integrable envelope $\bar p:\mathcal Y\to[0,\infty)$ with $\sup_{\eta,\theta}p_\eta(y\mid\theta)\le \bar p(y)$;
(iii) $W(a,\theta)$ is continuous in each argument, and there exists a continuous coercive function $w:\R^d\to[0,\infty)$ with $\sup_{a}|W(a,\theta)|\le w(\theta)$;
(iv) there exists $\delta>0$ such that $\sup_{Q\in\mathcal Q}\int w(\theta)^{1+\delta}\,dQ(\theta)<\infty$.
\end{assumption}

Assumption~\ref{ass:cont-envelope} is a standard ``uniform dominated convergence'' device: continuity gives pointwise convergence when priors converge weakly, the envelope $\bar p$ controls likelihood variation across designs, and the moment condition controls the tails that matter when welfare grows with $\|\theta\|$.

The following Lemma~\ref{lem:weak-to-Delta-unbounded} provides the main bridge: any prior estimator that is weakly consistent (and satisfies the moment conditions) delivers regret consistency via Theorem~\ref{thm:regret-consistency}.

\begin{lemma}[Weak convergence implies value convergence]
\label{lem:weak-to-Delta-unbounded}
Under Assumptions~\ref{ass:value-envelope} and~\ref{ass:cont-envelope}, if $\hat G_n\Rightarrow G$ in probability and $\Pr(\hat G_n\in\mathcal Q)\to 1$, then $\Delta_n\to 0$ in probability.
\end{lemma}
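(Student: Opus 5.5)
The plan is to reduce the stochastic statement to a deterministic one: show that the map $Q\mapsto U_Q(\cdot)$, viewed as taking values in the bounded functions on $\mathcal H$ with the sup norm, is continuous at $G$ along weakly convergent sequences inside $\mathcal Q$. Given this, the probabilistic conclusion follows from a subsequence argument. Weak convergence in probability is metrizable, say by the bounded-Lipschitz metric $d_{BL}$. For any subsequence there is a further subsequence along which $d_{BL}(\hat G_n,G)\to 0$ and $\hat G_n\in\mathcal Q$ almost surely; the latter holds because $\Pr(\hat G_n\in\mathcal Q)\to1$ gives convergence in probability of the indicator. Applying the deterministic result realization-by-realization gives $\Delta_n\to0$ a.s. along that further subsequence, hence $\Delta_n\to0$ in probability.

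For the deterministic part, take $Q_k\Rightarrow G$ with $Q_k\in\mathcal Q$. Write $h_{a,\eta}(y,\theta)=W(a,\theta)p_\eta(y\mid\theta)$ and $F_Q(a,\eta,y)=\int h_{a,\eta}(y,\theta)\,dQ(\theta)$.

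\textbf{Step 1 (uniform convergence of the inner integral, fixed $y$).} I claim that for $\lambda$-a.e.\ $y$,
\[
\sup_{(a,\eta)\in\mathcal A\times\mathcal H}|F_{Q_k}(a,\eta,y)-F_G(a,\eta,y)|\to0.
\]
Fix $M$ and a continuous cutoff $\chi_M$ equal to $1$ on $\{w\le M\}$ and $0$ on $\{w\ge M+1\}$. Coercivity and continuity of $w$ make the support of $\chi_M$ compact, say $K_M$. On $\mathcal A\times\mathcal H\times K_M$, which is compact, the function $h_{a,\eta}(y,\theta)$ is jointly continuous. This needs joint continuity of $W$; I would either strengthen Assumption~\ref{ass:cont-envelope}(iii) to joint continuity or argue via equicontinuity. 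Hence the family $\{h_{a,\eta}(y,\cdot)\chi_M\}_{(a,\eta)}$ is uniformly bounded and equicontinuous. Weak convergence is uniform over such compact families (a Ranga Rao / Arzelà--Ascoli type fact), so the truncated integrals converge uniformly in $(a,\eta)$. The remainder is bounded by
\[
\bar p(y)\int_{\{w>M\}} w\,dQ \le \bar p(y)\,M^{-\delta}\sup_{Q\in\mathcal Q}\int w^{1+\delta}dQ,
\]
uniformly in $k$ and $(a,\eta)$, using Assumption~\ref{ass:cont-envelope}(iv). Letting $k\to\infty$ and then $M\to\infty$ gives the claim.

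\textbf{Step 2 (pass through sup and outer integral).} Since $|\sup_a f-\sup_a g|\le\sup_a|f-g|$, Step 1 yields $\sup_\eta|u_{Q_k}(y;\eta)-u_G(y;\eta)|\to0$ for a.e.\ $y$. For the outer integral, use the domination
\[
|u_Q(y;\eta)|\le \bar p(y)\sup_{Q\in\mathcal Q}\int w\,dQ=:C\bar p(y),
\]
which is integrable in $\lambda$ and uniform in $Q$ and $\eta$. Then
\[
\sup_\eta|U_{Q_k}(\eta)-U_G(\eta)|\le\int\sup_\eta|u_{Q_k}-u_G|\,d\lambda,
\]
and the integrand is bounded by $2C\bar p(y)$ and tends to $0$ a.e., so dominated convergence closes the argument. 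A measurability caveat: $\sup_\eta$ over a compact metric space of a function continuous in $\eta$ can be taken over a countable dense subset, which handles measurability. Continuity in $\eta$ of $u_Q(y;\eta)$ itself follows from the same equicontinuity argument.

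\textbf{Main obstacle.} The delicate point is Step 1: getting uniformity over $(a,\eta)$ from weak convergence when the integrand is unbounded in $\theta$. The truncation with coercive $w$ plus the $(1+\delta)$-moment is exactly what handles this, but it needs joint continuity of $(a,\eta,\theta)\mapsto W(a,\theta)p_\eta(y\mid\theta)$ on compacts. If only separate continuity of $W$ is available, I would first try to show equicontinuity in $\theta$ uniformly over $a$ directly. Failing that, I would note that this mild strengthening is used.
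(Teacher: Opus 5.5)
Your proposal is correct and follows essentially the same route as the paper's proof. Both first prove a deterministic claim that $\sup_{\eta}|U_{Q_k}(\eta)-U_G(\eta)|\to 0$ along weakly convergent sequences in $\mathcal Q$, using truncation through the coercive $w$ with the $(1+\delta)$-moment bound, uniformity over the compact $\mathcal A\times\mathcal H$ via a finite net or equicontinuity, then $|\sup f-\sup g|\le\sup|f-g|$ and dominated convergence, and both finish with the same sub-subsequence argument for the random $\hat G_n$.

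Two of your refinements are worth keeping. First, your dominating function $\bar p(y)\sup_{Q\in\mathcal Q}\int w\,dQ$ is cleaner than the paper's appeal to Assumption~\ref{ass:value-envelope}(ii), which bounds $\sup_\eta\int|u_Q|\,d\lambda$ but does not supply a single integrable dominator. Second, the joint continuity of $W$ you flag is genuinely needed and is used without comment in the paper's proof. With only separate continuity the conclusion fails: take $W(a,\theta)=2a\theta/(a^2+\theta^2)$ on $[0,1]\times\R$ (set to $0$ at the origin), an uninformative experiment, and $\hat G_n$ the point mass at $1/n$; this gives $\Delta_n=1$ for all $n$.
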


In Gaussian parametric EB, parameter convergence implies weak convergence and all required moments exist automatically, so the following corollary is a direct application of Lemma~\ref{lem:weak-to-Delta-unbounded}.

\begin{corollary}[Gaussian EB is oracle-optimal]
\label{cor:gaussian-eb-oracle-unbounded}
Assume that $\theta_i \ \stackrel{iid}{\sim}\ \mathcal N(\tau,V),$ $i=1,\dots,n+1.$
Suppose Assumptions~\ref{ass:value-envelope} and~\ref{ass:cont-envelope} hold and that $(\hat\tau_n,\hat V_n)\to(\tau,V)$ in probability.
Then $\Delta_n\to 0$ in probability, and consequently $U_G(\eta^{EB})-U_G(\eta^O)\to 0$ in probability.
\end{corollary}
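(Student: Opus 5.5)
The plan is to verify the two hypotheses of Lemma~\ref{lem:weak-to-Delta-unbounded} for $\hat G_n=\mathcal N(\hat\tau_n,\hat V_n)$ and $G=\mathcal N(\tau,V)$, and then invoke Theorem~\ref{thm:regret-consistency}. The two hypotheses are (a) $\hat G_n\Rightarrow G$ in probability, and (b) $\Pr(\hat G_n\in\mathcal Q)\to 1$ for a class $\mathcal Q$ that contains $G$ and satisfies Assumptions~\ref{ass:value-envelope} and~\ref{ass:cont-envelope}. Since those assumptions are imposed in the statement, the work is to exhibit (or check) a suitable $\mathcal Q$ built from Gaussians with parameters near $(\tau,V)$.

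For (a), I would use the continuous mapping theorem on characteristic functions. The map $(\mu,\Omega)\mapsto \exp(it'\mu-\tfrac12 t'\Omega t)$ is continuous for each $t$, so $(\hat\tau_n,\hat V_n)\to(\tau,V)$ in probability gives pointwise convergence in probability of the characteristic functions. To make ``weak convergence in probability'' precise, I would use the subsequence characterization: every subsequence has a further subsequence along which $(\hat\tau_n,\hat V_n)\to(\tau,V)$ almost surely. Along it, Lévy's continuity theorem gives $\hat G_n\Rightarrow G$ almost surely. This step also covers degenerate $V$, because the characteristic-function argument never inverts $V$.

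For (b), I would take $\mathcal Q_\epsilon=\{\mathcal N(\mu,\Omega):\|\mu-\tau\|\le\epsilon,\ \Omega\succeq 0,\ \|\Omega-V\|\le\epsilon\}$. Consistency gives $\Pr(\hat G_n\in\mathcal Q_\epsilon)\to1$ and $G\in\mathcal Q_\epsilon$. The moment condition (iv) must hold uniformly over $\mathcal Q_\epsilon$; this is where the ``moments exist automatically'' claim needs care. Write $\theta=\mu+\Omega^{1/2}z$ with $z\sim\mathcal N(0,I_d)$. Then $\sup_{Q\in\mathcal Q_\epsilon}\int w^{1+\delta}dQ\le \E\sup_{\|\mu-\tau\|\le\epsilon,\|\Omega-V\|\le\epsilon} w(\mu+\Omega^{1/2}z)^{1+\delta}$. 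This is finite whenever $w$ grows at most polynomially, or even like $e^{c\|\theta\|}$, for example with $w(\theta)=C(1+\|\theta\|^2)$ in the quadratic-loss case. Continuity of $w$ on the compact parameter neighborhood handles local boundedness.

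The main obstacle is that Assumption~\ref{ass:cont-envelope}(iv) must hold on a neighborhood class and not just at $G$. The corollary asserts that the assumptions hold, so I would read them as being imposed on such a $\mathcal Q$ and add a remark that for sub-Gaussian-growth envelopes $w$ they follow from the bound above. The envelope condition (ii) of Assumption~\ref{ass:value-envelope} on $\mathcal Q_\epsilon$ follows similarly, from $|u_Q(y;\eta)|\le \bar p(y)\int w\,dQ$. With (a) and (b) established, Lemma~\ref{lem:weak-to-Delta-unbounded} yields $\Delta_n\to0$ in probability, and Theorem~\ref{thm:regret-consistency} yields $U_G(\eta^O)-U_G(\eta^{EB})\to0$ in probability, which is the claim up to sign.
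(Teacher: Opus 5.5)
Your proposal is correct and follows the paper's proof. Parameter consistency gives $\hat G_n\Rightarrow G$. One then takes $\mathcal Q$ to be Gaussians with $(\mu,\Omega)$ in a compact neighborhood of $(\tau,V)$, so that Assumption~\ref{ass:cont-envelope}(iv) holds uniformly for polynomially growing envelopes, and Lemma~\ref{lem:weak-to-Delta-unbounded} and Theorem~\ref{thm:regret-consistency} finish the argument.

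Your version fills in details the paper leaves implicit: the characteristic-function and subsequence argument for weak convergence in probability, and the envelope bound for Assumption~\ref{ass:value-envelope}(ii). Your neighborhood also allows $\Omega\succeq 0$, which is slightly more permissive than the paper's restriction to eigenvalues bounded away from $0$ and $\infty$.
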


A useful special case of Corollary~\ref{cor:gaussian-eb-oracle-unbounded} arises under a quadratic loss objective in a Gaussian–Gaussian experiment, where both the prior and the likelihood are Gaussian. In this case, the value admits a closed-form expression, and continuity in Assumption~\ref{ass:cont-envelope} can be verified directly.

\begin{proposition}[Quadratic welfare: $\Delta_n\to 0$ under $\hat V_n\to V$]
\label{prop:quad-gaussian-Delta}
Assume a Gaussian-Gaussian experiment: $\theta_i \ \stackrel{iid}{\sim}\ \mathcal N(\tau,V)$ and $Y_\eta=\theta_{n+1}+\varepsilon_\eta$ with $\varepsilon_\eta\sim\mathcal N(0,\Sigma(\eta))$ independent of $\theta_i$ and $\Sigma(\eta)\succ 0$.
Let $W(a,\theta)=-\|a-\theta\|_2^2$ and $\mathcal A=\R^d$.
If $\sup_{\eta\in\mathcal H}\|\Sigma(\eta)^{-1}\|<\infty$ and $\hat V_n\to V$ in probability, then $\Delta_n\to 0$ in probability.
\end{proposition}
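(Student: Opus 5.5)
Under quadratic loss the Bayes act is the posterior mean, so $U_Q(\eta)$ equals minus the Bayes risk. For a Gaussian prior $Q=\mathcal N(m,V)$ with likelihood $Y_\eta\mid\theta\sim\mathcal N(\theta,\Sigma(\eta))$, this gives the closed form already obtained in Proposition~\ref{prop:quadratic_design}:
\[
U_{\mathcal N(m,V)}(\eta)=-\tr\big((V^{-1}+\Sigma(\eta)^{-1})^{-1}\big)=-\tr\big(V-V(V+\Sigma(\eta))^{-1}V\big).
\]
The value therefore does not depend on the prior mean, so estimating $\tau$ plays no role. I will use the second expression, which stays valid when $V$ is only positive semidefinite, so that $\hat V_n$ need not be invertible.

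With this formula,
\[
\Delta_n=\sup_{\eta\in\mathcal H}\big|\phi(\hat V_n,\Sigma(\eta))-\phi(V,\Sigma(\eta))\big|,
\qquad
\phi(V,\Sigma):=\tr\big((V^{-1}+\Sigma^{-1})^{-1}\big).
\]
The task reduces to showing that $\phi(\cdot,\Sigma)$ is continuous in $V$ uniformly over the family $\{\Sigma(\eta):\eta\in\mathcal H\}$. I would then conclude with a deterministic Lipschitz bound $|\phi(V_1,\Sigma)-\phi(V_2,\Sigma)|\le C\|V_1-V_2\|$, valid for all $\Sigma$ with $\|\Sigma^{-1}\|\le K:=\sup_\eta\|\Sigma(\eta)^{-1}\|$. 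Combined with $\hat V_n\to V$ in probability, this gives $\Delta_n\le C\|\hat V_n-V\|\to 0$ in probability.

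The Lipschitz bound is the main step. Write $P=\Sigma^{-1}\succeq 0$ with $\|P\|\le K$. Then $(V^{-1}+P)^{-1}=V(I+PV)^{-1}=V-VP^{1/2}(I+P^{1/2}VP^{1/2})^{-1}P^{1/2}V$, which is well defined for PSD $V$. Differentiating in $V$ along a direction $H$ gives
\[
D_V\phi[H]=\tr\big((I+VP)^{-1}H(I+PV)^{-1}\big).
\]
This holds because $(V^{-1}+P)^{-1}$ has derivative $(I+VP)^{-1}H(I+PV)^{-1}$, which one can verify by differentiating $X(V^{-1}+P)=I$ in the invertible case and extending by continuity.

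Here is the obstacle: $\|(I+PV)^{-1}\|$ is not automatically bounded by $1$ when $P$ and $V$ do not commute. I would handle it through the symmetric form $(I+VP)^{-1}=P^{-1/2}(I+P^{1/2}VP^{1/2})^{-1}P^{1/2}$. With this, the trace becomes
\[
\tr\big(M^{-1}\tilde H M^{-1}\big),\qquad M=I+P^{1/2}VP^{1/2}\succeq I,\quad \tilde H=P^{-1/2}HP^{-1/2}.
\]
This form brings in $P^{-1/2}$, and bounding it would need a lower bound on $\Sigma^{-1}$, i.e.\ an upper bound on $\Sigma$, which is not available. A cleaner route is to use the monotonicity and concavity of $V\mapsto(V^{-1}+P)^{-1}$ in the Loewner order, together with $(V^{-1}+P)^{-1}\preceq V$.

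Alternatively, one can avoid the uniform Lipschitz bound altogether and use compactness. Restrict to $V$ in a bounded neighbourhood of the true $V$, which holds with probability tending to one. The map $(V,P)\mapsto\phi$, written as $\tr\big(V-VP^{1/2}(I+P^{1/2}VP^{1/2})^{-1}P^{1/2}V\big)$, is jointly continuous on the compact set $\{V\succeq0:\|V\|\le R\}\times\{P\succeq0:\|P\|\le K\}$, since $I+P^{1/2}VP^{1/2}\succeq I$ is always invertible. It is therefore uniformly continuous there. This yields $\sup_{\|P\|\le K}|\phi(\hat V_n,P)-\phi(V,P)|\to0$ whenever $\hat V_n\to V$, and by the continuous mapping theorem this holds in probability. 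I would present this compactness argument as the main proof, since it needs only the hypothesis $\sup_\eta\|\Sigma(\eta)^{-1}\|<\infty$. Finally, I would check Assumption~\ref{ass:value-envelope} (finite second moments) so that the identification of $U_Q$ with minus the Bayes risk is legitimate.
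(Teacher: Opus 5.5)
Your compactness argument is correct and is essentially the paper's proof. The paper likewise writes the value as $-\tr\big((V^{-1}+\Sigma(\eta)^{-1})^{-1}\big)$ and obtains uniform-in-$\eta$ continuity in $V$ from continuity of $V\mapsto (V^{-1}+A)^{-1}$ together with the uniform bound on $A=\Sigma(\eta)^{-1}$. Your version is a bit more careful---the push-through form makes joint continuity on the compact set $\{\|V\|\le R\}\times\{\|P\|\le K\}$ explicit and also covers singular $V$, whereas the paper tacitly works with $V$ whose eigenvalues are bounded away from zero---and the abandoned Lipschitz detour is not needed (for bounded $V$ it would in fact go through, since $\|(I+VP)^{-1}\|$ is bounded on that compact set, though not globally in $V$).
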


The key intuition is that with quadratic loss, ex-ante value depends on the design only through posterior covariance, and posterior covariance depends continuously on $V$ as long as designs do not produce degenerate covariance matrices.

For the NPMLE, we impose two standard regularity conditions: compact support of the mixing distribution $G$ and uniform eigenvalue bounds on the reported covariance matrices.
Both conditions are also used in \cite{SoloffGuntuboyinaSen2025} for proving the rate of convergence of prior estimation.

\begin{assumption}[Compact support and uniform eigenvalue bounds for NPMLE]
\label{ass:npmle-consistency}
The cross-study distribution $G$ is supported on $\Theta_R:=\{\theta\in\mathbb R^d:\ \|\theta\|_2\le R\}$ for some $R<\infty$. Define $\mathcal G_R:=\{Q:\mathrm{supp}(Q)\subseteq \Theta_R\}$.
There exist constants $0<\underline\sigma^2\le \overline\sigma^2<\infty$ such that the reported covariances in the historical evidence satisfy
\[
\underline\sigma^2 I_d \preceq \Sigma_i \preceq \overline\sigma^2 I_d,\qquad i=1,2,\dots, n.
\]
\end{assumption}

\begin{proposition}[NPMLE prior is weakly consistent under heteroskedastic Gaussian errors]
\label{prop:npmle-weak-consistency}
Let $\hat G_n$ denote a maximizer of the NPMLE objective~\eqref{eq:npmle_general} over the restricted class $\mathcal G_R$ from Assumption~\ref{ass:npmle-consistency}.
If Assumption~\ref{ass:npmle-consistency} holds, then $\hat G_n \Rightarrow G$ almost surely.

If, in addition, Assumptions~\ref{ass:value-envelope} and~\ref{ass:cont-envelope} hold with $\mathcal Q=\mathcal G_R$,
then $\Delta_n\to 0$ in probability by Lemma~\ref{lem:weak-to-Delta-unbounded}, and the EB design based on the NPMLE is regret consistent by
Theorem~\ref{thm:regret-consistency}.
\end{proposition}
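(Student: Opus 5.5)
The plan follows the classical Kiefer--Wolfowitz/Wald consistency argument for the NPMLE, adapted to independent but non-identically distributed (heteroskedastic) observations; the second part then follows immediately from earlier results. \textbf{Step 1 (compactness).} Since $\Theta_R$ is compact, the class $\mathcal G_R$ is compact in the weak topology by Prokhorov's theorem. It is metrizable, for instance by the bounded-Lipschitz metric. The maximizer $\hat G_n$ over $\mathcal G_R$ exists: the map $Q\mapsto \sum_i \log f_{Q,i}(\hat\theta_i)$ is weakly continuous on $\mathcal G_R$ because $\vartheta\mapsto\varphi_{\Sigma_i}(\hat\theta_i-\vartheta)$ is bounded and continuous. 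Every subsequence of $\hat G_n$ therefore has a weakly convergent further subsequence, and it suffices to show that every limit point equals $G$ almost surely.

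\textbf{Step 2 (identifiability and a uniform Kullback--Leibler gap).} Write $f_{Q,\Sigma}(x)=\int\varphi_\Sigma(x-\vartheta)\,dQ(\vartheta)$. For each fixed $\Sigma\succ0$, Gaussian location mixtures are identifiable, since the Fourier transform of $\varphi_\Sigma$ never vanishes; hence $f_{Q,\Sigma}=f_{G,\Sigma}$ forces $Q=G$. Define
\[
K(Q,\Sigma):=\int f_{G,\Sigma}\log(f_{G,\Sigma}/f_{Q,\Sigma}).
\]
This quantity is jointly continuous on $\mathcal G_R\times\{\underline\sigma^2I\preceq\Sigma\preceq\overline\sigma^2 I\}$, which is a compact set. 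It vanishes only at $Q=G$. Consequently, for any weak neighborhood $N$ of $G$,
\[
\inf_{Q\notin N,\ \Sigma} K(Q,\Sigma)=:\epsilon_N>0.
\]
The joint continuity uses dominated convergence with the envelope bounds below.

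\textbf{Step 3 (uniform law of large numbers).} Because $\|\vartheta\|\le R$ and the eigenvalues of $\Sigma_i$ are bounded, the log-densities satisfy
\[
|\log f_{Q,\Sigma_i}(x)|\le C(1+\|x\|^2)
\]
uniformly in $Q\in\mathcal G_R$ and in $i$. In addition, $\hat\theta_i$ has uniformly bounded moments of every order. I would show that
\[
\sup_{Q\in\mathcal G_R}\Big|\frac1n\sum_i\Big(\log\frac{f_{G,\Sigma_i}(\hat\theta_i)}{f_{Q,\Sigma_i}(\hat\theta_i)}-K(Q,\Sigma_i)\Big)\Big|\to0\quad\text{a.s.}
\]
This uses a bracketing argument on the compact set $\mathcal G_R$: cover it by finitely many small weak balls, control the oscillation of $\log f_{Q,\Sigma}(x)$ over each ball by a function of the form $\omega(\rho)(1+\|x\|^2)$ uniformly in $\Sigma$, and apply Kolmogorov's strong law for independent non-identically distributed variables with bounded second moments on each center. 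This step is the main obstacle, because the usual i.i.d.\ Wald argument must be replaced by one that handles non-identical distributions. An alternative is to use Hellinger-type bounds in the spirit of \cite{SoloffGuntuboyinaSen2025}.

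\textbf{Step 4 (conclude).} By optimality of $\hat G_n$, the average $n^{-1}\sum_i\log(f_{G,\Sigma_i}/f_{\hat G_n,\Sigma_i})(\hat\theta_i)$ is at most $0$. By Step 3, this average is at least $n^{-1}\sum_iK(\hat G_n,\Sigma_i)-o(1)$ almost surely. If $\hat G_n\notin N$ infinitely often, the lower bound from Step 2 gives at least $\epsilon_N-o(1)$, which contradicts the upper bound of $0$. Hence $\hat G_n\Rightarrow G$ almost surely.

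For the second claim, almost-sure weak convergence implies weak convergence in probability, and $\Pr(\hat G_n\in\mathcal G_R)=1$ by construction. Lemma~\ref{lem:weak-to-Delta-unbounded} then gives $\Delta_n\to0$ in probability, and Theorem~\ref{thm:regret-consistency} yields regret consistency.
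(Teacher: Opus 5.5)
Your proof is correct, but it follows a different route from the paper. You use the M-estimation template of a two-sided uniform law of large numbers combined with identification. The paper instead uses the one-sided Wald argument.

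\textbf{The paper's route.} The paper never proves a uniform law over $\mathcal G_R$. For each $G_0\neq G$ it finds a $d_{\mathrm{BL}}$-ball $\mathcal B$ around $G_0$ such that
$\sup_{\Sigma}\E_{f_{G,\Sigma}}[\log(f_{\mathcal B,\Sigma}/f_{G,\Sigma})]\le -c_0<0$, where $f_{\mathcal B,\Sigma}:=\sup_{G'\in\mathcal B}f_{G',\Sigma}$ and $\Sigma$ ranges over $\underline\sigma^2 I\preceq\Sigma\preceq\overline\sigma^2 I$. This step uses identifiability, a compactness argument for a KL gap that is uniform in $\Sigma$, and dominated convergence as the ball shrinks. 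The paper then covers $\{d_{\mathrm{BL}}(G',G)\ge\varepsilon\}$ by finitely many such balls. Finally, it applies Kolmogorov's strong law to the one-sided variables $\log(f_{\mathcal B_j,\Sigma_i}/f_{G,\Sigma_i})(\hat\theta_i)$. This needs only continuity of $(G',\Sigma)\mapsto f_{G',\Sigma}(x)$ for each fixed $x$, plus the quadratic envelope on the log-densities. No modulus of continuity in $G'$ is ever required.

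\textbf{Your route.} You need a two-sided bracket for $\log f_{Q,\Sigma}(x)$ over small balls, uniform in $\Sigma$. In exchange you get a stronger, reusable statement: the averaged log-likelihood ratio tracks $n^{-1}\sum_i K(Q,\Sigma_i)$ uniformly over $Q\in\mathcal G_R$.

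\textbf{Caution on Step 3.} The bracket there does not follow from a naive Lipschitz bound. The ratio of $\max$ to $\min$ of $\varphi_\Sigma(x-\cdot)$ over $\Theta_R$ grows like $e^{c\|x\|}$. Hence the relative perturbation $|f_{Q,\Sigma}(x)-f_{Q',\Sigma}(x)|/f_{Q',\Sigma}(x)$ is only of order $\rho\,Ce^{c\|x\|}$. You do obtain
$|\log f_{Q,\Sigma}(x)-\log f_{Q',\Sigma}(x)|\le\log(1+\rho\,Ce^{c\|x\|})\le\omega(\rho)(1+\|x\|^2)$,
but only with $\omega(\rho)\asymp 1/\log(1/\rho)$. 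That still tends to zero, so Step 3 goes through.

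A simpler fix is to define $\omega_\rho(x)$ directly as the supremum of the oscillation over $d_{\mathrm{BL}}$-balls of radius $\rho$. For fixed $x$, uniform continuity on the compact set $\mathcal G_R\times\{\underline\sigma^2 I\preceq\Sigma\preceq\overline\sigma^2 I\}$ gives $\omega_\rho(x)\to 0$ pointwise. Dominated convergence with the quadratic envelope then gives $\sup_\Sigma\E\,\omega_\rho\to0$.

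Your treatment of the second claim, using Lemma~\ref{lem:weak-to-Delta-unbounded} and Theorem~\ref{thm:regret-consistency}, matches the paper.
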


The proof follows analogous logic to that used for establishing consistency of the NPMLE under i.i.d. mixture models (in our setting, all $\Sigma_i$s are equal), as reviewed in \citet[Sec.~2]{Chen2017}.
Specifically, we verify the condition in the ``trivialized Wald theorem'' \citep[Thm.~2.1]{Chen2017} by showing that outside any weak neighborhood of $G$ the average expected log-likelihood is uniformly smaller.
In the i.i.d.\ mixture case, Chen verifies this condition using Pfanzagl-type inequalities \citep{Pfanzagl1988}; in our heteroskedastic error setting we establish the same condition uniformly over the covariance class implied by Assumption~\ref{ass:npmle-consistency}, and then replace the strong law of large numbers for i.i.d. variables with Kolmogorov’s strong law for independent, non-identically distributed variables to accommodate varying $\Sigma_i$.

\subsection{Rates of Convergence}
\label{sec:rate_n}

The consistency results above reduce EB optimality to a single quantitative object: the uniform value-approximation error
\[
\Delta_n=\sup_{\eta\in\mathcal H}\big|U_{\hat G_n}(\eta)-U_G(\eta)\big|.
\]
By the finite-sample oracle inequality (Theorem~\ref{thm:oracle-ineq}), $\Reg_n\le 2\Delta_n$, so rate statements for regret follow once we control
$\Delta_n$. 

\bigskip
\runin{Gaussian limit experiments}
In many randomized experiments, the raw outcome vector is high-dimensional while the payoff-relevant parameter is low-dimensional.
In such settings, it is often natural to work with sufficient statistics (e.g., regression coefficients) or with limit-experiment approximations.
A theoretical justification is that, under regularity conditions, a broad class of experiments can be approximated by Gaussian limit experiments uniformly over design classes \citep{Higbee2024}.
Accordingly, to obtain explicit regret rates, we work with the Gaussian limit experiment induced by the low-dimensional statistic that is used for downstream decision-making.

\begin{assumption}[Gaussian working likelihood for rate analysis]
\label{as:gaussian_summary_lik}
Let $T_\eta(\cdot)$ be the statistic from the new study that is used for downstream decision-making
(e.g., an estimator of $\theta$ computed from the raw data $Y$), and write $\hat\theta_\eta := T_\eta(Y)\in\mathbb R^d$.
We take the design-dependent working likelihood to be Gaussian:
\[
\hat\theta_\eta \ |\ \theta;\eta\ \sim\ \mathcal N(\theta,\Sigma(\eta)),
\]
where $\Sigma(\eta)$ is the sampling covariance matrix of $\hat\theta_\eta$ under design $\eta$.
Thus, the likelihood entering the value $U_Q(\eta)$ is $p_\eta(\hat\theta\mid\theta)=\varphi_{\Sigma(\eta)}(\hat\theta-\theta)$ rather than the raw-data likelihood $p_\eta(y\mid\theta)$.

Moreover, there exist constants $0<\underline\sigma^2\le \overline\sigma^2<\infty$ such that for all $\eta\in\mathcal H$,
\[
\underline\sigma^2 I_d \preceq \Sigma(\eta)\preceq \overline\sigma^2 I_d.
\]
For notational convenience, in the remainder of this subsection we write $Y$ for the observed statistic $\hat\theta_\eta$.
\end{assumption}

\medskip
\runin{Class of decision problems}
The rate results in this section are stated for a broad class of decision problems characterized by Assumption~\ref{as:ql_state}. This class encompasses many widely used settings, including point estimation, hypothesis testing, portfolio choice, binary treatment adoption, and best-arm selection. Table~\ref{tab:ql_examples} summarizes these canonical examples and their corresponding continuation values.

\begin{assumption}[Quasi-linear-in-state welfare]
\label{as:ql_state}
There exist functions $w_0:\Theta\to\mathbb R$, $u:\mathcal A\to\mathbb R$, and $v:\mathcal A\to\mathbb R^d$ such that
$W(a,\theta)=w_0(\theta)+u(a)+v(a)^\prime\theta$, and the continuation-value map
$\Psi(m):=\sup_{a\in\mathcal A}\{u(a)+v(a)^\prime m\}$ is finite and measurable.
\end{assumption}

\begin{table}[t]
\centering
\caption{Common decision problems satisfying Assumption~\ref{as:ql_state}}
\label{tab:ql_examples}

\setlength{\tabcolsep}{6pt}
\renewcommand{\arraystretch}{1.35} 

\begin{threeparttable}
\begin{tabular}{p{5.8cm} p{5.5cm} p{4.6cm}}
\toprule
Typical use & Welfare $W(a,\theta)$ & Cont. value $\Psi(m)$ \\
\midrule

\makecell[l]{Point estimation}
& $-(a-\theta)^\prime\Lambda(a-\theta)$
& $m^\prime\Lambda m$
\\ \addlinespace[3pt]

\makecell[l]{Portfolio choice}
& $a^\prime\theta-\frac{\gamma}{2}a^\prime\Sigma a$
& $\frac{1}{2\gamma}m^\prime\Sigma^{-1}m$
\\ \addlinespace[3pt]

\makecell[l]{Binary adoption}
& $\sum_{g=1}^G\pi_g a_g\theta_g,\ \ a\in\{0,1\}^G$
& $\sum_{g=1}^G\pi_g\max\{0,m_g\}$
\\ \addlinespace[3pt]

\makecell[l]{Ranking \& selection}
& \makecell[l]{choose $j\in\mathcal{I}$, $W(j,\theta)=\theta_j$}
& $\max_{j\in\mathcal{I}}\, m_j$
\\ \addlinespace[3pt]

\makecell[l]{Hypothesis testing}
& $W(\varphi,\theta)=-L(\theta,\varphi)$
& $\max\{-a_0(1-m),-a_1 m\}$
\\

\bottomrule
\end{tabular}

\begin{tablenotes}[flushleft]
\footnotesize
\item \textit{Notes: For the hypothesis testing, a standard weighted $0$--$1$ loss is
\[
L(\theta,\varphi)=
\begin{cases}
0, & \varphi=\mathbbm{1}\{\theta \in \Theta_1\} \quad \text{(correct decision)},\\
a_0, & \varphi=1,\ \theta \in \Theta_0 \quad \text{(Type I error)},\\
a_1, & \varphi=0,\ \theta \in \Theta_1 \quad \text{(Type II error)}.
\end{cases}
\]
Let $s(\theta):=\mathbbm{1}\{\theta\in\Theta_1\}$ and $m=\E[s(\theta)\mid Y]$, then
$\Psi(m)=\max\{-a_0(1-m),-a_1 m\}$.}
\end{tablenotes}

\end{threeparttable}
\end{table}

The following lemma characterizes a defining feature of this class of decision problems: the welfare is affine in the payoff-relevant state $\theta$, so the effect of the prior can be summarized through posterior means.

\begin{lemma}[Value reduction to posterior means]
\label{lem:value_reduction}
Suppose Assumption~\ref{as:ql_state} holds and all expectations below are finite. Then for any prior $Q$ and design $\eta$,
\[
U_Q(\eta)=\mathbb E_{Y\sim f_{Q,\eta}}\big[\Psi(\mu_{Q,\eta}(Y))\big]+\mathbb E_Q[w_0(\theta)],
\]
where $\mu_{Q,\eta}(Y)$ is the posterior mean of $\theta$ given data $Y$, and $f_{Q,\eta}$ is the predictive density for $Y$. In particular,
$\max_{\eta\in\mathcal H}U_Q(\eta)$ is equivalent to $\max_{\eta\in\mathcal H}\mathbb E[\Psi(\mu_{Q,\eta}(Y))]$.
\end{lemma}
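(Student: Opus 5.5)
The approach is to start from the definition \eqref{eq:value-functional} and factor the joint density of $(\theta,Y)$ under prior $Q$ and design $\eta$ into a predictive density and a posterior. Write $f_{Q,\eta}(y):=\int p_\eta(y\mid\theta)\,dQ(\theta)$ for the predictive density. On the set $\{f_{Q,\eta}>0\}$, write $\Pi_{Q,\eta}(d\theta\mid y):=p_\eta(y\mid\theta)\,dQ(\theta)/f_{Q,\eta}(y)$ for the posterior. On the complementary null set, the integrand inside the supremum is identically zero, so it contributes nothing.

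The first step is to rewrite the inner integral pointwise in $y$ using Assumption~\ref{as:ql_state}:
\[
\int W(a,\theta)\,p_\eta(y\mid\theta)\,dQ(\theta)
= f_{Q,\eta}(y)\Big[\textstyle\int w_0\,d\Pi_{Q,\eta}(\cdot\mid y) + u(a) + v(a)'\mu_{Q,\eta}(y)\Big],
\]
where $\mu_{Q,\eta}(y)=\int\theta\,\Pi_{Q,\eta}(d\theta\mid y)$. Splitting the integral this way is legitimate because $\int|w_0|\,dQ$ and $\int\|\theta\|\,dQ$ are finite. These finiteness conditions are covered by the hypothesis that all expectations are finite, and Tonelli then gives $f_{Q,\eta}(y)\int|w_0|\,d\Pi<\infty$ and $f_{Q,\eta}(y)\int\|\theta\|\,d\Pi<\infty$ for $\lambda$-a.e.\ $y$.

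The second step takes the supremum over $a$. Since $f_{Q,\eta}(y)\ge0$ and the $w_0$ term does not depend on $a$, we get
\[
u_Q(y;\eta)= f_{Q,\eta}(y)\Big[\textstyle\int w_0\,d\Pi_{Q,\eta}(\cdot\mid y)+\Psi(\mu_{Q,\eta}(y))\Big].
\]
Here $\Psi$ is finite and measurable by assumption, and $y\mapsto\mu_{Q,\eta}(y)$ is measurable as a ratio of measurable integrals, so the composition is measurable.

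The third step integrates against $\lambda$ and separates the two terms. This separation is valid only when each piece is integrable, which is exactly what ``all expectations below are finite'' supplies for $\mathbb E_{f_{Q,\eta}}[\Psi(\mu)]$. For the $w_0$ piece, Fubini (justified by $\int|w_0|\,dQ<\infty$ and $\int p_\eta(y\mid\theta)\,d\lambda(y)=1$) gives
\[
\int f_{Q,\eta}(y)\int w_0\,d\Pi_{Q,\eta}(\cdot\mid y)\,d\lambda(y)=\int\!\!\int w_0(\theta)p_\eta(y\mid\theta)\,d\lambda(y)\,dQ(\theta)=\mathbb E_Q[w_0(\theta)],
\]
and the remaining term is $\int\Psi(\mu_{Q,\eta}(y))f_{Q,\eta}(y)\,d\lambda(y)=\mathbb E_{Y\sim f_{Q,\eta}}[\Psi(\mu_{Q,\eta}(Y))]$. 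The ``in particular'' claim follows because $\mathbb E_Q[w_0]$ does not depend on $\eta$, so adding it does not change the maximizer.

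The main obstacle is bookkeeping rather than any real difficulty. One must make sure the linear-in-$\theta$ term is integrable, so that the posterior mean exists a.e., before pulling $f_{Q,\eta}$ out of the supremum. One must also handle the null set where $f_{Q,\eta}=0$, where the posterior is undefined but the integrand vanishes. I would state these integrability conditions explicitly as the meaning of ``all expectations finite.''
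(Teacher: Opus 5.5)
Your proposal is correct and follows the same route as the paper's proof. Both arguments rewrite the inner integral through the posterior, use the affine-in-$\theta$ structure to reduce the supremum to $\Psi(\mu_{Q,\eta}(y))$ plus the $w_0$ term, and then integrate over $y$ using iterated expectations (Fubini) to recover $\mathbb E_Q[w_0(\theta)]$. Your version just makes explicit what the paper leaves implicit: pulling $f_{Q,\eta}(y)>0$ out of the supremum in the unnormalized definition \eqref{eq:value-functional}, handling the set $\{f_{Q,\eta}=0\}$, and stating the integrability needed for the posterior mean to exist. One small wording point: that set is null for the predictive law but not necessarily for $\lambda$, which is harmless because the integrand vanishes there.
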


Lemma~\ref{lem:value_reduction} makes clear what must be controlled to obtain rates: we need bounds on how $\Psi(\mu_{Q,\eta}(Y))$ changes when
$Q$ is replaced by $\hat G_n$, uniformly over $\eta\in\mathcal H$.

\bigskip
\runin{First-order rates}
First-order regret bounds treat the design problem as a plug-in maximization problem: once $\hat G_n$ is close to $G$ and the objective is stable
to small prior perturbations, the uniform deviation $\Delta_n$ is small and $\Reg_n\le 2\Delta_n$ inherits the same rate.
We present this logic for (i) Gaussian parametric EB and (ii) nonparametric EB via NPMLE.

\begin{assumption}[Gaussian EB estimation rate]
\label{as:gauss_est_rate}
Assume $\theta_i \ \stackrel{iid}{\sim}\ \mathcal N(\tau_0,V_0)$, $i=1,\dots,n+1$, where $V_0$ is positive definite.
There exists an estimator $\hat G_n=\mathcal N(\hat\tau_n,\hat V_n)$ such that
$\|\hat\tau_n-\tau_0\|_2 + \|\hat V_n - V_0\|_{\mathrm F} = O_{p}(n^{-1/2})$.
\end{assumption}

\begin{assumption}[Growth-controlled Lipschitz continuation value]
\label{as:Psi_Lip}
There exist constants $L<\infty$ and $p\ge 0$ such that for all $m,m'\in\mathbb R^d$,
\[
|\Psi(m)-\Psi(m')|\le L\,(1+\|m\|_2^p+\|m'\|_2^p)\,\|m-m'\|_2.
\]
\end{assumption}

Assumption~\ref{as:Psi_Lip} is a stability condition on the decision problem. It allows $\Psi$ to be non-smooth (e.g.\ threshold rules) and to be unbounded (e.g.\ quadratic loss), while still ensuring that changes in posterior means translate into controlled changes in value. In Appendix~\ref{app:Psi_Lip_examples}, we verify that the continuation values associated with the five decision problems summarized in Table~\ref{tab:ql_examples} all satisfy Assumption~\ref{as:Psi_Lip}.

\begin{theorem}[Gaussian EB: $\sqrt{n}$ rate]
\label{thm:gaussian_rootn}
Under Assumptions~\ref{as:gaussian_summary_lik},~\ref{as:ql_state},~\ref{as:gauss_est_rate}, and~\ref{as:Psi_Lip} with Gaussian EB,
$\Delta_n = O_{p}(n^{-1/2})$ and hence $\Reg_n := U_G(\eta^O)-U_G(\eta^{EB}) = O_{p}(n^{-1/2})$.
\end{theorem}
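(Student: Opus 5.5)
By Theorem~\ref{thm:oracle-ineq}, $\Reg_n\le 2\Delta_n$, so it suffices to show $\Delta_n=O_p(n^{-1/2})$. Under the Gaussian working likelihood of Assumption~\ref{as:gaussian_summary_lik} and a Gaussian prior $Q=\mathcal N(\tau,V)$, everything is explicit. The predictive law of $Y$ is $\mathcal N(\tau,V+\Sigma(\eta))$. The posterior mean is affine in $Y$:
\[
\mu_{Q,\eta}(Y)=\tau+K_\eta(V)(Y-\tau),\qquad K_\eta(V):=V(V+\Sigma(\eta))^{-1}.
\]
Hence $\mu_{Q,\eta}(Y)\sim\mathcal N\big(\tau,\,K_\eta(V)V\big)$, with preposterior covariance $B_\eta(V):=V(V+\Sigma(\eta))^{-1}V$. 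Writing $Z\sim\mathcal N(0,I_d)$, Lemma~\ref{lem:value_reduction} gives
\[
U_Q(\eta)=\E\big[\Psi\big(\tau+B_\eta(V)^{1/2}Z\big)\big]+\E_Q[w_0(\theta)].
\]
The plan is to couple the two priors through the same $Z$. This turns $\Delta_n$ into a bound on $\|\hat\tau_n-\tau_0\|+\|B_\eta(\hat V_n)^{1/2}-B_\eta(V_0)^{1/2}\|$, uniformly in $\eta$, plus a $w_0$ term.

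\textbf{Step 1 (Lipschitz transfer).} By Assumption~\ref{as:Psi_Lip} and Cauchy--Schwarz,
\[
\big|\E\Psi(m+AZ)-\E\Psi(m'+A'Z)\big|\le L\,C(1+\|m\|^p+\|m'\|^p+\|A\|^p+\|A'\|^p)\big(\|m-m'\|+\|A-A'\|\big),
\]
with a constant $C$ depending only on Gaussian moments. On the event $\{\|\hat\tau_n-\tau_0\|\le1,\ \|\hat V_n-V_0\|\le \lambda_{\min}(V_0)/2\}$, whose probability tends to one, the means and the matrices $B_\eta^{1/2}$ are uniformly bounded over $\eta$. This is because $B_\eta(V)\preceq V$.

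\textbf{Step 2 (uniform perturbation of $B_\eta^{1/2}$).} This is the main obstacle. I would first show $\|B_\eta(\hat V_n)-B_\eta(V_0)\|\le C\|\hat V_n-V_0\|$ uniformly in $\eta$. This follows by differentiating $V\mapsto V(V+\Sigma)^{-1}V$: the resolvent $(V+\Sigma(\eta))^{-1}$ has norm at most $1/(\lambda_{\min}(V)+\underline\sigma^2)$, and $\|V\|$ stays bounded near $V_0$.

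Passing to square roots needs a lower eigenvalue bound. Here
\[
B_\eta(V)=\big(V^{-1}+V^{-1}\Sigma(\eta)V^{-1}\big)^{-1}\succeq \big(\lambda_{\max}(V)^{-1}+\overline\sigma^2\lambda_{\min}(V)^{-2}\big)^{-1}I_d=:c\,I_d,
\]
and $c>0$ uniformly in $\eta$ and in $V$ near $V_0$. Then the square-root perturbation bound $\|A^{1/2}-B^{1/2}\|\le \|A-B\|/(2\sqrt c)$ for $A,B\succeq cI$ gives uniform Lipschitzness of $B_\eta^{1/2}$. Alternatively, one can avoid square roots by coupling through the affine map $\mu=\tau+K_\eta(V)(Y-\tau)$ with $Y-\tau=(V+\Sigma)^{1/2}Z$; that route needs only $V_0\succ0$ and $\Sigma\succeq\underline\sigma^2I$.

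\textbf{Step 3 ($w_0$ term and conclusion).} The term $\E_Q[w_0(\theta)]$ does not depend on $\eta$. It therefore cancels in the argmax, but it still enters $\Delta_n$. I would handle it by redefining $U$ without $w_0$, which leaves $\Reg_n$ unchanged. Alternatively, if $w_0$ has polynomial growth with a Lipschitz-type bound as in Assumption~\ref{as:Psi_Lip}, the same coupling gives an $O_p(\|\hat\tau_n-\tau_0\|+\|\hat V_n-V_0\|)$ bound.

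Combining Steps 1--3 with Assumption~\ref{as:gauss_est_rate} yields $\Delta_n\le C(\|\hat\tau_n-\tau_0\|_2+\|\hat V_n-V_0\|_F)=O_p(n^{-1/2})$ on an event of probability tending to one. Theorem~\ref{thm:oracle-ineq} then gives $\Reg_n=O_p(n^{-1/2})$.
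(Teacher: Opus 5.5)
Your proof is correct, and it takes a different route from the paper. The paper splits $U_{\hat G_n}(\eta)-U_G(\eta)$ into two pieces:
(I) the effect of replacing $\mu_{G,\eta}$ by $\mu_{\hat G_n,\eta}$ while holding $Y\sim f_{G,\eta}$ fixed, bounded using Assumption~\ref{as:Psi_Lip}, Cauchy--Schwarz, and the affine form of the posterior mean;
(II) a change of measure from $f_{G,\eta}$ to $f_{\hat G_n,\eta}$ for the fixed function $\Psi(\mu_{\hat G_n,\eta}(\cdot))$, bounded by a Hellinger distance times second moments plus local Lipschitzness of the Hellinger distance between uniformly nondegenerate Gaussians.

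You instead use the fact that under conjugacy the preposterior law of $\mu_{Q,\eta}(Y)$ is exactly $\mathcal N(\tau,B_\eta(V))$. You then couple both priors through a single $Z\sim\mathcal N(0,I_d)$. This absorbs both the change in the posterior-mean map and the change in the law of $Y$ into one Lipschitz step, with no Hellinger bound. That is shorter and more elementary. What the paper's decomposition buys is a common template with the NPMLE case in Theorem~\ref{thm:npmle_rootn}. There the law of the posterior mean has no closed form, so no such coupling exists. The available inputs are exactly predictive-density Hellinger accuracy and posterior-mean $L^2$ accuracy (Assumption~\ref{as:npmle_inputs}).

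Two smaller points.

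First, your lower eigenvalue bound has a slip. Since $V^{-1}\preceq\lambda_{\min}(V)^{-1}I_d$, the constant should be $c=\big(\lambda_{\min}(V)^{-1}+\overline\sigma^2\lambda_{\min}(V)^{-2}\big)^{-1}$. With $\lambda_{\max}(V)^{-1}$ the displayed inequality can fail, e.g.\ for $V=\mathrm{diag}(1,\epsilon)$ and $\Sigma=\overline\sigma^2 I_d$. The corrected $c$ is still bounded away from zero uniformly in $\eta$ and in $V$ near $V_0$, so the square-root perturbation step goes through unchanged.

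Second, your Step~3 is more careful than the paper, whose proof asserts that the $w_0$ terms cancel. In fact $U_{\hat G_n}(\eta)-U_G(\eta)$ contains $\E_{\hat G_n}[w_0(\theta)]-\E_G[w_0(\theta)]$. This term is constant in $\eta$ but not zero, so it does not affect $\Reg_n$. However, bounding $\Delta_n$ as literally defined requires some regularity of $w_0$, such as local Lipschitzness of $(\tau,V)\mapsto\E_{\mathcal N(\tau,V)}[w_0(\theta)]$, as you note.
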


Theorem~\ref{thm:gaussian_rootn} shows that, for this class of decision problems, parametric EB design regret inherits the parametric prior-estimation
rate. The reason is that under Gaussian conjugacy, posterior means (and predictive distributions) depend smoothly on $(\tau,V)$, so a
$\sqrt{n}$ perturbation in the estimated hyperparameters produces a uniform $O_{p}(n^{-1/2})$ perturbation of the value functional.

We next state the corresponding first-order bound for nonparametric EB based on NPMLE.

\begin{assumption}[NPMLE accuracy for prediction and posterior means]
\label{as:npmle_inputs}
Suppose Assumptions~\ref{ass:npmle-consistency} and~\ref{as:gaussian_summary_lik} hold.
For any positive definite $\Sigma$, write
\[
f_{G,\Sigma}(x):=\int \varphi_{\Sigma}(x-\vartheta)\,dG(\vartheta),
\qquad
\mu_{G,\Sigma}(x):=\E[\vartheta\mid X=x],\ \ X\mid\vartheta\sim\mathcal N(\vartheta,\Sigma),
\]
for the predictive density and posterior mean under prior $G$.
The (approximate) NPMLE $\hat G_n$ satisfies the following accuracy bounds:
\begin{enumerate}[label=(\roman*),leftmargin=*]
\item Predictive density:
$\displaystyle \sup_{\underline\sigma^2 I_d\preceq\Sigma\preceq\overline\sigma^2 I_d} h^2(f_{\hat G_n,\Sigma},f_{G,\Sigma})\le C_1(\log n)^{d+1}/n$.
\item Posterior mean:
$\displaystyle \sup_{\underline\sigma^2 I_d\preceq\Sigma\preceq\overline\sigma^2 I_d}\E_{Z\sim f_{G,\Sigma}}\!\big[\|\mu_{\hat G_n,\Sigma}(Z)-\mu_{G,\Sigma}(Z)\|_2^2\big]\le C_2(\log n)^{d+\max(d/2,4)}/n$.
\end{enumerate}
Here $h$ denotes Hellinger distance.
The rates in (i)--(ii) correspond to the compact-support cases of \citet[Corollary~8 and Theorem~9]{SoloffGuntuboyinaSen2025} for the Gaussian location-mixture NPMLE; we state them in this uniform-over-$\Sigma$ form because our design criterion ranges over $\eta\in\mathcal H$.
\end{assumption}

\begin{remark}
\label{rem:npmle_rates_which_metric}
Assumption~\ref{as:npmle_inputs} controls the predictive density $f_{G,\Sigma}$ and the posterior mean $\mu_{G,\Sigma}$ rather than a distance between mixing distributions $G$ and $\hat G_n$.
This is intentional: by Lemma~\ref{lem:value_reduction}, the value $U_G(\eta)$ depends on the prior only through these objects.
\citet[Theorem~10]{SoloffGuntuboyinaSen2025} also provide deconvolution bounds for the mixing distribution itself under additional structure (diagonal covariances), e.g.\ $\E[W_2^2(\hat G_n,G)]\lesssim 1/\log n$; these rates are slower and are not needed for our regret bounds.
\end{remark}

The following theorem establish the first-order rate of regret under NPMLE. 
\begin{theorem}[NPMLE EB: first-order rate]
\label{thm:npmle_rootn}
Under Assumptions~\ref{as:gaussian_summary_lik},~\ref{as:ql_state},~\ref{as:Psi_Lip}, and~\ref{as:npmle_inputs},
$\Delta_n = O_{p}\left(\frac{(\log n)^{(d+\max(d/2,4))/2}}{\sqrt{n}}\right)$ and hence $\Reg_n = O_{p}\!\left(\frac{(\log n)^{(d+\max(d/2,4))/2}}{\sqrt{n}}\right).$
\end{theorem}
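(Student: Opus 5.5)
By Theorem~\ref{thm:oracle-ineq}, $\Reg_n\le 2\Delta_n$, so it suffices to bound $\Delta_n$ uniformly over $\eta\in\mathcal H$. Fix $\eta$ and write $\Sigma=\Sigma(\eta)$, $f=f_{G,\Sigma}$, $\hat f=f_{\hat G_n,\Sigma}$, $\mu=\mu_{G,\Sigma}$, $\hat\mu=\mu_{\hat G_n,\Sigma}$.

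By Lemma~\ref{lem:value_reduction},
\[
U_{\hat G_n}(\eta)-U_G(\eta)=\Big(\int \Psi(\hat\mu)\hat f-\int\Psi(\mu) f\Big)+\big(\E_{\hat G_n}[w_0]-\E_G[w_0]\big).
\]
The second term does not depend on $\eta$, so it is irrelevant for design choice and cancels in the regret decomposition. It is also harmless for $\Delta_n$ if $w_0$ is bounded on $\Theta_R$ and we bound regret directly. To be safe, I would first rerun the proof of Theorem~\ref{thm:oracle-ineq} with $U$ replaced by the $\eta$-dependent part, so that $\Reg_n\le 2\sup_\eta|\int\Psi(\hat\mu)\hat f-\int\Psi(\mu)f|$.

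I then split the remaining term as
\[
\int \Psi(\hat\mu)\hat f-\int\Psi(\mu) f=\int\big(\Psi(\hat\mu)-\Psi(\mu)\big) f+\int \Psi(\hat\mu)\,(\hat f-f).
\]

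\textbf{Posterior-mean term.} Because $\hat G_n,G\in\mathcal G_R$, both posterior means lie in $\Theta_R$, so $\|\hat\mu\|,\|\mu\|\le R$. Assumption~\ref{as:Psi_Lip} then gives $|\Psi(\hat\mu)-\Psi(\mu)|\le L(1+2R^p)\|\hat\mu-\mu\|$. Applying Cauchy--Schwarz under $f$ and Assumption~\ref{as:npmle_inputs}(ii) bounds this term by
\[
L(1+2R^p)\sqrt{C_2}\,(\log n)^{(d+\max(d/2,4))/2}/\sqrt n,
\]
uniformly over $\Sigma$ in the eigenvalue band, and hence uniformly over $\eta$.

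\textbf{Density term.} $\Psi$ is continuous by Assumption~\ref{as:Psi_Lip}, and $\hat\mu$ takes values in $\Theta_R$, so $|\Psi(\hat\mu)|\le M:=\sup_{\|m\|\le R}|\Psi(m)|<\infty$. Hence the term is at most $M\|\hat f-f\|_{1}\le 2\sqrt2\,M\,h(\hat f,f)$. By Assumption~\ref{as:npmle_inputs}(i) this is $O((\log n)^{(d+1)/2}/\sqrt n)$ uniformly. Since $d+1\le d+\max(d/2,4)$, this is dominated by the posterior-mean term.

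The main obstacle is interpreting Assumption~\ref{as:npmle_inputs}. If the bounds are deterministic on a high-probability event, the two estimates combine directly. If they are only in expectation, I would apply Markov's inequality to the supremum over $\Sigma$; this is why I state the rate as $O_p$. A secondary check is that the $w_0$ term must not spoil uniformity. If $w_0$ is merely continuous on the compact set $\Theta_R$, weak consistency from Proposition~\ref{prop:npmle-weak-consistency} gives only $o_p(1)$ with no rate, so routing around it via the $\eta$-free cancellation is the cleaner path. Combining the two bounds gives $\Delta_n$ and $\Reg_n=O_p\big((\log n)^{(d+\max(d/2,4))/2}/\sqrt n\big)$.
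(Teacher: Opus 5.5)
Your proposal is correct and follows the paper's proof essentially step for step. Both add and subtract $\E_{f_{G,\Sigma}}[\Psi(\mu_{\hat G_n,\Sigma})]$. The posterior-mean term is bounded by the Lipschitz constant of $\Psi$ on the radius-$R$ ball plus Cauchy--Schwarz and Assumption~\ref{as:npmle_inputs}(ii). The lower-order density term is bounded using the boundedness of $\Psi$ on that ball plus the Hellinger inequality and Assumption~\ref{as:npmle_inputs}(i).

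Your treatment of $\E_Q[w_0(\theta)]$ is more careful than the paper's: the paper simply omits this term, even though it does not cancel in $\Delta_n$ as defined. Rerunning the oracle inequality on the $\eta$-dependent part, as you propose, is the right way to obtain the $\Reg_n$ rate. Note, however, that boundedness of $w_0$ on $\Theta_R$ alone would not give a rate for $\Delta_n$ itself.
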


Theorem~\ref{thm:npmle_rootn} shows that EB remains oracle-optimal with a nonparametric prior estimator, but the regret convergence rate is slower (up to logs) because estimating an entire mixing distribution is intrinsically harder than estimating finitely many hyperparameters. 
As a result, if the number of prior studies is small and a Gaussian distribution is a plausible approximation for the distribution of study-level parameters, then a Gaussian EB prior is a sensible choice.

\bigskip
\runin{Second-order rates}
The first-order bounds above control regret through the uniform value error $\Delta_n$, and therefore apply to a wide range of (possibly non-smooth)
decision problems covered by Assumptions~\ref{as:ql_state} and~\ref{as:Psi_Lip}. In many smooth design problems, regret is locally quadratic: if the
oracle objective $U_G$ has curvature at its maximizer, then using a design that is close to $\eta^O$ induces only a second-order welfare loss.
The next two assumptions formalize this refinement.

\begin{assumption}[Strong concavity in design]
\label{as:strong_concavity}
$\mathcal H\subset\mathbb R^k$ is convex and compact, $U_G$ is continuously differentiable on $\mathcal H$, and there exists $m>0$ such that for all $\eta_1,\eta_2\in\mathcal H$:
\[
U_G(\eta_2) \le U_G(\eta_1) + \nabla U_G(\eta_1)^\prime(\eta_2-\eta_1) - \frac{m}{2}\|\eta_2-\eta_1\|_2^2.
\]
Moreover, $\eta^O$ is the unique maximizer and lies in the relative interior of $\mathcal H$.
\end{assumption}

\begin{assumption}[Uniform gradient approximation]
\label{as:grad_approx}
There exists a random sequence $r_n=o_{p}(1)$ such that
$\sup_{\eta\in\mathcal H}\|\nabla U_{\hat G_n}(\eta)-\nabla U_G(\eta)\|_2 \le r_n$ with probability approaching one.
\end{assumption}

Assumption~\ref{as:strong_concavity} imposes differentiability and curvature of $U_G$ on the
feasible design set, yielding a unique interior maximizer and a quadratic upper bound on value losses as a function of $\|\eta-\eta^O\|_2$.
Assumption~\ref{as:grad_approx} replaces uniform value control by a uniform bound on the gradient error induced by using $\hat G_n$; in regular
parametric settings this gradient error typically inherits the $n^{-1/2}$ rate of hyperparameter estimation. These conditions are most natural for smooth objectives (e.g.\ quadratic-loss estimation or mean--variance criteria under smooth propensity-score or allocation design classes). In contrast, for maximum or threshold objectives (binary adoption, best-alternative selection, and hypothesis testing), the induced objective can be non-differentiable and optimizers often occur at corners, so second-order guarantees are generally inappropriate.

\runin{Example (Cont.): two-stratum quadratic loss}
In Example~\ref{emp:two_stratum_quadratic}, both Assumptions~\ref{as:strong_concavity} and~\ref{as:grad_approx} can be verified explicitly.
In Appendix~\ref{app:two_stratum_secondorder}, we derive a closed-form expression for $U_G(\eta)$ under the two-stratum Gaussian quadratic-loss
allocation problem and show that $U_G$ is strongly concave on the feasible allocation set $\mathcal H=\{(N_1,N_2)\in\R_+^2:\ N_1+N_2=N\}$,
with a unique interior maximizer under the maintained interior condition from Example~\ref{emp:two_stratum_quadratic}.
Moreover, if the stratum prior variances $v_g$ admit EB estimates $\hat v_g$ satisfying $|\hat v_g - v_g|=O_p(n^{-1/2})$, then the induced plug-in objective satisfies the uniform gradient approximation condition
$\sup_{\eta\in\mathcal H}\|\nabla U_{\hat G_n}(\eta)-\nabla U_G(\eta)\|_2=O_{p}(n^{-1/2})$.

Theorem~\ref{thm:second_order_general} isolates the mechanism behind faster rates: curvature $m$ converts gradient errors into quadratic value losses.
\begin{theorem}[Second-order design regret bound]
\label{thm:second_order_general}
Suppose Assumptions~\ref{as:strong_concavity} and~\ref{as:grad_approx} hold and that $\hat\eta$ is an interior maximizer of $U_{\hat G_n}$ over $\mathcal H$. Then
\[
\Reg_n \le \frac{r_n^2}{2m}.
\]
\end{theorem}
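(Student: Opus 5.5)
Write $\hat\eta=\eta^{EB}$. The plan combines the first-order optimality conditions of the two interior maximizers with strong concavity of $U_G$, turning the gradient error into a bound on $\|\hat\eta-\eta^O\|_2$ and then into a quadratic bound on the value loss.

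\textbf{Step 1 (first-order conditions).} Since $\eta^O$ lies in the relative interior of the convex set $\mathcal H$ and maximizes the $C^1$ function $U_G$, its gradient is orthogonal to the affine hull of $\mathcal H$. That is, $\nabla U_G(\eta^O)'(\eta-\eta^O)=0$ for all $\eta\in\mathcal H$. By the same argument, since $\hat\eta$ is an interior maximizer of $U_{\hat G_n}$, we have $\nabla U_{\hat G_n}(\hat\eta)'(\eta-\hat\eta)=0$ for all $\eta\in\mathcal H$. I will read ``interior'' as relative interior, with differentiability of $U_{\hat G_n}$ implicit in Assumption~\ref{as:grad_approx}. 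Projected gradients then vanish along feasible directions, which is all that is needed below.

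\textbf{Step 2 (value loss bounded by the gradient at $\hat\eta$).} Apply the strong-concavity inequality with $\eta_1=\hat\eta$ and $\eta_2=\eta^O$:
\[
U_G(\eta^O)\le U_G(\hat\eta)+\nabla U_G(\hat\eta)'(\eta^O-\hat\eta)-\tfrac m2\|\eta^O-\hat\eta\|_2^2 .
\]
By Step 1 we can add $-\nabla U_{\hat G_n}(\hat\eta)'(\eta^O-\hat\eta)=0$ to the right-hand side. Cauchy--Schwarz and Assumption~\ref{as:grad_approx} then give
\[
\Reg_n\le r_n\,\|\eta^O-\hat\eta\|_2-\tfrac m2\|\eta^O-\hat\eta\|_2^2 .
\]
Maximizing the right-hand side over $t=\|\eta^O-\hat\eta\|_2\ge0$ yields $\Reg_n\le r_n^2/(2m)$, which is the claim.

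\textbf{Step 3 (distance bound, as a consistency check).} Applying strong concavity in both directions, around $\eta^O$ and around $\hat\eta$, and summing gives
\[
m\|\hat\eta-\eta^O\|_2^2\le(\nabla U_G(\hat\eta)-\nabla U_G(\eta^O))'(\eta^O-\hat\eta).
\]
Combined with Step 1 this yields $\|\hat\eta-\eta^O\|_2\le r_n/m$. Step 2 does not need this bound, but it shows the rates are coherent.

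The main obstacle is Step 1 in the relative-interior case. There one must argue carefully that only the component of each gradient in the direction space of $\operatorname{aff}(\mathcal H)$ matters. One must also check that $\eta^O-\hat\eta$ lies in that direction space, which holds because both points lie in $\mathcal H$. The gradient bound of Assumption~\ref{as:grad_approx} holds only with probability approaching one, so the inequality is stated on that event, consistent with how $r_n$ is defined.
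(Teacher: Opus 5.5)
Your argument is correct and is essentially the paper's proof: apply strong concavity at $\hat\eta$, use the first-order condition for $\hat\eta$ under $U_{\hat G_n}$ to replace $\nabla U_G(\hat\eta)$ by the gradient error, apply Cauchy--Schwarz, and maximize $r_n t-\frac{m}{2}t^2$ over $t\ge 0$. The one difference is in your favor. The paper asserts $\nabla U_{\hat G_n}(\hat\eta)=0$, which fails when $\mathcal H$ has empty interior: in the two-stratum set $\{N_1+N_2=N\}$ the gradient at the optimum is a nonzero multiple of $(1,1)$. You use only $\nabla U_{\hat G_n}(\hat\eta)'(\eta^O-\hat\eta)=0$, which is exactly what the relative-interior assumption delivers.
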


As an immediate consequence of Theorem~\ref{thm:second_order_general}, we obtain the second-order rates for Gaussian EB and NPMLE EB.
\begin{corollary}[Second-order rate for Gaussian EB]
\label{cor:second_order_gaussian}
Under Assumptions~\ref{as:gaussian_summary_lik},~\ref{as:ql_state},~\ref{as:gauss_est_rate},~\ref{as:Psi_Lip},~\ref{as:strong_concavity}, and~\ref{as:grad_approx} with $r_n=O_p(n^{-1/2})$, we have $\Reg_n = O_p(n^{-1})$.
\end{corollary}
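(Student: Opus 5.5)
The corollary is essentially Theorem~\ref{thm:second_order_general} with the rate of $r_n$ plugged in. The only extra work is to check that the theorem's hypotheses hold with probability approaching one.

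First I will verify the hypotheses. Assumptions~\ref{as:strong_concavity} and~\ref{as:grad_approx} are assumed directly, and Assumption~\ref{as:grad_approx} is taken with $r_n=O_p(n^{-1/2})$. The theorem also requires that $\hat\eta=\eta^{EB}$ be an interior maximizer of $U_{\hat G_n}$. I will obtain this on an event whose probability tends to one, using consistency of the EB design. Assumptions~\ref{as:gaussian_summary_lik}, \ref{as:ql_state}, \ref{as:gauss_est_rate} and~\ref{as:Psi_Lip} give, through Theorem~\ref{thm:gaussian_rootn}, $\Delta_n=O_p(n^{-1/2})$ and hence $\Reg_n\le 2\Delta_n=o_p(1)$. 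Strong concavity yields the quadratic growth bound $U_G(\eta^O)-U_G(\eta)\ge \frac m2\|\eta-\eta^O\|_2^2$, because $\nabla U_G(\eta^O)$ annihilates feasible directions at a relative-interior maximizer. It follows that $\|\eta^{EB}-\eta^O\|_2^2\le 2\Reg_n/m\to 0$ in probability. Since $\eta^O$ lies in the relative interior, $\eta^{EB}$ lies in a fixed relative neighborhood of $\eta^O$ inside $\mathcal H$ with probability approaching one. On that event the maximizer of $U_{\hat G_n}$ is interior, in the sense needed for the first-order condition $P\nabla U_{\hat G_n}(\eta^{EB})=0$, where $P$ is the projection onto the tangent space of the affine hull of $\mathcal H$.

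Next I will apply Theorem~\ref{thm:second_order_general} on the intersection of this event with the event $\sup_\eta\|\nabla U_{\hat G_n}-\nabla U_G\|\le r_n$. For intuition, the mechanism inside the theorem runs as follows. The first-order condition gives $\|P\nabla U_G(\eta^{EB})\|\le r_n$. Strong concavity, applied with $\eta_1=\eta^{EB}$ and $\eta_2=\eta^O$, together with the bound $ab-\frac m2 b^2\le a^2/(2m)$, then gives $\Reg_n\le r_n^2/(2m)$. Both events have probability tending to one, so $\Reg_n\le r_n^2/(2m)$ holds with probability tending to one.

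Finally I will convert this to the stated rate. Since $r_n=O_p(n^{-1/2})$, we have $r_n^2=O_p(n^{-1})$. For any $\varepsilon>0$ choose $M$ with $\Pr(r_n^2>Mn^{-1})<\varepsilon$ eventually. Then $\Pr(\Reg_n>Mn^{-1}/(2m))\le\varepsilon+o(1)$, which gives $\Reg_n=O_p(n^{-1})$.

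The main obstacle is the interiority step, since Theorem~\ref{thm:second_order_general} assumes $\hat\eta$ is interior rather than deriving it. The plan is to get interiority from the first-order rate of Theorem~\ref{thm:gaussian_rootn} combined with quadratic growth, as above. If $\mathcal H$ has empty interior, as in an allocation simplex, I will work with relative interiors and projected gradients throughout.
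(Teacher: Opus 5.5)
Your proof is correct, and its core is the paper's own argument. The paper treats the corollary as an immediate consequence of Theorem~\ref{thm:second_order_general}: it simply squares $r_n=O_p(n^{-1/2})$ in $\Reg_n\le r_n^2/(2m)$.

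What you add is a check the paper skips. The theorem requires $\hat\eta$ to be an interior maximizer of $U_{\hat G_n}$, but the corollary silently drops that hypothesis. You recover it with probability approaching one from Theorem~\ref{thm:gaussian_rootn} and quadratic growth at $\eta^O$. This also gives the otherwise unused Assumptions~\ref{as:gaussian_summary_lik}, \ref{as:ql_state}, \ref{as:gauss_est_rate} and~\ref{as:Psi_Lip} a role.

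You also replace the paper's step $\nabla U_{\hat G_n}(\hat\eta)=0$ by the projected condition $P\nabla U_{\hat G_n}(\eta^{EB})=0$. This change is genuinely needed when $\mathcal H$ has empty interior. In the two-stratum example the full gradient $\big(s^2/(N_1+\hat a_1)^2,\,s^2/(N_2+\hat a_2)^2\big)$ never vanishes, so the paper's proof is valid there only in your projected form.

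You could shorten the argument, because the interiority step can be avoided entirely. At any maximizer of $U_{\hat G_n}$ over the convex set $\mathcal H$, the variational inequality $\nabla U_{\hat G_n}(\eta^{EB})'(\eta^O-\eta^{EB})\le 0$ holds, given the differentiability already presupposed by Assumption~\ref{as:grad_approx}. Writing $t=\|\eta^O-\eta^{EB}\|_2$, this gives $\nabla U_G(\eta^{EB})'(\eta^O-\eta^{EB})\le r_n t$. Strong concavity then yields $\Reg_n\le r_n t-\tfrac m2 t^2\le r_n^2/(2m)$ directly. This route needs neither Theorem~\ref{thm:gaussian_rootn} nor interiority of $\eta^{EB}$ or $\eta^O$.
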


\begin{corollary}[Second-order rate for NPMLE EB]
\label{cor:second_order_npmle}
Under Assumptions~\ref{as:gaussian_summary_lik},~\ref{as:ql_state},~\ref{as:Psi_Lip},~\ref{as:npmle_inputs},~\ref{as:strong_concavity}, and~\ref{as:grad_approx} with $r_n=O_p\left((\log n)^{(d+\max(d/2,4))/2}/\sqrt{n}\right)$, we have $\Reg_n = O_p\left((\log n)^{d+\max(d/2,4)}/n\right)$.
\end{corollary}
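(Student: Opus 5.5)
The statement is Corollary~\ref{cor:second_order_npmle}, and I would prove it by a direct application of Theorem~\ref{thm:second_order_general}. Under Assumptions~\ref{as:strong_concavity} and~\ref{as:grad_approx}, that theorem gives the deterministic bound $\Reg_n\le r_n^2/(2m)$. It holds on the event where the uniform gradient bound holds and $\eta^{EB}$ is an interior maximizer of $U_{\hat G_n}$. Under the hypotheses of the corollary, the gradient error is $r_n=O_p(a_n)$ with $a_n:=(\log n)^{(d+\max(d/2,4))/2}/\sqrt n$. The remaining work is to convert the deterministic bound into a stochastic-order statement and to control the events on which it applies.

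First, I would define the event $E_n$ on which three things hold: the gradient bound of Assumption~\ref{as:grad_approx} holds with $r_n$; $\hat\eta=\eta^{EB}$ lies in the relative interior of $\mathcal H$; and $\hat G_n\in\mathcal G_R$. Assumption~\ref{as:grad_approx} states that the gradient bound holds with probability approaching one.

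For interiority I would argue in two steps. By Theorem~\ref{thm:npmle_rootn}, $\Delta_n=O_p(a_n)\to0$. Strong concavity in Assumption~\ref{as:strong_concavity} then gives
\[
\tfrac m2\|\eta^{EB}-\eta^O\|^2\le U_G(\eta^O)-U_G(\eta^{EB})=\Reg_n\le 2\Delta_n,
\]
where the first inequality uses that $\eta^O$ is an interior maximizer, so $\nabla U_G(\eta^O)$ vanishes along the relative tangent directions. Hence $\|\eta^{EB}-\eta^O\|=O_p(a_n^{1/2})\to0$. Since $\eta^O$ lies in the relative interior, $\eta^{EB}$ is interior with probability tending to one, and so $\Pr(E_n)\to1$. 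This interiority step is the main potential obstacle. If the theorem's interior hypothesis is read as a maintained assumption, the step is immediate; otherwise the consistency argument above supplies it.

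On $E_n$, Theorem~\ref{thm:second_order_general} gives $\Reg_n\le r_n^2/(2m)$. Because $r_n=O_p(a_n)$, for any $\varepsilon>0$ there is an $M$ with $\Pr(r_n>Ma_n)<\varepsilon$ for all large $n$. Therefore
\[
\Pr\bigl(\Reg_n>M^2a_n^2/(2m)\bigr)\le\varepsilon+\Pr(E_n^c)\le 2\varepsilon
\]
eventually. It follows that $\Reg_n=O_p(a_n^2)=O_p\big((\log n)^{d+\max(d/2,4)}/n\big)$.
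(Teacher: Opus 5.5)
Your proposal is correct and follows the paper's route. The paper treats this corollary as an immediate consequence of Theorem~\ref{thm:second_order_general}, substituting $r_n=O_p\bigl((\log n)^{(d+\max(d/2,4))/2}/\sqrt n\bigr)$ into $\Reg_n\le r_n^2/(2m)$. Your extra steps fill in details the paper leaves implicit without changing the argument: the high-probability event $E_n$, and the consistency argument (via Theorem~\ref{thm:npmle_rootn}, the oracle inequality, and strong concavity) showing that $\eta^{EB}$ is interior with probability approaching one.
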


\subsection{Characterization of Gaussian Experiments}
\label{subsec:gaussian-comparison}

This section compares Gaussian experiments, i.e., designs $\eta$ that generate a normal observation with design-dependent noise covariance.
We emphasize three points.
First, Loewner improvements in sampling noise deliver prior- and objective-robust dominance (Blackwell dominance) and hence leave no scope for EB.
Second, the only benchmark in which all feasible Gaussian designs are completely ordered---so the prior never affects the design choice---is the genuinely univariate case $d=1$.
Third, even when $d>1$, a tractable and widely used class of problems arises when welfare depends on $\theta$ only through a scalar linear index $\omega=\alpha'\theta$, and the prior is Gaussian.
In this Gaussian--Gaussian environment, every scalar-index decision problem ranks designs by the posterior variance of $\omega$; hence the optimal design can be characterized by minimizing that posterior variance.
Unlike the $d=1$ benchmark, this criterion depends on the true prior covariance and therefore is exactly where EB can matter.

Consider a Gaussian experiment:
\[
Y_\eta = \theta_{n+1} + \varepsilon_\eta,
\qquad 
\varepsilon_\eta \sim \mathcal N\!\big(0,\Sigma(\eta)\big),
\qquad 
\Sigma(\eta) \succ 0 .
\]

Theorem~\ref{thm:blackwell-gaussian} is an immediate consequence of Blackwell’s
informative theorem \citep{Blackwell1951}. In Gaussian experiments, the corresponding comparison via Loewner ordering of covariance matrices is established in \cite{HansenTorgersen1974}.

\begin{theorem}[Loewner order implies Blackwell dominance]
\label{thm:blackwell-gaussian}
If $\Sigma(\eta_1)\preceq \Sigma(\eta_2)$ (Loewner order), then experiment $\eta_1$ Blackwell-dominates $\eta_2$:
for any prior $Q$ and any welfare function $W$ with well-defined value, $U_Q(\eta_1)\ge U_Q(\eta_2)$.
\end{theorem}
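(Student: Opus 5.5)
The plan is to show directly that experiment $\eta_2$ is a garbling of experiment $\eta_1$, and then use the standard fact that garbling cannot raise the Bayes value. That fact is just Jensen's inequality applied to the pointwise supremum in \eqref{eq:value-functional}.

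\emph{Construction of the garbling.} Set $D:=\Sigma(\eta_2)-\Sigma(\eta_1)\succeq 0$. Let $\xi\sim\mathcal N(0,D)$ be independent of $(\theta_{n+1},\varepsilon_{\eta_1})$; this is a possibly degenerate Gaussian, which is fine since $D$ is only positive semidefinite. Define $\tilde Y:=Y_{\eta_1}+\xi$. Conditional on $\theta$, $\tilde Y\sim\mathcal N(\theta,\Sigma(\eta_1)+D)=\mathcal N(\theta,\Sigma(\eta_2))$. So $\tilde Y$ has the same likelihood $p_{\eta_2}(\cdot\mid\theta)$ as $Y_{\eta_2}$, and the Markov kernel $K(y,\cdot)=\mathcal N(y,D)$ does not depend on $\theta$. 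Consequently $U_Q(\eta_2)$, which depends on the experiment only through its likelihood, equals the value computed with $\tilde Y$.

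\emph{Value comparison.} Take any prior $Q$ and any measurable action rule $a(\tilde y)$ for the experiment $\tilde Y$. Using the randomized rule $y\mapsto a(y+\xi)$ in experiment $\eta_1$ gives the same expected welfare $\E[W(a(\tilde Y),\theta)]$. Since $Y_{\eta_1}=y$ leaves $\theta$ and $\xi$ independent, we get
\[
\E\big[W(a(\tilde Y),\theta)\big]=\E\Big[\int \E\big[W(a(y+z),\theta)\mid Y_{\eta_1}=y\big]\,\varphi_D(z)\,dz\Big]\le \E\Big[\sup_{a'\in\mathcal A}\E\big[W(a',\theta)\mid Y_{\eta_1}\big]\Big]=U_Q(\eta_1).
\]
The inequality holds because, for each fixed $z$, $a(y+z)$ is a feasible action given $Y_{\eta_1}=y$. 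Taking the supremum over rules $a$ for $\tilde Y$ then gives $U_Q(\eta_2)\le U_Q(\eta_1)$. Here we use that the Bayes value in \eqref{eq:value-functional} equals the supremum over measurable decision rules of expected welfare, which is the interchange of sup and integral.

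\emph{Main obstacle.} The delicate step is that last interchange, namely that the pointwise-sup formulation equals the supremum over measurable decision rules. I would invoke the ``well-defined value'' hypothesis, in the form of Assumption~\ref{ass:value-envelope}: existence of a measurable selector $a^*_Q(\cdot;\eta_2)$ and integrability. Applying the argument above with $a=a^*_Q(\cdot;\eta_2)$ then avoids any general measurable-selection theorem, and Fubini is justified by the integrability of $|u_Q|$. Alternatively, the whole result can be cited from \cite{Blackwell1951} and \cite{HansenTorgersen1974} once the garbling is exhibited.
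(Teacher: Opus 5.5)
Your proposal is correct and matches the paper's proof: the paper builds the same garbling $Y_{\eta_2}\overset{d}{=}Y_{\eta_1}+Z$ with $Z\sim\mathcal N(0,\Sigma(\eta_2)-\Sigma(\eta_1))$ independent of $(\theta,\varepsilon_{\eta_1})$ and then simply cites Blackwell's theorem, whereas you also prove the step that garbling lowers value, using the measurable selector from the well-posedness assumption. One small fix: when $D$ is singular there is no Lebesgue density, so write the mixing as integration against the measure $\mathcal N(0,D)(dz)$ rather than $\varphi_D(z)\,dz$.
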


Theorem~\ref{thm:blackwell-gaussian} isolates a robust notion of ``better design.''
If one design reduces sampling noise in the Loewner sense, it is uniformly better for \emph{every} objective and \emph{every} prior, so there is no role for EB in choosing between them.
When $\Sigma(\eta_1)$ and $\Sigma(\eta_2)$ are not comparable in Loewner order---the generic case in multivariate design problems, where increasing precision in one direction typically requires sacrificing precision in another---Theorem~\ref{thm:blackwell-gaussian} is silent: neither design uniformly dominates the other.
The remainder of this subsection discusses two useful simplifications and their implications for when the prior can affect the optimal design.

\runin{Univariate $\theta$}
When $d=1$, Loewner order reduces to the usual scalar order on variances: $\Sigma(\eta_1)\le \Sigma(\eta_2)$.
Hence, in scalar Gaussian experiments, designs are completely Blackwell-ordered by their sampling variances, so the most informative feasible experiment is optimal regardless of the prior.
This is the only setting in which the prior is irrelevant for \emph{all} design comparisons: there is no cross-direction trade-off for the prior to resolve.

\runin{Scalar payoff-relevant state}
The univariate benchmark is special because feasible Gaussian designs are completely ordered.
With $d>1$, that complete order typically fails.
A central intermediate case is when payoffs depend on $\theta$ only through a scalar index $\omega=\alpha'\theta$ (e.g., an ATE or a portfolio value), and the prior is Gaussian.
In the Gaussian--Gaussian model, the induced posterior for $\omega$ is normal with a deterministic variance, which implies that every scalar-index decision problem ranks designs by that posterior variance.
Importantly, unlike the $d=1$ benchmark, the posterior variance depends on the true prior covariance $V$, so EB can change the optimal design through its estimate of $V$ even though the form of the objective function is irrelevant within this class. 

The next lemma and corollary make this reduction explicit by constructing a scalar sufficient statistic $Z_\eta$ that is Blackwell equivalent to the multivariate Gaussian signal for learning the payoff-relevant index $\omega$.
Related dynamic, continuous-time versions of Corollary \ref{cor:scalar-blackwell} appear in \cite{LiangEtAl2022}.
\begin{lemma}[Gaussian reduction to a payoff-relevant index]
\label{lem:reduce-to-scalar}
Let $Q=\mathcal N(\mu,V)$ and $Y_\eta=\theta+\varepsilon_\eta$ with $\varepsilon_\eta\sim\mathcal N(0,\Sigma(\eta))$ independent of $\theta$.
Fix $\alpha\in\R^d$ and define $\omega=\alpha'\theta$.
Then the posterior distribution of $\omega$ given $Y_\eta$ is Gaussian with deterministic variance given $\eta$:
\begin{equation}
\V(\omega\mid Y_\eta)
=
\alpha'\big(V^{-1}+\Sigma(\eta)^{-1}\big)^{-1}\alpha.
\label{eq:postvar_scalar_index}
\end{equation}
Moreover, there exists a scalar statistic $Z_\eta$ and a noise variance $\tau^2(\eta)\ge 0$ such that,
\[
Z_\eta = \omega + \xi_\eta,
\qquad 
\xi_\eta\sim \mathcal N\!\big(0,\tau^2(\eta)\big)
\quad\text{independent of }\omega,
\]
and the posterior law of $\omega$ given $Y_\eta$ coincides with the posterior law given $Z_\eta$.
\end{lemma}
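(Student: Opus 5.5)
The argument rests on standard Gaussian conjugacy. Since $(\theta,Y_\eta)$ is jointly Gaussian with $\theta\sim\mathcal N(\mu,V)$ and $Y_\eta\mid\theta\sim\mathcal N(\theta,\Sigma(\eta))$, the posterior of $\theta$ given $Y_\eta$ is
\[
\mathcal N\big(\mu_{post}(Y_\eta),\,V^{post}(\eta)\big),\qquad V^{post}(\eta)=\big(V^{-1}+\Sigma(\eta)^{-1}\big)^{-1},
\]
with posterior mean
\[
\mu_{post}(Y_\eta)=\mu+V(V+\Sigma(\eta))^{-1}(Y_\eta-\mu).
\]
This is the same computation underlying Proposition~\ref{prop:quadratic_design}. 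Linear maps preserve Gaussianity, so $\omega\mid Y_\eta\sim\mathcal N\big(\alpha'\mu_{post}(Y_\eta),\,\alpha'V^{post}(\eta)\alpha\big)$. This gives \eqref{eq:postvar_scalar_index}, and the variance is visibly free of $Y_\eta$. If $V$ is singular, I would write the posterior covariance as $V-V(V+\Sigma)^{-1}V$ and verify it equals the stated expression whenever $V\succ 0$. Throughout I take $V\succ0$, as the formula implicitly does.

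Next I construct $Z_\eta$. Let $M:=\alpha'\mu_{post}(Y_\eta)$, the posterior mean of $\omega$. Set $\sigma_\omega^2:=\alpha'V\alpha$ and $s^2:=\alpha'V^{post}(\eta)\alpha$. Then $M$ is Gaussian with mean $\alpha'\mu$ and variance $b^2:=\sigma_\omega^2-s^2$, by the law of total variance. Since $V^{post}\preceq V$, we have $b^2\ge0$.
\begin{itemize}
\item If $b^2=0$, the data are uninformative about $\omega$. Take $\tau^2(\eta)=\infty$ formally, or handle this degenerate case separately by a constant statistic. I would note that this case is excluded when $\alpha\neq0$ and $\Sigma(\eta)\succ0$, since then $V^{post}\prec V$.
\item If $b^2>0$, define
\[
Z_\eta:=\alpha'\mu+\frac{\sigma_\omega^2}{b^2}\,(M-\alpha'\mu),\qquad \tau^2(\eta):=\frac{\sigma_\omega^2\,s^2}{b^2}.
\]
\end{itemize}

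To show $Z_\eta=\omega+\xi_\eta$ with $\xi_\eta\sim\mathcal N(0,\tau^2(\eta))$ independent of $\omega$, I first compute $\Cov(M,\omega)=\Var(M)=b^2$. This holds because $M=\E[\omega\mid Y_\eta]$, so $\Cov(M,\omega-M)=0$. Hence $\Cov(Z_\eta,\omega)=\sigma_\omega^2=\Var(\omega)$, so $\xi_\eta:=Z_\eta-\omega$ is uncorrelated with $\omega$. Joint Gaussianity then upgrades uncorrelatedness to independence. Its variance is
\[
\frac{\sigma_\omega^4}{b^2}-\sigma_\omega^2=\frac{\sigma_\omega^2 s^2}{b^2}=\tau^2(\eta).
\]

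It remains to show that the posterior law of $\omega$ given $Y_\eta$ coincides with that given $Z_\eta$.
\begin{itemize}
\item Because $Z_\eta$ is a function of $Y_\eta$ and $\omega\mid Y_\eta$ is Gaussian with a deterministic variance and a mean depending on $Y_\eta$ only through $M$, a bijective affine function of $Z_\eta$, the law of $\omega\mid Y_\eta$ is $\sigma(Z_\eta)$-measurable.
\item By the tower property, $\mathcal L(\omega\mid Z_\eta)=\mathcal L(\omega\mid Y_\eta)$.
\item As a consistency check, the scalar Gaussian update for $Z_\eta=\omega+\xi_\eta$ gives posterior variance $(\sigma_\omega^{-2}+\tau^{-2})^{-1}=s^2$, matching \eqref{eq:postvar_scalar_index}.
\end{itemize}

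The main obstacle is identifying the correct rescaling so that the noise in $Z_\eta$ is independent of $\omega$, rather than merely having $M$ as a shrunk signal. The covariance identity $\Cov(M,\omega)=\Var(M)$ is the key step that resolves this. Everything else is routine Gaussian algebra, apart from the degenerate boundary case $b^2=0$.
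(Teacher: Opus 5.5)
Your proof is correct and follows essentially the same route as the paper: your rescaling factor $\sigma_\omega^2/b^2$ is exactly the paper's $1/\kappa_\eta$, and the identity $\Cov(M,\omega)=\Var(M)$ is the fact the paper uses to compute its regression of $Z_\eta^0$ on $\omega$. The remaining differences are cosmetic: you obtain independence from uncorrelatedness plus joint Gaussianity, and posterior equivalence from the tower property rather than from sufficiency of $Z_\eta^0$. Your handling of the degenerate case $b^2=0$, which you show cannot occur when $\alpha\neq 0$, $V\succ 0$ and $\Sigma(\eta)\succ 0$, is slightly more careful than the paper's.
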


\begin{corollary}[Objective-independent design criterion]
\label{cor:scalar-blackwell}
Maintain the assumptions of Lemma~\ref{lem:reduce-to-scalar} and fix the Gaussian prior $Q=\mathcal N(\mu,V)$.
For any decision problem whose welfare depends on $\theta$ only through $\omega=\alpha'\theta$, the optimal design $\eta^*$ should minimize $\V(\omega\mid Y_\eta).$
\end{corollary}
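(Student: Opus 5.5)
The plan is to reduce every design comparison to a comparison of scalar Gaussian experiments about $\omega$, and then invoke the univariate Blackwell ordering from Theorem~\ref{thm:blackwell-gaussian}. Fix a decision problem with $W(a,\theta)=\tilde W(a,\alpha'\theta)$. The value $U_Q(\eta)$ depends on the data only through the posterior law of $\omega$. To see this, write the inner supremum in \eqref{eq:dm.opt} as the predictive density of $Y_\eta$ times $\sup_a \E[\tilde W(a,\omega)\mid Y_\eta]$. By Lemma~\ref{lem:reduce-to-scalar}, the posterior of $\omega$ given $Y_\eta$ equals its posterior given the scalar statistic $Z_\eta=\omega+\xi_\eta$. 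Hence $U_Q(\eta)=\E\big[\sup_a\E[\tilde W(a,\omega)\mid Z_\eta]\big]$, which is the value of the scalar experiment $Z_\eta$ under the scalar prior $\omega\sim\mathcal N(\alpha'\mu,\alpha'V\alpha)$. The prior is Gaussian and linear images of Gaussians are Gaussian, so this marginal is well defined.

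Next, I would relate $\tau^2(\eta)$ to the posterior variance. In the scalar Gaussian model with prior variance $\sigma_\omega^2=\alpha'V\alpha$, the posterior variance is $\sigma_\omega^2\tau^2/(\sigma_\omega^2+\tau^2)$, and the same formula with $\tau^2=\infty$ covers the degenerate uninformative case. This expression is strictly increasing in $\tau^2$ whenever $\sigma_\omega^2>0$. Combined with \eqref{eq:postvar_scalar_index}, it shows that ordering designs by $\V(\omega\mid Y_\eta)$ is the same as ordering them by $\tau^2(\eta)$. If one prefers explicit formulas, $Z_\eta$ can be taken proportional to the posterior mean $\E[\omega\mid Y_\eta]$, rescaled so that it has unit loading on $\omega$. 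Then $\tau^2(\eta)$ is computed from $\V(\omega\mid Y_\eta)$ by inverting the formula above. The case $\sigma_\omega^2=0$ is trivial: all designs are equivalent, so any design minimizes the posterior variance.

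Finally, I would apply Theorem~\ref{thm:blackwell-gaussian} with $d=1$. If $\tau^2(\eta_1)\le\tau^2(\eta_2)$, then the scalar experiment $Z_{\eta_1}$ Blackwell-dominates $Z_{\eta_2}$. A direct garbling argument also works: add independent $\mathcal N(0,\tau^2(\eta_2)-\tau^2(\eta_1))$ noise to $Z_{\eta_1}$. Either way, $U_Q(\eta_1)\ge U_Q(\eta_2)$ for every welfare function $\tilde W$ with a well-defined value. Therefore any minimizer of $\V(\omega\mid Y_\eta)$ over $\mathcal H$ attains $\max_\eta U_Q(\eta)$, and the criterion does not depend on the objective.

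The main obstacle is the first step: showing rigorously that $U_Q(\eta)$ equals the value of the scalar experiment. This requires that the Bayes act computed from $Y_\eta$ depends only on $Z_\eta$, and that the outer expectation over $Y_\eta$ reduces to an expectation over $Z_\eta$. Both follow from $Z_\eta$ being a measurable function of $Y_\eta$ with the same posterior for $\omega$, together with the tower property. The measurable-selector part of Assumption~\ref{ass:value-envelope} handles the supremum.
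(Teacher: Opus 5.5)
Your proposal is correct and follows essentially the same route as the paper: replace each design by the Blackwell-equivalent scalar experiment $Z_\eta=\omega+\xi_\eta$ from Lemma~\ref{lem:reduce-to-scalar}, rank scalar normal experiments by $\tau^2(\eta)$ via Theorem~\ref{thm:blackwell-gaussian} (or the direct garbling), and use that $\tau^2(\eta)$ is a strictly increasing transformation of $\V(\omega\mid Y_\eta)$. Your tower-property argument that $U_Q(\eta)$ equals the value of the scalar experiment, and your handling of degenerate cases, spell out what the paper's proof compresses into the phrase ``Blackwell equivalent.''
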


Corollary~\ref{cor:scalar-blackwell} follows because scalar normal experiments are completely Blackwell-ordered by precision, and $\tau^2(\eta)$ in Lemma~\ref{lem:reduce-to-scalar} is a monotonic transformation of $\V(\omega\mid Y_{\eta})$.
The key message is that, in scalar-index problems, the objective does not affect the optimal design choice, but the prior typically does through $V$.
In contrast, when welfare depends on multiple components of $\theta$ (as in Objective~I with $L=I_G$), no such complete ordering exists and the objective and prior jointly determine how the designer trades off precision across directions.

\begin{remark}[Beyond Gaussian priors and linear indices]
Lemma~\ref{lem:reduce-to-scalar} uses normal--normal conjugacy and a linear payoff index.
With non-Gaussian priors on $\theta$ or with nonlinear payoff-relevant states $s=f(\theta)$, the posterior of the payoff-relevant object is generally not Gaussian and typically is not summarized by a deterministic posterior variance.
In such settings, design rankings can depend on richer features of the prior and on the objective.
\end{remark}

\section{Empirical Applications}
\label{sec:emp.app}

This section provides two empirical applications with complementary lessons about how empirical Bayes design works in practice.
The first uses oncology drug trials drawn from a public online database of clinical research studies, where the main challenge is constructing a usable prior-study dataset: aggregating heterogeneous studies requires choices about poolability (which estimands are comparable enough to combine), harmonizing subgroup definitions, and handling multiple eligible estimates per study.
The second is a multisite RCT setting in economics, where treating sites as ``studies'' makes the evidence base more naturally comparable and easier to aggregate; the focus then shifts to design choice itself, illustrating how different objectives can imply different optimal treatment allocations even under the same EB prior.

\subsection{Oncology Drug Trials}
\label{subsec:drug_empirical}

This section presents our first empirical application based on oncology trial results from ClinicalTrials.gov, an online database of clinical research studies and information about their results. We consider a researcher who is designing a new immunotherapy drug trial and wants to use published results from similar drugs to inform the design. We focus on PD-L1 (programmed death-ligand 1) expression as the key subgroup variable. PD-L1 is a biomarker that predicts response to cancer immunotherapy drugs, with higher expression usually associated with greater benefit. Because treatment effects differ across PD-L1 levels, trial designers routinely pre-specify subgroup analyses---and sometimes stratify randomization---by PD-L1 level (see, e.g., \citealp{ImEtAl2024, MilesEtAl2021}). We use PD-L1 subgroup effect estimates from prior trials to construct an empirical Bayes prior that informs how to allocate experimental effort across PD-L1 subgroups in the new trial.

\runin{Data construction and poolability}
We query ClinicalTrials.gov API using a broad oncology therapy basket, keep studies with posted results, and extract subgroup analyses that contain both an estimate and standard error. This yields 821 PD-L1 subgroup estimates from 87 studies. Each estimate corresponds to a treatment effect for a particular outcome measure. We classify outcomes into four broad families: overall survival (OS), progression-free survival (PFS), tumor response, and other efficacy outcomes (see Table~\ref{tab:drug_pool_counts}). As a basic poolability requirement, we restrict attention to estimates within the same outcome family.

\begin{table}[t!]
\caption{Drug-trial dataset used for prior construction}
\begin{center}
\begin{tabular}{lcc}
\toprule
Outcome family & Distinct studies & Subgroup estimates \\
\midrule
Overall survival (OS) & 71 & 202 \\
Progression-free survival (PFS) & 62 & 181 \\
Tumor response & 48 & 117 \\
Other efficacy outcomes & 39 & 321 \\
\midrule
Total & 87 & 821 \\
\bottomrule
\end{tabular}
\end{center}
\label{tab:drug_pool_counts}
{\footnotesize {\em Notes}: Counts are from the study pool with PD-L1 subgroup estimates.}
\end{table}

Three practical issues arise when aggregating prior studies.
First, subgroup definitions are not standardized across studies. PD-L1 can be reported using different scoring systems: positive versus negative, high versus low, combined positive score (CPS) thresholds, tumor proportion score thresholds, or tumor-cell and immune-cell staining classes. We therefore harmonize reported labels into two sides, \emph{high-expression} and \emph{low-expression}, with one guiding idea: preserve a monotone ranking of PD-L1 expression so that ``more expression'' maps to high-expression and ``less or no expression'' maps to low-expression, regardless of the original language. We provide concrete mappings and examples in the Online Appendix.

Second, studies often report multiple subgroup effects within the same trial record. This happens for three recurring reasons: multiple PD-L1 cutoff classes on the same side (for example, CPS $\ge 1$ and CPS $\ge 10$), multiple treatment arms, and repeated estimates under different follow-up periods or sample definitions. These within-study estimates are not independent because they are computed from overlapping participants; keeping all of them without taking into account their correlation would overweight those trials in prior estimation. We therefore enforce one estimate per study. If a study reports multiple cutoff-defined classes on one side, we keep the class closest to a cross-study reference cutoff within the same scoring system. In other cases, we randomly select one estimate. Full implementation details are reported in the Online Appendix.

Third, metric mixing is substantial across outcome families. Outside survival outcomes, reported effects combine incompatible scales (for example, odds ratios, least-square mean differences, and percentage-point differences), so direct pooling is not meaningful. We therefore narrow to the survival family (OS and PFS), where hazard ratios have a common interpretation across studies. A hazard ratio below one means treatment lowers the hazard rate and is therefore favorable for survival outcomes. 
We map hazard ratio to treatment effect $\mathrm{TE}=1-\mathrm{HR}$, so $\mathrm{TE}=0$ is the no-effect benchmark and larger $\mathrm{TE}$ implies better survival.

In total, the prior-study dataset for the OS outcome family includes 71 estimates from 64 studies (7 studies report both high and low, 52 report high only, and 5 report low only). We then represent the historical evidence as
\[
\hat\psi_i \mid \theta_i \ \stackrel{approx}{\sim}\ \mathcal N(R_i\theta_i,\Sigma_i),\qquad
\theta_i=\begin{bmatrix}\theta_{i,\text{high}}\\ \theta_{i,\text{low}}\end{bmatrix},
\]
with $R_i=I_2$ when both sides are reported and $R_i=[1\ 0]$ or $[0\ 1]$ when only one side is reported. We focus on the OS family; details and results for the PFS family are provided in the Online Appendix.

\runin{Results}
We estimate four EB priors: Gaussian-joint, NPMLE-joint, Gaussian-independent, and NPMLE-independent. ``Joint'' means estimating a bivariate prior on $(\theta_{\text{high}},\theta_{\text{low}})$ that allows dependence between the two subgroup effects. ``Independent'' means fitting each subgroup marginal prior separately and taking their product, which imposes cross-subgroup independence.
We report joint-prior results here; independent-prior results are reported in the Online Appendix.

Figure~\ref{fig:drug_prior_and_design} summarizes the estimated OS prior marginals under joint Gaussian EB and joint NPMLE EB. Both estimators imply that subgroup effects are centered above zero, with larger average effects in the high-expression subgroup than in the low-expression subgroup, consistent with the clinical motivation for PD-L1 stratification that higher expression predicts greater benefit from immunotherapy.
Quantitatively, under both EB approaches the estimated prior mean is about twice as large in the high-expression subgroup as in the low-expression subgroup, while the low-expression subgroup has the higher prior variance.

\begin{figure}[t!]
\caption{OS estimated prior marginals: Gaussian and NPMLE}
\begin{center}
\begin{tabular}{cc}
\includegraphics[width=0.42\textwidth]{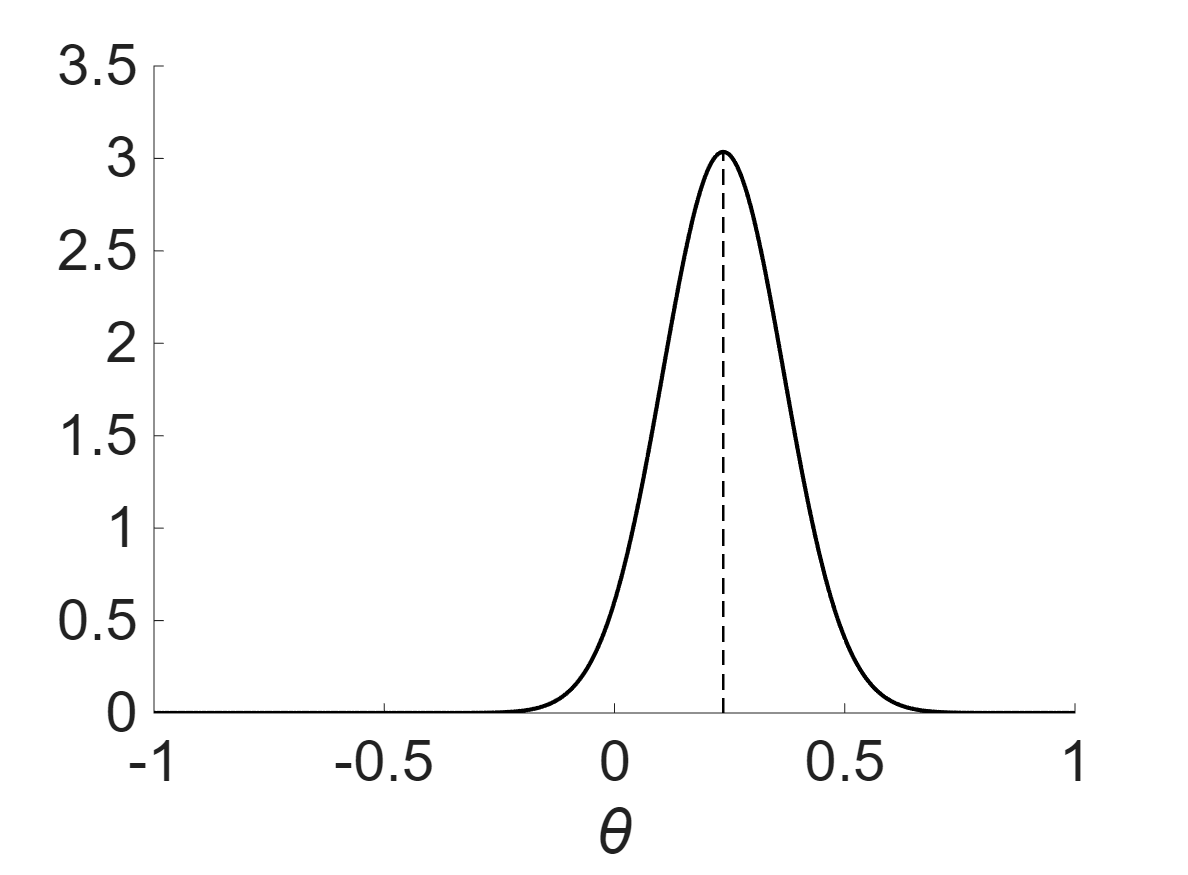} &
\includegraphics[width=0.42\textwidth]{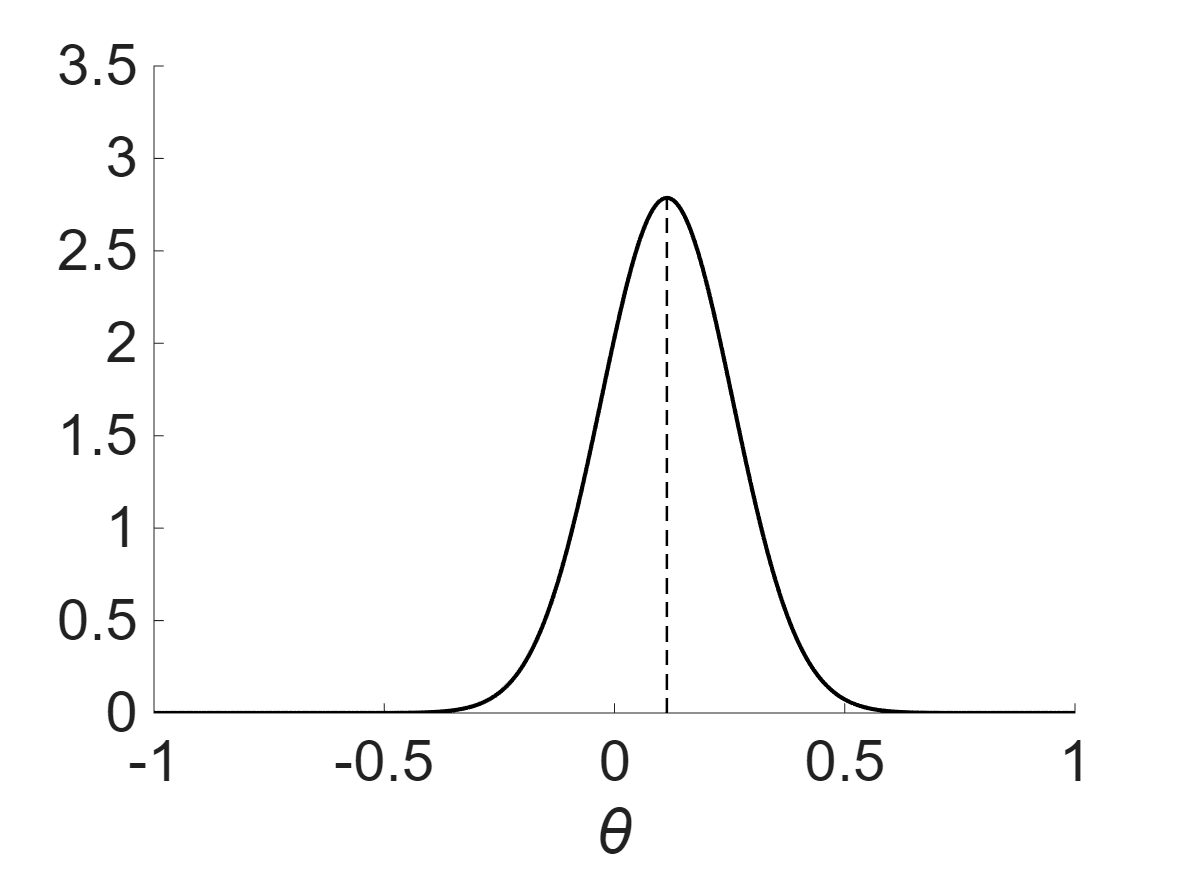} \\
\includegraphics[width=0.42\textwidth]{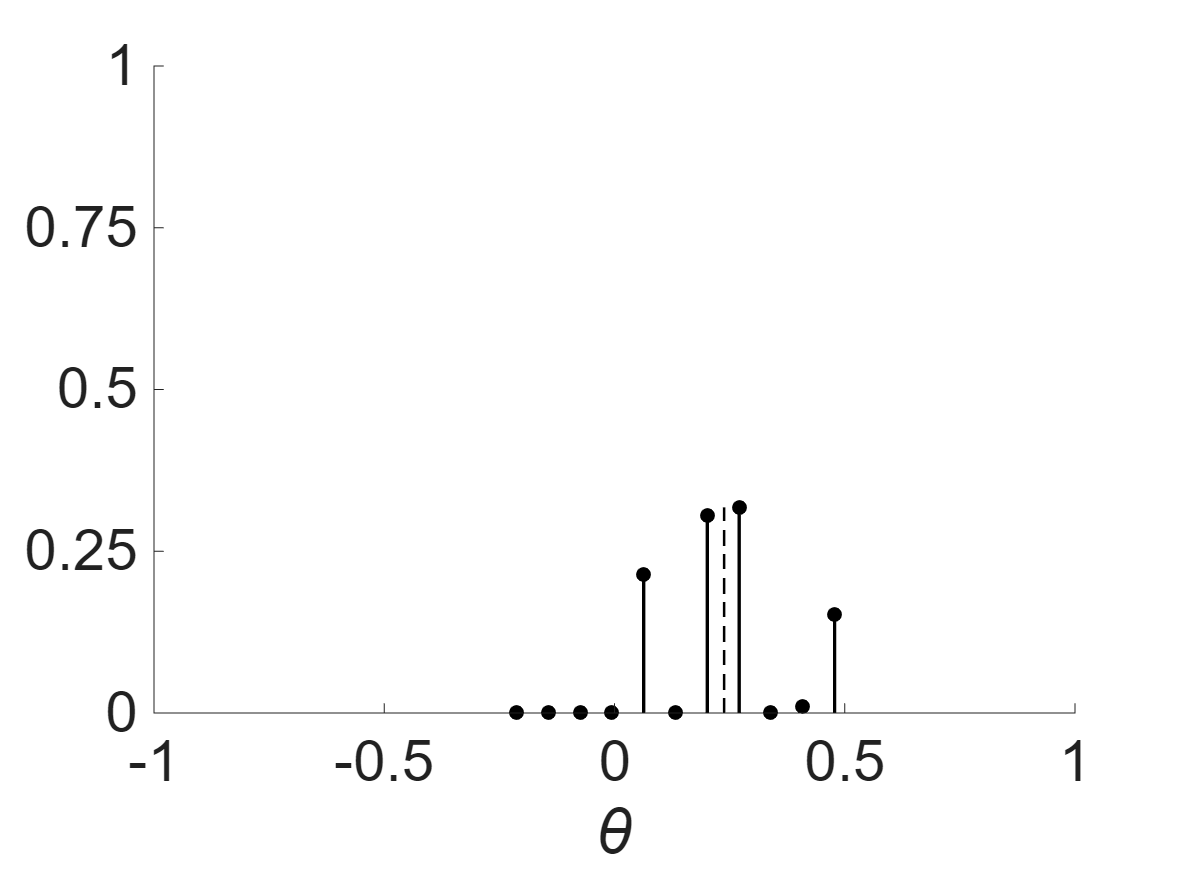} &
\includegraphics[width=0.42\textwidth]{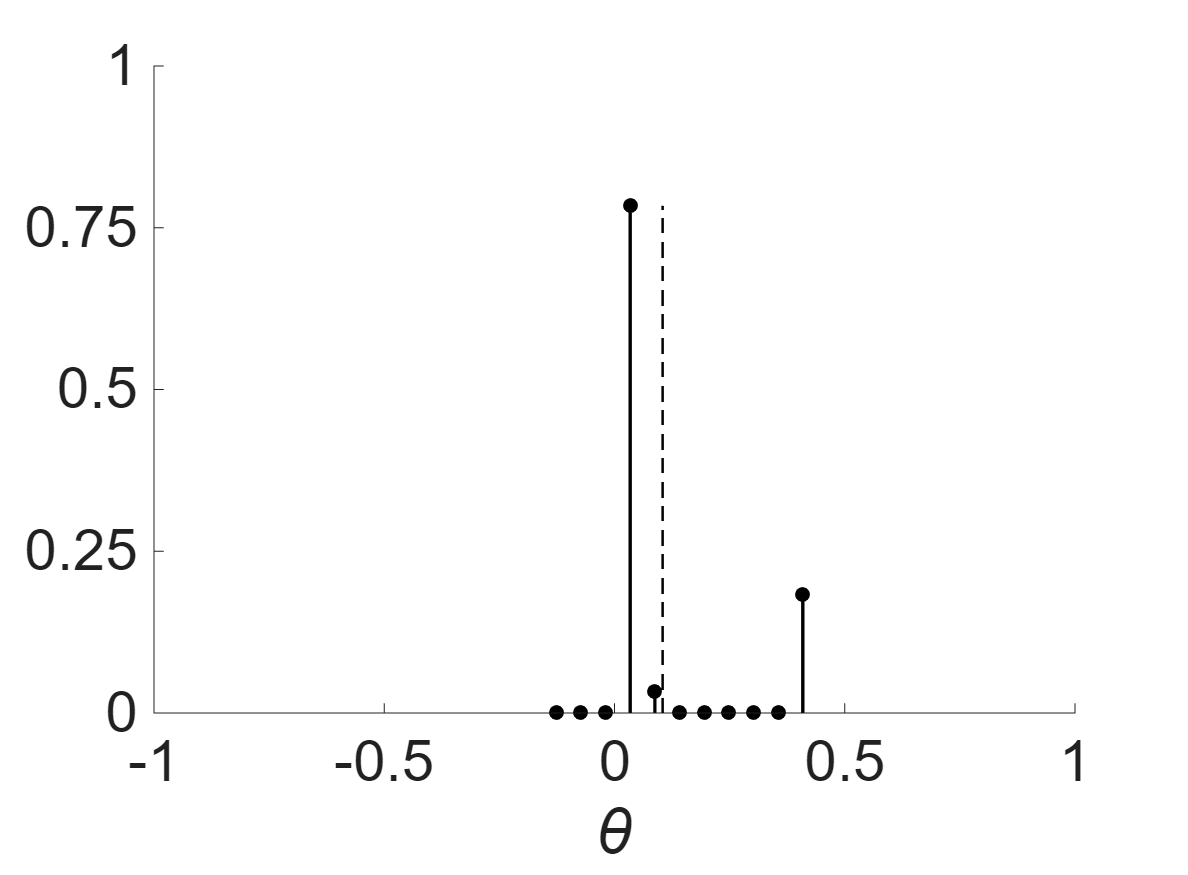}
\end{tabular}
\end{center}
\label{fig:drug_prior_and_design}
{\footnotesize {\em Notes}: Top row: Gaussian-joint prior marginals for PD-L1 high-expression and low-expression effects. Bottom row: NPMLE-joint prior marginals for the same two effects. Estimated prior means (high-expression, low-expression): Gaussian-joint EB $(0.236,0.114)$ and NPMLE-joint EB $(0.238,0.105)$; estimated prior variances (high-expression, low-expression): Gaussian-joint EB $(0.017,0.020)$ and NPMLE-joint EB $(0.016,0.021)$.}
\end{figure}

For the propensity-score design of the new drug trial, we focus on Objective~I, which minimizes posterior quadratic loss for $(\theta_{\text{new,high}},\theta_{\text{new,low}})$. For our numerical illustration, we set subgroup shares to $\pi=(0.5,0.5)$ and arm-specific outcome variances to $\sigma_{1,\text{high}}^2=\sigma_{0,\text{high}}^2=\sigma_{1,\text{low}}^2=\sigma_{0,\text{low}}^2=1$. We impose overlap bounds $e_g\in[0.05,0.95]$ for $g\in\{\text{high},\text{low}\}$, treatment costs $c_{\text{high}}=c_{\text{low}}=2$, and a budget constraint $\sum_g \pi_g c_g e_g\le B$ with $B=0.5$.
As a comparison, the no-information benchmark uses a diffuse Gaussian prior (formally $V^{-1}\to 0$), so the design is driven only by sampling precision and constraints. Under the symmetric inputs above and equal costs, this benchmark yields the balanced feasible design $(e_{\text{high}},e_{\text{low}})=(0.25,0.25)$.

Table~\ref{tab:drug_os_design_2x2} reports the optimal designs under the estimated EB priors. Both joint priors shift assignment toward the low-expression subgroup: $(0.229,0.271)$ under Gaussian EB and $(0.231,0.269)$ under NPMLE EB. 
Gaussian and NPMLE can imply different optimal designs because Gaussian imposes a single elliptical prior shape, while NPMLE allows discrete mass and non-Gaussian tail behavior; when those shape differences matter for posterior-risk gradients, the implied designs can diverge more. In OS, the two joint priors are close in means and second moments, so the resulting designs are also close.

Table~\ref{tab:drug_os_design_2x2} reports the Objective~I optimal designs under the estimated EB priors. The two joint-prior designs are similar: $(0.229,0.271)$ under Gaussian-joint EB and $(0.231,0.269)$ under NPMLE-joint EB. Both shift assignment toward the low-expression subgroup relative to the no-information benchmark because, under both priors, the low-expression effect has higher prior variance than the high-expression effect; the posterior-risk objective therefore places more value on reducing uncertainty about $\theta_{\text{new,low}}$. OS independent-subgroup robustness and all PFS results are reported in the Online Appendix.

\begin{table}[t!]
\caption{Objective-I optimal design for OS (joint priors)}
\begin{center}
\begin{tabular}{lcc}
\toprule
Prior & $e_{\text{high}}$ & $e_{\text{low}}$ \\
\midrule
Gaussian-joint EB & 0.229 & 0.271 \\
NPMLE-joint EB & 0.231 & 0.269 \\
No-information & 0.250 & 0.250 \\
\bottomrule
\end{tabular}
\end{center}
\label{tab:drug_os_design_2x2}
\end{table}

\subsection{Project STAR}
\label{subsec:star_empirical}
Our second application uses Project STAR (Student to Teacher Achievement Ratio), the Tennessee class-size RCT, to illustrate our EB design approach in a setting with rich historical evidence. We treat each STAR school as one prior ``study'' because random assignment to class types was carried out \emph{within} schools and the treatment contrast and outcome measurement follow a common protocol across sites \citep{Krueger1999,KruegerWhitmore2001}, so each school yields an internally valid but noisy vector of stratum-level treatment-effect estimates. Cross-school differences in student composition and implementation generate cross-site treatment-effect heterogeneity, and the resulting external-validity problem motivates borrowing strength across sites when forecasting effects in a new setting \citep{Menzel2025,AdjahoChristensen2025}. We use empirical Bayes to aggregate the collection of school-level estimates to learn the cross-school distribution and form a predictive prior for a future, STAR-like experiment. We then use this predictive prior to guide the design of the next experiment by choosing stratum-specific treatment propensities: in the new experiment, each student is randomized with a known assignment probability that can vary by stratum, subject to overlap and budget constraints.

\runin{Data}
Project STAR randomly assigned students (and teachers) within each participating school to one of three class types: small classes, regular-size classes, and regular-size classes with a full-time teacher aide. Let $D=1$ denote assignment to a small class and $D=0$ denote assignment to either a regular class or a regular+aide class. We pool the latter two arms because achievement outcomes are similar across them in STAR \citep{Krueger1999,KruegerWhitmore2001}. We focus on end-of-kindergarten reading scores as the outcome variable. For the new experiment, we define four randomization strata using the $2\times 2$ cross of race and free-lunch eligibility (a proxy for low family income):
\[
\begin{aligned}
S_1&=(\text{non-white},\ \text{free lunch}),\quad
S_2=(\text{non-white},\ \text{non-free lunch}),\\
S_3&=(\text{white},\ \text{free lunch}),\quad
S_4=(\text{white},\ \text{non-free lunch}).
\end{aligned}
\]
Among baseline covariates, race and free-lunch status are the two strongest single predictors of kindergarten reading in our sample, and the partition is directly policy-relevant for targeted class-size policies. 
We keep students with non-missing outcomes and covariates, leaving $N=5{,}772$ students in $79$ schools.
The resulting stratum shares in the STAR samples are
$(26.1\%,6.5\%,22.2\%,45.2\%)$ for $(S_1,S_2,S_3,S_4)$.

Each school is treated as one prior study $i$. For each stratum $g$, we construct the school-level stratum-specific treatment-effect estimate from the raw data as the within-school difference in mean outcomes,
\[
\hat\theta_{ig}=\bar Y_{ig}(1)-\bar Y_{ig}(0),\qquad
\widehat{\mathrm{SE}}_{ig}^2=\frac{s_{1ig}^2}{n_{1ig}}+\frac{s_{0ig}^2}{n_{0ig}},
\]
where $\bar Y_{ig}(d)$ is the average reading score among students in school $i$, stratum $g$, assigned to arm $D=d$, and $(n_{dig},s_{dig}^2)$ are the corresponding sample sizes and variances.
If a school has fewer than two observations in either arm for stratum $g$, we treat that component as missing and represent the observed components with a selector matrix $R_i$. Formally, prior information enters as $\hat\psi_i\sim \mathcal N(R_i\theta_i,\Sigma_i)$. This keeps all $79$ schools in prior estimation: $10$ schools contribute one stratum estimate, $53$ contribute two, $10$ contribute three, and $6$ contribute all four.
Figure~\ref{fig:star_data_rct} plots these school-by-stratum estimates with 95\% confidence intervals; the substantial dispersion across schools within each stratum provides direct evidence of cross-site treatment-effect heterogeneity, motivating EB aggregation across schools.

\begin{figure}[t!]
\caption{School-level treatment-effect estimates by stratum}
\begin{center}
\begin{tabular}{cc}
\includegraphics[width=0.42\textwidth]{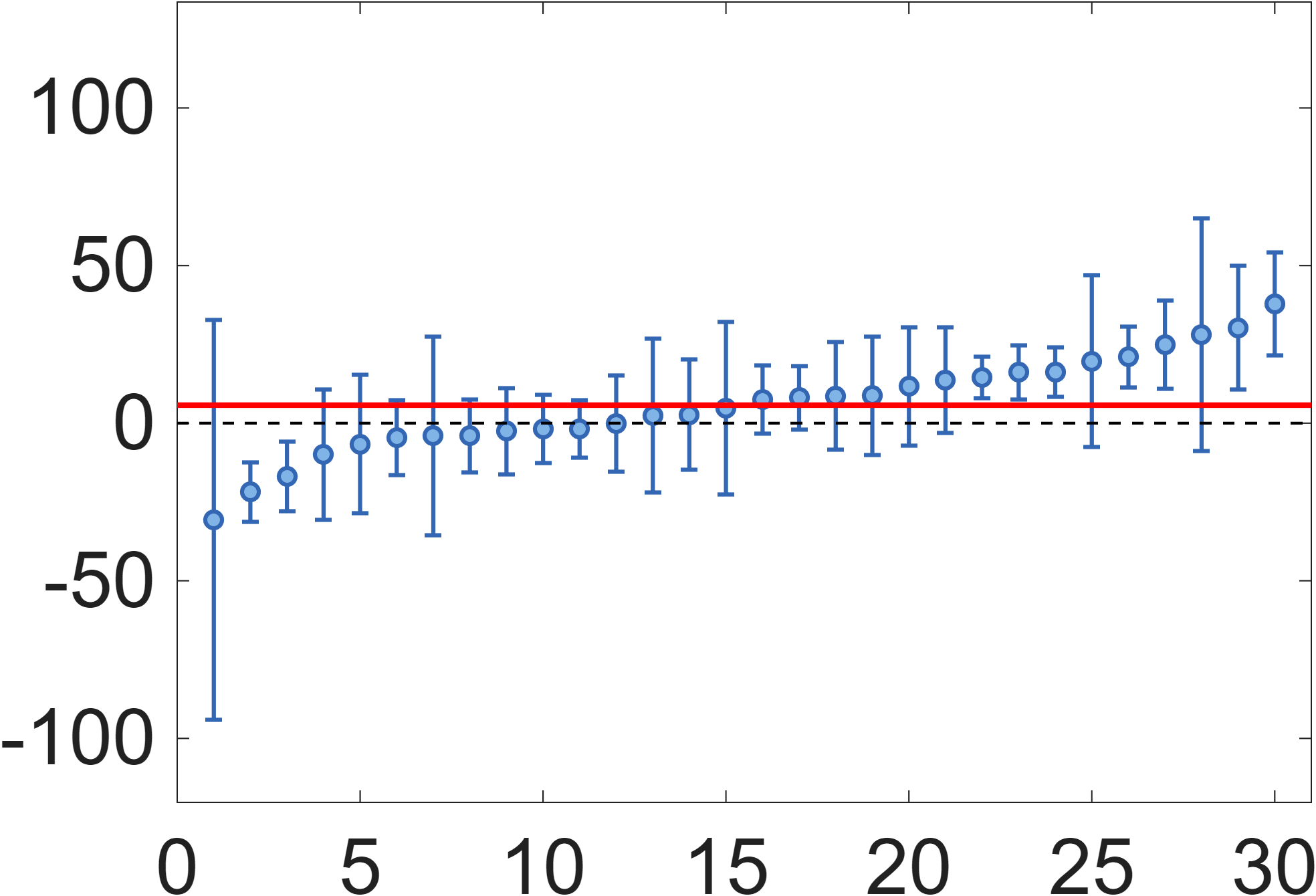} &
\includegraphics[width=0.42\textwidth]{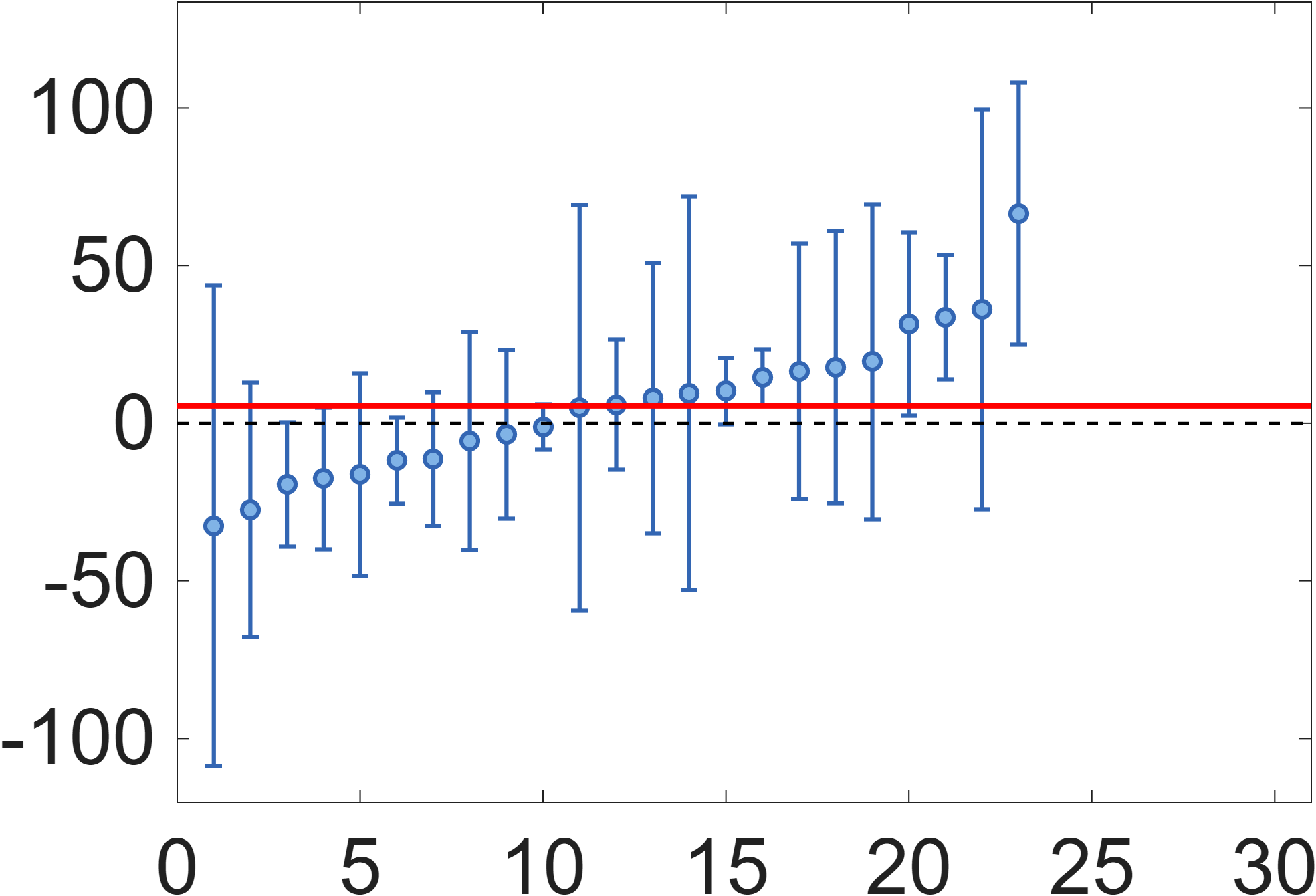} \\
\includegraphics[width=0.42\textwidth]{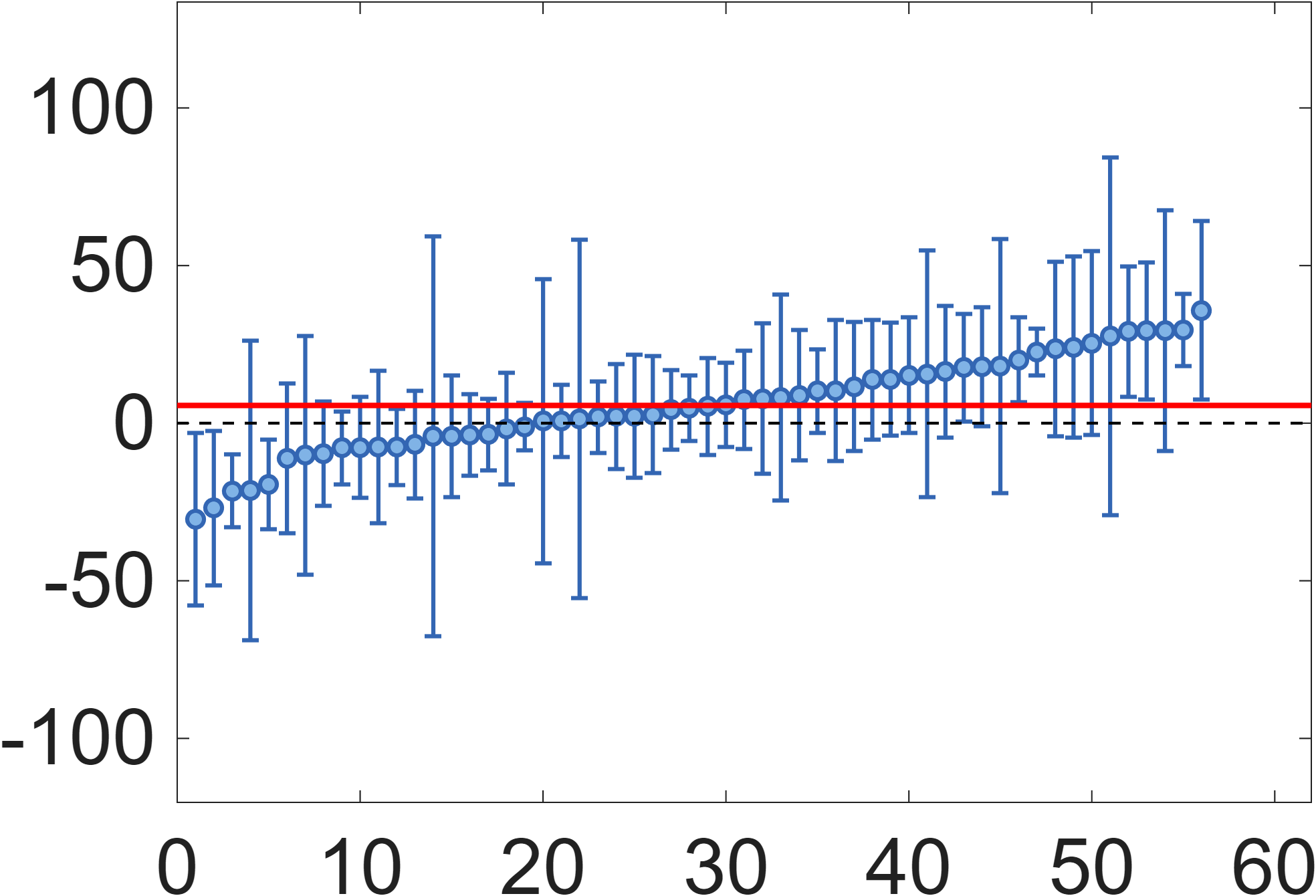} &
\includegraphics[width=0.42\textwidth]{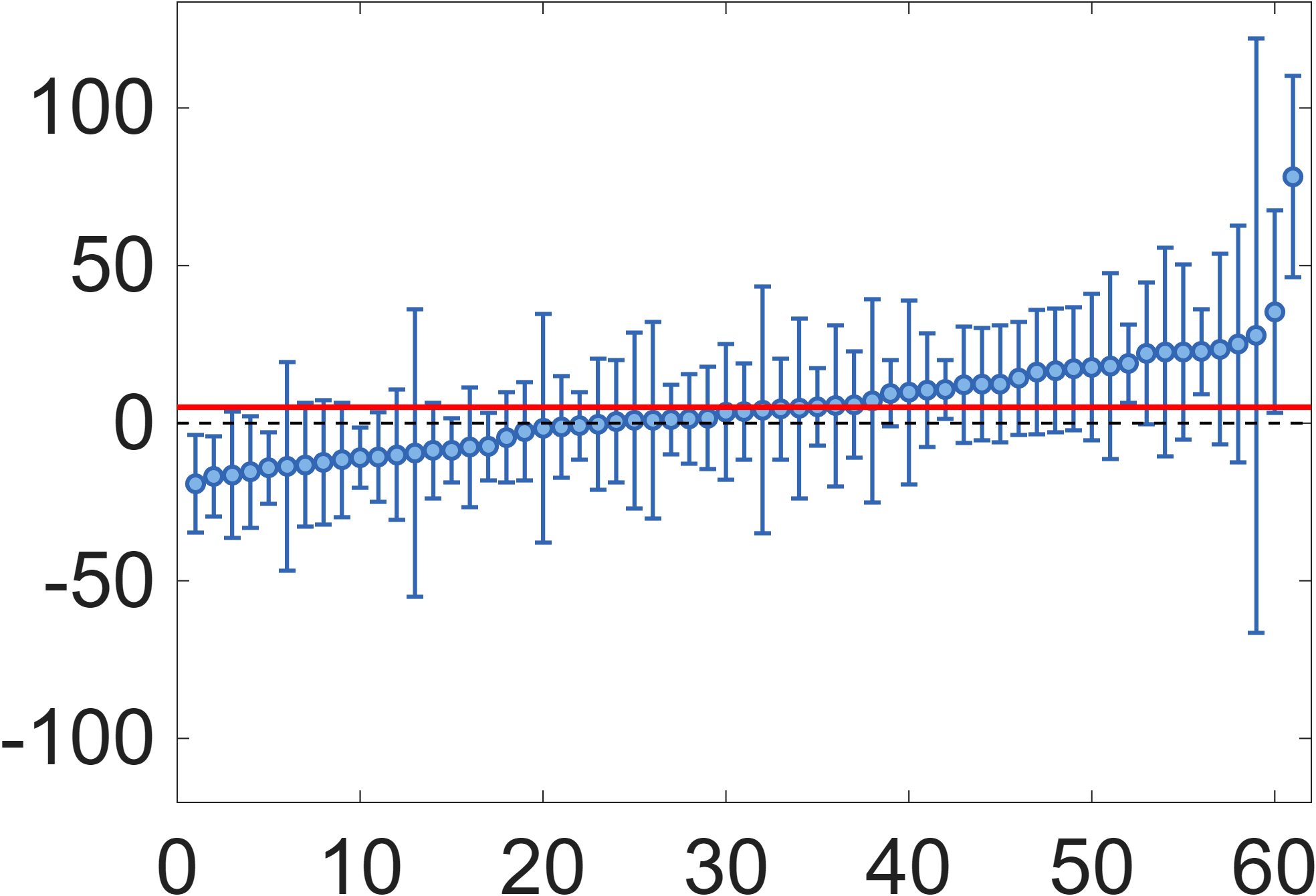}
\end{tabular}
\end{center}
\label{fig:star_data_rct}
{\footnotesize {\em Notes}: Top-left to bottom-right panels represent $S_1,S_2,S_3,S_4$. Each panel plots school-level stratum-specific estimates $\hat\theta_{ig}$ with 95\% confidence intervals ($\hat\theta_{ig}\pm 1.96\,\mathrm{SE}_{ig}$), sorted by $\hat\theta_{ig}$. The number of school-level estimates is $(30,23,56,61)$ for $(S_1,S_2,S_3,S_4)$.}
\end{figure}

\runin{Results}
As in the first application, we estimate four EB priors for the stratum-specific treatment-effect vector: Gaussian and NPMLE, each under independent-subgroup and joint specifications. Figure~\ref{fig:star_prior_dist} summarizes the \emph{joint} priors used in the main text. Despite the different functional forms, the two joint estimators tell a consistent story: expected gains from small classes are largest for the more disadvantaged strata $S1$ and smallest for the most advantaged stratum $S4$, and the amount of prior uncertainty is substantial---which mirrors the wide dispersion of school-level estimates shown in Figure~\ref{fig:star_data_rct}. Quantitatively, the implied prior means span roughly $3$--$6$ points across strata, while prior variances are on the order of $10^2$ and are notably larger for the disadvantaged groups (especially $S_2$), indicating that historical evidence is informative but far from decisive. The joint priors also imply positive cross-stratum dependence, so additional precision in one subgroup can spill over to others.

\begin{figure}[t!]
\caption{Estimated prior distributions and moments (joint priors)}
\begin{center}
\begin{tabular}{cc}
\includegraphics[width=0.42\textwidth]{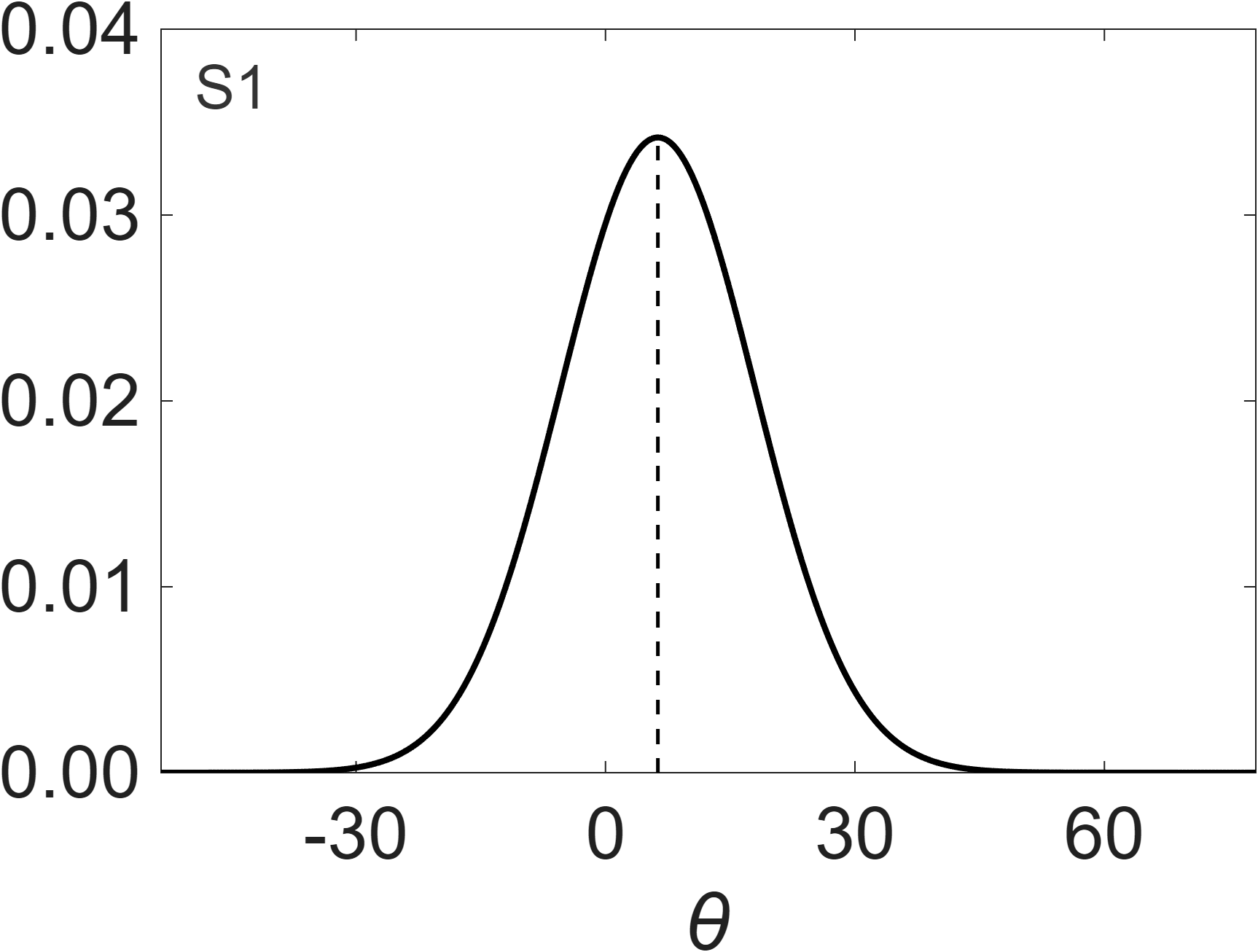} &
\includegraphics[width=0.42\textwidth]{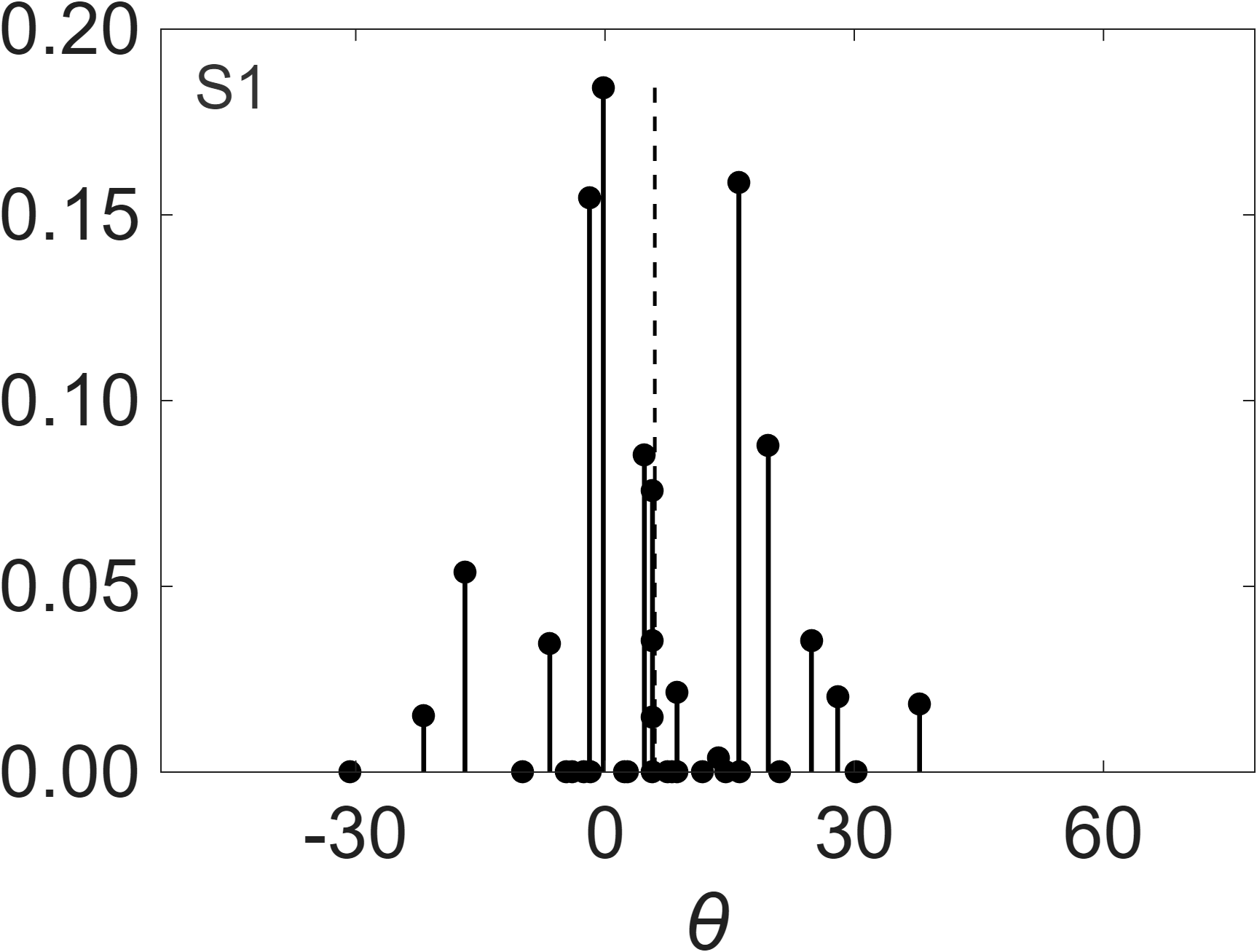} \\
\includegraphics[width=0.42\textwidth]{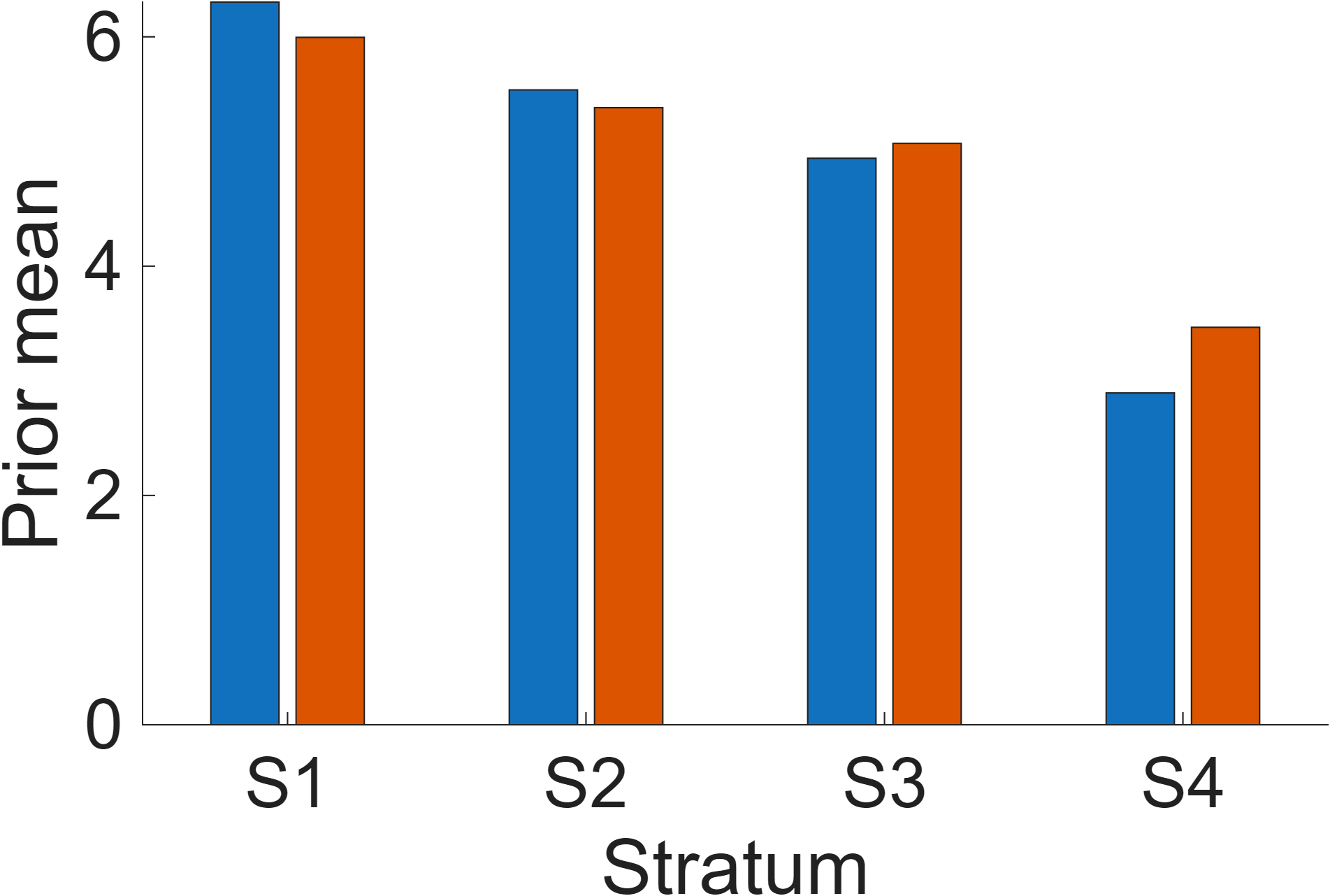} &
\includegraphics[width=0.42\textwidth]{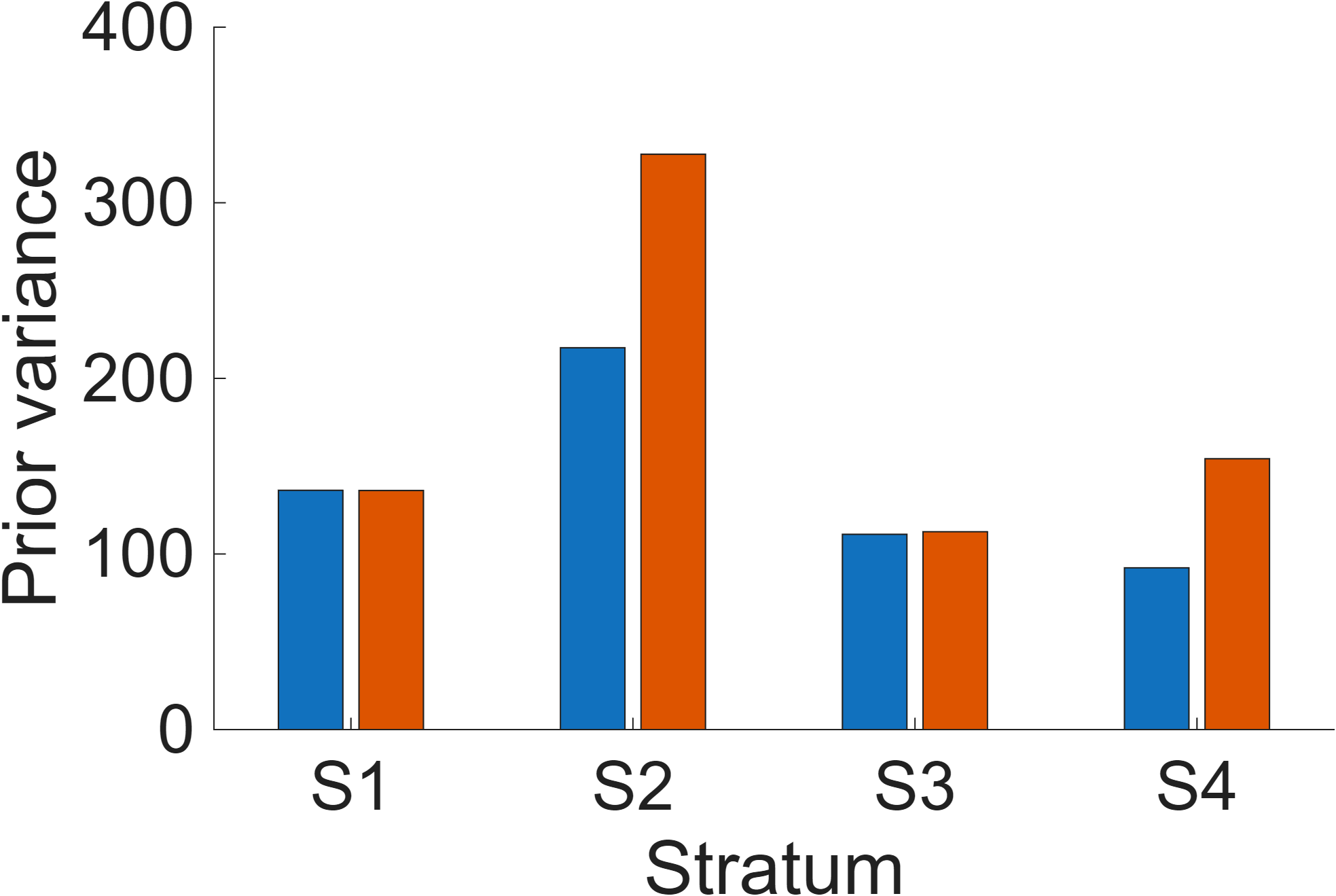}
\end{tabular}
\end{center}
\label{fig:star_prior_dist}
{\footnotesize {\em Notes}: 
Top-left is Gaussian-joint and top-right is NPMLE-joint for stratum $S_1$. Bottom-left compares prior means across strata, and bottom-right compares prior variances across strata. Blue bars represent the mean or variance under Gaussian-joint, and red bars represent the mean or variance under NPMLE-joint.
}
\end{figure}

We then solve for new-site treatment propensities $e=(e_1,\dots,e_4)$ under overlap bounds $e_g\in[0.05,0.95]$ and an average treatment budget $\sum_{g=1}^4 \pi_g e_g\le 0.30$, using the pooled STAR stratum shares $\pi$ from the data section. Because the new-site outcome variances $(\sigma_{1g}^2,\sigma_{0g}^2)$ are unknown, we use the estimated variance from pooled historical STAR data:
\[
\sigma_{1g}^2=\widehat{\Var}(Y\mid D=1,S=g),\qquad
\sigma_{0g}^2=\widehat{\Var}(Y\mid D=0,S=g).
\]
Figure~\ref{fig:star_main_alloc} reports the resulting optimal designs under Gaussian-joint and NPMLE-joint priors. The two EB estimators yield very similar allocations across all three objectives, so the qualitative design implications are robust to the choice of joint-prior family.

Under Objective~I (estimation), optimal propensities are interior. A useful benchmark is the within-stratum sampling-variance--minimizing split absent the budget constraint,
$
e_g^{\text{bal}}=\sigma_{1g}/(\sigma_{1g}+\sigma_{0g}),
$
which is close to one-half in all strata here (in particular, $e_2^{\text{bal}}= 0.48$). Consistent with this benchmark, the no-information design already places $e_2$ close to $e_2^{\text{bal}}$ (about $0.46$), and the joint-EB designs keep $e_2$ in the same range (about $0.42$--$0.45$). Thus, even though $S_2$ is the most prior-uncertain component, EB does not ``solve'' this by pushing $e_2$ further upward; instead it mainly reallocates probability mass away from $S_4$ and raises $e_1$ toward $e_2$. 
This matches Proposition~\ref{prop:quadratic_design}: posterior-risk reduction depends not only on each stratum's prior variance but also on the covariance between strata. Here, $S_2$ is a small-share stratum ($\pi_2=6.5\%$ versus $\pi_1=26.1\%$), so the total information available within $S_2$ is limited once the treated/control split is already near $e_2^{\text{bal}}$. 
At the same time, the joint EB prior links the two disadvantaged strata: $S_1$ is most correlated with $S_2$ (Corr$(S_1,S_2)=0.79$ under Gaussian-joint and $0.31$ under NPMLE-joint), so improving precision in the larger stratum $S_1$ also tightens inference about $S_2$.
This combination---small $\pi_2$ and positive cross-stratum correlation---explains why the Objective~I solution keeps $e_1$ close to $e_2$ rather than concentrating experimentation exclusively in $S_2$.

Under Objective~II (in-experiment welfare), the design becomes primarily mean-targeting. Both joint priors push the highest-mean stratum to the upper overlap bound ($e_1=0.95$) and the lowest-mean strata to the lower bound ($e_3=e_4=0.05$), illustrating the exploitation force characterized in Proposition~\ref{prop:insample_formal}; the remaining stratum is set at an interior value $0.28$ determined by the budget and overlap constraints.

Under Objective~III (post-policy choice), Proposition~\ref{prop:policy_voI_formal} implies that policy-relevant uncertainty is highest where the posterior mean is close to the adoption boundary and posterior uncertainty remains large. In our data, $S_2$ is the most policy-relevant uncertain stratum because it combines moderate prior mean with the largest prior variance. Moreover, its target share is small ($\pi_2=6.5\%$), so the design-stage sampling noise term $s^2_g(e)$ is large for a given $e$, making extra allocation especially valuable for policy learning. This is why $e_2$ rises sharply relative to no-information (from $0.149$ to $0.331$--$0.341$). By contrast, $S_4$ has a lower mean but also substantially lower uncertainty and much larger target share ($\pi_4=45.2\%$), so posterior uncertainty is reduced more quickly even without heavy reallocation; correspondingly, optimal designs reduce $e_4$ (from $0.317$ to $0.265$--$0.271$).

\begin{figure}[t!]
\caption{Optimal treatment propensities by objective under EB priors}
\begin{center}
\begin{tabular}{cc}
\includegraphics[width=0.47\textwidth]{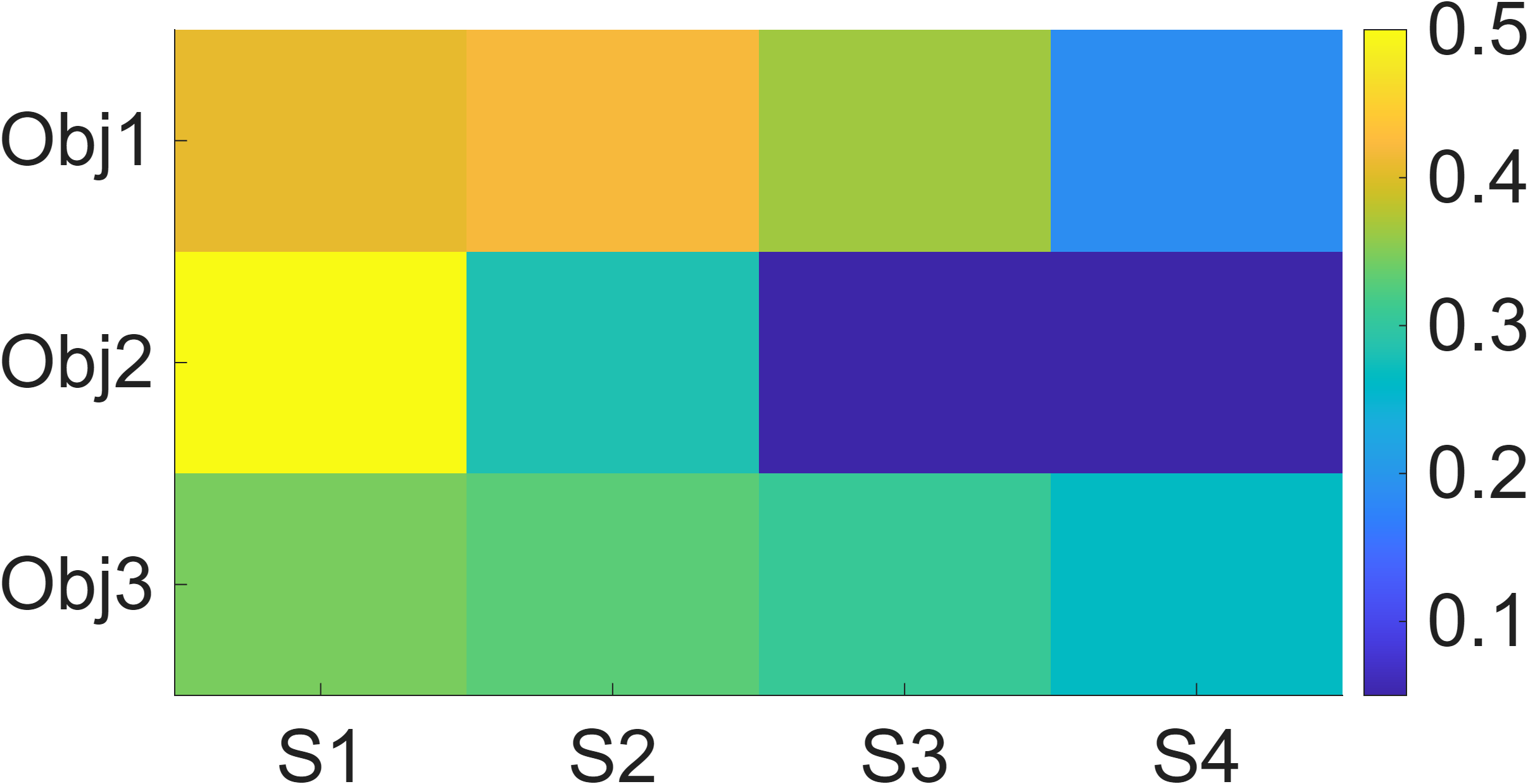} &
\includegraphics[width=0.47\textwidth]{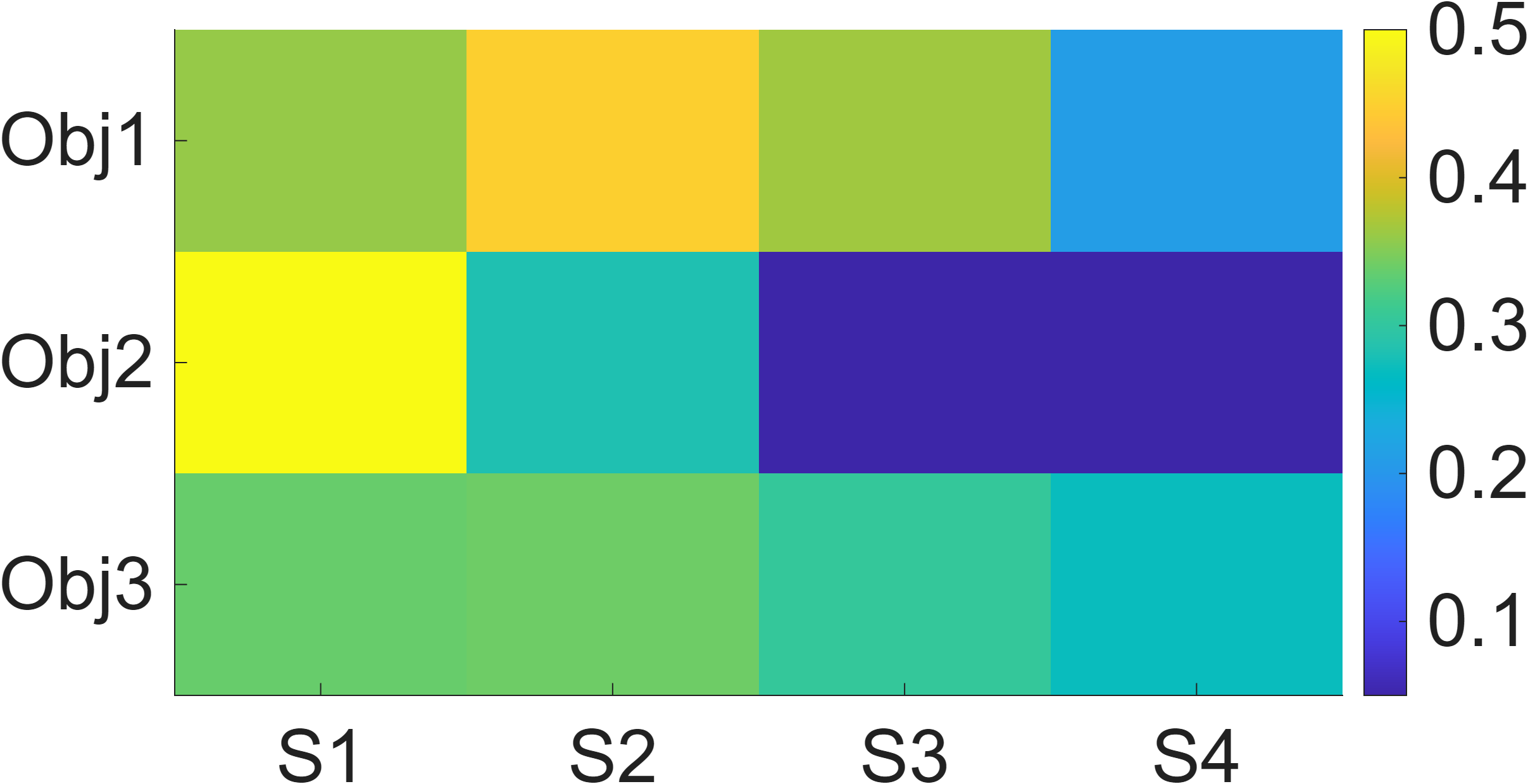}
\end{tabular}
\end{center}
\label{fig:star_main_alloc}
{\footnotesize {\em Notes}: Each panel corresponds to one joint-prior estimator. Rows are Objective I (estimation), Objective II (in-experiment welfare), and Objective III (post-policy choice); columns are strata $S_1$--$S_4$. Bars plot the optimal propensities under the stated budget and overlap constraints; the text discusses the key quantitative contrasts.}
\end{figure}

In the Online Appendix, we report additional STAR results, including independent-subgroup EB designs and comparisons of optimal designs across prior estimation schemes under each objective.

\section{Conclusion}
\label{sec:conclusion}

This paper develops an empirical Bayes approach to experimental design that leverages information from prior related studies. The framework combines (i) prior estimation from collections of related studies and (ii) a decision-theoretic criterion for choosing designs. The central message is that external evidence can be translated into a predictive prior for the next study, and that this prior can reshape optimal assignment decisions in constrained designs.

The analysis of propensity score design highlights three channels through which EB affects optimal designs: prior dispersion reshapes precision allocation under quadratic-loss estimation; prior means drive targeting under in-experiment welfare; and both means and variances matter under post-experiment policy choice, concentrating information where adoption decisions are most uncertain. The theoretical analysis links design performance to prior-estimation accuracy through a finite-sample oracle inequality and comparison bounds against a no-information benchmark that ignores external evidence; it also shows that plug-in EB designs are asymptotically oracle-optimal under weak convergence of the estimated prior, derives explicit first- and second-order regret rates for Gaussian EB and NPMLE EB, and clarifies when the prior is irrelevant in Gaussian design comparisons (only in the genuinely univariate case). The two empirical applications illustrate that our approach can deliver interpretable reallocations of experimental effort in practice.

Several extensions worth further investigation. First, while the current framework assumes exchangeability across studies, richer models could incorporate study-level covariates to allow for conditional exchangeability; one natural approach is to adopt covariate-powered empirical Bayes methods as in \cite{IgnatiadisWager2019}. Second, the framework could be extended to sequential experimental design, in which the empirical Bayes prior is updated as new experimental evidence accumulates over time. Third, complementary work could develop inference procedures for the design choice itself that properly account for uncertainty arising from empirical Bayes prior estimation, following the ideas in \cite{AndrewsChen2025}.

\bibliography{ref_design}

\clearpage
\renewcommand{\thepage}{A.\arabic{page}}
\setcounter{page}{1}

\begin{appendix}
	\markright{Online Appendix -- This Version: \today }
	
	\renewcommand{\theequation}{A.\arabic{equation}}
	\setcounter{equation}{0}	
	\renewcommand*\thetable{A-\arabic{table}}
	\setcounter{table}{0}
	\renewcommand*\thefigure{A-\arabic{figure}}
	\setcounter{figure}{0}
	\renewcommand*\thetheorem{A-\arabic{theorem}}
	\setcounter{theorem}{0}
    \renewcommand*\theproposition{A-\arabic{proposition}}
    \setcounter{proposition}{0}
	\begin{center}
		
		{\large {\bf Online Appendix: Using Prior Studies to Design Experiments: An Empirical Bayes Approach}}
		
		{\bf Zhiheng You}
	\end{center}
	
	\thispagestyle{empty} 
	
	\noindent This appendix consists of the following sections: 
	
	\begin{itemize}
		\item[A.] Details on Example \ref{emp:two_stratum_quadratic}
		\item[B.] Verification of Assumption~\ref{as:Psi_Lip} for Common Decision Problems
		\item[C.] Verification of Second-Order Assumptions in Example~\ref{emp:two_stratum_quadratic}
		\item[D.] Proofs
        \item[E.] Additional Results for the Empirical Applications
	\end{itemize}
	
	\newpage
	

\section{Details on Example \ref{emp:two_stratum_quadratic}}
\label{app:two_stratum_quadratic}

We make an extended statement for example \ref{emp:two_stratum_quadratic}.
\begin{proposition}
In the two-stratum specialization of the quadratic-loss Gaussian criterion in Proposition~\ref{prop:quadratic_design}, write $a_g:=s^2/v_g$ for the true prior pseudo-sample sizes and $\hat a_g:=s^2/\hat v_g$ for their EB plug-in estimates. Under the maintained interior condition ($|a_2-a_1|<N$ and $|\hat a_2-\hat a_1|<N$),
\begin{enumerate}[leftmargin=*]
\item \emph{No-information benchmark:} in the diffuse-prior limit ($a_1=a_2=0$), $\eta^{NI}=(N/2,N/2)$.
\item \emph{Oracle allocation and oracle gain:} the oracle allocation is $\eta^O=((N+a_2-a_1)/2,(N+a_1-a_2)/2)$ and
\[
\delta_G=U_G(\eta^O)-U_G(\eta^{NI})
=
s^2\cdot
\frac{(a_1-a_2)^2}{\big(\frac N2+a_1\big)\big(\frac N2+a_2\big)\big(N+a_1+a_2\big)}.
\]
\item \emph{EB allocation and ``EB beats NI'':} the EB allocation is $\eta^{EB}=((N+\hat a_2-\hat a_1)/2,(N+\hat a_1-\hat a_2)/2)$ and $U_G(\eta^{EB})\ge U_G(\eta^{NI})$ if and only if
\[
(\hat a_2-\hat a_1)(a_2-a_1)\ge 0
\quad\text{and}\quad
|\hat a_2-\hat a_1|\le 2|a_2-a_1|.
\]
\end{enumerate}
\end{proposition}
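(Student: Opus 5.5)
The plan is to reduce everything to a one-dimensional convex minimization in $N_1$, and then to exploit a reflection symmetry of the true risk about the oracle allocation.

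\textbf{Step 1 (the criterion).} With $\Sigma(\eta)=\diag(s^2/N_1,s^2/N_2)$, diagonal $V=\diag(v_1,v_2)$, $L=\Lambda=I_2$, Proposition~\ref{prop:quadratic_design} gives a posterior variance for stratum $g$ of
\[
\frac{v_g\,s^2/N_g}{v_g+s^2/N_g}=\frac{s^2}{N_g+a_g}.
\]
Hence $U_G(\eta)=-\mathcal R(N_1)$, where
\[
\mathcal R(N_1)=s^2\Big[\frac{1}{N_1+a_1}+\frac{1}{N-N_1+a_2}\Big].
\]
The plug-in objective $U_{\hat G_n}$ has the same form with $\hat a_g$ in place of $a_g$. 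The function $\mathcal R$ is strictly convex on the feasible interval. Its first-order condition $N_1+a_1=N-N_1+a_2$ gives the stated $\eta^O$, and the interior condition $|a_2-a_1|<N$ guarantees that this point is feasible. Setting $a_g=0$ gives $\eta^{NI}=(N/2,N/2)$, and setting $a_g=\hat a_g$ gives $\eta^{EB}$. This settles part 1 and the allocation formulas in parts 2 and 3.

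\textbf{Step 2 (oracle gain).} Write $A=N/2+a_1$, $B=N/2+a_2$ and $C=(A+B)/2=(N+a_1+a_2)/2$. At the oracle both denominators equal $C$, so $\mathcal R(\eta^O)=2s^2/C$. At the benchmark, $\mathcal R(N/2)=s^2(A+B)/(AB)$. Therefore
\[
\delta_G=\mathcal R(N/2)-\mathcal R(\eta^O)=s^2\,\frac{(A+B)^2-4AB}{AB(A+B)}=s^2\,\frac{(A-B)^2}{AB(A+B)},
\]
and substituting $(A-B)^2=(a_1-a_2)^2$ and $A+B=N+a_1+a_2$ gives the displayed formula.

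\textbf{Step 3 (EB beats NI).} Parametrize $N_1=N/2+x$, so that $\mathcal R=s^2[1/(A+x)+1/(B-x)]$. Let $x^*=(a_2-a_1)/2$ and $x=x^*+t$. Then $A+x=C+t$ and $B-x=C-t$, so
\[
\mathcal R=s^2\Big[\frac{1}{C+t}+\frac{1}{C-t}\Big]=\frac{2Cs^2}{C^2-t^2}.
\]
This is an even function of $t$ and strictly increasing in $|t|$ on the feasible range. Consequently $\mathcal R(\eta^{EB})\le\mathcal R(\eta^{NI})$ if and only if $|\hat x-x^*|\le|0-x^*|$, where $\hat x=(\hat a_2-\hat a_1)/2$. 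That inequality says exactly that $\hat x$ lies in the closed interval between $0$ and $2x^*$. Equivalently, $\hat x x^*\ge 0$ and $|\hat x|\le 2|x^*|$, which are the two stated conditions (the factors $1/2$ cancel). The degenerate case $x^*=0$ is consistent with this: it forces $\hat x=0$.

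\textbf{Main obstacle.} The delicate point is checking the domain in Step 3, namely that $C\pm t>0$ at both $t=-x^*$ and $t=\hat x-x^*$. Without this, the monotonicity of $2Cs^2/(C^2-t^2)$ in $|t|$ is not valid and the EB allocation could fall outside the region where the true risk is finite. At $t=\hat x-x^*$ we have $C+t=N/2+a_1+\hat x$ and $C-t=N/2+a_2-\hat x$. Both are positive because $a_g\ge 0$ and the interior condition $|\hat a_2-\hat a_1|<N$ gives $|\hat x|<N/2$. The case $t=-x^*$ (the benchmark allocation $x=0$) is immediate, since there $C+t=A>0$ and $C-t=B>0$.
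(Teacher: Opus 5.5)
Your proposal is correct. Parts 1 and 2 coincide with the paper's proof: the same risk $\sum_g s^2/(N_g+a_g)$, the same first-order condition, and the same comparison of $4s^2/(N+a_1+a_2)$ with $s^2/(N/2+a_1)+s^2/(N/2+a_2)$.

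Part 3 takes a different route. The paper asserts, via an undisplayed ``direct algebraic simplification,'' the closed form
\[
\mathcal R_G(\eta^{EB})-\mathcal R_G(\eta^{NI})=\frac{4s^2(N+a_1+a_2)\,\Delta\hat a\,(\Delta\hat a-2\Delta a)}{(N+2a_1)(N+2a_2)(N+2a_1+\Delta\hat a)(N+2a_2-\Delta\hat a)}
\]
and reads off the sign of the numerator. You instead recenter at the oracle, where the risk is $2Cs^2/(C^2-t^2)$. This is even and strictly increasing in $|t|$ on $|t|<C$. So ``EB beats NI'' becomes ``$\eta^{EB}$ is no farther from $\eta^O$ than $\eta^{NI}$ is,'' i.e.\ $\hat x$ lies between $0$ and $2x^*$, the reflection of the benchmark through the oracle.

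Your version explains the factor of two transparently. It also makes the required positivity explicit: $C\pm t>0$ at both points, which is the same as positivity of the paper's denominator factors $N+2a_1+\Delta\hat a$ and $N+2a_2-\Delta\hat a$. What the paper's version adds is the magnitude of the gap, but that follows at once from yours. With $t_{EB}=\hat x-x^*$ and $t_{NI}=-x^*$, the difference is $2Cs^2(t_{EB}^2-t_{NI}^2)/[(C^2-t_{EB}^2)(C^2-t_{NI}^2)]$. Substituting $t_{EB}^2-t_{NI}^2=\hat x(\hat x-2x^*)$, $C^2-t_{NI}^2=AB$, and $C^2-t_{EB}^2=(A+\hat x)(B-\hat x)$ reproduces the paper's formula exactly. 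So your argument also supplies the algebra the paper leaves implicit.
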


\begin{proof}
Under quadratic loss with $L=I_2$ and $\Lambda=I_2$, ex-ante value equals minus the Bayes risk, which equals the sum of posterior variances.
Given the prior--sampling pair
\[
\theta_g\sim \mathcal N(0,v_g),\qquad 
\hat\theta_g \mid \theta_g,N_g \sim \mathcal N\!\left(\theta_g,\frac{s^2}{N_g}\right),\qquad g=1,2,
\]
normal--normal conjugacy implies
\[
\Var(\theta_g\mid \hat\theta_g,N_g)=\left(v_g^{-1}+\frac{N_g}{s^2}\right)^{-1}=\frac{s^2}{N_g+a_g},
\qquad a_g:=\frac{s^2}{v_g}.
\]
Hence the Bayes risk for a design $\eta=(N_1,N_2)$ is
\[
\mathcal R_G(\eta)=\sum_{g=1}^2 \frac{s^2}{N_g+a_g},
\]
and maximizing $U_G(\eta)$ is equivalent to minimizing $\mathcal R_G(\eta)$.

\medskip
\noindent\emph{(i) No-information benchmark.}
In the diffuse-prior limit $v_g\to\infty$ (so $a_1=a_2=0$), the problem is
\[
\min_{0<N_1<N}\ \frac{s^2}{N_1}+\frac{s^2}{N-N_1}.
\]
The objective is strictly convex and symmetric around $N_1=N/2$, so the unique minimizer is $N_1^{NI}=N/2$ and $\eta^{NI}=(N/2,N/2)$.

\medskip
\noindent\emph{(ii) Oracle design and oracle gain.}
For general $(a_1,a_2)$, minimizing $\mathcal R_G$ is equivalent to minimizing
\[
f(N_1)=\frac{s^2}{N_1+a_1}+\frac{s^2}{N-N_1+a_2}
\quad\text{over }0<N_1<N.
\]
The first-order condition for an interior optimum is
\[
0=f'(N_1)=-\frac{s^2}{(N_1+a_1)^2}+\frac{s^2}{(N-N_1+a_2)^2}
\quad\Longleftrightarrow\quad
N_1+a_1 = N-N_1+a_2,
\]
which yields $N_1^O=(N+a_2-a_1)/2$ and $N_2^O=(N+a_1-a_2)/2$.
Evaluating the oracle risk gives
\[
\mathcal R_G(\eta^O)=\frac{4s^2}{N+a_1+a_2}.
\]
Evaluating the no-information design under the true prior gives
\[
\mathcal R_G(\eta^{NI})=\frac{s^2}{\frac N2+a_1}+\frac{s^2}{\frac N2+a_2}.
\]
Since $U_G(\eta)=-\mathcal R_G(\eta)$, the oracle gain is
\[
\delta_G
=
U_G(\eta^O)-U_G(\eta^{NI})
=
\mathcal R_G(\eta^{NI})-\mathcal R_G(\eta^O)
=
s^2\cdot
\frac{(a_1-a_2)^2}{\big(\frac N2+a_1\big)\big(\frac N2+a_2\big)\big(N+a_1+a_2\big)}.
\]

\medskip
\noindent\emph{(iii) EB design and the ``EB beats NI'' condition.}
Replacing $(a_1,a_2)$ by $(\hat a_1,\hat a_2)$ in the first-order condition yields the EB allocation formula.
To characterize when EB beats $\eta^{NI}$ under the true prior, compare risks.
Write $\Delta \hat a:=\hat a_2-\hat a_1$ and $\Delta a:=a_2-a_1$.
A direct algebraic simplification gives
\[
\mathcal R_G(\eta^{EB})-\mathcal R_G(\eta^{NI})
=
\frac{4s^2\,(N+a_1+a_2)\,\Delta \hat a\,(\Delta \hat a-2\Delta a)}{(N+2a_1)(N+2a_2)\big(N+2a_1+\Delta \hat a\big)\big(N+2a_2-\Delta \hat a\big)}.
\]
Under the maintained interior conditions, the denominator is positive and $N+a_1+a_2>0$, so $\mathcal R_G(\eta^{EB})\le \mathcal R_G(\eta^{NI})$ if and only if
\[
\Delta \hat a\,(\Delta \hat a-2\Delta a)\le 0,
\]
which is equivalent to $(\hat a_2-\hat a_1)(a_2-a_1)\ge 0$ and $|\hat a_2-\hat a_1|\le 2|a_2-a_1|$.
The precision version follows since $a_g=s^2/v_g$. 
\end{proof}

\section{Verification of Assumption~\ref{as:Psi_Lip} for Common Decision Problems}
\label{app:Psi_Lip_examples}

We verify Assumption~\ref{as:Psi_Lip} for the five decision problems listed in Table~\ref{tab:ql_examples}.
Throughout, let $m_1,m_2$ denote generic posterior-mean arguments (either in $\R^d$ or in $\R$ as appropriate).

\begin{proposition}
\label{prop:Psi_Lip_examples}
The continuation values $\Psi$ in Table~\ref{tab:ql_examples} satisfy Assumption~\ref{as:Psi_Lip}.
In particular:
\begin{enumerate}[leftmargin=*]
\item For quadratic loss, $\Psi(m)=m'\Lambda m$ satisfies Assumption~\ref{as:Psi_Lip} with $p=1$ and $L=\|\Lambda\|_{\mathrm F}$.
\item For the mean--variance / LQ objective, $\Psi(m)=\frac{1}{2\gamma}m'\Sigma^{-1}m$ satisfies Assumption~\ref{as:Psi_Lip} with $p=1$ and
$L=\frac{1}{2\gamma}\|\Sigma^{-1}\|_{\mathrm F}$.
\item For binary adoption, $\Psi(m)=\sum_{g=1}^G \pi_g \max\{0,m_g\}$ satisfies Assumption~\ref{as:Psi_Lip} with $p=0$ and $L=\|\pi\|_2$.
\item For best-alternative selection, $\Psi(m)=\max_{j\le K} m_j$ satisfies Assumption~\ref{as:Psi_Lip} with $p=0$ and $L=1$.
\item For weighted hypothesis testing, $\Psi(m)=\max\{-a_0(1-m),-a_1 m\}$ satisfies Assumption~\ref{as:Psi_Lip} with $p=0$ and
$L=\max\{a_0,a_1\}$.
\end{enumerate}
\end{proposition}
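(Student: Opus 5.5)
The plan is to verify the growth-controlled Lipschitz inequality $|\Psi(m_1)-\Psi(m_2)|\le L(1+\|m_1\|_2^p+\|m_2\|_2^p)\|m_1-m_2\|_2$ separately for each of the five continuation values. The cases split into two groups. The two quadratic forms need $p=1$. The three piecewise-linear maxima are globally Lipschitz, so they need only $p=0$; there the factor $(1+\|m_1\|^0+\|m_2\|^0)=3\ge 1$ only helps, and it suffices to produce a global Lipschitz constant.

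\textbf{Quadratic cases (items 1 and 2).} I would use the polarization-type identity
\[
m_1'\Lambda m_1-m_2'\Lambda m_2=(m_1-m_2)'\Lambda m_1+m_2'\Lambda(m_1-m_2).
\]
Taking absolute values, Cauchy--Schwarz together with $\|\Lambda\|_{\mathrm{op}}\le\|\Lambda\|_{\mathrm F}$ gives
\[
|\Psi(m_1)-\Psi(m_2)|\le \|\Lambda\|_{\mathrm F}(\|m_1\|_2+\|m_2\|_2)\|m_1-m_2\|_2 .
\]
This is bounded by $L(1+\|m_1\|_2+\|m_2\|_2)\|m_1-m_2\|_2$ with $L=\|\Lambda\|_{\mathrm F}$. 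The identity does not need $\Lambda$ to be symmetric. Item 2 is identical with $\Lambda$ replaced by $\frac{1}{2\gamma}\Sigma^{-1}$.

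\textbf{Piecewise-linear cases (items 3, 4 and 5).} For binary adoption, $x\mapsto\max\{0,x\}$ is 1-Lipschitz on $\R$, so
\[
|\Psi(m_1)-\Psi(m_2)|\le\sum_g\pi_g|m_{1g}-m_{2g}|\le\|\pi\|_2\|m_1-m_2\|_2
\]
by Cauchy--Schwarz.

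For best-alternative selection, I would use the standard fact that a maximum of 1-Lipschitz maps is 1-Lipschitz. Concretely, $\max_j m_{1j}-\max_j m_{2j}\le \max_j(m_{1j}-m_{2j})\le\|m_1-m_2\|_\infty\le\|m_1-m_2\|_2$, and the same holds by symmetry in the other direction.

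For hypothesis testing, $m$ is scalar and $\Psi$ is a maximum of two affine functions with slopes $a_0$ and $-a_1$. Each is $\max\{a_0,a_1\}$-Lipschitz, so the maximum of the two is too, by the same max-of-Lipschitz argument.

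\textbf{Main obstacle.} There is no real difficulty here. The only points needing care are matching the norm conventions, namely bounding the operator norm by the Frobenius norm and passing from the $\ell_\infty$ and $\ell_1$-weighted norms to $\ell_2$. One should also note that any valid $L$ suffices, so the stated constants need only be upper bounds.
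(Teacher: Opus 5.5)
Your proposal is correct and follows the paper's proof case by case. In the two quadratic cases the paper uses the factorization $(m_1-m_2)'\Lambda(m_1+m_2)$ with Frobenius Cauchy--Schwarz, which implicitly needs $\Lambda$ symmetric. Your two-term split with the operator-norm bound avoids that and is a sharper intermediate step, but the difference is cosmetic.
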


\begin{proof}
We verify each case.

\medskip
\noindent\emph{(1) Quadratic loss.}
Let $\Psi(m)=m'\Lambda m$ with $\Lambda\succeq 0$.
Then
\[
\Psi(m_1)-\Psi(m_2)=m_1'\Lambda m_1-m_2'\Lambda m_2=(m_1-m_2)'\Lambda(m_1+m_2).
\]
Using $x'Ay=\mathrm{tr}(A y x')$ and Cauchy--Schwarz under the Frobenius inner product,
\[
\big|(m_1-m_2)'\Lambda(m_1+m_2)\big|
=
\big|\mathrm{tr}\!\big(\Lambda (m_1+m_2)(m_1-m_2)'\big)\big|
\le
\|\Lambda\|_{\mathrm F}\,\|(m_1+m_2)(m_1-m_2)'\|_{\mathrm F}.
\]
Moreover, $\|(m_1+m_2)(m_1-m_2)'\|_{\mathrm F}=\|m_1+m_2\|_2\,\|m_1-m_2\|_2$, so
\[
|\Psi(m_1)-\Psi(m_2)|
\le
\|\Lambda\|_{\mathrm F}\,\|m_1+m_2\|_2\,\|m_1-m_2\|_2
\le
\|\Lambda\|_{\mathrm F}\,(\|m_1\|_2+\|m_2\|_2)\,\|m_1-m_2\|_2.
\]
This implies Assumption~\ref{as:Psi_Lip} with $p=1$ and $L=\|\Lambda\|_{\mathrm F}$ (since
$\|m_1\|_2+\|m_2\|_2 \le 1+\|m_1\|_2+\|m_2\|_2$).

\medskip
\noindent\emph{(2) Mean--variance / LQ objective.}
Let $\Psi(m)=\frac{1}{2\gamma}m'\Sigma^{-1}m$ with $\gamma>0$ and $\Sigma\succ 0$.
The same argument as in (1), with $\Lambda=\frac{1}{2\gamma}\Sigma^{-1}$, yields
\[
|\Psi(m_1)-\Psi(m_2)|
\le
\frac{1}{2\gamma}\|\Sigma^{-1}\|_{\mathrm F}\,(\|m_1\|_2+\|m_2\|_2)\,\|m_1-m_2\|_2,
\]
so Assumption~\ref{as:Psi_Lip} holds with $p=1$ and $L=\frac{1}{2\gamma}\|\Sigma^{-1}\|_{\mathrm F}$.

\medskip
\noindent\emph{(3) Binary adoption.}
Let $\Psi(m)=\sum_{g=1}^G \pi_g (m_g)_+$ where $(x)_+:=\max\{0,x\}$.
The map $x\mapsto (x)_+$ is $1$-Lipschitz on $\R$, so $|(x)_+-(y)_+|\le |x-y|$.
Thus,
\[
|\Psi(m_1)-\Psi(m_2)|
\le
\sum_{g=1}^G |\pi_g|\,|(m_{1g})_+-(m_{2g})_+|
\le
\sum_{g=1}^G |\pi_g|\,|m_{1g}-m_{2g}|.
\]
By Cauchy--Schwarz,
\[
\sum_{g=1}^G |\pi_g|\,|m_{1g}-m_{2g}|
\le
\|\pi\|_2\,\|m_1-m_2\|_2.
\]
Hence Assumption~\ref{as:Psi_Lip} holds with $p=0$ and $L=\|\pi\|_2$.

\medskip
\noindent\emph{(4) Best-alternative selection.}
Let $\Psi(m)=\max_{j\le K} m_j$.
Then
\[
|\Psi(m_1)-\Psi(m_2)|
=
\big|\max_j m_{1j}-\max_j m_{2j}\big|
\le
\max_j |m_{1j}-m_{2j}|
=
\|m_1-m_2\|_\infty
\le
\|m_1-m_2\|_2.
\]
Thus Assumption~\ref{as:Psi_Lip} holds with $p=0$ and $L=1$.

\medskip
\noindent\emph{(5) Weighted hypothesis testing.}
Let $\Psi(m)=\max\{-a_0(1-m),-a_1 m\}$ with $a_0,a_1\ge 0$ and scalar $m$.
Define the affine functions $\ell_0(m):=-a_0(1-m)=-a_0+a_0 m$ and $\ell_1(m):=-a_1 m$.
Each is Lipschitz, with constants $a_0$ and $a_1$ respectively.
The pointwise maximum of Lipschitz functions is Lipschitz with constant equal to the maximum of the individual constants, hence
\[
|\Psi(m_1)-\Psi(m_2)|
\le
\max\{a_0,a_1\}\,|m_1-m_2|
=
\max\{a_0,a_1\}\,\|m_1-m_2\|_2.
\]
Therefore Assumption~\ref{as:Psi_Lip} holds with $p=0$ and $L=\max\{a_0,a_1\}$.
\end{proof}

\section{Verification of Second-Order Assumptions in Example~\ref{emp:two_stratum_quadratic}}
\label{app:two_stratum_secondorder}

This subsection verifies Assumptions~\ref{as:strong_concavity} and~\ref{as:grad_approx} in the two-stratum quadratic-loss allocation setting of
Example~\ref{emp:two_stratum_quadratic}. Throughout, write $a_g:=s^2/v_g$ and $\hat a_g:=s^2/\hat v_g$.

\begin{proposition}[Strong concavity and uniform gradient approximation]
\label{prop:two_stratum_secondorder}
Consider the two-stratum Gaussian quadratic-loss problem in Example~\ref{emp:two_stratum_quadratic} with feasible set
$\mathcal H=\{(N_1,N_2)\in\R_+^2:\ N_1+N_2=N\}$ and $v_g>0$.
\begin{enumerate}[leftmargin=*]
\item \emph{Strong concavity.} The oracle objective $U_G$ is continuously differentiable on $\mathcal H$ and satisfies Assumption~\ref{as:strong_concavity}
(on the affine hull of $\mathcal H$) with a valid strong-concavity modulus
\[
m \;=\; s^2\Big((N+a_1)^{-3}+(N+a_2)^{-3}\Big) \;>\;0.
\]
Moreover, $U_G$ is strictly concave on $\mathcal H$, hence has a unique maximizer, which is interior under the maintained interior condition
$|a_2-a_1|<N$.
\item \emph{Uniform gradient approximation.} Define the plug-in objective $U_{\hat G_n}$ by replacing $(a_1,a_2)$ with $(\hat a_1,\hat a_2)$.
On any event where $\hat a_g\ge a_g/2$ for $g=1,2$, one has
\[
\sup_{\eta\in\mathcal H}\|\nabla U_{\hat G_n}(\eta)-\nabla U_G(\eta)\|_2
\;\le\;
C\sum_{g=1}^2 |\hat a_g-a_g|
\]
for a finite constant $C$ depending only on $(s^2,a_1,a_2)$.
In particular, if $|\hat v_g-v_g|=O_{p}(n^{-1/2})$ for $g=1,2$, then
\[
\sup_{\eta\in\mathcal H}\|\nabla U_{\hat G_n}(\eta)-\nabla U_G(\eta)\|_2
=
O_{p}(n^{-1/2}),
\]
so Assumption~\ref{as:grad_approx} holds with $r_n=O_{p}(n^{-1/2})$.
\end{enumerate}
\end{proposition}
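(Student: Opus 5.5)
Parametrize the feasible set by the single coordinate $N_1\in[0,N]$, with $N_2=N-N_1$. By the computation in Appendix~\ref{app:two_stratum_quadratic},
\[
U_G(N_1)=-s^2\big[(N_1+a_1)^{-1}+(N-N_1+a_2)^{-1}\big],
\]
so every claim reduces to one-dimensional calculus.

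\textbf{Part 1.} I will differentiate twice:
\[
U_G'(N_1)=s^2\big[(N_1+a_1)^{-2}-(N-N_1+a_2)^{-2}\big],
\]
\[
U_G''(N_1)=-2s^2\big[(N_1+a_1)^{-3}+(N-N_1+a_2)^{-3}\big].
\]
Both derivatives are continuous on $[0,N]$ because $a_g>0$. Since $N_1+a_1\le N+a_1$ and $N-N_1+a_2\le N+a_2$, we get $-U_G''\ge 2s^2\big((N+a_1)^{-3}+(N+a_2)^{-3}\big)\ge m$.

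In terms of the vector $\eta=(N_1,N_2)$ on the affine hull, a feasible direction has the form $(t,-t)$, whose Euclidean norm is $\sqrt2|t|$. This introduces factors of $2$ that I will track, and the stated $m$ has enough slack to absorb them. The quadratic upper bound in Assumption~\ref{as:strong_concavity} then follows from Taylor's theorem with integral remainder. Strict concavity gives uniqueness of the maximizer. The first-order condition from Appendix~\ref{app:two_stratum_quadratic} places the maximizer at $N_1^O=(N+a_2-a_1)/2$, which lies in $(0,N)$ exactly when $|a_2-a_1|<N$.

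\textbf{Part 2.} The difference of derivatives is
\[
U_{\hat G_n}'-U_G'=s^2\big[(N_1+\hat a_1)^{-2}-(N_1+a_1)^{-2}\big]-s^2\big[(N-N_1+\hat a_2)^{-2}-(N-N_1+a_2)^{-2}\big].
\]
I will bound each bracket by the mean value theorem applied to $x\mapsto (c+x)^{-2}$ with $c\ge 0$. Its derivative has magnitude $2(c+x)^{-3}$, and on the event $\hat a_g\ge a_g/2$ the intermediate point is at least $a_g/2$. This gives
\[
\big|(c+\hat a_g)^{-2}-(c+a_g)^{-2}\big|\le 16\,a_g^{-3}\,|\hat a_g-a_g|
\]
uniformly in $c\ge 0$, hence uniformly in $N_1$. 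So the gradient bound holds with $C=16\sqrt2\, s^2\max_g a_g^{-3}$, where $\sqrt2$ converts to the Euclidean norm in $\eta$.

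For the rate, $\hat a_g-a_g=s^2(v_g-\hat v_g)/(v_g\hat v_g)$. If $|\hat v_g-v_g|=O_p(n^{-1/2})$, then $\hat v_g\to v_g>0$ in probability, so with probability approaching one $\hat v_g\le 2v_g$, which is equivalent to $\hat a_g\ge a_g/2$, and $|\hat a_g-a_g|=O_p(n^{-1/2})$. Combining these yields $r_n=O_p(n^{-1/2})$.

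\textbf{Main obstacle.} The derivatives themselves are routine; the delicate step is the bookkeeping in Part 1. It requires reconciling the one-dimensional second derivative with the Euclidean-norm statement of Assumption~\ref{as:strong_concavity} on the affine hull, and confirming that the stated modulus is valid, not sharp. To handle this, I will first fix the chosen $\sqrt2$ normalization and then show that the lower bound on $-U_G''$ dominates the stated $m$ under that normalization.
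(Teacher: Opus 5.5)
Your proposal is correct and follows essentially the same route as the paper: the paper's proof also parametrizes by $t=N_1$, uses the uniform bound $-f''\ge 2s^2\big((N+a_1)^{-3}+(N+a_2)^{-3}\big)=2m$ with $\|\eta_2-\eta_1\|_2^2=2(t_2-t_1)^2$, and bounds each gradient coordinate by $16s^2a_g^{-3}|\hat a_g-a_g|$ via the mean value theorem on the event $\hat a_g\ge a_g/2$. One small correction: the factor of $2$ from the norm is absorbed exactly by the $2$ in $-f''\ge 2m$, so there is no spare slack in the modulus. Your explicit check that $\hat v_g\le 2v_g$ (equivalently $\hat a_g\ge a_g/2$) holds with probability approaching one is a modest improvement over the paper's bare appeal to the delta method.
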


\begin{proof}
First, we derive the closed form for $U_G(\eta)$.
As shown in Appendix~\ref{app:two_stratum_quadratic}, under quadratic loss with $L=I_2$ and $\Lambda=I_2$ the ex-ante value equals minus the Bayes risk,
and conjugacy implies
\[
\Var(\theta_g\mid \hat\theta_g,N_g)=\left(v_g^{-1}+\frac{N_g}{s^2}\right)^{-1}=\frac{s^2}{N_g+a_g},
\qquad a_g:=\frac{s^2}{v_g}.
\]
Hence, up to an additive constant that does not depend on $\eta$,
\[
U_G(\eta)= -\sum_{g=1}^2 \frac{s^2}{N_g+a_g}.
\]
This formula implies $U_G$ is continuously differentiable on $\mathcal H$ and
\[
\frac{\partial U_G(\eta)}{\partial N_g}=\frac{s^2}{(N_g+a_g)^2},
\qquad g=1,2.
\]

\medskip
Second, we prove the strong concavity of $U_G(\eta)$ on $\mathcal H$.
Parametrize $\eta$ by $t=N_1\in[0,N]$ (so $N_2=N-t$) and define
\[
f(t):=U_G\big((t,N-t)\big)= -\frac{s^2}{t+a_1}-\frac{s^2}{N-t+a_2}.
\]
Then
\[
f''(t)=-2s^2\left((t+a_1)^{-3}+(N-t+a_2)^{-3}\right)
\le
-2s^2\left((N+a_1)^{-3}+(N+a_2)^{-3}\right).
\]
Let $c:=2s^2((N+a_1)^{-3}+(N+a_2)^{-3})$. By Taylor's theorem, for any $t_1,t_2\in[0,N]$,
\[
f(t_2) \le f(t_1)+f'(t_1)(t_2-t_1)-\frac{c}{2}(t_2-t_1)^2.
\]
Now for $\eta=(t_1,N-t_1)$ and $\eta_2=(t_2,N-t_2)$, we have $\|\eta_2-\eta_1\|_2^2=2(t_2-t_1)^2$ and
$\nabla U_G(\eta_1)^\prime(\eta_2-\eta_1)=f'(t_1)(t_2-t_1)$. Therefore
\[
U_G(\eta_2) \le U_G(\eta_1)+\nabla U_G(\eta_1)^\prime(\eta_2-\eta_1)-\frac{m}{2}\|\eta_2-\eta_1\|_2^2
\]
holds with $m=c/2=s^2((N+a_1)^{-3}+(N+a_2)^{-3})>0$, verifying Assumption~\ref{as:strong_concavity} on the affine hull of $\mathcal H$.
Strict concavity follows from $f''(t)<0$ for all $t\in[0,N]$, and interiority of the maximizer holds under $|a_2-a_1|<N$ as in Appendix~\ref{app:two_stratum_quadratic}.

\medskip
Third, we show the uniform gradient approximation condition for $U_G(\eta)$.
Define $U_{\hat G_n}$ by replacing $a_g$ with $\hat a_g$. Then
\[
\frac{\partial U_{\hat G_n}(\eta)}{\partial N_g}=\frac{s^2}{(N_g+\hat a_g)^2}.
\]
Fix $g\in\{1,2\}$. By the mean value theorem applied to $x\mapsto s^2/(N_g+x)^2$, there exists $\tilde a_g$ between $a_g$ and $\hat a_g$ such that
\[
\left|\frac{s^2}{(N_g+\hat a_g)^2}-\frac{s^2}{(N_g+a_g)^2}\right|
=
\frac{2s^2}{(N_g+\tilde a_g)^3}\,|\hat a_g-a_g|.
\]
On the event $\{\hat a_g\ge a_g/2\}$, we have $N_g+\tilde a_g\ge a_g/2$ for all $N_g\ge 0$, hence
\[
\sup_{\eta\in\mathcal H}\left|\frac{\partial U_{\hat G_n}(\eta)}{\partial N_g}-\frac{\partial U_G(\eta)}{\partial N_g}\right|
\le
\frac{16s^2}{a_g^3}\,|\hat a_g-a_g|.
\]
Combining the two coordinates yields
\[
\sup_{\eta\in\mathcal H}\|\nabla U_{\hat G_n}(\eta)-\nabla U_G(\eta)\|_2
\le
\left(\sum_{g=1}^2 \Big(\frac{16s^2}{a_g^3}\,|\hat a_g-a_g|\Big)^2\right)^{1/2}
\le
C\sum_{g=1}^2 |\hat a_g-a_g|
\]
for a finite constant $C$ depending only on $(s^2,a_1,a_2)$.

Finally, since $\hat a_g-a_g=s^2(\hat v_g^{-1}-v_g^{-1})$, the delta method implies
$|\hat a_g-a_g|=O_{p}(n^{-1/2})$ whenever $|\hat v_g-v_g|=O_{p}(n^{-1/2})$ and $v_g>0$.
Therefore
\[
\sup_{\eta\in\mathcal H}\|\nabla U_{\hat G_n}(\eta)-\nabla U_G(\eta)\|_2
=
O_{p}(n^{-1/2}),
\]
which verifies Assumption~\ref{as:grad_approx} with $r_n=O_{p}(n^{-1/2})$.
\end{proof}

\section{Proofs}
\label{app:proofs}

\subsection{Proof of Proposition~\ref{prop:quadratic_design}}
Fix a feasible design $e\in\mathcal H$. Under the Gaussian prior and sampling model,
\[
\theta \sim \mathcal N(m,V),\qquad \hat\theta\mid \theta;\,e \sim \mathcal N(\theta,\Sigma(e)),
\]
the posterior is
\[
\theta\mid \hat\theta,e \sim \mathcal N\!\big(m^{post}(\hat\theta,e),\,V^{post}(e)\big),
\qquad
V^{post}(e)=(V^{-1}+\Sigma(e)^{-1})^{-1},
\]
with posterior mean
\[
m^{post}(\hat\theta,e)=V^{post}(e)\big(V^{-1}m+\Sigma(e)^{-1}\hat\theta\big).
\]
In particular, $V^{post}(e)$ depends on the design $e$ but not on the realized $\hat\theta$.

Let $Z:=L\theta$. Conditional on $(\hat\theta,e)$, $Z$ has mean $L m^{post}(\hat\theta,e)$ and covariance $L V^{post}(e)L'$. For any random vector $Z$ with mean $\mu$ and covariance $\Omega$,
\[
\E\big[(a-Z)'\Lambda(a-Z)\big]=(a-\mu)'\Lambda(a-\mu)+\tr(\Lambda\Omega),
\]
so the Bayes action is $a^*(\hat\theta,e)=\mu=\E[L\theta\mid \hat\theta,e]$ and the minimized posterior risk equals
\[
\inf_{a\in\R^K}\E\big[\ell(a,\theta)\mid \hat\theta,e\big]
=
\tr\!\Big(\Lambda\,\Var(L\theta\mid \hat\theta,e)\Big)
=
\tr\!\Big(\Lambda\,L V^{post}(e)L'\Big).
\]
Since this expression is deterministic given $e$, taking expectations over $\hat\theta$ leaves it unchanged, proving
\[
\mathcal R_Q(e)=\tr\!\Big(\Lambda\,L V^{post}(e)L'\Big).
\]
Finally, in the diffuse-prior limit $V^{-1}\to 0$, one has $V^{post}(e)=(V^{-1}+\Sigma(e)^{-1})^{-1}\to \Sigma(e)$, hence
$\mathcal R_Q(e)\to \tr(\Lambda\,L\Sigma(e)L')$. \qed

\subsection{Proof of Proposition~\ref{prop:insample_formal}}
Let $m_g=\E_Q[\tau_g]$ and consider
\[
\max_{e}\ \sum_{g=1}^G \pi_g e_g m_g
\quad\text{s.t.}\quad
\sum_{g=1}^G \pi_g c_g e_g\le B,\qquad
\underline e\le e_g\le 1-\underline e\ \ \forall g.
\]
This is a linear program over a nonempty compact polytope, so an optimizer exists. Let $e^*$ be an optimizer. The KKT conditions are necessary and sufficient. Introduce multipliers $\lambda\ge 0$, $\alpha_g\ge 0$ and $\beta_g\ge 0$ for the budget, lower, and upper constraints, respectively. Stationarity gives, for each $g$,
\[
0=\frac{\partial}{\partial e_g}\Big(\sum_{h}\pi_h e_h m_h-\lambda(\sum_h\pi_h c_h e_h-B)+\sum_h \alpha_h(e_h-\underline e)+\sum_h \beta_h((1-\underline e)-e_h)\Big)
=\pi_g(m_g-\lambda c_g)+\alpha_g-\beta_g.
\]
Complementary slackness yields $\lambda\big(\sum_h \pi_h c_h e_h^*-B\big)=0$, $\alpha_g(e_g^*-\underline e)=0$ and $\beta_g((1-\underline e)-e_g^*)=0$.

If $e_g^*\in(\underline e,1-\underline e)$, then $\alpha_g=\beta_g=0$ and the stationarity condition implies $m_g=\lambda c_g$.
If $m_g>\lambda c_g$, then $\pi_g(m_g-\lambda c_g)>0$ so stationarity forces $\beta_g>\alpha_g\ge 0$, hence $\beta_g>0$ and complementary slackness implies $e_g^*=1-\underline e$.
If $m_g<\lambda c_g$, then $\pi_g(m_g-\lambda c_g)<0$ so stationarity forces $\alpha_g>0$ and complementary slackness implies $e_g^*=\underline e$.
Therefore any optimizer is bang-bang except possibly on the non-generic set of strata with $m_g=\lambda c_g$, for which any $e_g^*\in[\underline e,1-\underline e]$ is compatible with KKT. The scalar $\lambda$ is selected so that the budget constraint binds when interior (and $\lambda=0$ if the constraint is slack). \qed

\subsection{Proof of Lemma~\ref{lem:postmean_distribution}} 
Fix a stratum $g$ and suppress the index. Under the Gaussian model
\[
\delta\sim\mathcal N(m,v),\qquad \hat\delta\mid \delta;\,e \sim \mathcal N(\delta,s^2),
\]
where $s^2=s_g^2(e_g)$. Normal--normal conjugacy gives
\[
\delta\mid \hat\delta \sim \mathcal N\!\left(\mu,\ \frac{vs^2}{v+s^2}\right),
\qquad
\mu=\E_{Q}[\delta\mid \hat\delta]=\frac{s^2}{v+s^2}m+\frac{v}{v+s^2}\hat\delta.
\]
Thus $\mu=m+\kappa(\hat\delta-m)$ with $\kappa:=\frac{v}{v+s^2}$. The marginal distribution of $\hat\delta$ is $\hat\delta\sim\mathcal N(m,v+s^2)$, so $\mu$ is Gaussian with
\[
\Var(\mu)=\kappa^2\Var(\hat\delta)=\Big(\frac{v}{v+s^2}\Big)^2(v+s^2)=\frac{v^2}{v+s^2},
\]
hence $\mu\sim\mathcal N\!\big(m,\frac{v^2}{v+s^2}\big)$, which matches the stated $\sigma_B^2$ after reinstating the stratum index.

Finally, $(\delta,\mu)$ is jointly Gaussian since $\mu$ is affine in $\hat\delta$ and $(\delta,\hat\delta)$ is jointly Gaussian. Writing $\hat\delta=\delta+\varepsilon$ with $\varepsilon\sim\mathcal N(0,s^2)$ independent of $\delta$,
\[
\mu-m=\kappa(\hat\delta-m)=\kappa\big((\delta-m)+\varepsilon\big).
\]
Therefore
\[
\Cov(\delta,\mu)=\Cov(\delta-m,\mu-m)=\kappa\,\Var(\delta-m)=\kappa v=\frac{v^2}{v+s^2}=\Var(\mu).
\]
For jointly Gaussian variables, $\E_{Q}[\delta\mid \mu]=m+\frac{\Cov(\delta,\mu)}{\Var(\mu)}(\mu-m)=m+(\mu-m)=\mu$. \qed

\subsection{Proof of Proposition~\ref{prop:policy_voI_formal}}
Fix a stratum $g$ and suppress the index when unambiguous.
Let the post-experiment adoption action be $a\in\{0,1\}$ with welfare $a\delta$.
Under the Gaussian prior--likelihood pair,
\[
\delta \sim \mathcal N(m,v),
\qquad
\hat\delta \mid \delta;\,e \sim \mathcal N(\delta,s^2(e)),
\]
where $e=e_g$ and $s^2(e)=s_g^2(e_g)$.
Write
\[
\mu:=\E_{Q}[\delta\mid \hat\delta;\,e],
\]
which coincides with $\E_{Q}[\delta\mid D(e)]$ since the experimental data enter only through $(\hat\delta,s^2(e))$.
Conditional on the observed statistic, posterior expected welfare from action $a$ equals
\[
\E_{Q}[a\delta\mid \hat\delta;\,e]=a\,\E_{Q}[\delta\mid \hat\delta;\,e]=a\,\mu.
\]
Thus the Bayes-optimal adoption rule under prior $Q$ is $a^*_{Q}(\hat\delta;\,e)=\mathbf 1\{\mu>0\}$ and the stratum-level ex-ante value is
\[
\E_{Q}[\delta\,a^*_{Q}(\hat\delta;\,e)]
=
\E_{Q}[\delta\,\mathbf 1\{\mu>0\}].
\]
By iterated expectations and Lemma~\ref{lem:postmean_distribution}, $\E_{Q}[\delta\mid \mu]=\mu$, hence
\[
\E_{Q}[\delta\,\mathbf 1\{\mu>0\}]
=
\E_{Q}\big[\E_{Q}[\delta\mid \mu]\mathbf 1\{\mu>0\}\big]
=
\E_{Q}[\mu\,\mathbf 1\{\mu>0\}].
\]
Lemma~\ref{lem:postmean_distribution} also gives $\mu\sim \mathcal N(m,\sigma_B^2(e))$, where $\sigma_B^2(e)=\frac{v^2}{v+s^2(e)}$.
For $X\sim\mathcal N(m,\sigma^2)$,
\[
\E[X\,\mathbf 1\{X>0\}]
=
m\,\Pr(X>0)+\sigma\,\E\Big[Z\,\mathbf 1\{Z>-m/\sigma\}\Big],
\quad Z:=\frac{X-m}{\sigma}\sim\mathcal N(0,1).
\]
Since $\Pr(X>0)=\Phi(m/\sigma)$ and
\[
\E\Big[Z\,\mathbf 1\{Z>-\tfrac{m}{\sigma}\}\Big]=\int_{-m/\sigma}^{\infty} z\phi(z)\,dz
=\big[-\phi(z)\big]_{-m/\sigma}^{\infty}=\phi(m/\sigma),
\]
it follows that
\[
\E[X\,\mathbf 1\{X>0\}]
=
m\,\Phi\!\Big(\frac{m}{\sigma}\Big)+\sigma\,\phi\!\Big(\frac{m}{\sigma}\Big).
\]
Applying this with $X=\mu$ and $\sigma=\sigma_B(e)$ yields
\[
\E_{Q}[\delta\,a^*_{Q}(\hat\delta;\,e)]
=
m\,\Phi\!\Big(\frac{m}{\sigma_B(e)}\Big)+\sigma_B(e)\,\phi\!\Big(\frac{m}{\sigma_B(e)}\Big)
=:\Gamma(e).
\]
Summing across strata with weights $\pi_g$ gives \eqref{eq:U_policy_sum}--\eqref{eq:Gamma_policy_formal}.

It remains to characterize an interior optimum under the binding cost constraint.
The design problem is
\[
\max_{e\in\mathcal H}\ \sum_{g=1}^G \pi_g\,\Gamma_g(e_g)
\quad\text{s.t.}\quad
\sum_{g=1}^G \pi_g c_g e_g\le B,\ \ e_g\in[\underline e,1-\underline e].
\]
Form the Lagrangian
\[
\mathcal L(e,\lambda,\alpha,\beta)
=
\sum_{g=1}^G \pi_g\,\Gamma_g(e_g)
+\lambda\Big(\sum_{g=1}^G \pi_g c_g e_g-B\Big)
+\sum_{g=1}^G \alpha_g(\underline e-e_g)
+\sum_{g=1}^G \beta_g(e_g-(1-\underline e)),
\]
with $\lambda\ge 0$, $\alpha_g\ge 0$, $\beta_g\ge 0$.
If the cost constraint binds and the optimum is interior (so $\alpha_g=\beta_g=0$ for all $g$), the first-order conditions are
\[
0=\frac{\partial \mathcal L}{\partial e_g}
=
\pi_g\,\Gamma_g'(e_g)+\lambda\,\pi_g c_g,
\qquad g=1,\dots,G.
\]
To compute $\Gamma_g'(e_g)$, note that $\Gamma_g(e_g)$ depends on $e_g$ only through $\sigma=\sigma_{B,g}(e_g)$:
\[
\Gamma_g(e_g)= m_g\,\Phi\!\Big(\frac{m_g}{\sigma}\Big)+\sigma\,\phi\!\Big(\frac{m_g}{\sigma}\Big),
\qquad \sigma=\sigma_{B,g}(e_g).
\]
Differentiating with respect to $\sigma$ yields
\[
\frac{d}{d\sigma}\left\{ m_g\,\Phi\!\Big(\frac{m_g}{\sigma}\Big)+\sigma\,\phi\!\Big(\frac{m_g}{\sigma}\Big)\right\}
=
\phi\!\Big(\frac{m_g}{\sigma}\Big),
\]
so by the chain rule,
\[
\Gamma_g'(e_g)
=
\phi\!\Big(\frac{m_g}{\sigma_{B,g}(e_g)}\Big)\cdot
\frac{\partial \sigma_{B,g}(e_g)}{\partial e_g}.
\]
Substituting into the first-order condition and canceling $\pi_g>0$ gives \eqref{eq:kkt_policy_formal}.
If an optimum lies on the overlap bounds, the complementary-slackness conditions imply truncation to $[\underline e,1-\underline e]$ as stated. \qed

\subsection{Proof of Theorem~\ref{thm:oracle-ineq}}
Let $\eta^{EB}\in\argmax_{\eta\in\mathcal H}U_{\hat G_n}(\eta)$. Add and subtract $U_{\hat G_n}(\eta^O)$ and $U_{\hat G_n}(\eta^{EB})$:
\begin{align*}
U_G(\eta^O)-U_G(\eta^{EB})
&=\big(U_G(\eta^O)-U_{\hat G_n}(\eta^O)\big)
+\big(U_{\hat G_n}(\eta^O)-U_{\hat G_n}(\eta^{EB})\big)
+\big(U_{\hat G_n}(\eta^{EB})-U_G(\eta^{EB})\big).
\end{align*}
The middle term is nonpositive by optimality of $\eta^{EB}$ for $U_{\hat G_n}$. The other two terms are each bounded above by $\Delta_n$. Nonnegativity holds because $\eta^O$ maximizes $U_G$. \qed

\subsection{Proof of Corollary~\ref{cor:eb-vs-np}}
Rearrange \eqref{eq:oracle-ineq}: $U_G(\eta^{EB}) \ge U_G(\eta^O)-2\Delta_n$. Subtracting $U_G(\eta^{NI})$ on both sides yields the inequality. \qed

\subsection{Proof of Lemma~\ref{lem:weak-to-Delta-unbounded}}
We begin with a deterministic statement. Let $(Q_m)_{m\ge1}\subset\mathcal Q$ and $Q\in\mathcal Q$ satisfy $Q_m\Rightarrow Q$. We show
\[
\sup_{\eta\in\mathcal H}\big|U_{Q_m}(\eta)-U_Q(\eta)\big|\to 0.
\]
Let $Z:=\mathcal A\times\mathcal H$, which is compact by Assumption~\ref{ass:cont-envelope}(i). For $\lambda$-a.e.\ $y$, define
\[
f_y(z,\theta):=W(a,\theta)\,p_\eta(y\mid\theta),\qquad z=(a,\eta)\in Z.
\]
By Assumption~\ref{ass:cont-envelope}(ii)--(iii), for $\lambda$-a.e.\ $y$ the map $(z,\theta)\mapsto f_y(z,\theta)$ is continuous and satisfies
\[
\sup_{z\in Z}|f_y(z,\theta)|\le w(\theta)\,\bar p(y).
\]

Fix $L>0$. Choose a continuous function $\psi:[0,\infty)\to[0,1]$ with $\psi(t)=1$ for $t\le 1$ and $\psi(t)=0$ for $t\ge 2$, and set $\chi_L(\theta):=\psi(w(\theta)/L)$. Let $f_{y,L}(z,\theta):=f_y(z,\theta)\chi_L(\theta)$. Then $f_{y,L}$ is bounded and continuous on $Z\times\R^d$ and satisfies $f_{y,L}(z,\theta)=0$ whenever $w(\theta)\ge 2L$. Moreover, for any probability measure $R$ on $\R^d$ and any $y$,
\begin{align*}
\sup_{z\in Z}\Big|\int \big(f_y(z,\theta)-f_{y,L}(z,\theta)\big)\,dR(\theta)\Big|
&\le
\bar p(y)\int w(\theta)\,\mathbf 1\{w(\theta)>L\}\,dR(\theta).
\end{align*}
Set $C:=\sup_{R\in\mathcal Q}\int w(\theta)^{1+\delta}\,dR(\theta)<\infty$
(Assumption~\ref{ass:cont-envelope}(iv)). Then for any $R\in\mathcal Q$,
\[
\int w(\theta)\,\mathbf 1\{w(\theta)>L\}\,dR(\theta)
=
\int w(\theta)^{1+\delta} w(\theta)^{-\delta}\,\mathbf 1\{w(\theta)>L\}\,dR(\theta)
\le
L^{-\delta}\int w(\theta)^{1+\delta}\,dR(\theta)
\le
C\,L^{-\delta},
\]
where we used $w(\theta)^{-\delta}\le L^{-\delta}$ on $\{w(\theta)>L\}$.
Hence we have 
$$
\sup_{R\in\mathcal Q}\sup_{z\in Z}\Big|\int \big(f_y(z,\theta)-f_{y,L}(z,\theta)\big)\,dR(\theta)\Big|\le \bar p(y)C\,L^{-\delta}.
$$

Now fix $L$ and a $y$ for which the continuity statements hold. Since $Z\times K_{2L}$ is compact with $K_{2L}:=\{\theta:w(\theta)\le 2L\}$ and $f_{y,L}$ is continuous, it is uniformly continuous in $z$ uniformly over $\theta\in K_{2L}$. Let $\varepsilon>0$ and choose a finite $\varepsilon$-net $\{z_1,\dots,z_J\}\subset Z$ such that for each $z\in Z$ there exists $j(z)$ with
\[
\sup_{\theta\in\R^d}\big|f_{y,L}(z,\theta)-f_{y,L}(z_{j(z)},\theta)\big|\le \varepsilon.
\]
For each fixed $j$, $\theta\mapsto f_{y,L}(z_j,\theta)$ is bounded and continuous, so weak convergence implies
\[
\int f_{y,L}(z_j,\theta)\,dQ_m(\theta)\to \int f_{y,L}(z_j,\theta)\,dQ(\theta).
\]
Since $J$ is finite,
\[
\max_{1\le j\le J}\Big|\int f_{y,L}(z_j,\theta)\,d(Q_m-Q)(\theta)\Big|\to 0.
\]
For any $z\in Z$, the triangle inequality and the net property give
\[
\Big|\int f_{y,L}(z,\theta)\,d(Q_m-Q)(\theta)\Big|
\le
\Big|\int f_{y,L}(z_{j(z)},\theta)\,d(Q_m-Q)(\theta)\Big|+2\varepsilon,
\]
hence $\sup_{z\in Z}\big|\int f_{y,L}(z,\theta)\,d(Q_m-Q)(\theta)\big|\to 0$ by letting $m\to\infty$ and then $\varepsilon\downarrow 0$.

Combining the last display with the truncation bound, for $\lambda$-a.e.\ $y$ we obtain
\begin{align*}
\limsup_{m\to\infty}\ \sup_{z\in Z}\Big|\int f_y(z,\theta)\,d(Q_m-Q)(\theta)\Big|
&\le 2C\,\bar p(y)\,L^{-\delta}.
\end{align*}
Letting $L\to\infty$ yields
\[
\sup_{z\in Z}\Big|\int f_y(z,\theta)\,d(Q_m-Q)(\theta)\Big|\to 0
\qquad\text{for }\lambda\text{-a.e.\ }y.
\]

For such $y$ and each $\eta$,
\[
|u_{Q_m}(y;\eta)-u_Q(y;\eta)|
\le
\sup_{a\in\mathcal A}\Big|\int W(a,\theta)p_\eta(y\mid\theta)\,d(Q_m-Q)(\theta)\Big|
\le
\sup_{z\in Z}\Big|\int f_y(z,\theta)\,d(Q_m-Q)(\theta)\Big|,
\]
so $\sup_{\eta\in\mathcal H}|u_{Q_m}(y;\eta)-u_Q(y;\eta)|\to 0$ for $\lambda$-a.e.\ $y$. Moreover,
\[
\sup_{\eta\in\mathcal H}|u_{Q_m}(y;\eta)-u_Q(y;\eta)|
\le
\sup_{\eta\in\mathcal H}|u_{Q_m}(y;\eta)|+\sup_{\eta\in\mathcal H}|u_Q(y;\eta)|,
\]
and the right-hand side is integrable uniformly in $m$ by Assumption~\ref{ass:value-envelope}(ii). Dominated convergence therefore gives
\[
\int \sup_{\eta\in\mathcal H}|u_{Q_m}(y;\eta)-u_Q(y;\eta)|\,d\lambda(y)\to 0.
\]
Finally,
\begin{align*}
\sup_{\eta\in\mathcal H}|U_{Q_m}(\eta)-U_Q(\eta)|
&=
\sup_{\eta\in\mathcal H}\Big|\int \big(u_{Q_m}(y;\eta)-u_Q(y;\eta)\big)\,d\lambda(y)\Big| \\
&\le
\int \sup_{\eta\in\mathcal H}|u_{Q_m}(y;\eta)-u_Q(y;\eta)|\,d\lambda(y)\ \longrightarrow\ 0,
\end{align*}
which proves the deterministic claim.

To return to random $\hat G_n$, let $(n_k)$ be an arbitrary subsequence. Since $\hat G_n\Rightarrow G$ in probability, there exists a further subsequence $(n_{k_\ell})$ such that $\hat G_{n_{k_\ell}}\Rightarrow G$ almost surely. Because $\Pr(\hat G_n\in\mathcal Q)\to 1$, we may additionally choose the subsequence so that $\sum_\ell \Pr(\hat G_{n_{k_\ell}}\notin\mathcal Q)<\infty$, and then Borel--Cantelli implies $\hat G_{n_{k_\ell}}\in\mathcal Q$ for all sufficiently large $\ell$ almost surely. On this probability-one event, the deterministic claim applies pathwise with $Q_m=\hat G_{n_{k_\ell}}$ and $Q=G$, yielding $\Delta_{n_{k_\ell}}\to 0$ almost surely. Since every subsequence admits such a further subsequence, it follows that $\Delta_n\to 0$ in probability. \qed

\subsection{Proof of Corollary~\ref{cor:gaussian-eb-oracle-unbounded}}
Parameter convergence implies $\hat G_n\Rightarrow G$. Choose $\mathcal Q$ to be a set of Gaussians with $(\tau,V)$ restricted to a compact set in which eigenvalues of $V$ are bounded away from $0$ and $\infty$. Then Assumption~\ref{ass:cont-envelope}(iv) holds for any coercive polynomial envelope $w(\theta)$ because Gaussian moments are finite and uniformly bounded on such a compact parameter set. Lemma~\ref{lem:weak-to-Delta-unbounded} gives $\Delta_n\to 0$, and Theorem~\ref{thm:regret-consistency} concludes. \qed

\subsection{Proof of Proposition~\ref{prop:quad-gaussian-Delta}}
Under quadratic welfare, the Bayes action is the posterior mean and the optimized posterior welfare equals $-\tr(\V(\theta\mid Y_\eta))$; under Gaussian conjugacy, $\V(\theta\mid Y_\eta)=(V^{-1}+\Sigma(\eta)^{-1})^{-1}$ and is deterministic given $\eta$. Uniform continuity in $V$ follows from continuity of the map $V\mapsto (V^{-1}+A)^{-1}$ on the set of positive-definite $V$ with eigenvalues bounded away from $0$ and $A=\Sigma(\eta)^{-1}$ bounded uniformly over $\eta$. \qed

\subsection{Proof of Lemma~\ref{lem:value_reduction}}
Fix $(Q,\eta)$. By iterated expectations and Assumption~\ref{as:ql_state}, for any action $a$,
\[
\mathbb E_Q[W(a,\theta)\mid Y=y,\eta]
=\mathbb E_Q[w_0(\theta)\mid Y=y,\eta] + u(a) + v(a)^\prime\mu_{Q,\eta}(y).
\]
Taking the supremum over $a\in\mathcal A$ yields
$\sup_{a\in\mathcal A}\mathbb E_Q[W(a,\theta)\mid Y=y,\eta]=\mathbb E_Q[w_0(\theta)\mid Y=y,\eta] + \Psi(\mu_{Q,\eta}(y))$.
Taking expectations over $Y\sim f_{Q,\eta}$ and using iterated expectations gives
$U_Q(\eta)=\mathbb E_{Y\sim f_{Q,\eta}}[\Psi(\mu_{Q,\eta}(Y))] + \mathbb E_Q[w_0(\theta)]$. \qed

\subsection{Proof of Proposition~\ref{prop:npmle-weak-consistency}}
\label{app:proof-npmle-weak-consistency}

We condition throughout on the (deterministic) sequence of known covariance matrices $\{\Sigma_i\}_{i\ge1}$.
Let $\Theta_R:=\{\theta\in\mathbb R^d:\ \|\theta\|_2\le R\}$ and let
$\mathcal G_R$ denote the set of all probability measures supported on $\Theta_R$.
Let $d_{\mathrm{BL}}$ denote the bounded--Lipschitz metric on $\mathcal G_R$:
\[
d_{\mathrm{BL}}(G_1,G_2)
:=
\sup\left\{\left|\int f\,dG_1-\int f\,dG_2\right|:\ \|f\|_\infty\le 1,\ \mathrm{Lip}(f)\le 1\right\}.
\]
Since $\Theta_R$ is compact, $d_{\mathrm{BL}}$ metrizes weak convergence on $\mathcal G_R$ and $\mathcal G_R$ is compact under $d_{\mathrm{BL}}$.

For any positive definite $\Sigma$, define the Gaussian location-mixture density
\[
f_{G,\Sigma}(x):=\int_{\Theta_R}\varphi_{\Sigma}(x-\theta)\,dG(\theta).
\]
Given data $(\hat\theta_i,\Sigma_i)_{i=1}^n$, define the average log-likelihood
\[
\ell_n(G):=\frac1n\sum_{i=1}^n \log f_{G,\Sigma_i}(\hat\theta_i).
\]
By definition of the sieve NPMLE in Proposition~\ref{prop:npmle-weak-consistency}, $\hat G_n$ maximizes $\ell_n(\cdot)$ over $\mathcal G_R$.
We first prove three useful lemmas.

\medskip
\begin{lemma}[Envelope bounds for Gaussian mixtures]
\label{lem:npmle-envelope}
Let $\mathcal S:=\{\Sigma:\underline\sigma^2 I_d\preceq \Sigma\preceq \overline\sigma^2 I_d\}$.
Under Assumption~\ref{ass:npmle-consistency}, there exist finite constants $C_0,C_1,C_2<\infty$ (depending only on $d,R,\underline\sigma^2,\overline\sigma^2$)
such that for all $\Sigma\in\mathcal S$, all $G\in\mathcal G_R$, and all $x\in\mathbb R^d$,
\[
-C_0-C_1\|x\|_2^2 \ \le\ \log f_{G,\Sigma}(x)\ \le\ C_2.
\]
Consequently, for any subset $\mathcal B\subseteq \mathcal G_R$ and $f_{\mathcal B,\Sigma}(x):=\sup_{G\in\mathcal B} f_{G,\Sigma}(x)$,
\[
\left|\log\frac{f_{\mathcal B,\Sigma}(x)}{f_{G,\Sigma}(x)}\right|\ \le\ C_0'+C_1'\|x\|_2^2
\]
for some finite $C_0',C_1'$ depending only on $d,R,\underline\sigma^2,\overline\sigma^2$.
\end{lemma}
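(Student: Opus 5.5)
The plan is to bound the Gaussian kernel pointwise, uniformly over $\theta\in\Theta_R$ and $\Sigma\in\mathcal S$, and then integrate against $G$. Since $G$ is a probability measure, any two-sided bound on $\varphi_\Sigma(x-\theta)$ that holds for every $\theta\in\Theta_R$ transfers directly to $f_{G,\Sigma}(x)$.

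\emph{Upper bound.} We have
\[
\varphi_\Sigma(u)=(2\pi)^{-d/2}|\Sigma|^{-1/2}\exp\bigl(-\tfrac12 u'\Sigma^{-1}u\bigr)\le (2\pi)^{-d/2}|\Sigma|^{-1/2},
\]
and $|\Sigma|\ge\underline\sigma^{2d}$ by the eigenvalue bound. Hence $f_{G,\Sigma}(x)\le(2\pi\underline\sigma^2)^{-d/2}$, and we may take $C_2:=-\tfrac d2\log(2\pi\underline\sigma^2)$.

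\emph{Lower bound.} We have $|\Sigma|\le\overline\sigma^{2d}$ and $u'\Sigma^{-1}u\le\|u\|_2^2/\underline\sigma^2$. For $\theta\in\Theta_R$,
\[
\|x-\theta\|_2^2\le 2\|x\|_2^2+2R^2 .
\]
Therefore, for every $\theta\in\Theta_R$,
\[
\varphi_\Sigma(x-\theta)\ge(2\pi\overline\sigma^2)^{-d/2}\exp\bigl(-(\|x\|_2^2+R^2)/\underline\sigma^2\bigr).
\]
Integrating against $G$ (which places all its mass on $\Theta_R$) and taking logs gives the claim with
\[
C_0:=\tfrac d2\log(2\pi\overline\sigma^2)+R^2/\underline\sigma^2,\qquad C_1:=1/\underline\sigma^2 .
\]

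\emph{Consequence.} For $\mathcal B\subseteq\mathcal G_R$, the function $f_{\mathcal B,\Sigma}(x)$ is a supremum of quantities that each satisfy the same two-sided bounds. So $f_{\mathcal B,\Sigma}(x)$ also satisfies them; for nonempty $\mathcal B$, the lower bound holds because the supremum is at least any member. It follows that both $\log f_{\mathcal B,\Sigma}(x)$ and $\log f_{G,\Sigma}(x)$ lie in $[-C_0-C_1\|x\|_2^2,\,C_2]$. The absolute difference of two numbers in this interval is at most its length, $C_2+C_0+C_1\|x\|_2^2$. Setting $C_0':=C_0+C_2$ and $C_1':=C_1$ (replacing $C_0'$ by $|C_0'|$ if necessary) finishes the argument.

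There is no real obstacle here. The only care needed is that all constants are uniform over $\Sigma\in\mathcal S$ and $G\in\mathcal G_R$. This uniformity is what compact support and the two-sided eigenvalue bounds of Assumption~\ref{ass:npmle-consistency} provide.
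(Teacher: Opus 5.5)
Your proof is correct and follows the paper's argument essentially line by line. It uses the same kernel bounds from $\underline\sigma^{2d}\le|\Sigma|\le\overline\sigma^{2d}$ and $\Sigma^{-1}\preceq\underline\sigma^{-2}I_d$, the same $2\|x\|_2^2+2R^2$ inflation over $\Theta_R$, and the same integration against $G$. Your treatment of the consequence is slightly more careful than the paper's: you note explicitly that $f_{\mathcal B,\Sigma}(x)$ inherits the lower bound when $\mathcal B$ is nonempty. That is needed for the direction $\log f_{G,\Sigma}(x)-\log f_{\mathcal B,\Sigma}(x)$, which the paper leaves implicit.
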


\begin{proof}
Fix $\Sigma\in\mathcal S$ and $u\in\mathbb R^d$. Since $|\Sigma|\ge (\underline\sigma^2)^d$ and $\exp(-\tfrac12 u^\prime\Sigma^{-1}u)\le 1$,
\[
\varphi_{\Sigma}(u)\le (2\pi)^{-d/2}(\underline\sigma^2)^{-d/2}=:M,
\]
so $f_{G,\Sigma}(x)\le M$ and hence $\log f_{G,\Sigma}(x)\le \log M=:C_2$.

Next, since $|\Sigma|\le (\overline\sigma^2)^d$ and $\Sigma^{-1}\preceq (\underline\sigma^2)^{-1}I_d$,
\[
\varphi_{\Sigma}(u)\ge (2\pi)^{-d/2}(\overline\sigma^2)^{-d/2}\exp\!\left(-\frac{\|u\|_2^2}{2\underline\sigma^2}\right)=:m\exp\!\left(-\frac{\|u\|_2^2}{2\underline\sigma^2}\right).
\]
For any $x$ and any $\theta\in\Theta_R$, we have $\|x-\theta\|_2\le \|x\|_2+R$, hence
\[
\varphi_{\Sigma}(x-\theta)\ge m\exp\!\left(-\frac{(\|x\|_2+R)^2}{2\underline\sigma^2}\right)
\ge m\exp\!\left(-\frac{\|x\|_2^2}{\underline\sigma^2}-\frac{R^2}{\underline\sigma^2}\right),
\]
where we used $(a+b)^2\le 2a^2+2b^2$.
Integrating over $G$ (a probability measure on $\Theta_R$) yields the same lower bound for $f_{G,\Sigma}(x)$, and taking logs gives
$\log f_{G,\Sigma}(x)\ge -C_0-C_1\|x\|_2^2$ for suitable constants $C_0,C_1$.

Finally, for any $\mathcal B\subseteq\mathcal G_R$, we have $f_{\mathcal B,\Sigma}(x)\le M$ and the same lower bound for $f_{G,\Sigma}(x)$, so
$\left|\log\{f_{\mathcal B,\Sigma}(x)/f_{G,\Sigma}(x)\}\right|$ is bounded by an affine function of $\|x\|_2^2$.
\end{proof}

\medskip 

\begin{lemma}[Identifiability of multivariate Gaussian location mixtures]
\label{lem:npmle-identifiability}
Fix any $\Sigma\succ 0$. If $f_{G_1,\Sigma}(x)=f_{G_2,\Sigma}(x)$ for Lebesgue-a.e.\ $x$, then $G_1=G_2$.
\end{lemma}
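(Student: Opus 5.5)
The plan is to pass to Fourier transforms and use that the Gaussian characteristic function never vanishes. Since $f_{G_j,\Sigma}=\varphi_\Sigma * G_j$ is the density of $\vartheta+\varepsilon$ with $\vartheta\sim G_j$ and $\varepsilon\sim\mathcal N(0,\Sigma)$ independent, equality of the two densities Lebesgue-a.e.\ means that the two convolution laws $G_1*\mathcal N(0,\Sigma)$ and $G_2*\mathcal N(0,\Sigma)$ coincide as probability measures on $\R^d$. Both are absolutely continuous, so their densities determine them.

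Next, take characteristic functions. For every $t\in\R^d$, independence gives
\[
\widehat G_1(t)\,e^{-t'\Sigma t/2}=\widehat G_2(t)\,e^{-t'\Sigma t/2},
\qquad \widehat G_j(t):=\int e^{i t'\vartheta}\,dG_j(\vartheta).
\]
Since $e^{-t'\Sigma t/2}>0$ for all $t$ (here $\Sigma\succ 0$ is used only to make the Gaussian factor a genuine, nowhere-vanishing characteristic function), we may divide through and obtain $\widehat G_1\equiv\widehat G_2$ on all of $\R^d$. Lévy's uniqueness theorem for characteristic functions on $\R^d$ then yields $G_1=G_2$.

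The argument needs no compact-support assumption and works for arbitrary probability measures $G_1,G_2$. The only point requiring care is the first step: passing from a.e.\ equality of densities to equality of the convolution measures, which is immediate because the densities are integrable. The main "obstacle", such as it is, is simply noting that the Gaussian Fourier transform has no zeros. This is exactly what fails for some other error families, such as uniform errors, where deconvolution identifiability can break down.
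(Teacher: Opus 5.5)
Your proposal is correct and follows essentially the same route as the paper: the Fourier transform of $f_{G,\Sigma}$ factors as $\phi_G(t)\exp(-\tfrac12 t'\Sigma t)$, the nowhere-vanishing Gaussian factor cancels, and uniqueness of characteristic functions gives $G_1=G_2$. You also make explicit the division by the Gaussian factor, a step the paper leaves implicit.
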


\begin{proof}
Let $\phi_{G}(t):=\int e^{it^\prime\theta}\,dG(\theta)$ be the characteristic function of $G$.
The Fourier transform of $f_{G,\Sigma}$ equals $\phi_G(t)\exp(-\tfrac12 t^\prime\Sigma t)$.
If $f_{G_1,\Sigma}=f_{G_2,\Sigma}$ a.e., then their Fourier transforms coincide, so
$\phi_{G_1}(t)=\phi_{G_2}(t)$ for all $t$, hence $G_1=G_2$. 
\end{proof}

\medskip

\begin{lemma}[Local uniform negativity]
\label{lem:npmle-local-negativity}
Fix $G_0\in\mathcal G_R$ with $G_0\neq G$. Then there exist $r_0>0$ and $c_0>0$ such that, for the open ball
$\mathcal B:=\{G'\in\mathcal G_R:\ d_{\mathrm{BL}}(G',G_0)<r_0\}$,
\[
\sup_{\Sigma\in\mathcal S}\ \mathbb E_{X\sim f_{G,\Sigma}}
\left[\log\frac{f_{\mathcal B,\Sigma}(X)}{f_{G,\Sigma}(X)}\right]
\ \le\ -c_0\ <\ 0.
\]
\end{lemma}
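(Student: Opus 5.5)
Our plan is to follow the Pfanzagl-type argument in \citet[Sec.~2]{Chen2017}, made uniform over the compact covariance class $\mathcal S$. First we handle a single measure $G_0$ and a single $\Sigma$. By Lemma~\ref{lem:npmle-identifiability}, $f_{G_0,\Sigma}\neq f_{G,\Sigma}$ on a set of positive Lebesgue measure. Hence the Kullback--Leibler divergence $K(\Sigma):=\mathbb E_{X\sim f_{G,\Sigma}}[\log(f_{G_0,\Sigma}(X)/f_{G,\Sigma}(X))]$ is strictly negative; it is finite by Lemma~\ref{lem:npmle-envelope}, since Gaussian mixtures with compactly supported mixing measures have finite second moments. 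We then argue that $(G',\Sigma)\mapsto \mathbb E_{X\sim f_{G,\Sigma}}[\log(f_{G',\Sigma}(X)/f_{G,\Sigma}(X))]$ is jointly continuous on $\mathcal G_R\times\mathcal S$. For fixed $x$, the map $(G',\Sigma)\mapsto f_{G',\Sigma}(x)$ is continuous, because $\theta\mapsto\varphi_\Sigma(x-\theta)$ is bounded and continuous on $\Theta_R$, equicontinuous in $\Sigma\in\mathcal S$, and $d_{\mathrm{BL}}$ metrizes weak convergence. The log-ratio is dominated by $C_0'+C_1'\|x\|^2$ (Lemma~\ref{lem:npmle-envelope}), and $\sup_{\Sigma\in\mathcal S}f_{G,\Sigma}(x)$ has Gaussian tails, so dominated convergence (also handling the $\Sigma$-dependence of the sampling density) gives joint continuity. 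Compactness of $\mathcal S$ then yields $\sup_{\Sigma\in\mathcal S}K(\Sigma)=:-2c_0<0$.

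Next we pass from the single point $G_0$ to the supremum over the ball $\mathcal B$. For $r>0$, set $\mathcal B_r=\{G':d_{\mathrm{BL}}(G',G_0)<r\}$ and $h_r(x,\Sigma):=\log(f_{\mathcal B_r,\Sigma}(x)/f_{G,\Sigma}(x))$. As $r\downarrow0$, we have $f_{\mathcal B_r,\Sigma}(x)\downarrow f_{G_0,\Sigma}(x)$ pointwise. This follows from continuity of $G'\mapsto f_{G',\Sigma}(x)$ at $G_0$, and indeed uniformly over $\Sigma\in\mathcal S$ and $x$ in compacts, because the Lipschitz constant of $\theta\mapsto\varphi_\Sigma(x-\theta)$ on $\Theta_R$ is bounded uniformly over $\Sigma\in\mathcal S$ and bounded $x$. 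In fact a quantitative bound is available: $|f_{G',\Sigma}(x)-f_{G_0,\Sigma}(x)|\le C(1+\|x\|)\,d_{\mathrm{BL}}(G',G_0)$. Moreover, $h_r$ is dominated uniformly in $r$ and $\Sigma$ by $C_0'+C_1'\|x\|^2$ (Lemma~\ref{lem:npmle-envelope}). Monotone/dominated convergence then gives $\sup_{\Sigma}\mathbb E_{f_{G,\Sigma}}[h_r]\to\sup_\Sigma K(\Sigma)$.

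The interchange of $\sup_\Sigma$ with the limit in $r$ is the step we expect to be the main obstacle. We plan to handle it with a Dini-type argument. The functions $g_r(\Sigma):=\mathbb E_{f_{G,\Sigma}}[h_r(X,\Sigma)]$ decrease monotonically in $r$ and converge pointwise to the continuous function $K$ on the compact set $\mathcal S$. Each $g_r$ is upper semicontinuous, since $f_{\mathcal B_r,\Sigma}(x)$ is a supremum of continuous functions and therefore lower semicontinuous; upper semicontinuity of $g_r$ needs care. To sidestep this, we will instead bound $g_r$ directly: truncate to $\|x\|\le T$, use the uniform Lipschitz bound above to get $h_r\le \log(f_{G_0,\Sigma}/f_{G,\Sigma})+C_T r$ on the truncation region (using the lower bound on $f_{G_0,\Sigma}$ from Lemma~\ref{lem:npmle-envelope}), and control the tail $\|x\|>T$ by $\sup_\Sigma\mathbb E_{f_{G,\Sigma}}[(C_0'+C_1'\|X\|^2)\mathbf 1\{\|X\|>T\}]$, which vanishes as $T\to\infty$ uniformly in $\Sigma$. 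Choosing $T$ so that the tail is at most $c_0/2$, and then $r_0$ so that $C_Tr_0\le c_0/2$, gives $\sup_\Sigma g_{r_0}(\Sigma)\le -2c_0+c_0=-c_0$, which is the claim.
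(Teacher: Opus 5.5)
Your proof is correct. Its first half matches the paper in substance; the second half takes a different route.

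\textbf{First half.} The paper obtains $\kappa_0=\inf_{\Sigma\in\mathcal S}\mathrm{KL}(f_{G,\Sigma}\,\|\,f_{G_0,\Sigma})>0$ by a contradiction argument along $\Sigma_m\to\Sigma_\infty$ with dominated convergence. That is the same content as your argument that $K$ is continuous and strictly negative on the compact set $\mathcal S$.

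\textbf{Second half.} The paper passes from $G_0$ to the ball by a soft argument. For each fixed $\Sigma$ it lets $r\downarrow 0$, using $f_{\mathcal B_r,\Sigma}\downarrow f_{G_0,\Sigma}$ and dominated convergence. It then handles the interchange of $\sup_\Sigma$ and $\lim_r$ by a second contradiction: it extracts $r_m\downarrow 0$ and $\Sigma_m\to\Sigma_\infty$ and asserts $H(r_m,\Sigma_m)\to -\mathrm{KL}(f_{G,\Sigma_\infty}\,\|\,f_{G_0,\Sigma_\infty})$. That step tacitly needs the joint convergence $f_{\mathcal B_{r_m},\Sigma_m}(x)\to f_{G_0,\Sigma_\infty}(x)$. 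This holds by joint continuity of $(G',\Sigma)\mapsto f_{G',\Sigma}(x)$ on the compact set $\mathcal G_R\times\mathcal S$, but the paper never states it.

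You replace all of this with a quantitative estimate. The kernel $\theta\mapsto\varphi_\Sigma(x-\theta)$ has sup-norm and Lipschitz constant bounded uniformly over $\Sigma\in\mathcal S$, so
\[
f_{\mathcal B_r,\Sigma}(x)\le f_{G_0,\Sigma}(x)+C(1+\|x\|_2)\,r .
\]
Combined with the lower envelope on $\{\|x\|_2\le T\}$ and a uniform tail bound, this gives an explicit $r_0$ in terms of $c_0$, $T$ and the envelope constants, and it never interchanges limits.

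\textbf{Trade-off.} The paper's version is shorter and avoids the Lipschitz computation. Yours makes transparent where uniformity in $\Sigma$ comes from, and it would adapt more readily if one wanted quantitative control.

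\textbf{Small points.}
\begin{itemize}
\item You can drop the Dini detour. The remark that $g_r$ is upper semicontinuous because $f_{\mathcal B_r,\Sigma}$ is a supremum of continuous functions is backwards: lower semicontinuity of the supremum would give lower semicontinuity of $g_r$. In fact, $\Sigma\mapsto f_{G',\Sigma}(x)$ is equicontinuous in $G'\in\mathcal G_R$, so $g_r$ is continuous and Dini would also have worked.
\item The tail term enters twice: once for $h_r$ on $\{\|x\|_2>T\}$, and once when you replace the truncated expectation of $\log(f_{G_0,\Sigma}/f_{G,\Sigma})$ by $K(\Sigma)$. So choose $T$ with the tail at most $c_0/4$ rather than $c_0/2$.
\end{itemize}
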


\begin{proof}
For each $\Sigma\in\mathcal S$,
\[
\mathbb E_{X\sim f_{G,\Sigma}}\!\left[\log\frac{f_{G_0,\Sigma}(X)}{f_{G,\Sigma}(X)}\right]
= -\mathrm{KL}\!\left(f_{G,\Sigma}\,\|\,f_{G_0,\Sigma}\right)
<0,
\]
where strict negativity follows from Lemma~\ref{lem:npmle-identifiability}.
Define the uniform KL-gap
\[
\kappa_0:=\inf_{\Sigma\in\mathcal S}\mathrm{KL}\!\left(f_{G,\Sigma}\,\|\,f_{G_0,\Sigma}\right).
\]
We claim $\kappa_0>0$.
If not, there exists $\Sigma_m\in\mathcal S$ with $\mathrm{KL}(f_{G,\Sigma_m}\|f_{G_0,\Sigma_m})\downarrow 0$.
By compactness of $\mathcal S$, along a subsequence $\Sigma_m\to\Sigma_\infty\in\mathcal S$.
By Lemma~\ref{lem:npmle-envelope}, the log-likelihood ratios admit a uniform square-integrable envelope under $X\sim f_{G,\Sigma_m}$, and
$f_{G,\Sigma_m}(x)\to f_{G,\Sigma_\infty}(x)$ and $f_{G_0,\Sigma_m}(x)\to f_{G_0,\Sigma_\infty}(x)$ pointwise in $x$.
Therefore, by dominated convergence and lower semicontinuity of KL,
$\mathrm{KL}(f_{G,\Sigma_\infty}\|f_{G_0,\Sigma_\infty})=0$, hence
$f_{G,\Sigma_\infty}=f_{G_0,\Sigma_\infty}$ a.e., contradicting Lemma~\ref{lem:npmle-identifiability}.
Thus $\kappa_0>0$.

Now, for $r>0$ let $\mathcal B_r:=\{G':d_{\mathrm{BL}}(G',G_0)<r\}$ and define
\[
H(r,\Sigma):=\mathbb E_{X\sim f_{G,\Sigma}}\!\left[\log\frac{f_{\mathcal B_r,\Sigma}(X)}{f_{G,\Sigma}(X)}\right].
\]
Since $\mathcal B_r\downarrow\{G_0\}$ as $r\downarrow 0$ and $G'\mapsto f_{G',\Sigma}(x)$ is continuous for fixed $(\Sigma,x)$,
we have $f_{\mathcal B_r,\Sigma}(x)\downarrow f_{G_0,\Sigma}(x)$ pointwise in $x$.
Lemma~\ref{lem:npmle-envelope} provides an integrable envelope for the log-ratio under $X\sim f_{G,\Sigma}$, uniformly over $\Sigma\in\mathcal S$,
so dominated convergence implies that for each fixed $\Sigma$,
\[
\lim_{r\downarrow 0} H(r,\Sigma)
=
\mathbb E_{X\sim f_{G,\Sigma}}\!\left[\log\frac{f_{G_0,\Sigma}(X)}{f_{G,\Sigma}(X)}\right]
\le -\kappa_0.
\]
If there were no $r_0$ with $\sup_{\Sigma\in\mathcal S}H(r_0,\Sigma)\le -\kappa_0/2$, then we could find $r_m\downarrow 0$ and $\Sigma_m\in\mathcal S$
such that $H(r_m,\Sigma_m)>-\kappa_0/2$ for all $m$.
By compactness, along a subsequence $\Sigma_m\to\Sigma_\infty$, and dominated convergence would give
$H(r_m,\Sigma_m)\to \mathbb E_{X\sim f_{G,\Sigma_\infty}}[\log(f_{G_0,\Sigma_\infty}(X)/f_{G,\Sigma_\infty}(X))]\le -\kappa_0$,
a contradiction.
Hence there exists $r_0>0$ such that $\sup_{\Sigma\in\mathcal S}H(r_0,\Sigma)\le -\kappa_0/2$.
Setting $\mathcal B:=\mathcal B_{r_0}$ and $c_0:=\kappa_0/2$, we complete the proof. 
\end{proof}

\noindent \textbf{Proof of Proposition~\ref{prop:npmle-weak-consistency}}

For any subset $\mathcal B\subseteq\mathcal G_R$ and any $\Sigma\in\mathcal S$, define
$f_{\mathcal B,\Sigma}(x):=\sup_{G\in\mathcal B} f_{G,\Sigma}(x)$, and for each $i$ define
\[
Z_{i}(\mathcal B):=\log\frac{f_{\mathcal B,\Sigma_i}(\hat\theta_i)}{f_{G,\Sigma_i}(\hat\theta_i)}.
\]
Then, for any $\mathcal B$,
\[
\sup_{G'\in\mathcal B}\big(\ell_n(G')-\ell_n(G)\big)
=
\sup_{G'\in\mathcal B}\frac1n\sum_{i=1}^n \log\frac{f_{G',\Sigma_i}(\hat\theta_i)}{f_{G,\Sigma_i}(\hat\theta_i)}
\ \le\
\frac1n\sum_{i=1}^n Z_i(\mathcal B),
\]
because $\sup_{G'\in\mathcal B}\log f_{G',\Sigma_i}(\hat\theta_i)=\log f_{\mathcal B,\Sigma_i}(\hat\theta_i)$.

Then fix $\varepsilon>0$ and define the closed set
\[
\mathcal A_\varepsilon:=\{G'\in\mathcal G_R:\ d_{\mathrm{BL}}(G',G)\ge \varepsilon\}.
\]
Since $\mathcal G_R$ is compact, $\mathcal A_\varepsilon$ is compact.
For each $G_0\in\mathcal A_\varepsilon$, Lemma~\ref{lem:npmle-local-negativity} provides an open ball $\mathcal B(G_0)$ and a constant $c(G_0)>0$ such that
\[
\sup_{\Sigma\in\mathcal S}\ \mathbb E_{X\sim f_{G,\Sigma}}
\left[\log\frac{f_{\mathcal B(G_0),\Sigma}(X)}{f_{G,\Sigma}(X)}\right]\le -c(G_0).
\]
The collection $\{\mathcal B(G_0):G_0\in\mathcal A_\varepsilon\}$ is an open cover of $\mathcal A_\varepsilon$, so by compactness there exist
$G_1,\dots,G_J\in\mathcal A_\varepsilon$ such that $\mathcal A_\varepsilon \subseteq \bigcup_{j=1}^J \mathcal B_j$ with $\mathcal B_j:=\mathcal B(G_j)$.
For each $j$ there is $c_j:=c(G_j)>0$ with
\begin{equation}
\label{eq:npmle-negativity-uniform}
\sup_{\Sigma\in\mathcal S}\ \mathbb E_{X\sim f_{G,\Sigma}}
\left[\log\frac{f_{\mathcal B_j,\Sigma}(X)}{f_{G,\Sigma}(X)}\right]\le -c_j.
\end{equation}

Fix $j\in\{1,\dots,J\}$ and define $Z_{i,j}:=Z_i(\mathcal B_j)$.
By \eqref{eq:npmle-negativity-uniform} and $\Sigma_i\in\mathcal S$, $\mathbb E[Z_{i,j}]\le -c_j$ for all $i$.
Moreover, since $\theta_i\in\Theta_R$ a.s.\ and $\underline\sigma^2 I_d\preceq \Sigma_i\preceq \overline\sigma^2 I_d$, we have
$\sup_i \mathbb E[\|\hat\theta_i\|_2^4]<\infty$, and Lemma~\ref{lem:npmle-envelope} implies $\sup_i\mathbb E[Z_{i,j}^2]<\infty$.
Therefore $\sup_i\mathrm{Var}(Z_{i,j})<\infty$, so $\sum_{i=1}^\infty \mathrm{Var}(Z_{i,j})/i^2<\infty$ and Kolmogorov's strong law yields
\[
\frac1n\sum_{i=1}^n \big(Z_{i,j}-\mathbb E[Z_{i,j}]\big)\ \to\ 0
\quad\text{almost surely.}
\]
Consequently, almost surely for all sufficiently large $n$,
\[
\frac1n\sum_{i=1}^n Z_{i,j}
\le
\frac1n\sum_{i=1}^n \mathbb E[Z_{i,j}] + \frac{c_j}{2}
\le -\frac{c_j}{2}.
\]
It follows that, almost surely for all large $n$,
\[
\sup_{G'\in\mathcal B_j}\big(\ell_n(G')-\ell_n(G)\big)
\le \frac1n\sum_{i=1}^n Z_{i,j}
\le -\frac{c_j}{2}
<0,
\]
so no maximizer of $\ell_n(\cdot)$ over $\mathcal G_R$ can lie in $\mathcal B_j$.
Since the cover is finite, this holds simultaneously for $j=1,\dots,J$ almost surely for all large $n$.
Therefore, almost surely for all large $n$, $\hat G_n\notin \bigcup_{j=1}^J \mathcal B_j\supseteq \mathcal A_\varepsilon$,
so $d_{\mathrm{BL}}(\hat G_n,G)<\varepsilon$ eventually almost surely.
Since $\varepsilon>0$ is arbitrary, $d_{\mathrm{BL}}(\hat G_n,G)\to 0$ almost surely, which is equivalent to $\hat G_n\Rightarrow G$. \qed

\subsection{Proof of Theorem~\ref{thm:gaussian_rootn}}
Fix $\eta\in\mathcal H$.
Under Assumption~\ref{as:gaussian_summary_lik}, $Y\mid \theta;\eta\sim \mathcal N(\theta,\Sigma(\eta))$.
By Lemma~\ref{lem:value_reduction},
\[
U_{\hat G_n}(\eta)-U_G(\eta)
=
\mathbb E_{Y\sim f_{\hat G_n,\eta}}\big[\Psi(\mu_{\hat G_n,\eta}(Y))\big]
-
\mathbb E_{Y\sim f_{G,\eta}}\big[\Psi(\mu_{G,\eta}(Y))\big],
\]
since the $w_0$ terms cancel. Add and subtract $\mathbb E_{Y\sim f_{G,\eta}}[\Psi(\mu_{\hat G_n,\eta}(Y))]$ and bound by $(\mathrm I)+(\mathrm{II})$ where
\[
(\mathrm I):=\Big|\mathbb E_{Y\sim f_{G,\eta}}\big[\Psi(\mu_{\hat G_n,\eta}(Y))-\Psi(\mu_{G,\eta}(Y))\big]\Big|,
\quad
(\mathrm{II}):=\Big|\mathbb E_{Y\sim f_{\hat G_n,\eta}}[g_n(Y)]-\mathbb E_{Y\sim f_{G,\eta}}[g_n(Y)]\Big|
\]
with $g_n(y):=\Psi(\mu_{\hat G_n,\eta}(y))$.

For $(\mathrm I)$, Assumption~\ref{as:Psi_Lip} and Cauchy--Schwarz yield
\[
(\mathrm I)\le L\Big(\E_{f_{G,\eta}}\big[(1+\|\mu_{\hat G_n,\eta}(Y)\|_2^p+\|\mu_{G,\eta}(Y)\|_2^p)^2\big]\Big)^{1/2}
\Big(\E_{f_{G,\eta}}\big[\|\mu_{\hat G_n,\eta}(Y)-\mu_{G,\eta}(Y)\|_2^2\big]\Big)^{1/2}.
\]
Under Gaussian conjugacy, for $Q=\mathcal N(\tau,V)$ and $\Sigma=\Sigma(\eta)$,
\[
\mu_{Q,\eta}(y)=\tau+V(V+\Sigma)^{-1}(y-\tau)=:A_{V,\eta}y+(I-A_{V,\eta})\tau,
\qquad A_{V,\eta}:=V(V+\Sigma)^{-1}.
\]
Because $\underline\sigma^2 I_d\preceq \Sigma(\eta)\preceq \overline\sigma^2 I_d$, the predictive covariance under $G$ is
$V_0+\Sigma(\eta)\preceq (\lambda_{\max}(V_0)+\overline\sigma^2)I_d$, so $Y\sim f_{G,\eta}$ has Gaussian moments of all orders uniformly in $\eta$.
Since $\mu_{Q,\eta}(y)$ is affine in $y$ and $\|A_{V,\eta}\|_{\mathrm{op}}\le 1$, these moment bounds carry over to
$\mu_{G,\eta}(Y)$ and $\mu_{\hat G_n,\eta}(Y)$ (as $\hat\tau_n,\hat V_n$ stay in a neighborhood of $\tau_0,V_0$).
Hence the first factor above is $O_p(1)$ uniformly in $\eta$.

Moreover, writing $A_0:=A_{V_0,\eta}$ and $\hat A:=A_{\hat V_n,\eta}$,
\[
\mu_{\hat G_n,\eta}(y)-\mu_{G,\eta}(y)=(I-\hat A)(\hat\tau_n-\tau_0)+(\hat A-A_0)(y-\tau_0).
\]
Since $\|(V+\Sigma(\eta))^{-1}\|_{\mathrm{op}}\le 1/\underline\sigma^2$ uniformly in $\eta$,
one can obtain
$
\sup_{\eta\in\mathcal H}\|\hat A-A_0\|_{\mathrm{op}}\ \le\ C_1\|\hat V_n-V_0\|_{\mathrm F}.
$
Combining with $\|I_d-A_{\hat V_n,\eta}\|_{\mathrm{op}}\le \|\Sigma(\eta)\|_{\mathrm{op}}\cdot\|(\hat V_n+\Sigma(\eta))^{-1}\|_{\mathrm{op}}\le C_2$, we bound
\[
\|\mu_{\hat G_n,\eta}(y)-\mu_{G,\eta}(y)\|_2
\le
C(1+\|y\|_2)\big(\|\hat\tau_n-\tau_0\|_2+\|\hat V_n-V_0\|_{\mathrm F}\big),
\]
uniformly over $\eta\in\mathcal H$, and thus $(\mathrm I)=O_p(n^{-1/2})$ by Assumption~\ref{as:gauss_est_rate}.

For $(\mathrm{II})$, the Hellinger inequality gives
\[
(\mathrm{II})\le 2\,h(f_{\hat G_n,\eta},f_{G,\eta})\,
\Big(\E_{f_{\hat G_n,\eta}}[g_n(Y)^2]+\E_{f_{G,\eta}}[g_n(Y)^2]\Big)^{1/2},
\qquad g_n(y)=\Psi(\mu_{\hat G_n,\eta}(y)).
\]
To keep the squared-root term $O_p(1)$ uniformly in $\eta$, we need
$\E_{f_{\hat G_n,\eta}}[g_n(Y)^2]$ and $\E_{f_{G,\eta}}[g_n(Y)^2]$ uniformly bounded.
Assumption~\ref{as:Psi_Lip} implies $|\Psi(m)|\le C(1+\|m\|_2^{p+1})$, and $\mu_{\hat G_n,\eta}(y)$ is affine in $y$ with uniformly bounded coefficients,
so $g_n(Y)^2\le C(1+\|Y\|_2^{2(p+1)})$. Since $Y$ is Gaussian under both $f_{G,\eta}=\mathcal N(\tau_0,V_0+\Sigma(\eta))$ and
$f_{\hat G_n,\eta}=\mathcal N(\hat\tau_n,\hat V_n+\Sigma(\eta))$, and the covariance eigenvalues are uniformly bounded over $\eta$,
these expectations are $O_p(1)$ uniformly in $\eta$.

Finally, the Hellinger distance between Gaussians is a smooth (hence locally Lipschitz) function of their mean/covariance parameters on this uniformly
nondegenerate covariance class, so
\[
\sup_{\eta\in\mathcal H} h(f_{\hat G_n,\eta},f_{G,\eta})
\le C\big(\|\hat\tau_n-\tau_0\|_2+\|\hat V_n-V_0\|_{\mathrm F}\big)
=
O_p(n^{-1/2})
\]
by Assumption~\ref{as:gauss_est_rate}. Consequently $(\mathrm{II})=O_p(n^{-1/2})$.

Combining (I) and (II) yields $\sup_{\eta\in\mathcal H}|U_{\hat G_n}(\eta)-U_G(\eta)|=O_{p}(n^{-1/2})$.
The regret rate follows from Theorem~\ref{thm:oracle-ineq}. \qed

\subsection{Proof of Theorem~\ref{thm:npmle_rootn}}
Fix $\eta\in\mathcal H$ and write $\Sigma=\Sigma(\eta)$.
By Lemma~\ref{lem:value_reduction},
\[
U_{\hat G_n}(\eta)-U_G(\eta)
=
\mathbb E_{Z\sim f_{\hat G_n,\Sigma}}[\Psi(\mu_{\hat G_n,\Sigma}(Z))]
-
\mathbb E_{Z\sim f_{G,\Sigma}}[\Psi(\mu_{G,\Sigma}(Z))].
\]
Add and subtract $\mathbb E_{Z\sim f_{G,\Sigma}}[\Psi(\mu_{\hat G_n,\Sigma}(Z))]$ and decompose into $(\mathrm I)+(\mathrm{II})$.

Since $G$ is supported on $\left\{\theta\in\mathbb R^d:\ \|\theta\|_2\le R\right\}$, posterior means satisfy $$\|\mu_{G,\Sigma}(z)\|_2,\,\|\mu_{\hat G_n,\Sigma}(z)\|_2\le R.$$
On this compact set, Assumption~\ref{as:Psi_Lip} implies a finite global Lipschitz constant $L_R$, so
\[
(\mathrm I)
\le
L_R\sqrt{\E\big[\|\mu_{\hat G_n,\Sigma}(Z)-\mu_{G,\Sigma}(Z)\|_2^2\big]}
\le
L_R\sqrt{C_2}\,\frac{(\log n)^{(d+\max(d/2,4))/2}}{\sqrt n}
\]
by Assumption~\ref{as:npmle_inputs}(ii).

For $(\mathrm{II})$, $\Psi(\mu_{\hat G_n,\Sigma}(Z))$ is bounded (compact support), so the Hellinger inequality and Assumption~\ref{as:npmle_inputs}(i) yield
$(\mathrm{II})=O_{p}\left((\log n)^{(d+1)/2}/\sqrt n\right)$.
Thus
\[
\sup_{\eta\in\mathcal H}|U_{\hat G_n}(\eta)-U_G(\eta)|
=
O_{p}\left(\frac{(\log n)^{(d+\max(d/2,4))/2}}{\sqrt n}\right),
\]
where the uniformity over $\eta$ follows from the uniform-over-$\Sigma$ statements in Assumption~\ref{as:npmle_inputs} together with the eigenvalue bounds in Assumption~\ref{as:gaussian_summary_lik}.
The regret rate follows from Theorem~\ref{thm:oracle-ineq}. \qed

\subsection{Proof of Theorem~\ref{thm:second_order_general}}
Under Assumption~\ref{as:strong_concavity}, for any $\eta$ and $\eta^O$,
\[
U_G(\eta^O) \le U_G(\eta) + \langle \nabla U_G(\eta),\eta^O-\eta\rangle
-\frac{m}{2}\|\eta^O-\eta\|_2^2.
\]
Rearranging and applying Cauchy--Schwarz yields
\[
U_G(\eta^O)-U_G(\eta)
\le \|\nabla U_G(\eta)\|_2\,\, \|\eta^O-\eta\|_2
-\frac{m}{2}\|\eta^O-\eta\|_2^2.
\]
Let $t=\|\eta^O-\eta\|_2$ and $a=\|\nabla U_G(\eta)\|_2$. Then
\[
U_G(\eta^O)-U_G(\eta)\le a t-\frac{m}{2}t^2
= -\frac{m}{2}\Bigl(t-\frac{a}{m}\Bigr)^2+\frac{a^2}{2m}
\le \frac{a^2}{2m},
\]
i.e. $U_G(\eta^O)-U_G(\eta)\le \|\nabla U_G(\eta)\|_2^2/(2m)$.

Set $\eta=\widehat\eta$. Since $\widehat\eta$ is an interior maximizer of $U_{\widehat G_n}$, $\nabla U_{\widehat G_n}(\widehat\eta)=0$, hence
$$\|\nabla U_G(\widehat\eta)\|_2=\|\nabla U_G(\widehat\eta)-\nabla U_{\widehat G_n}(\widehat\eta)\|_2\le r_n.$$
Substitute to get $\Reg_n\le r_n^2/(2m)$. \qed

\subsection{Proof of Theorem~\ref{thm:blackwell-gaussian}}
If $\Sigma(\eta_1)\preceq \Sigma(\eta_2)$, then $\Sigma(\eta_2)-\Sigma(\eta_1)$ is positive semidefinite. Let
$Z\sim \mathcal N(0,\Sigma(\eta_2)-\Sigma(\eta_1))$ independent of $(\theta,\varepsilon_{\eta_1})$. Then
$Y_{\eta_2}\overset{d}{=}Y_{\eta_1}+Z$, so $\eta_2$ is a garbling of $\eta_1$. Blackwell's theorem \citep{Blackwell1951} implies
$U_Q(\eta_1)\ge U_Q(\eta_2)$ for every decision problem and prior. Gaussian conjugacy yields the posterior covariance formula. \qed

\subsection{Proof of Lemma~\ref{lem:reduce-to-scalar}}
Fix $\eta$ and write $\Sigma=\Sigma(\eta)$. Let $\theta\sim\mathcal N(\mu,V)$ and $Y_\eta=\theta+\varepsilon$
with $\varepsilon\sim\mathcal N(0,\Sigma)$ independent of $\theta$. For a given $\alpha\in\R^d$, define the
payoff-relevant index $\omega=\alpha'\theta$, with prior mean $\mu_\omega=\alpha'\mu$ and variance
$\sigma_\omega^2=\Var(\omega)=\alpha'V\alpha$.

Because $(\omega,Y_\eta)$ is jointly Gaussian, the posterior is Gaussian and can be written as
\[
\omega\mid Y_\eta \sim \mathcal N\!\big(Z_\eta^0,\ s_\eta^2\big),
\qquad
Z_\eta^0:=\E[\omega\mid Y_\eta],
\qquad
s_\eta^2:=\Var(\omega\mid Y_\eta)=\alpha'(V^{-1}+\Sigma^{-1})^{-1}\alpha.
\]
In particular, $s_\eta^2$ depends on $\eta$ only through $\Sigma(\eta)$ and does not depend on the realization
of $Y_\eta$.

Moreover, for jointly Gaussian variables the prediction error is independent of the signal: letting
$e_\eta:=\omega-Z_\eta^0$, we have $e_\eta\sim\mathcal N(0,s_\eta^2)$ and $e_\eta\perp Y_\eta$ (hence also
$e_\eta\perp Z_\eta^0$). This implies $\omega\perp Y_\eta\mid Z_\eta^0$, so $Z_\eta^0$ is sufficient for $\omega$.
Therefore $Y_\eta$ and $Z_\eta^0$ are Blackwell equivalent for the state $\omega$.

To obtain the canonical ``signal plus independent noise'' form, consider the jointly Gaussian pair
$(\omega,Z_\eta^0)$ and write the linear regression of $Z_\eta^0$ on $\omega$:
\[
Z_\eta^0=\mu_\omega+\kappa_\eta(\omega-\mu_\omega)+\zeta_\eta,
\qquad
\zeta_\eta\perp \omega,\ \ \zeta_\eta\ \text{Gaussian}.
\]
Using $\Cov(Z_\eta^0,\omega)=\Var(Z_\eta^0)$ (since $Z_\eta^0=\E[\omega\mid Y_\eta]$) and the law of total
variance $\Var(\omega)=\Var(Z_\eta^0)+s_\eta^2$, we obtain
\[
\kappa_\eta=\frac{\Var(Z_\eta^0)}{\Var(\omega)}=1-\frac{s_\eta^2}{\sigma_\omega^2}.
\]
If $\kappa_\eta=0$, then $Z_\eta^0$ is constant and the experiment is uninformative about $\omega$.

Now apply the invertible affine map $h_\eta(z):=\mu_\omega+(z-\mu_\omega)/\kappa_\eta$ and define the scalar
statistic
\[
Z_\eta := h_\eta(Z_\eta^0)=\mu_\omega+\frac{Z_\eta^0-\mu_\omega}{\kappa_\eta}.
\]
Substituting the regression yields
\[
Z_\eta=\omega+\xi_\eta,
\qquad
\xi_\eta:=\zeta_\eta/\kappa_\eta\ \perp\ \omega,
\qquad
\xi_\eta\sim\mathcal N\!\big(0,\tau^2(\eta)\big),
\]
with
\[
\tau^2(\eta)=\frac{\Var(\zeta_\eta)}{\kappa_\eta^2}
=\frac{1-\kappa_\eta}{\kappa_\eta}\,\sigma_\omega^2
=\frac{\sigma_\omega^2\,s_\eta^2}{\sigma_\omega^2-s_\eta^2}.
\]
Thus $\tau^2(\eta)$ depends on $\eta$ only through $\Sigma(\eta)$ (via $s_\eta^2$). Since $h_\eta$ is bijective,
$Z_\eta$ is Blackwell equivalent to $Z_\eta^0$, hence also to $Y_\eta$, for the state $\omega$. \qed

\subsection{Proof of Corollary~\ref{cor:scalar-blackwell}}

By Lemma~4, each feasible design $\eta\in\mathcal H$ induces a Blackwell-equivalent scalar normal experiment
$Z_\eta=\omega+\xi_\eta$ with $\xi_\eta\sim\mathcal N(0,\tau^2(\eta))$ independent of $\omega$.
From Theorem \ref{thm:blackwell-gaussian}, for a scalar normal experiment, the optimal design should minimize $\tau^2(\eta).$ Since
$
\tau^2(\eta)= \sigma_\omega^2\,s_\eta^2/(\sigma_\omega^2-s_\eta^2)
$
is increasing in $s_\eta^2$, it is equivalent to minimize $s_\eta^2=\Var(\omega\mid Y_\eta).$ \qed

\section{Additional Results for the Empirical Applications}
\subsection{Drug Trials}
\label{app:drug_more}

\runin{Detailed outcome and subgroup harmonization rules}
Outcome families are assigned by outcome-title keywords using deterministic rules: \emph{overall survival} if the title contains ``overall survival'' or ``OS''; \emph{progression-free survival} if it contains ``progression-free survival'' or ``PFS''; \emph{tumor response} if it contains ``objective response,'' ``overall response,'' or ``ORR''; and \emph{other efficacy} otherwise. 

PD-L1 subgroup labels are harmonized into two sides, high-expression and low-expression, using three sequential rules. First, if a row already has an explicit coarse label, we map positive/high-like labels to the high-expression side and negative/low-like labels to the low-expression side. Second, if a row is coded by a structured threshold (for example, ``CPS\(\ge\) 10" or ``TPS\(\le\) 1"), we map it by the inequality direction: \(\ge\) or \(>\) implies high-expression, while \(\le\) or \(<\) implies low-expression. Third, we map pathology staining classes: TC1/2/3 and IC1/2/3 (detectable  tumor-cell or immune-cell PD-L1 staining) to high-expression, and IC0 (no detectable immune-cell PD-L1 staining) to low-expression. If a label still cannot be assigned after these rules, we drop it. Table \ref{tab:drug_harmonization_examples} provides several examples in the study pool.

To address within-study dependence, we retain at most one estimate for each study--subgroup pair. When only one eligible estimate is available, it is retained. When multiple eligible estimates are reported and at least two distinct cutoff classes appear for the same subgroup, we use a consensus-cutoff rule: for each subgroup and scoring-system (CPS, TPS, or percent threshold), we construct a cross-study reference cutoff as the median of study-level median cutoffs, and retain the estimate whose cutoff is closest (in absolute distance) to that reference; ties are broken at random. In all other multiple-estimate cases (for example, multiple treatment-control comparisons in multi-arm trials, repeated estimates under different follow-up periods or sample definitions), we randomly retain one estimate using a fixed seed.

The resulting prior-study dataset for the PFS outcome family includes 57 estimates from 53 studies (4 studies report both high and low, 44 report high only, and 5 report low only).

\begin{table}[t!]
\caption{Examples of raw PD-L1 subgroup labels and harmonized sides}
\begin{center}
\begin{tabular}{lp{0.4\textwidth}l}
\toprule
NCT ID & Example raw subgroup label & Harmonized side \\
\midrule
NCT01704287 & PD-L1 Positive & High-expression \\
NCT01721746 & PD-L1 Negative & Low-expression \\
NCT01844505 & PD-L1 $<1\%$ & Low-expression \\
NCT02041533 & PD-L1 expression $\ge 5\%$ & High-expression \\
NCT02555657 & PD-L1 CPS $\ge 1$ & High-expression \\
NCT01984242 & IC1/2/3 population & High-expression \\
\bottomrule
\end{tabular}
\end{center}
\label{tab:drug_harmonization_examples}
{\footnotesize {\em Notes}: Examples are drawn from the study pool used in this application.}
\end{table}

Figures \ref{fig:drug_os_estimates} and \ref{fig:drug_pfs_estimates} plot the selected prior-study estimates under the OS and PFS outcome families, respectively.
\begin{figure}[t!]
\caption{OS prior-study estimates by PD-L1 subgroup}
\begin{center}
\begin{tabular}{cc}
\includegraphics[width=0.47\textwidth]{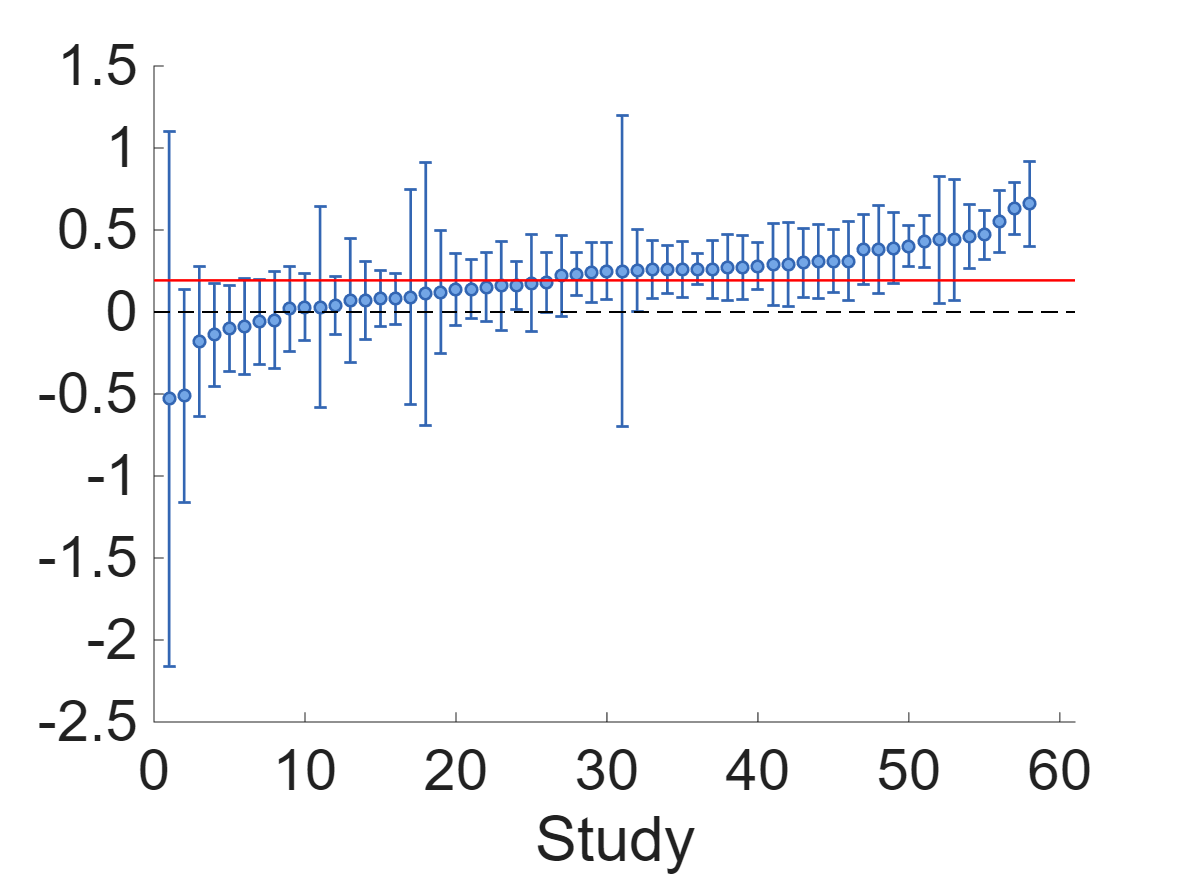} &
\includegraphics[width=0.47\textwidth]{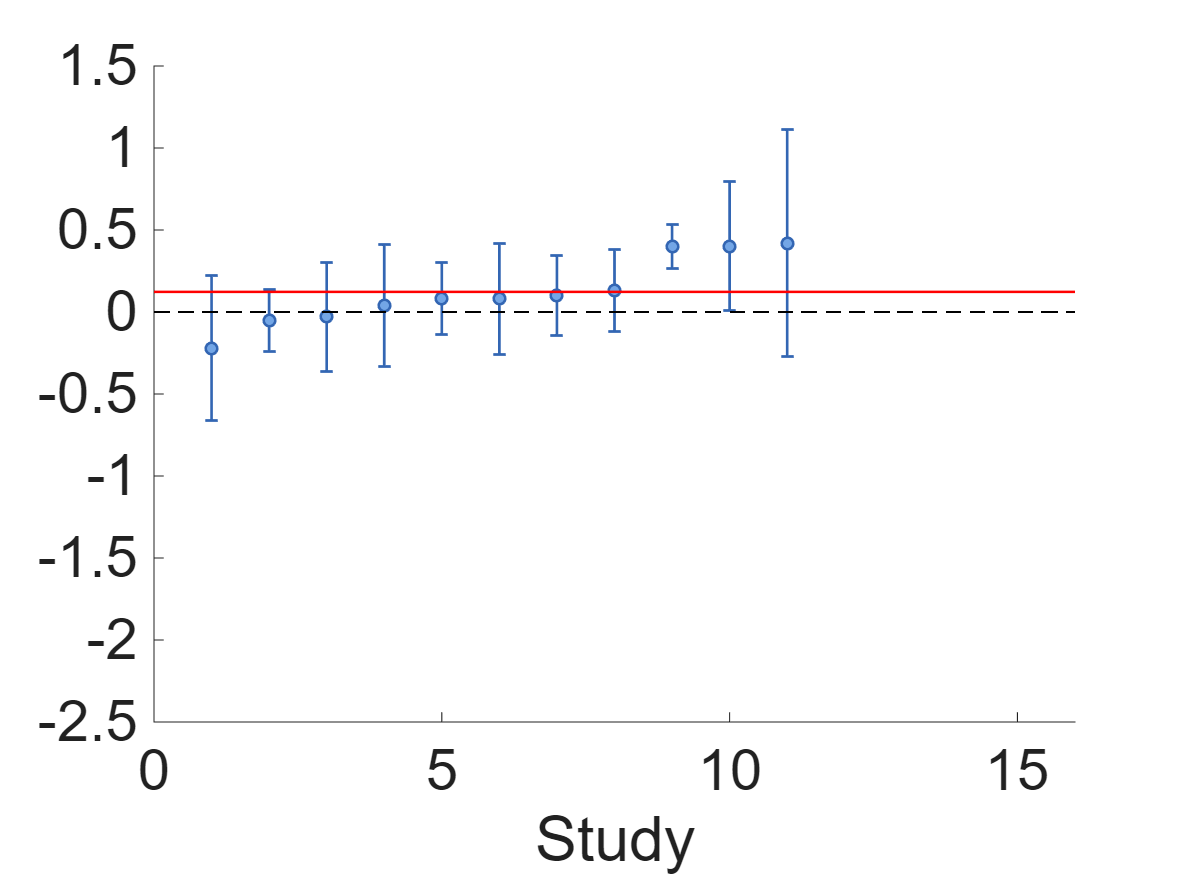}
\end{tabular}
\end{center}
\label{fig:drug_os_estimates}
{\footnotesize {\em Notes}: Each panel plots study-level estimates with 95\% confidence intervals ($\pm 1.96\times\mathrm{SE}$). Left panel: high-expression subgroup; right panel: low-expression subgroup. Studies are sorted by estimates within subgroup. The red horizontal line is the subgroup sample mean estimate, and the dashed black line is the zero benchmark.}
\end{figure}

\begin{figure}[htbp]
\caption{PFS prior-study estimates by PD-L1 subgroup}
\begin{center}
\begin{tabular}{cc}
\includegraphics[width=0.47\textwidth]{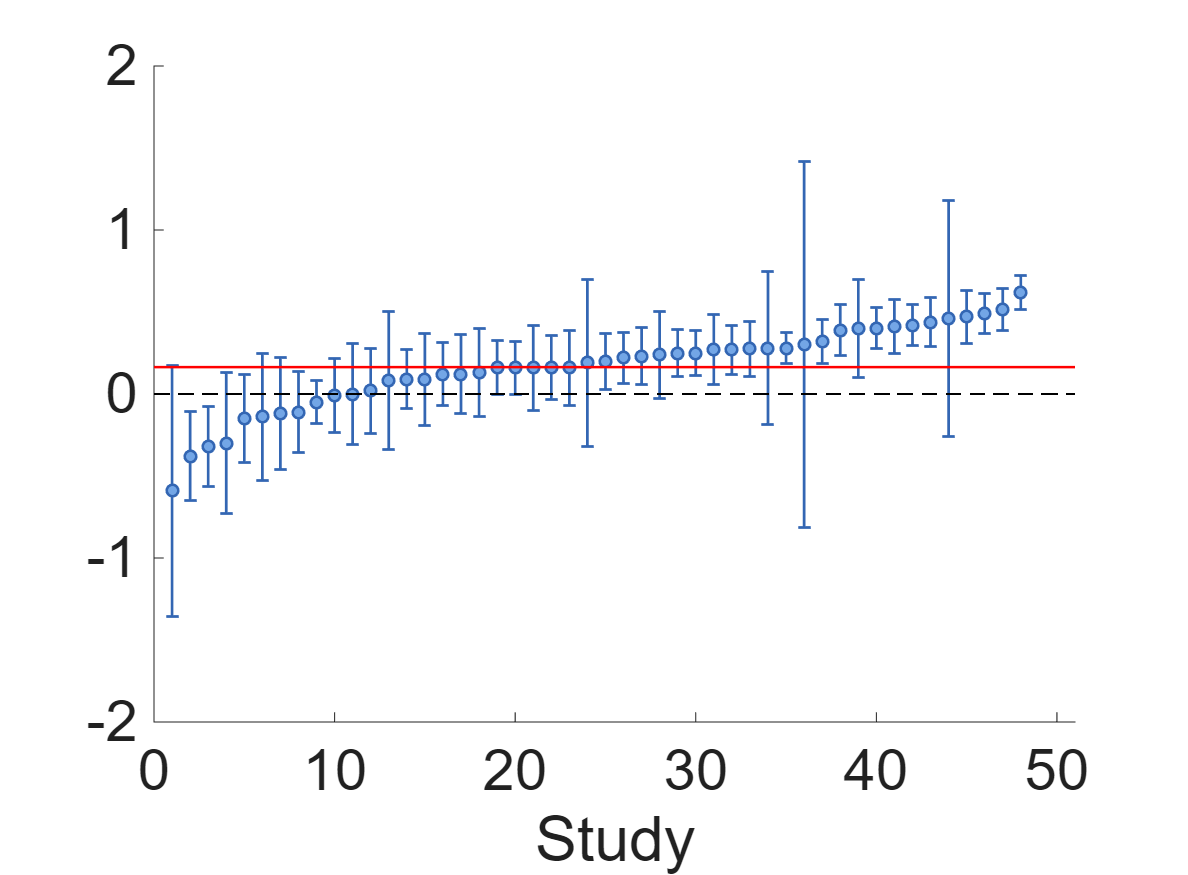} &
\includegraphics[width=0.47\textwidth]{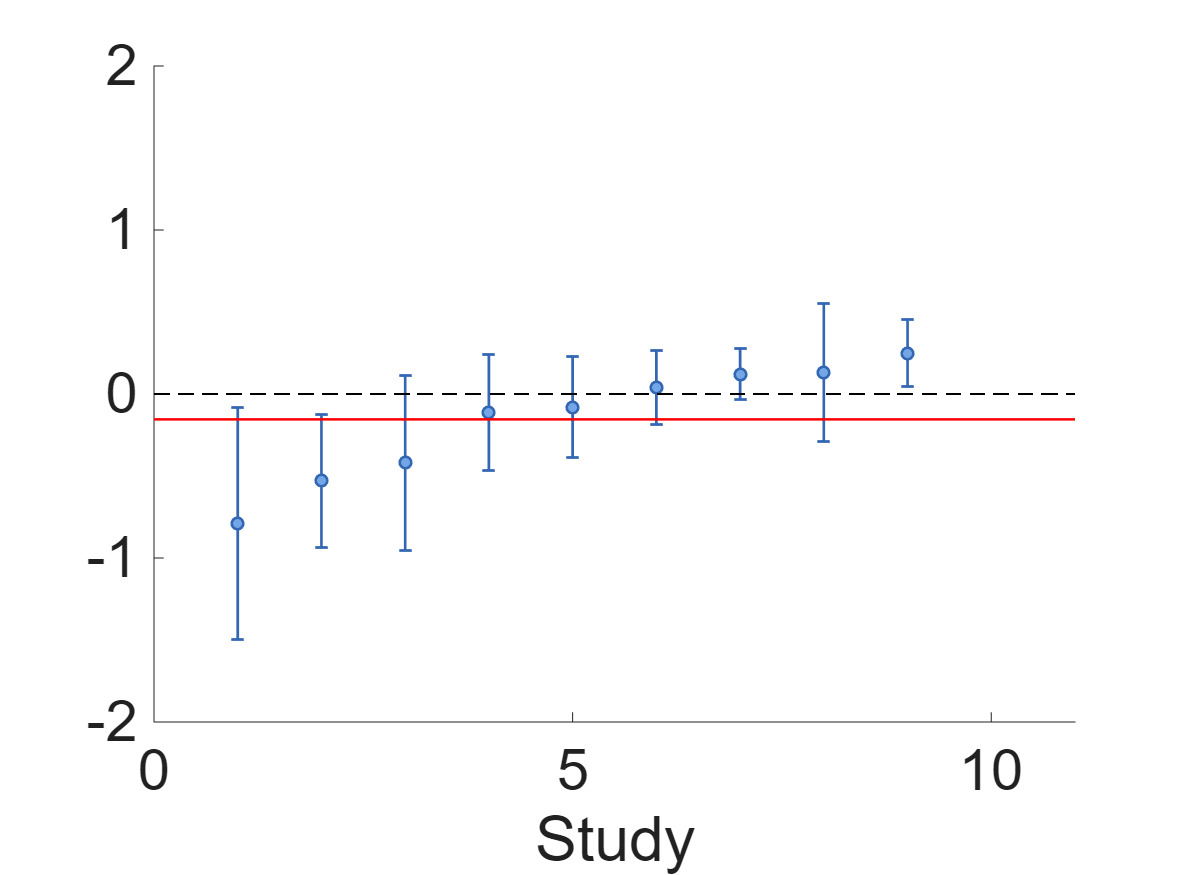}
\end{tabular}
\end{center}
\label{fig:drug_pfs_estimates}
{\footnotesize {\em Notes}: Each panel plots study-level estimates with 95\% confidence intervals ($\pm 1.96\times\mathrm{SE}$). Left panel: high-expression subgroup; right panel: low-expression subgroup. Studies are sorted by estimates within subgroup. The red horizontal line is the subgroup sample mean estimate, and the dashed black line is the zero benchmark.}
\end{figure}

\clearpage
\runin{Additional results}
\begin{figure}[htbp]
\caption{OS estimated prior marginals (independent)}
\begin{center}
\begin{tabular}{cc}
\includegraphics[width=0.42\textwidth]{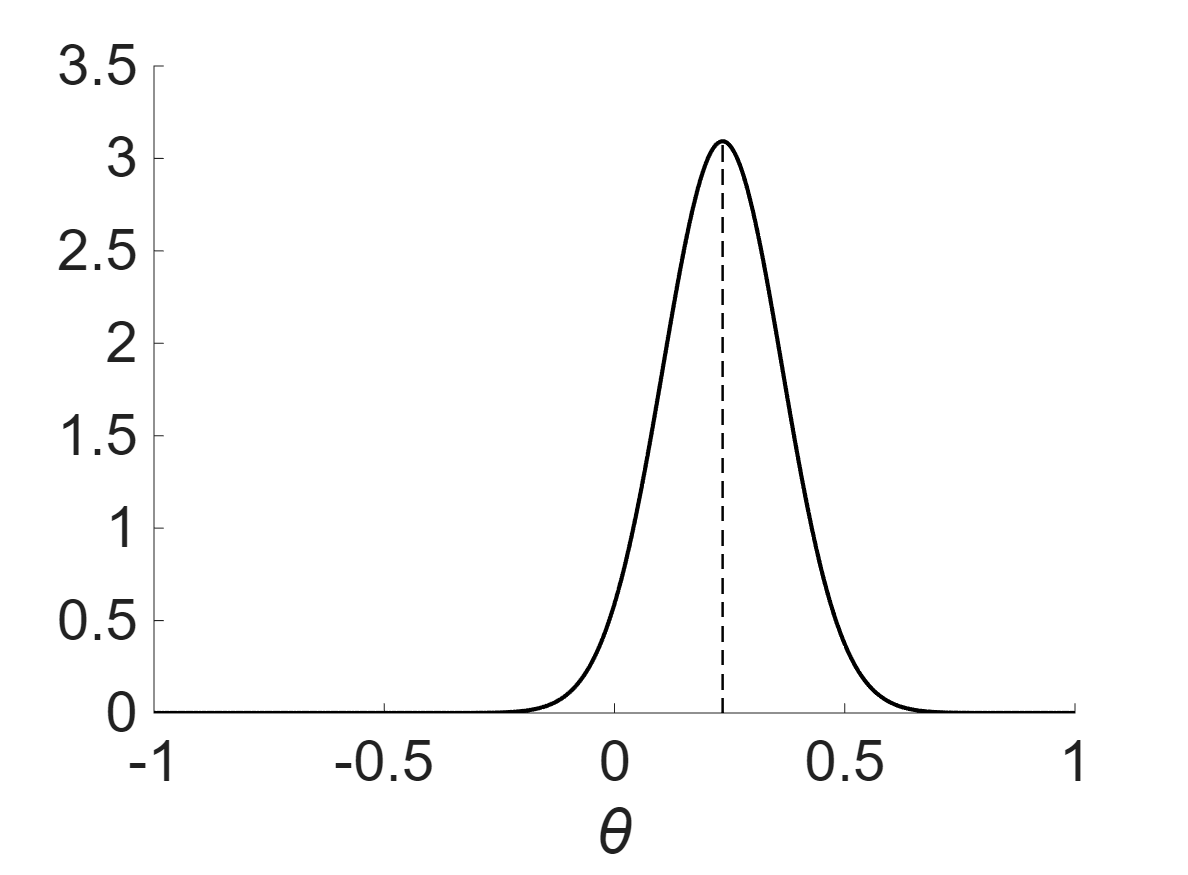} &
\includegraphics[width=0.42\textwidth]{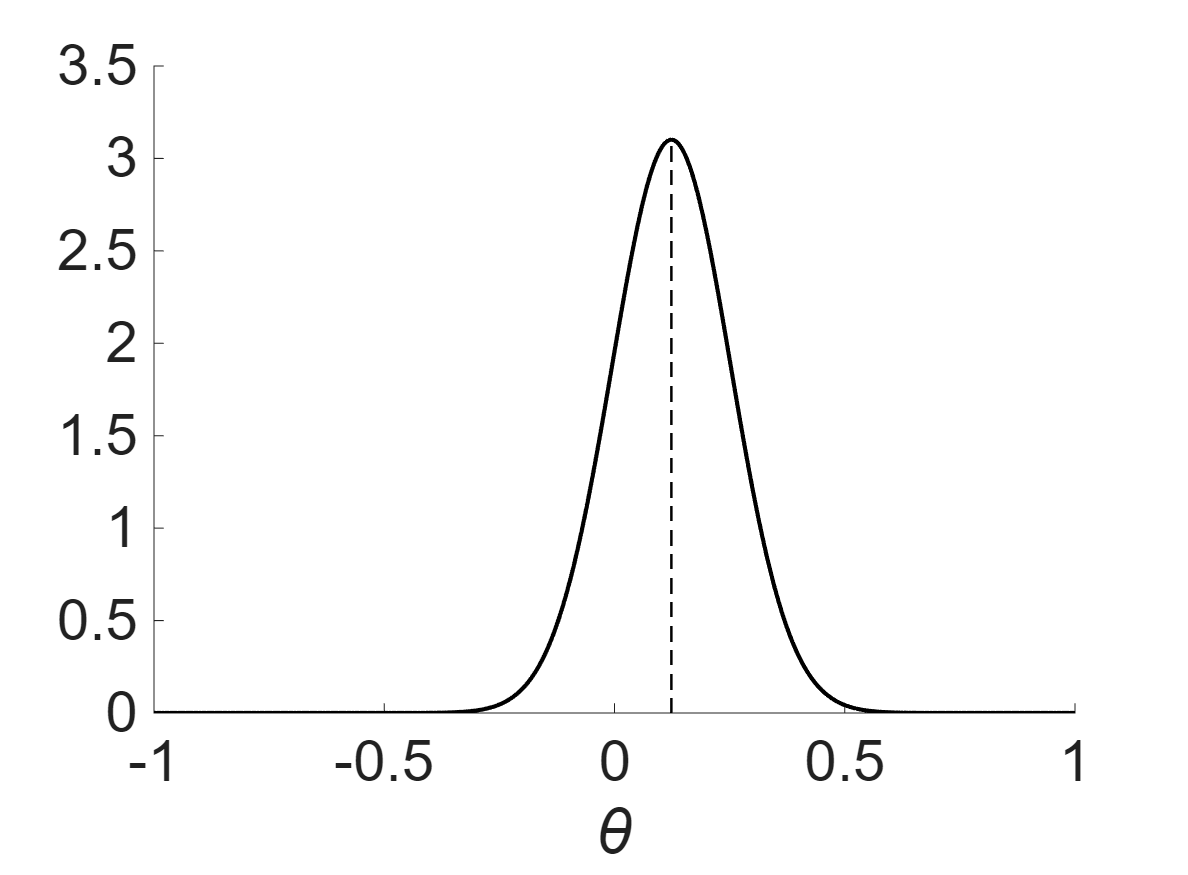} \\
\includegraphics[width=0.42\textwidth]{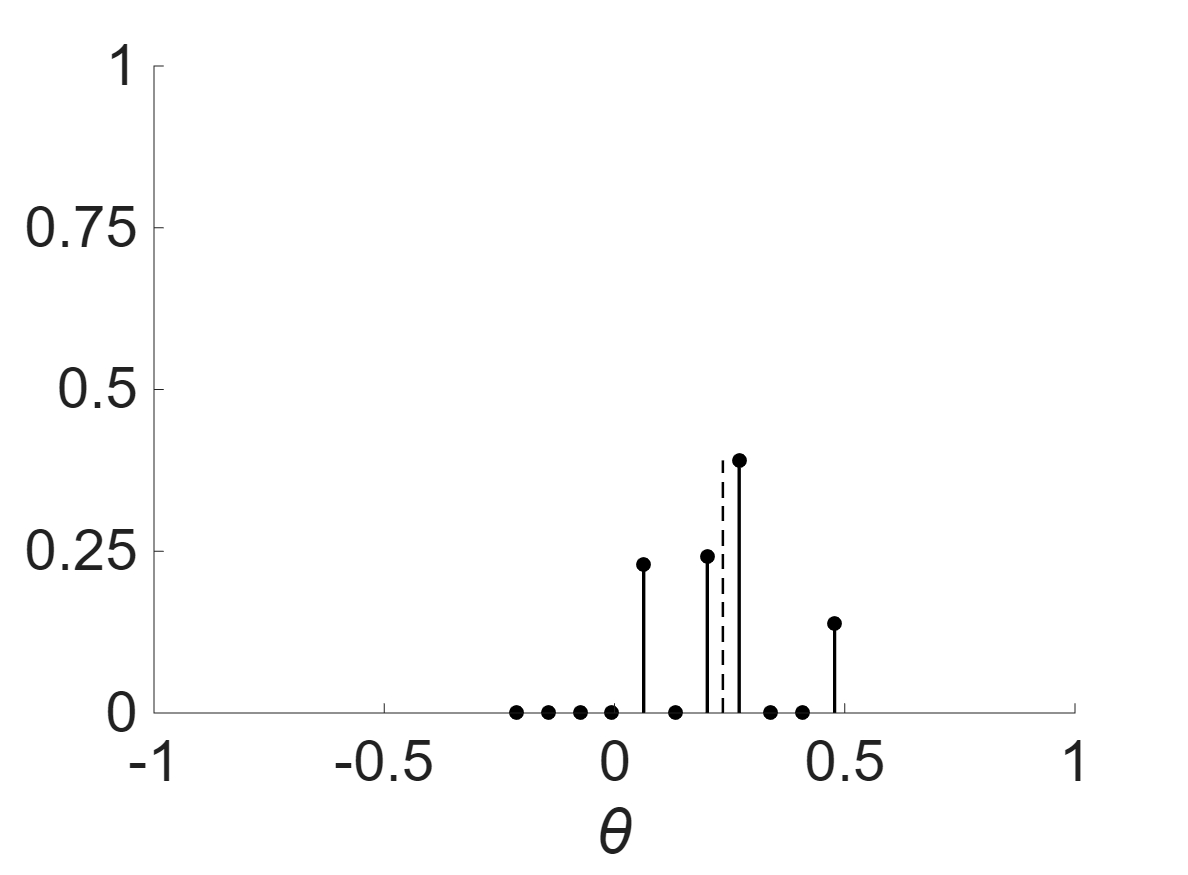} &
\includegraphics[width=0.42\textwidth]{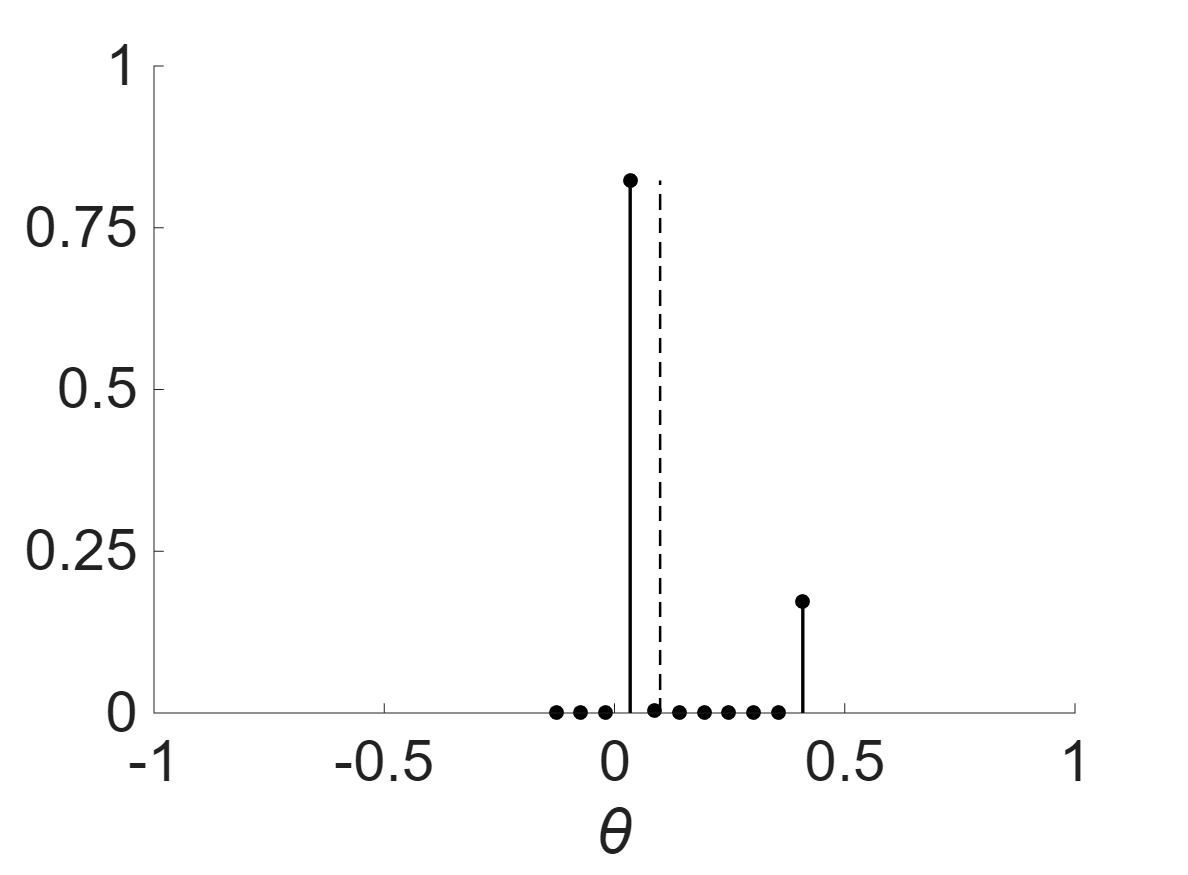}
\end{tabular}
\end{center}
\label{fig:drug_os_priors_diag}
{\footnotesize {\em Notes}: Top row: Gaussian independent-subgroup prior marginals for PD-L1 high-expression and low-expression effects. Bottom row: NPMLE independent-subgroup prior marginals for the same two effects.}
\end{figure}

\begin{table}[htbp]
\caption{Objective-I optimal design for OS (independent priors)}
\begin{center}
\begin{tabular}{lcc}
\toprule
Prior & $e_{\text{high}}$ & $e_{\text{low}}$ \\
\midrule
Gaussian-independent EB & 0.250 & 0.250 \\
NPMLE-independent EB & 0.233 & 0.267 \\
No-information & 0.250 & 0.250 \\
\bottomrule
\end{tabular}
\end{center}
\label{tab:drug_os_design_2x2_diag}
\end{table}

\begin{figure}[htbp]
\caption{PFS estimated prior marginals (joint)}
\begin{center}
\begin{tabular}{cc}
\includegraphics[width=0.42\textwidth]{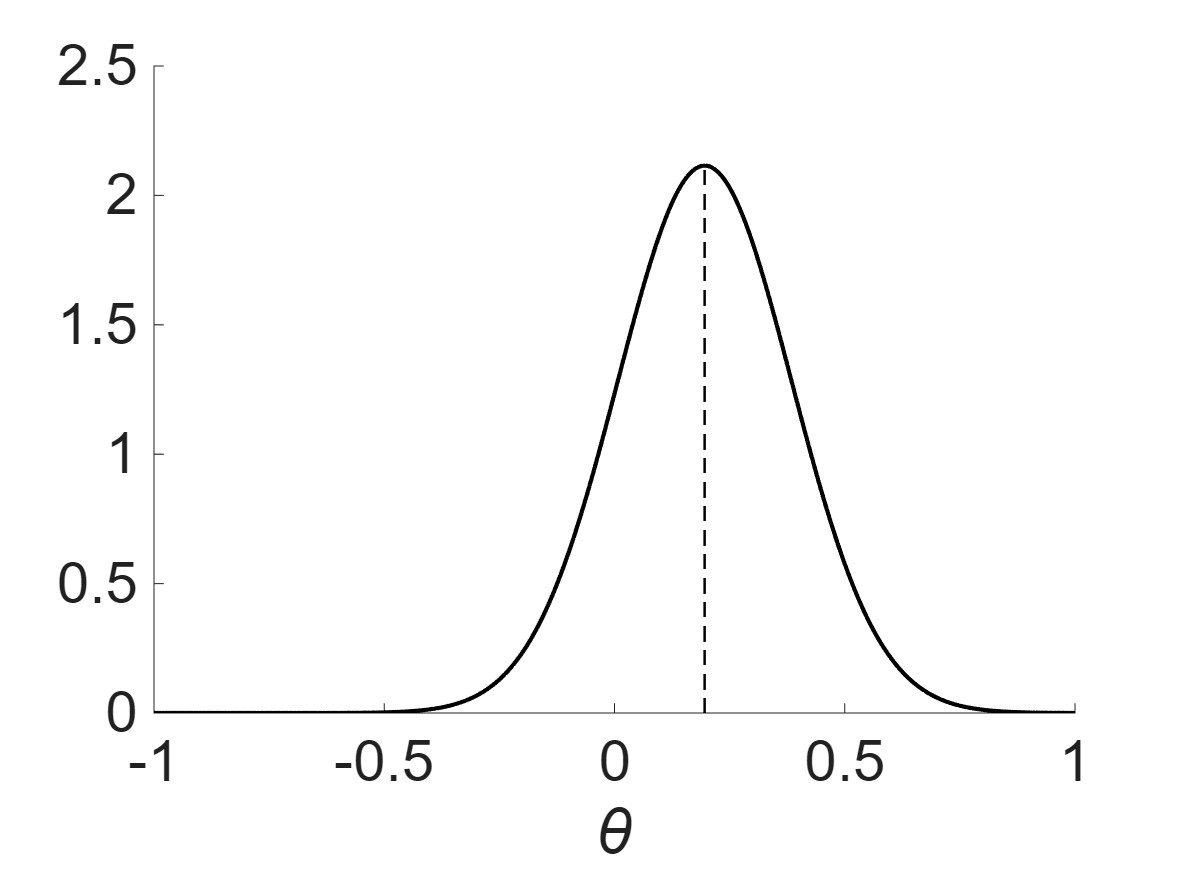} &
\includegraphics[width=0.42\textwidth]{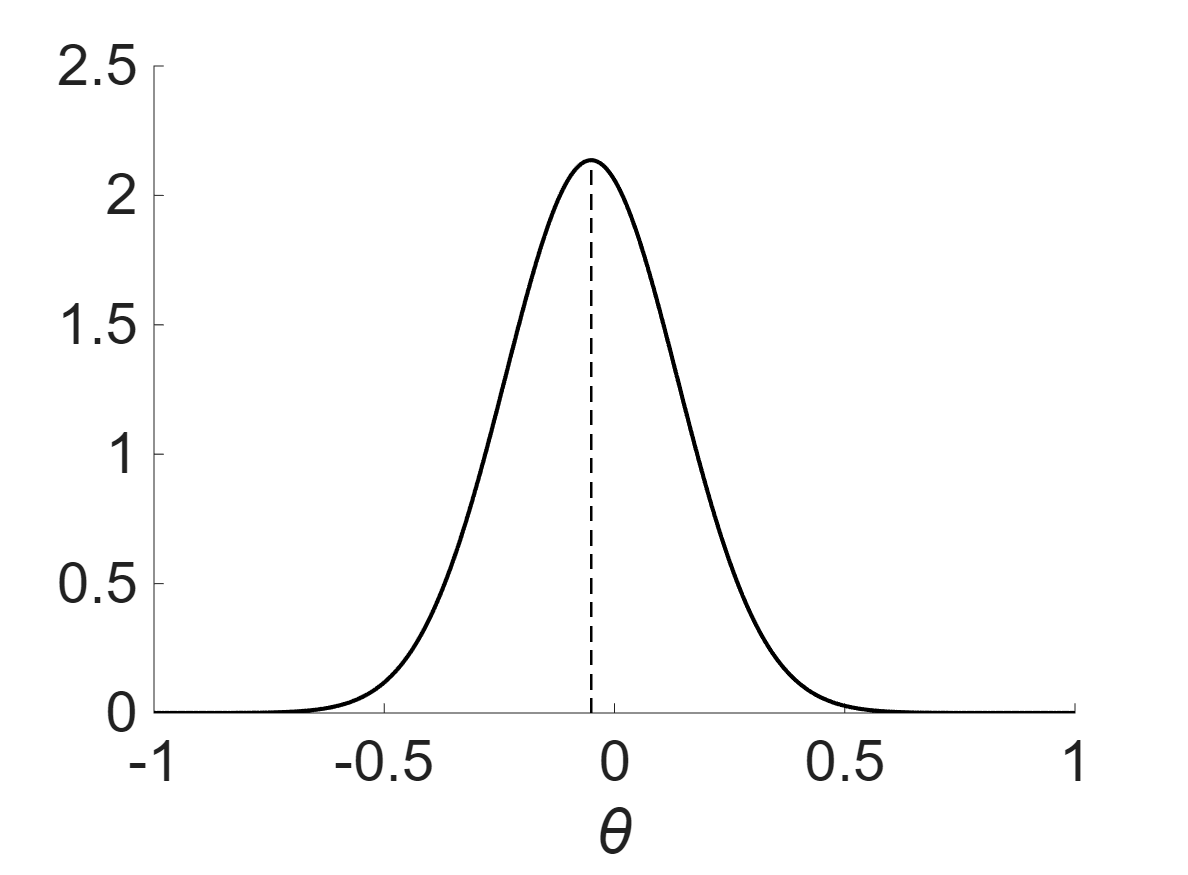} \\
\includegraphics[width=0.42\textwidth]{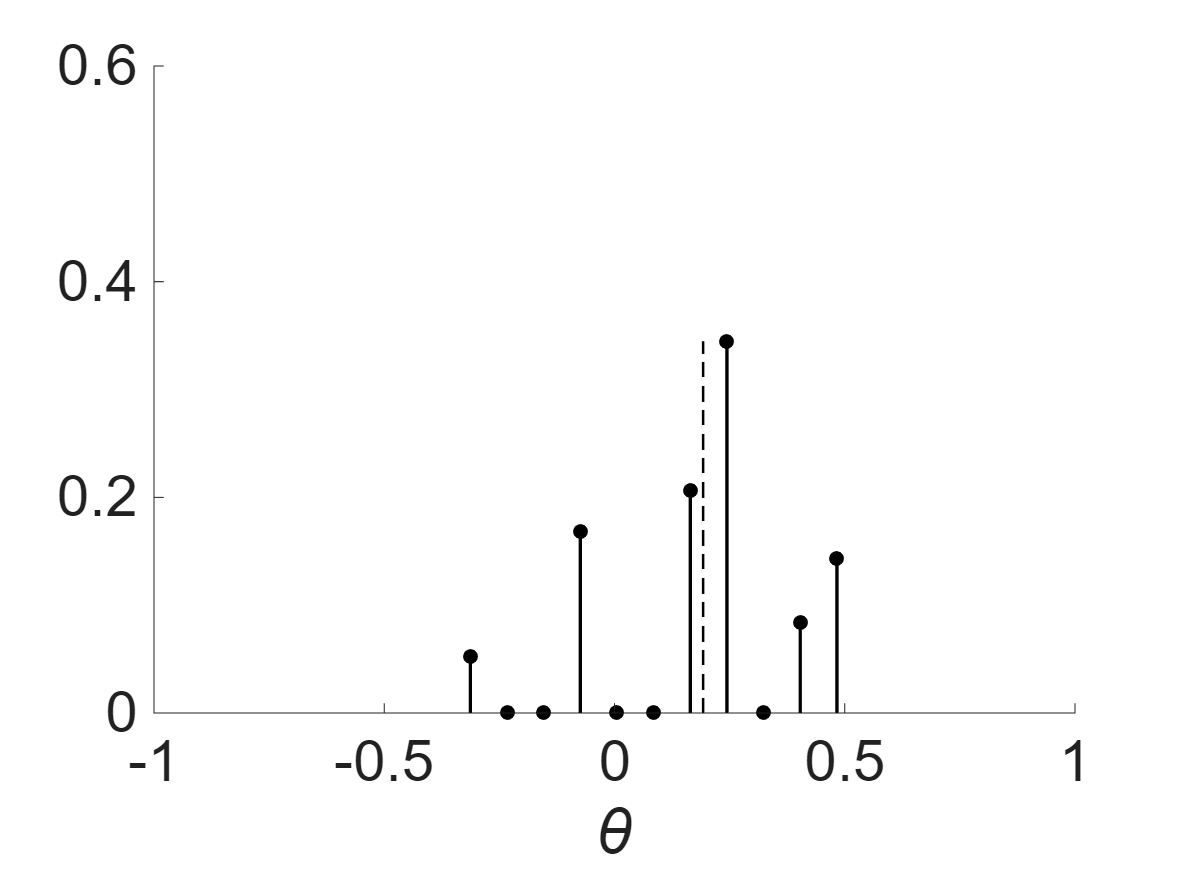} &
\includegraphics[width=0.42\textwidth]{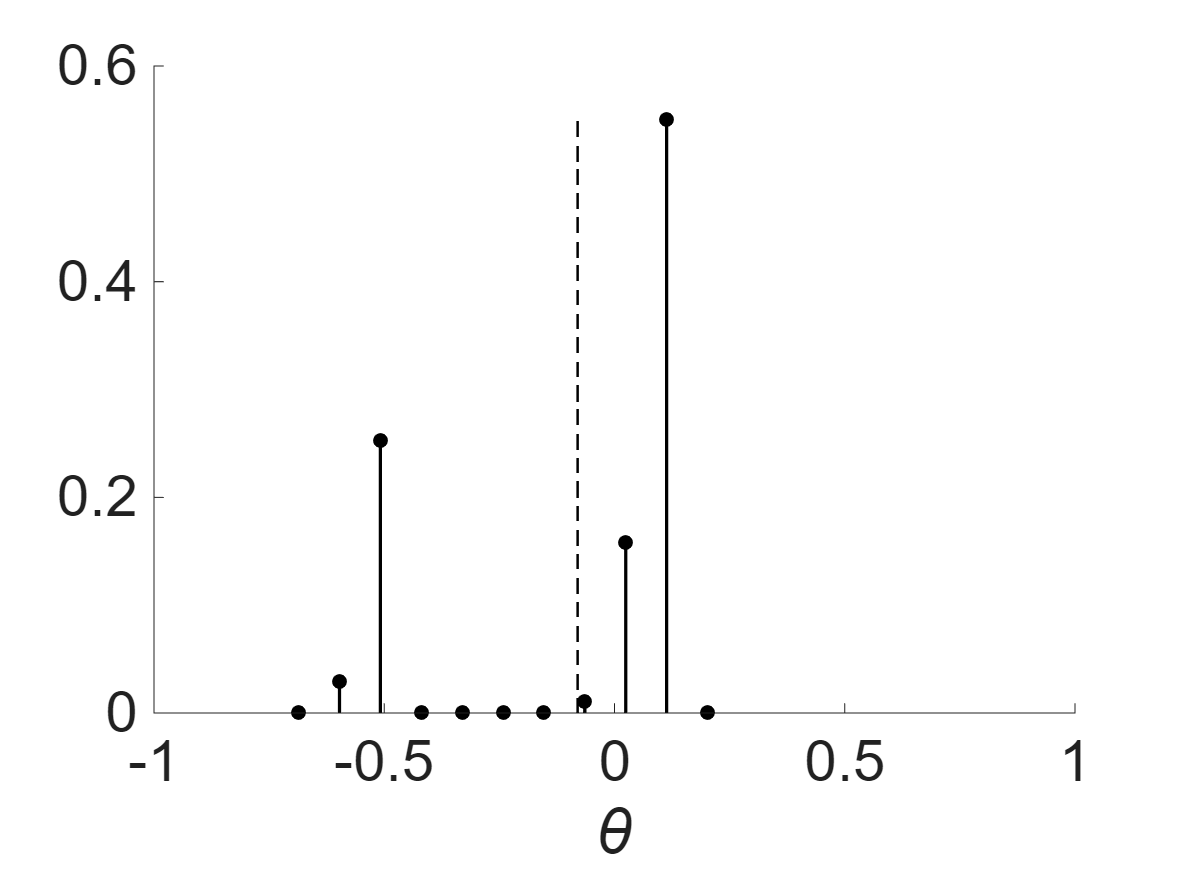}
\end{tabular}
\end{center}
\label{fig:drug_pfs_priors_joint}
{\footnotesize {\em Notes}: Top row: Gaussian-joint prior marginals for PD-L1 high-expression and low-expression effects. Bottom row: NPMLE-joint prior marginals for the same two effects.}
\end{figure}

\begin{figure}[htbp]
\caption{PFS estimated prior marginals (independent)}
\begin{center}
\begin{tabular}{cc}
\includegraphics[width=0.42\textwidth]{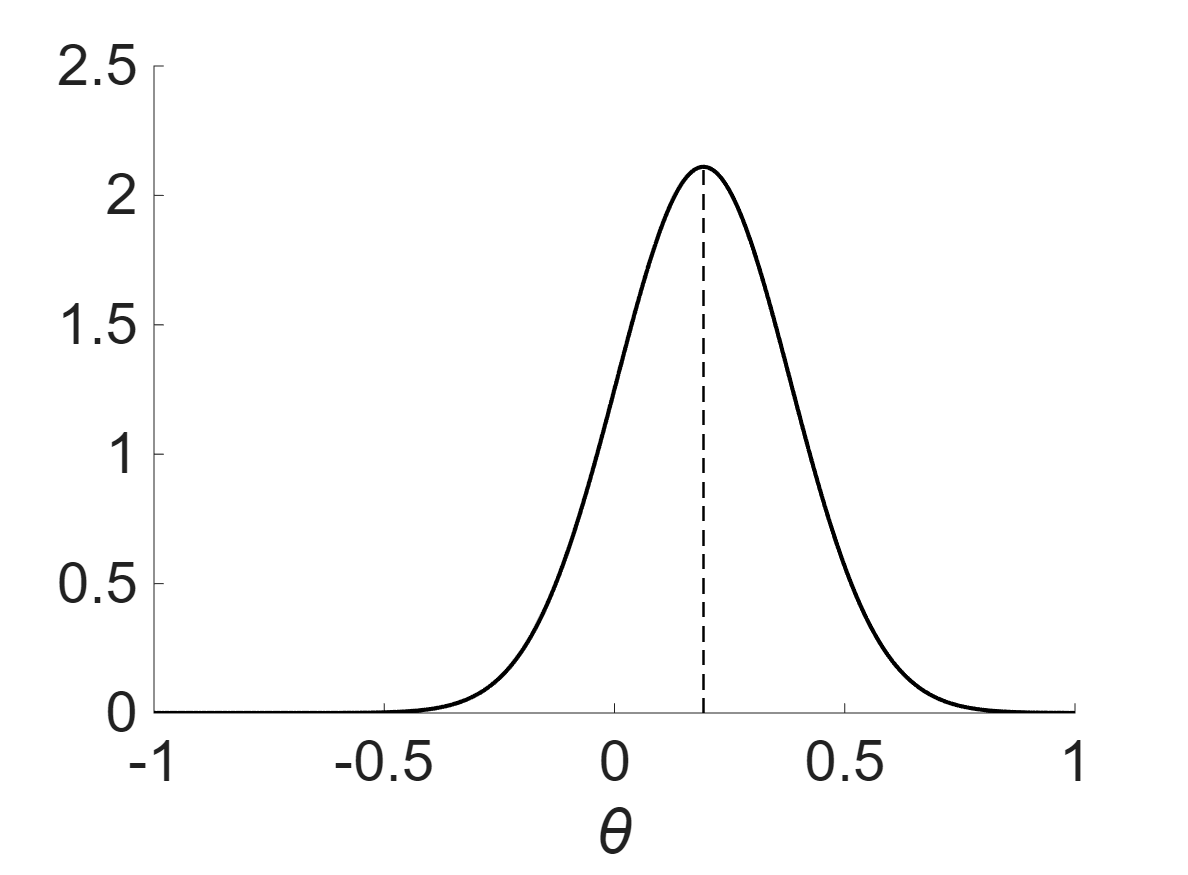} &
\includegraphics[width=0.42\textwidth]{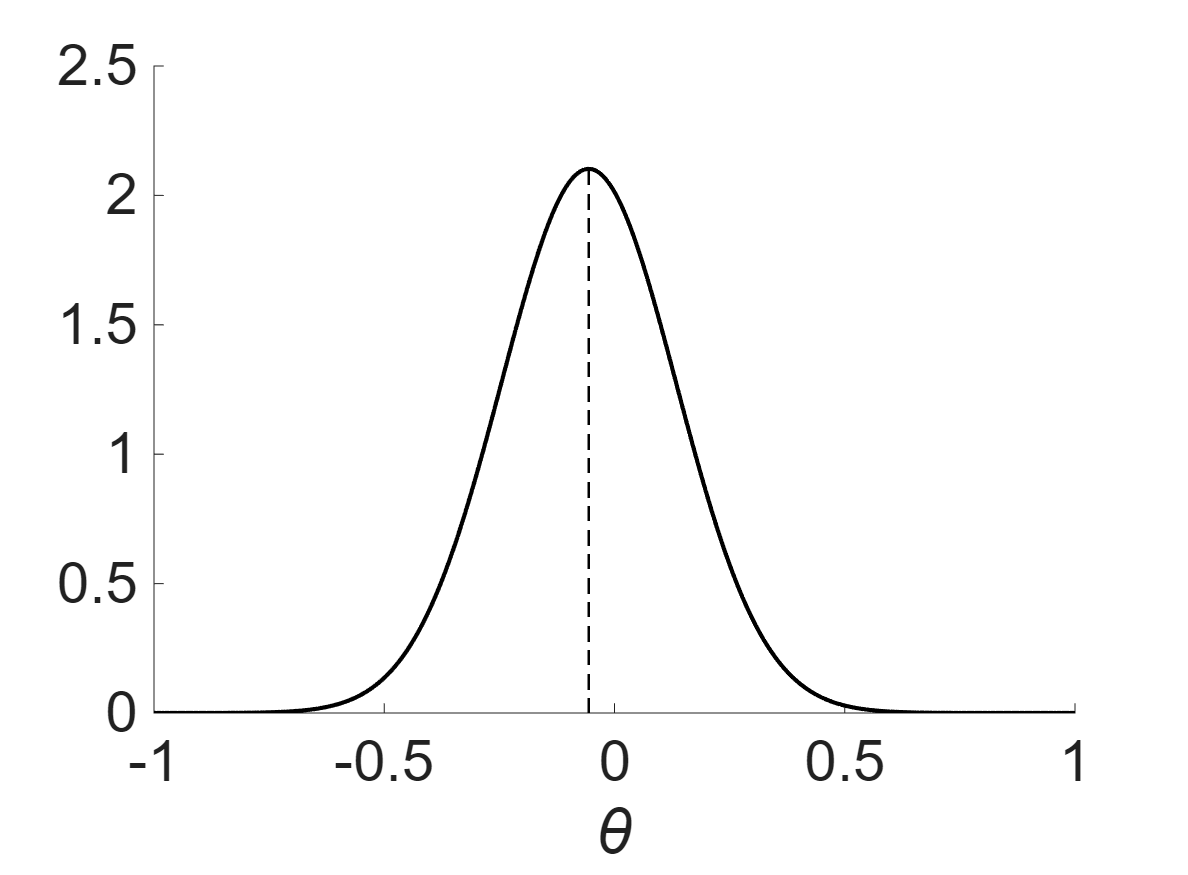} \\
\includegraphics[width=0.42\textwidth]{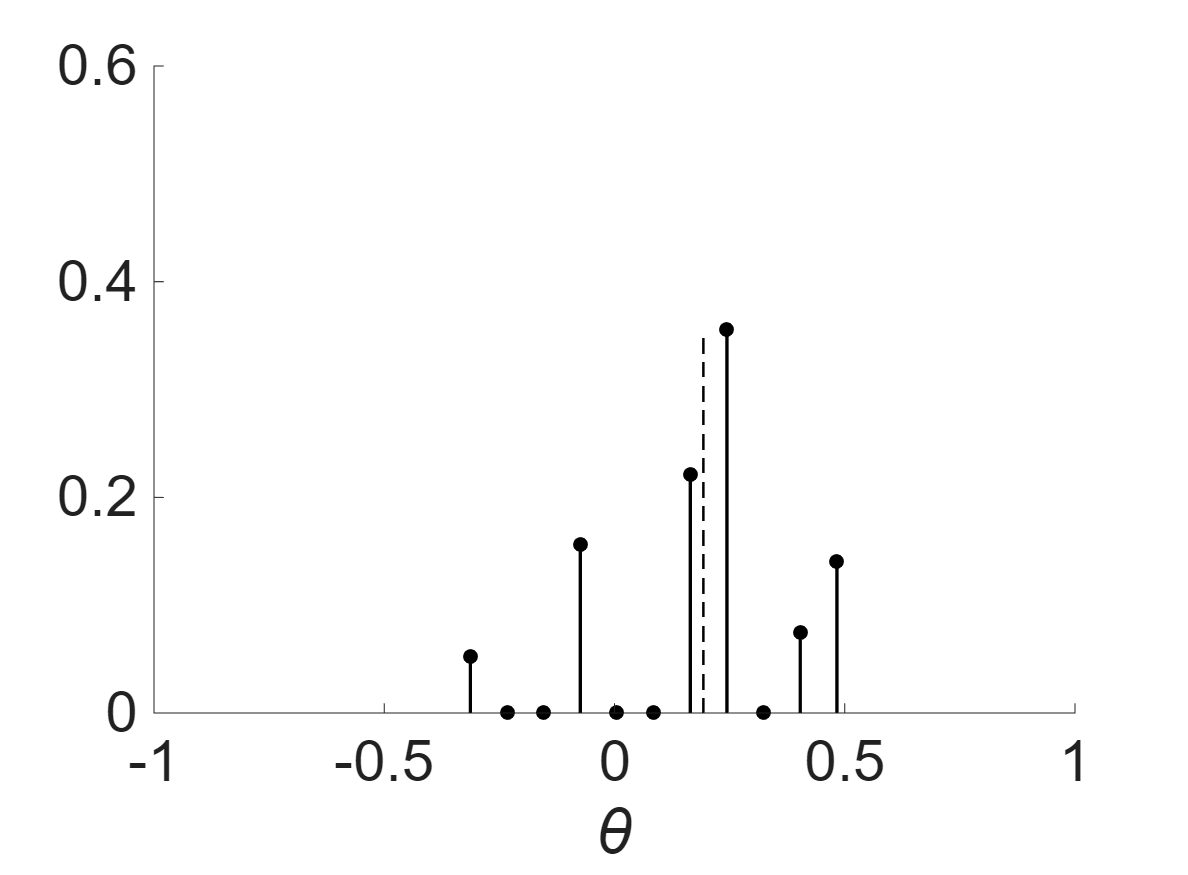} &
\includegraphics[width=0.42\textwidth]{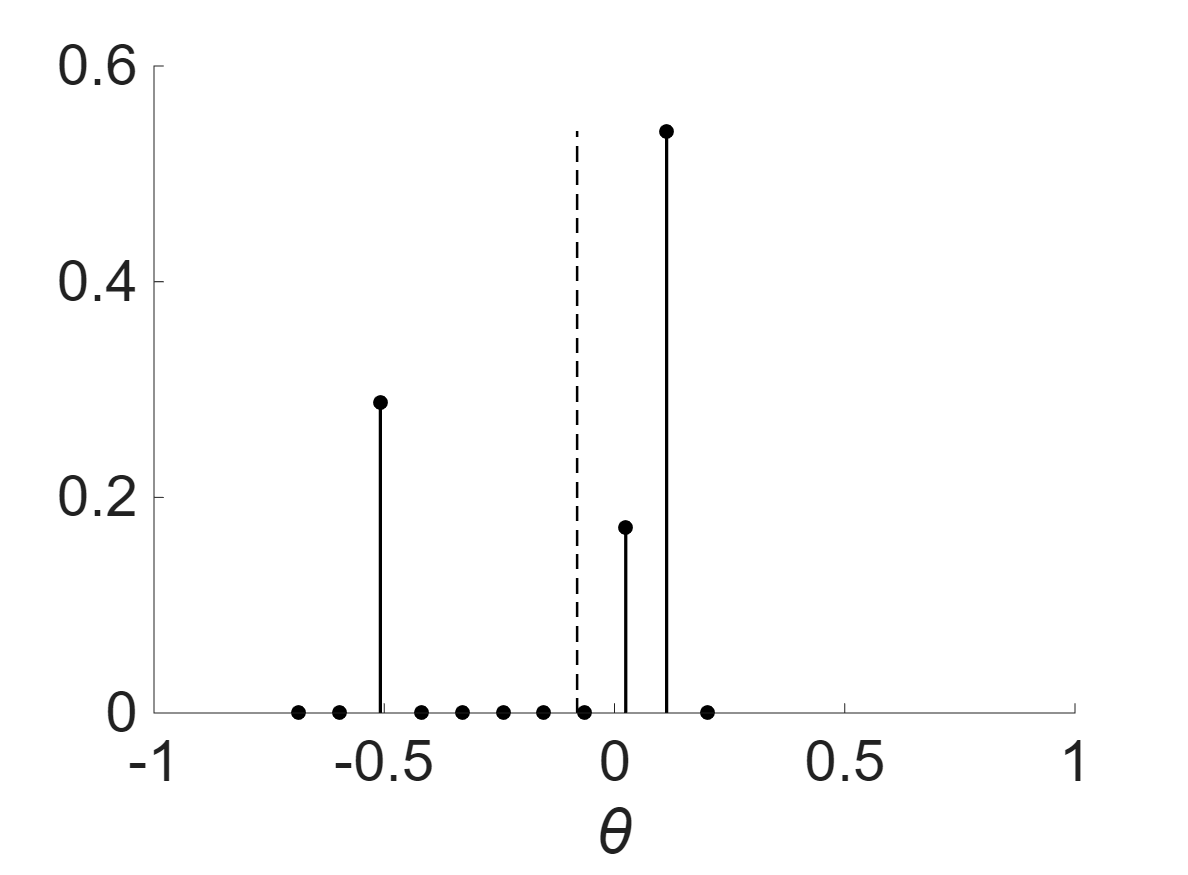}
\end{tabular}
\end{center}
\label{fig:drug_pfs_priors_diag}
{\footnotesize {\em Notes}: Top row: Gaussian-independent prior marginals for PD-L1 high-expression and low-expression effects. Bottom row: NPMLE-independent prior marginals for the same two effects.}
\end{figure}

\begin{table}[htbp]
\caption{Objective-I optimal design for PFS}
\begin{center}
\begin{tabular}{lcc}
\toprule
Prior & $e_{\text{high}}$ & $e_{\text{low}}$ \\
\midrule
Gaussian-independent EB & 0.250 & 0.250 \\
NPMLE-independent EB & 0.296 & 0.204 \\
Gaussian-joint EB & 0.251 & 0.249 \\
NPMLE-joint EB & 0.297 & 0.203 \\
No-information & 0.250 & 0.250 \\
\bottomrule
\end{tabular}
\end{center}
\label{tab:drug_pfs_design_2x2}
\end{table}

\clearpage
\subsection{Project STAR}
\label{app:star_more}

Figure~\ref{fig:star_app_prior_joint_rest} reports the joint-prior marginals for strata $S_2$--$S_4$, complementing the stratum-$1$ panels in Figure~\ref{fig:star_prior_dist}.

\begin{figure}[htbp]
\caption{STAR reading: joint-prior marginals for $S_2$--$S_4$}
\begin{center}
\begin{tabular}{cc}
\includegraphics[width=0.42\textwidth]{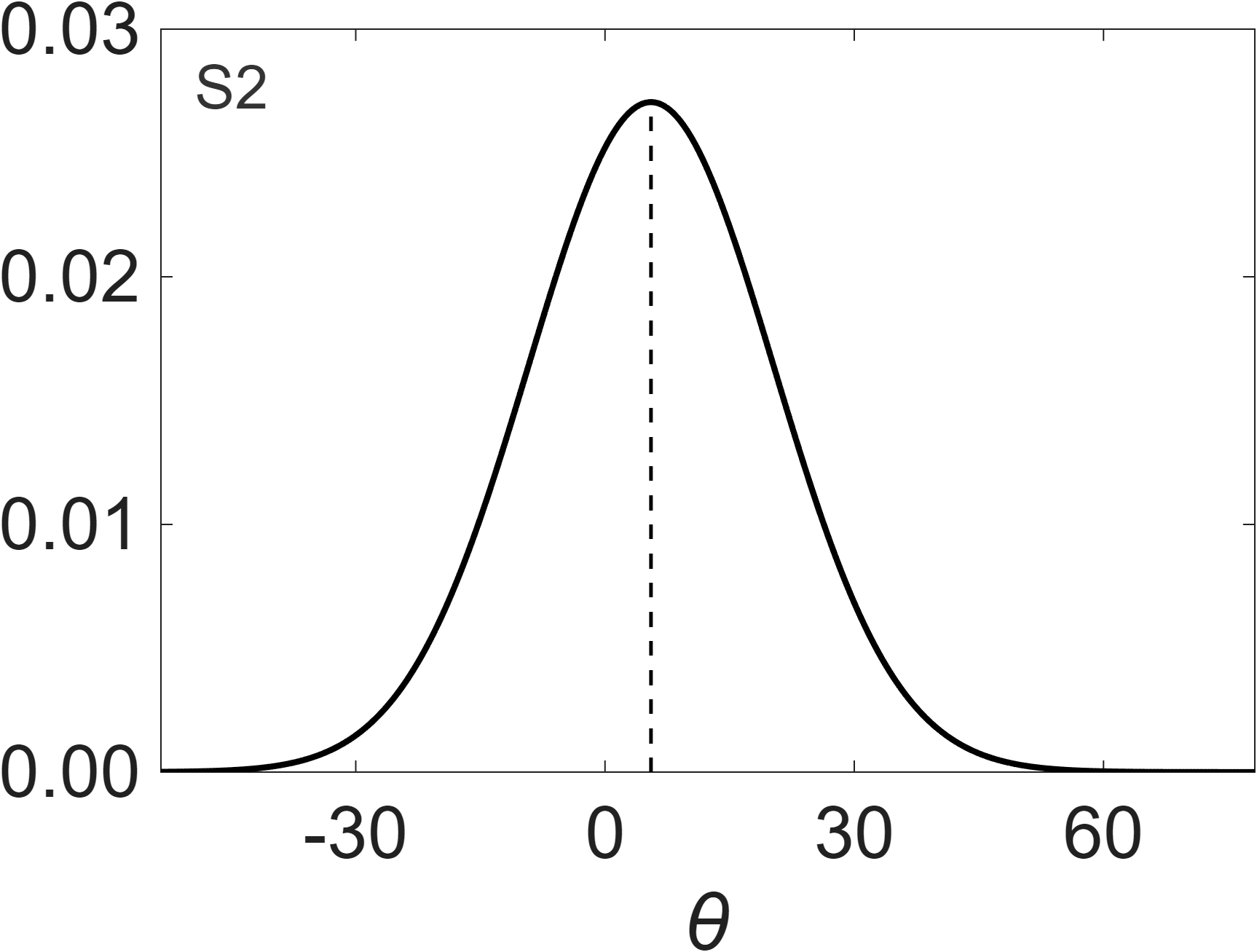} &
\includegraphics[width=0.42\textwidth]{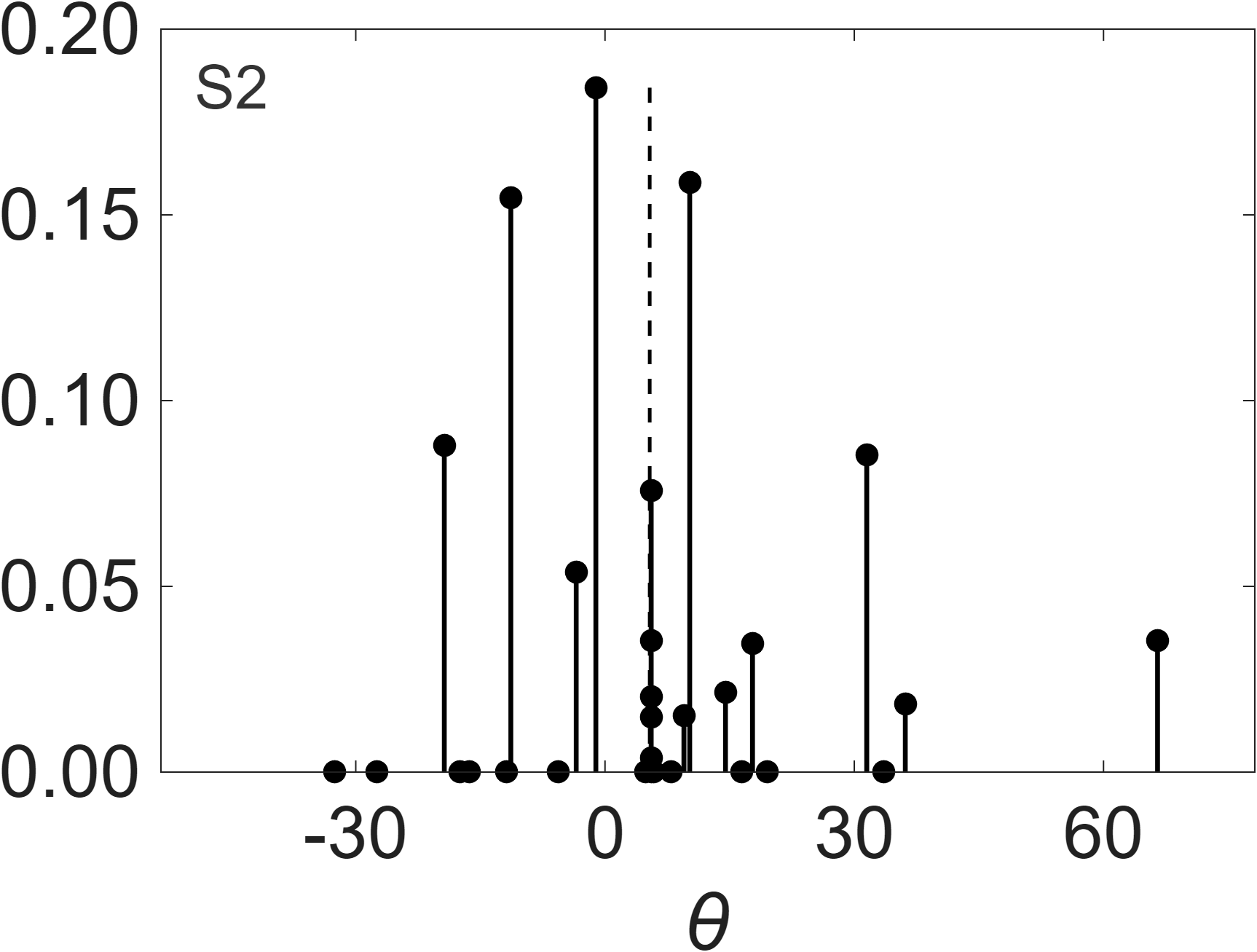} \\
\includegraphics[width=0.42\textwidth]{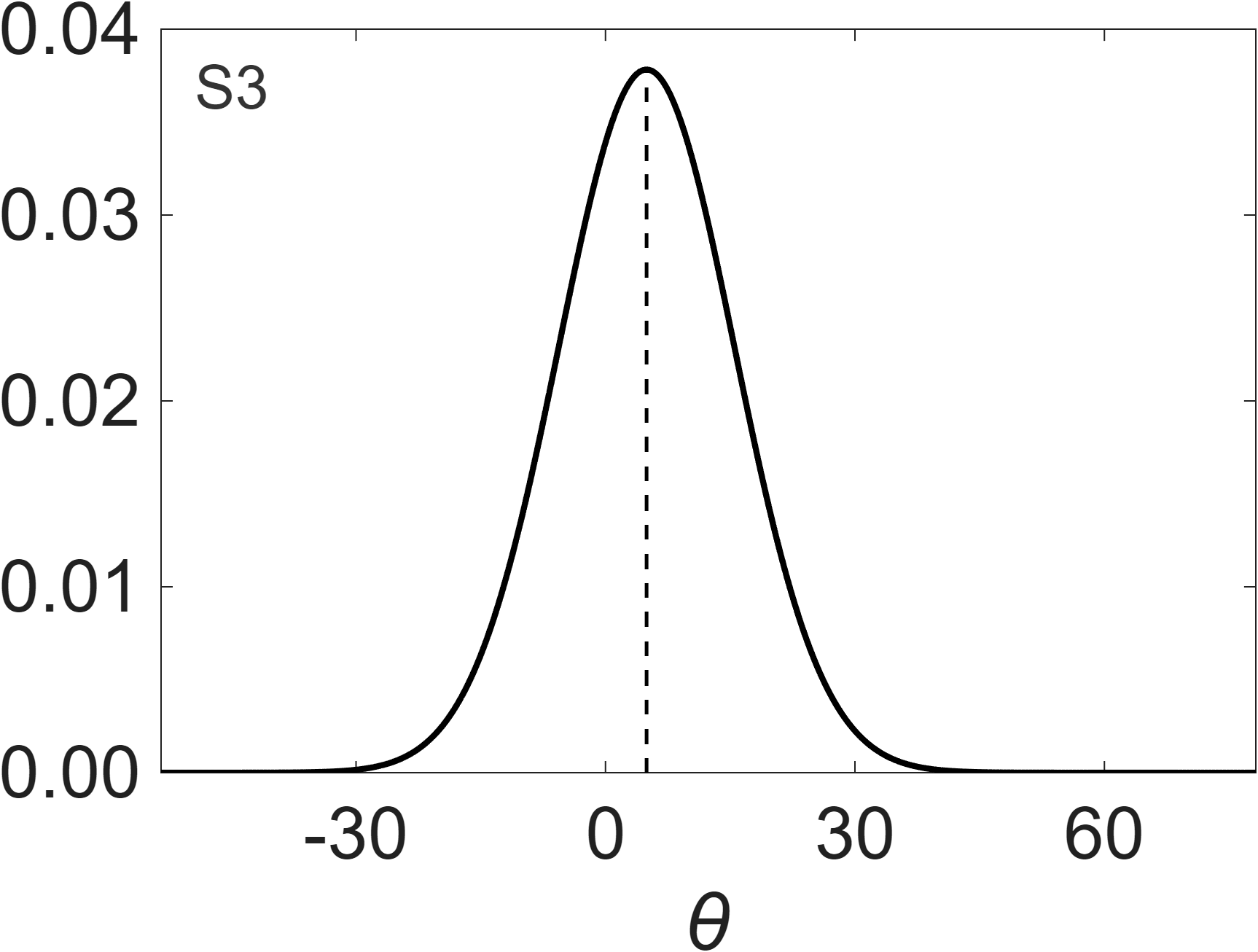} &
\includegraphics[width=0.42\textwidth]{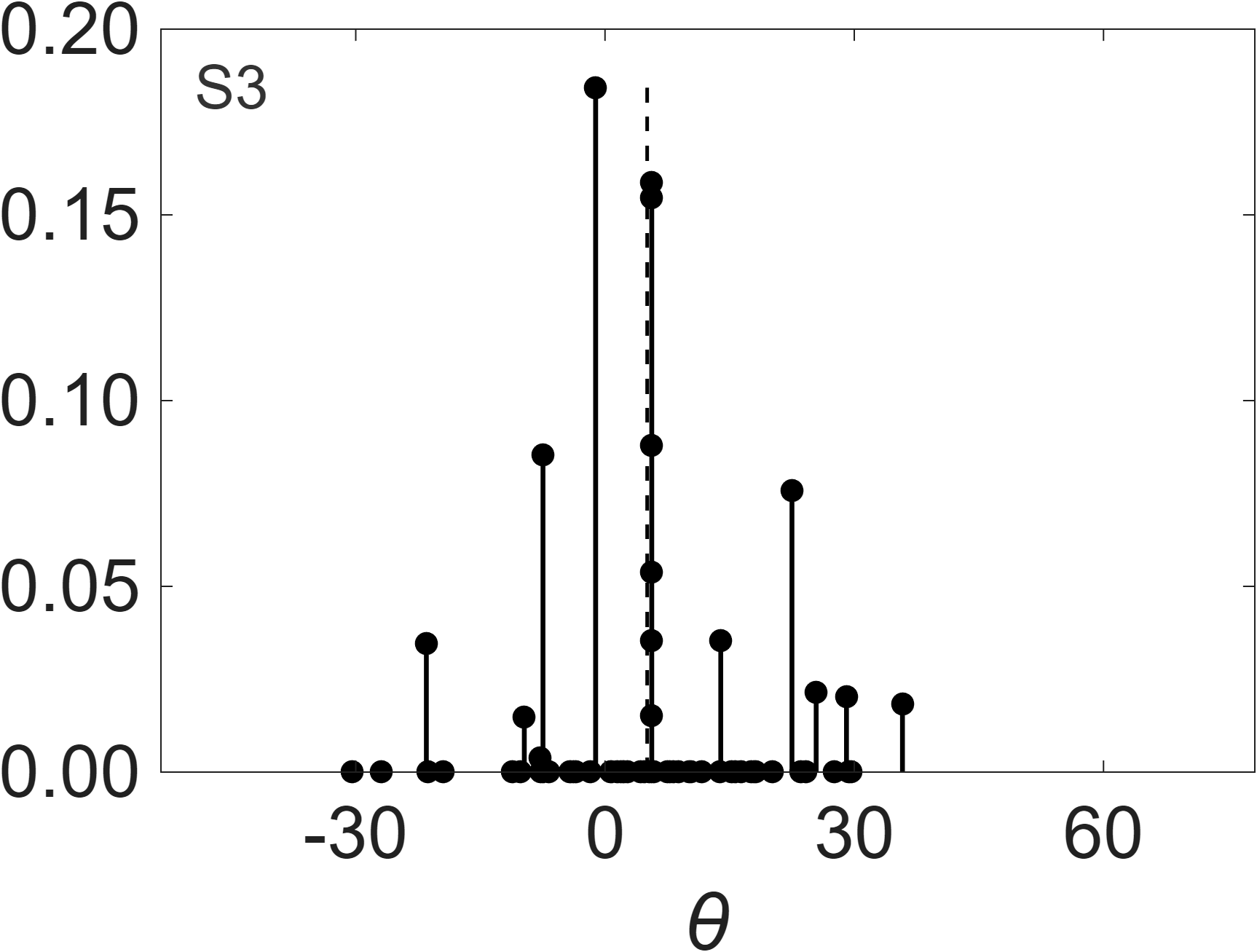} \\
\includegraphics[width=0.42\textwidth]{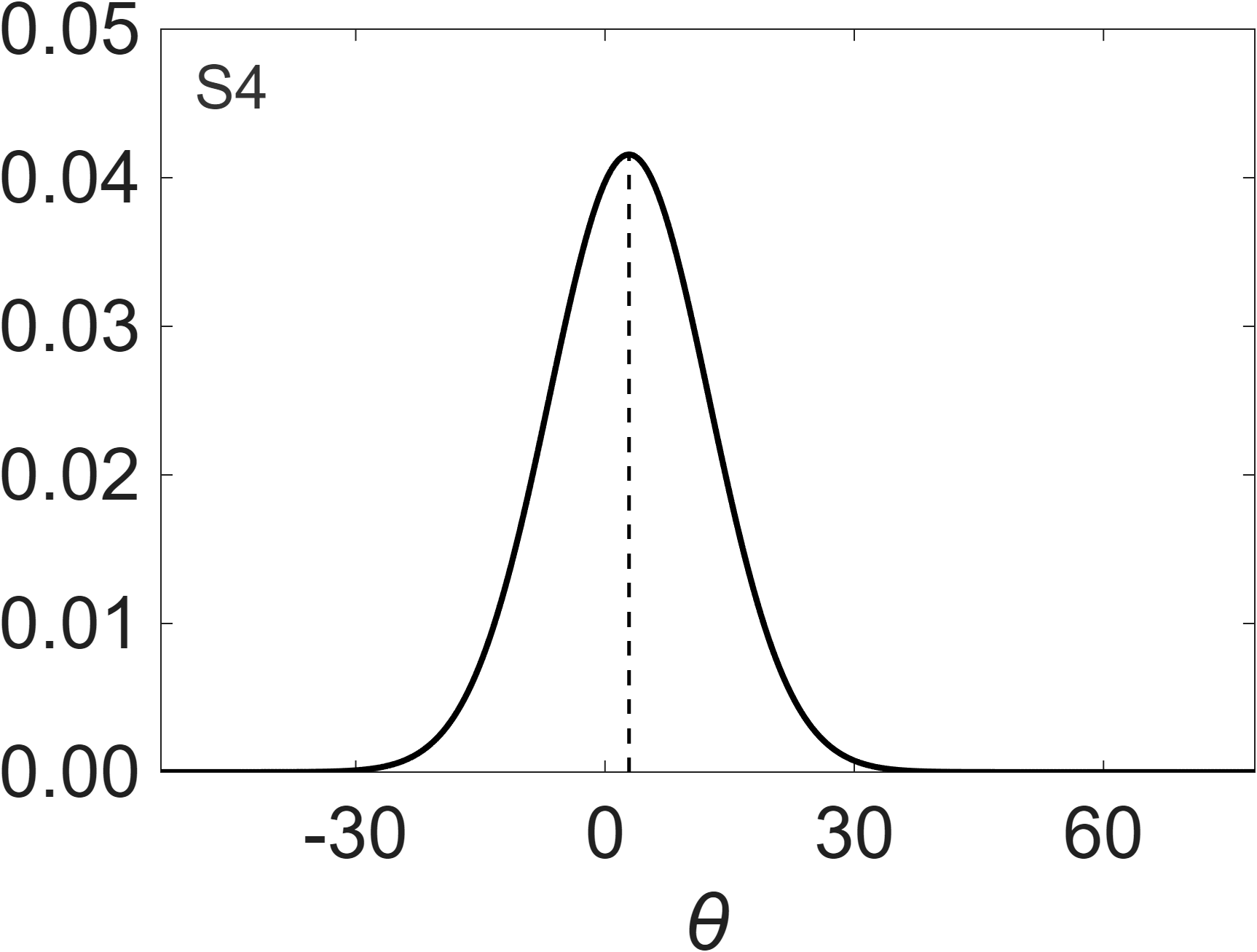} &
\includegraphics[width=0.42\textwidth]{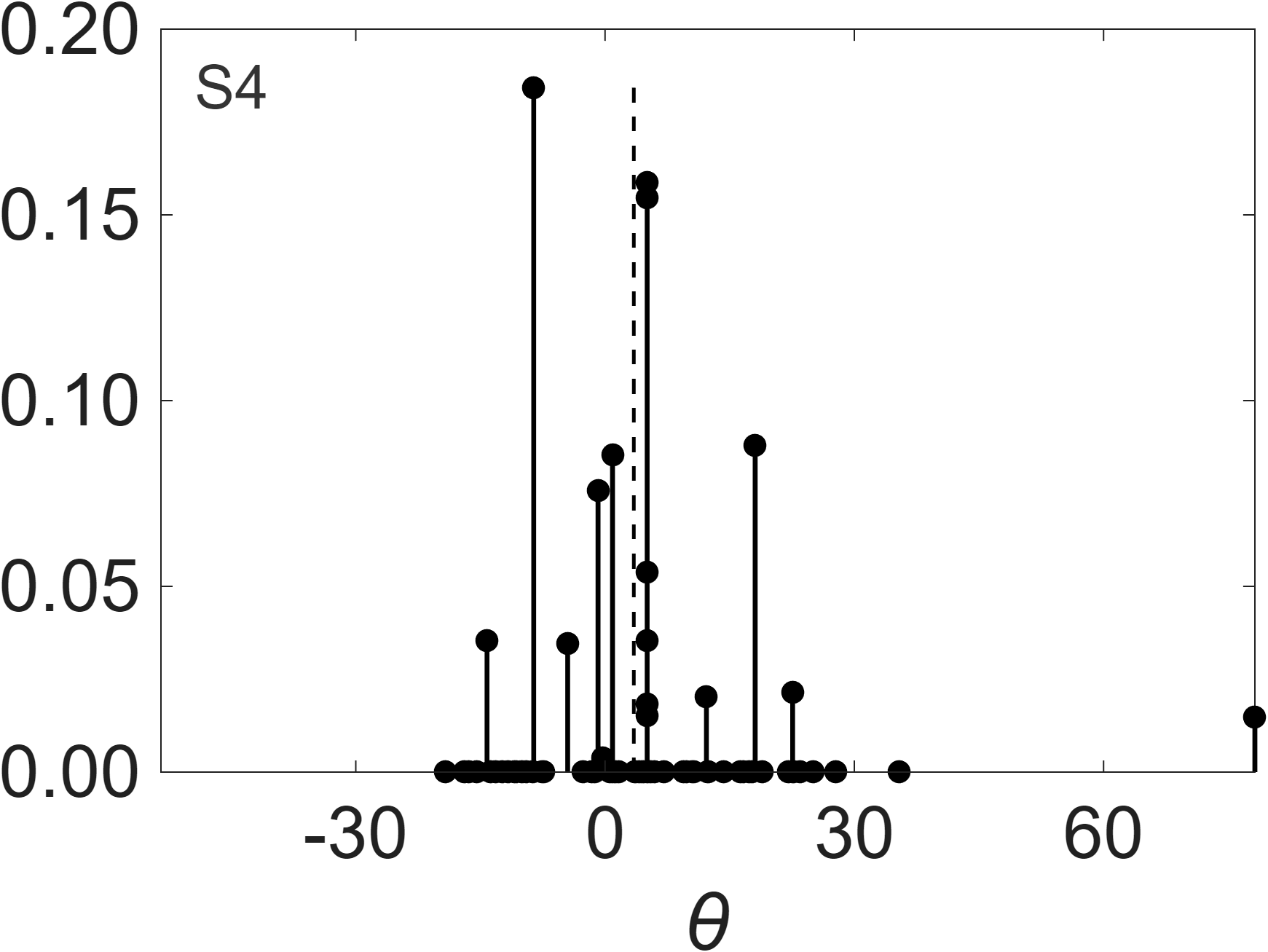}
\end{tabular}
\end{center}
\label{fig:star_app_prior_joint_rest}
{\footnotesize {\em Notes}: In each row, the left panel is Gaussian-joint and the right panel is NPMLE-joint for the same stratum. Rows correspond to $S_2$, $S_3$, and $S_4$.}
\end{figure}

Figure~\ref{fig:star_app_prior_diag} reports independent-subgroup prior counterparts of Figure~\ref{fig:star_prior_dist}.

\begin{figure}[htbp]
\caption{STAR reading: estimated prior distributions (independent-subgroup priors)}
\begin{center}
\begin{tabular}{cc}
\includegraphics[width=0.42\textwidth]{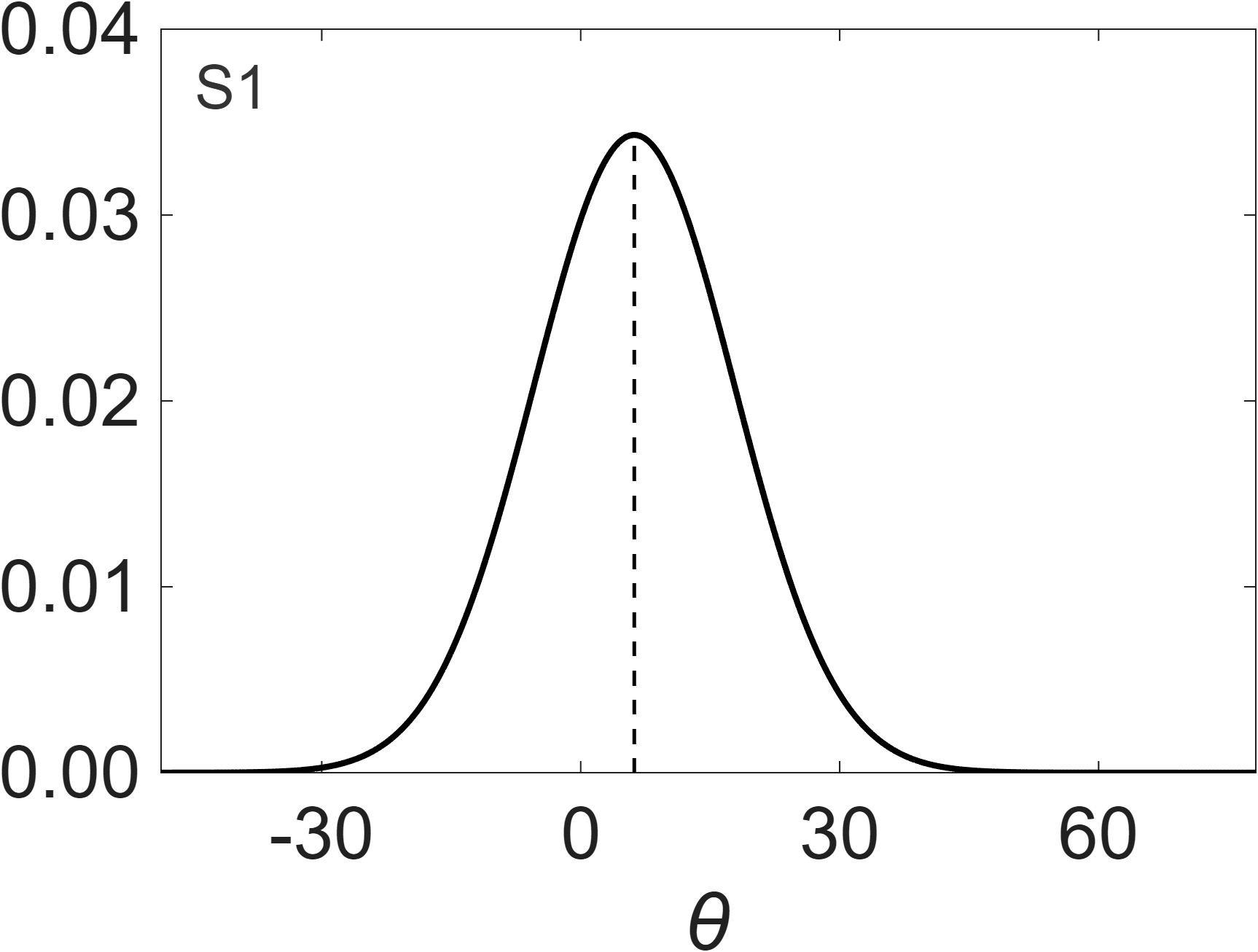} &
\includegraphics[width=0.42\textwidth]{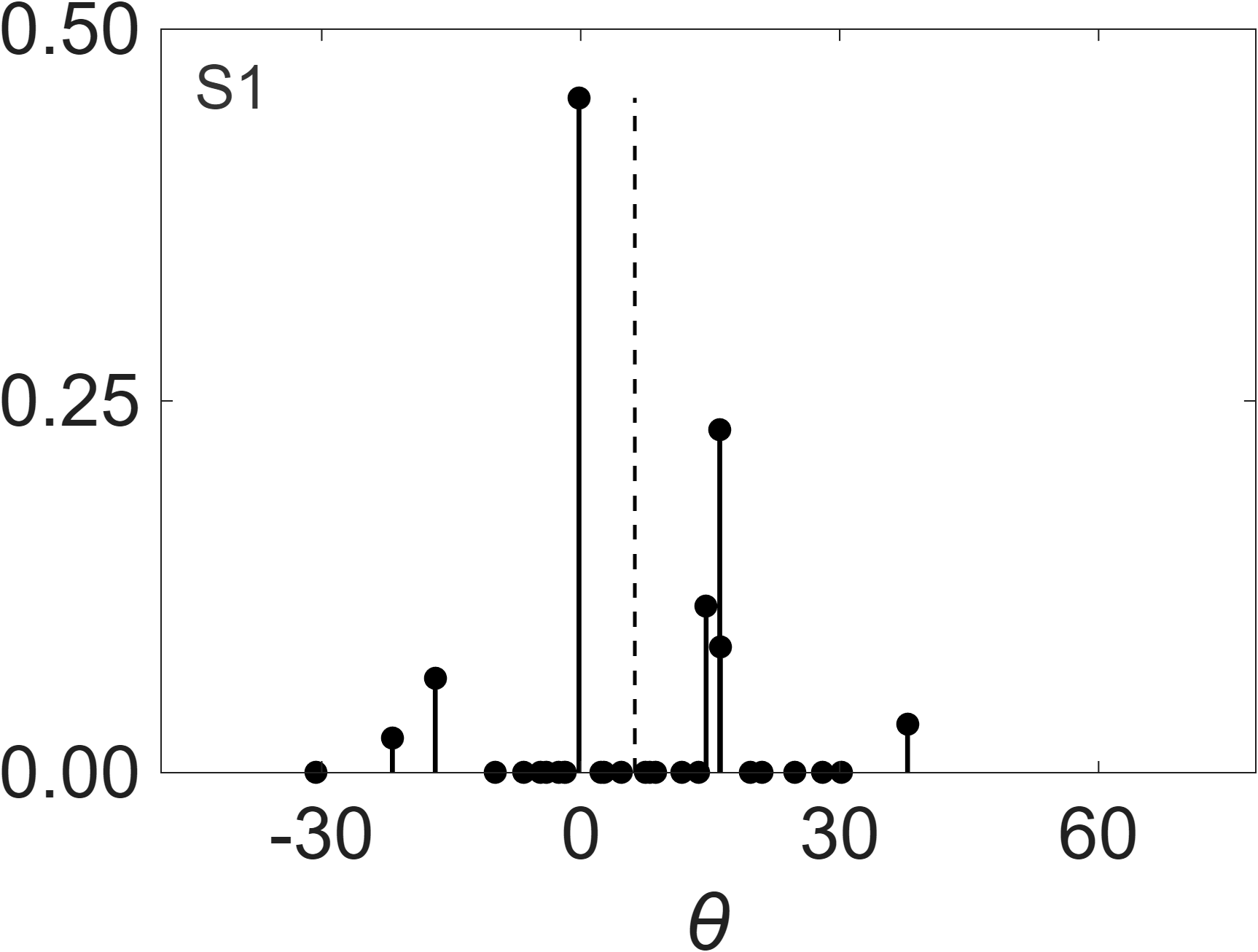} \\
\includegraphics[width=0.42\textwidth]{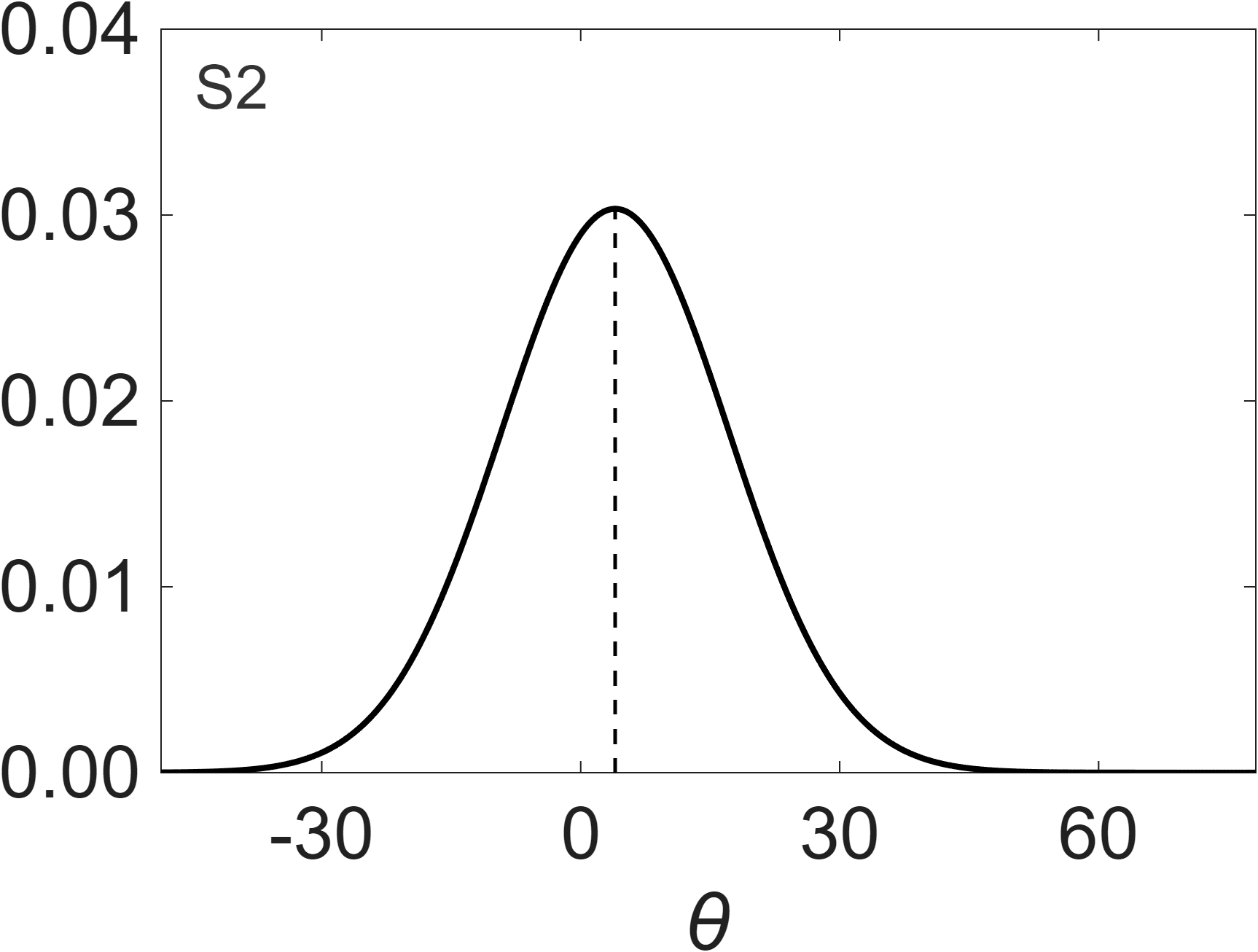} &
\includegraphics[width=0.42\textwidth]{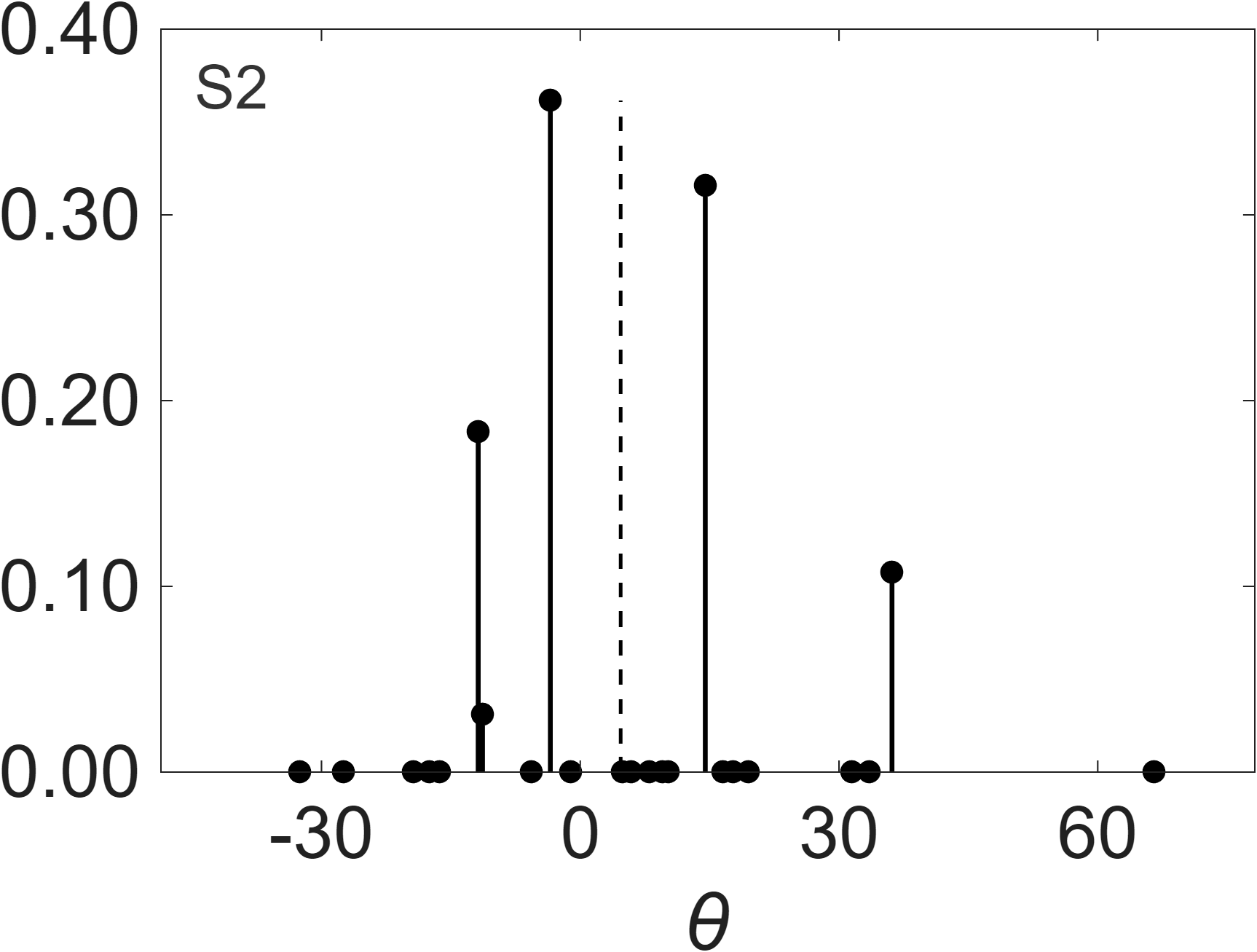} \\
\includegraphics[width=0.42\textwidth]{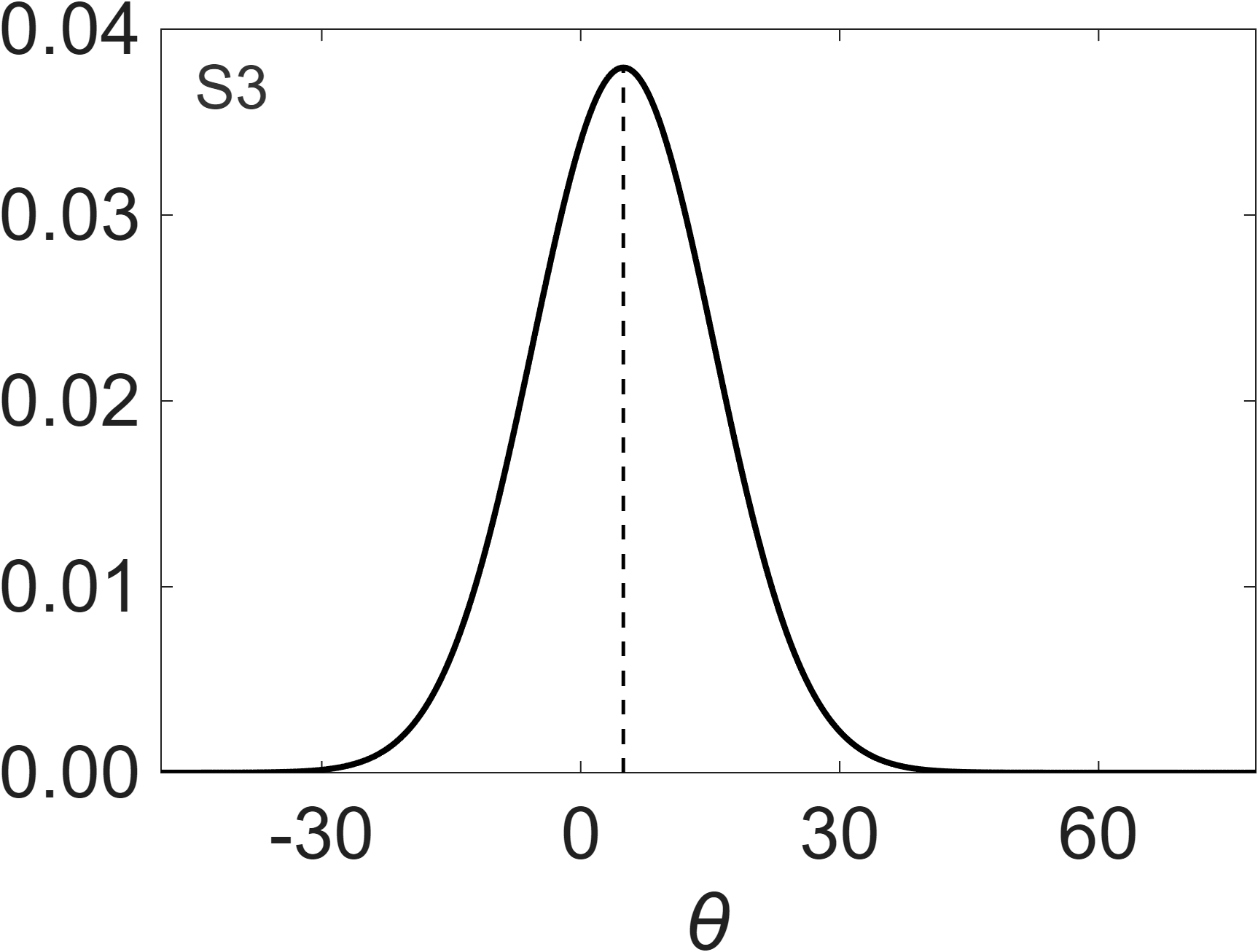} &
\includegraphics[width=0.42\textwidth]{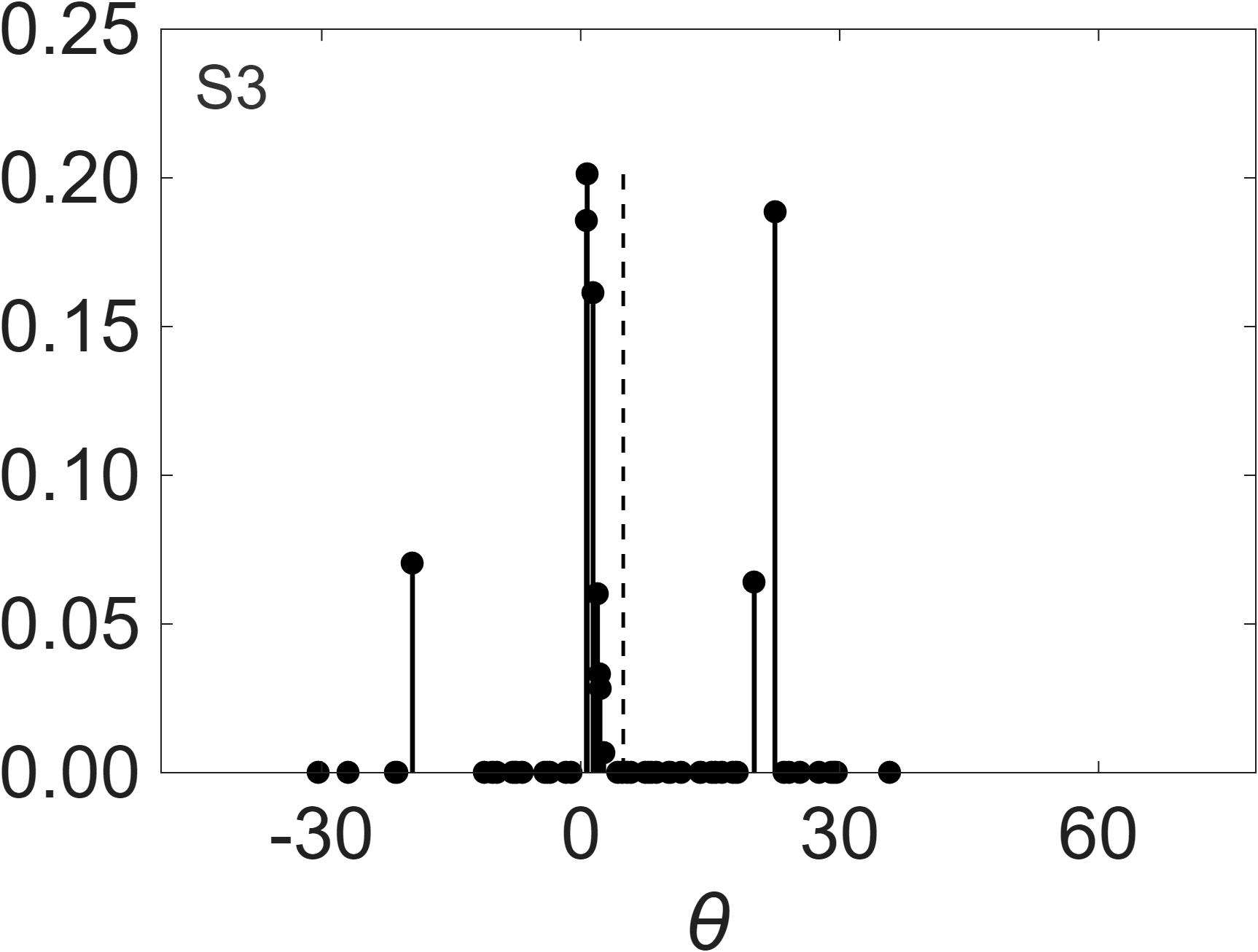} \\
\includegraphics[width=0.42\textwidth]{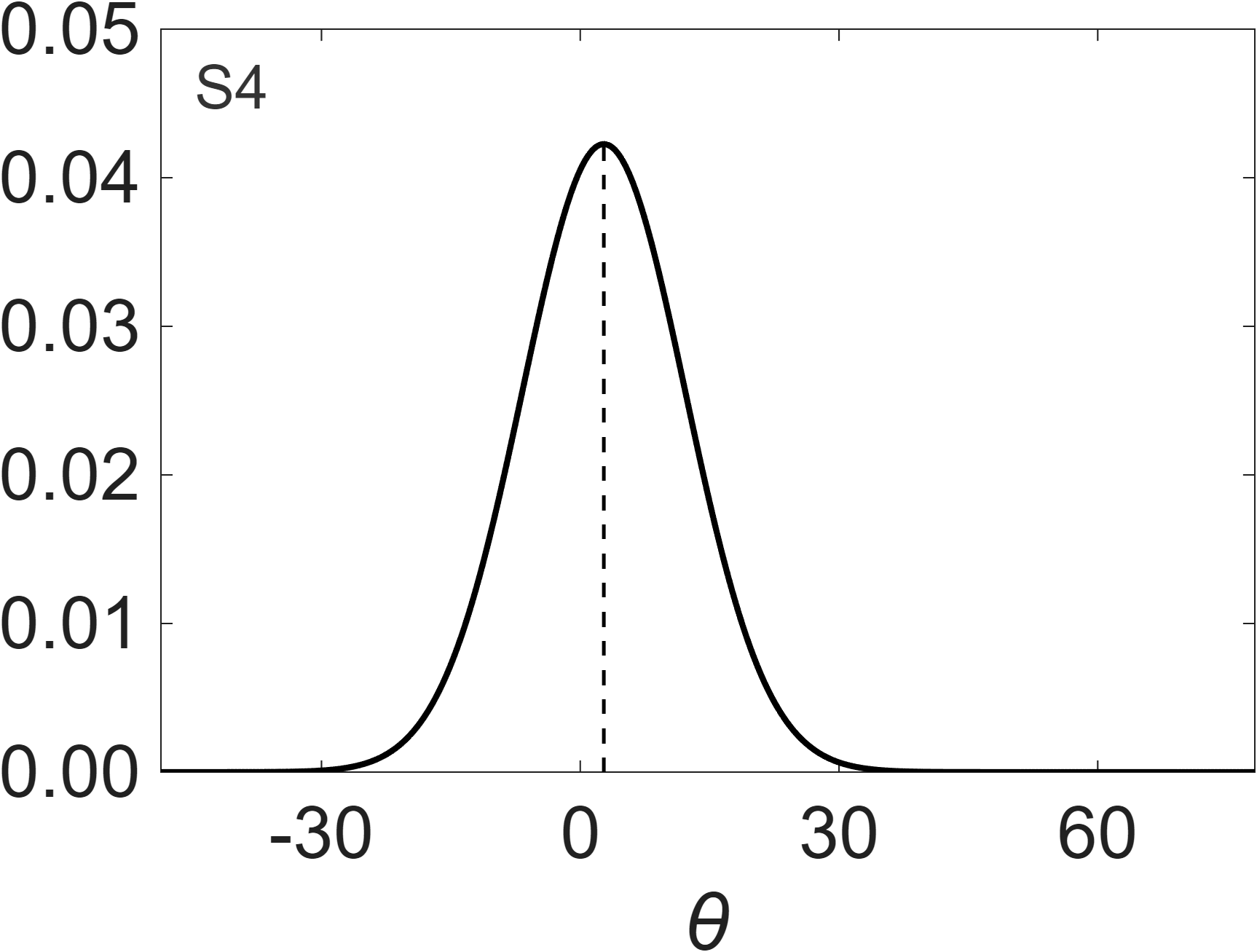} &
\includegraphics[width=0.42\textwidth]{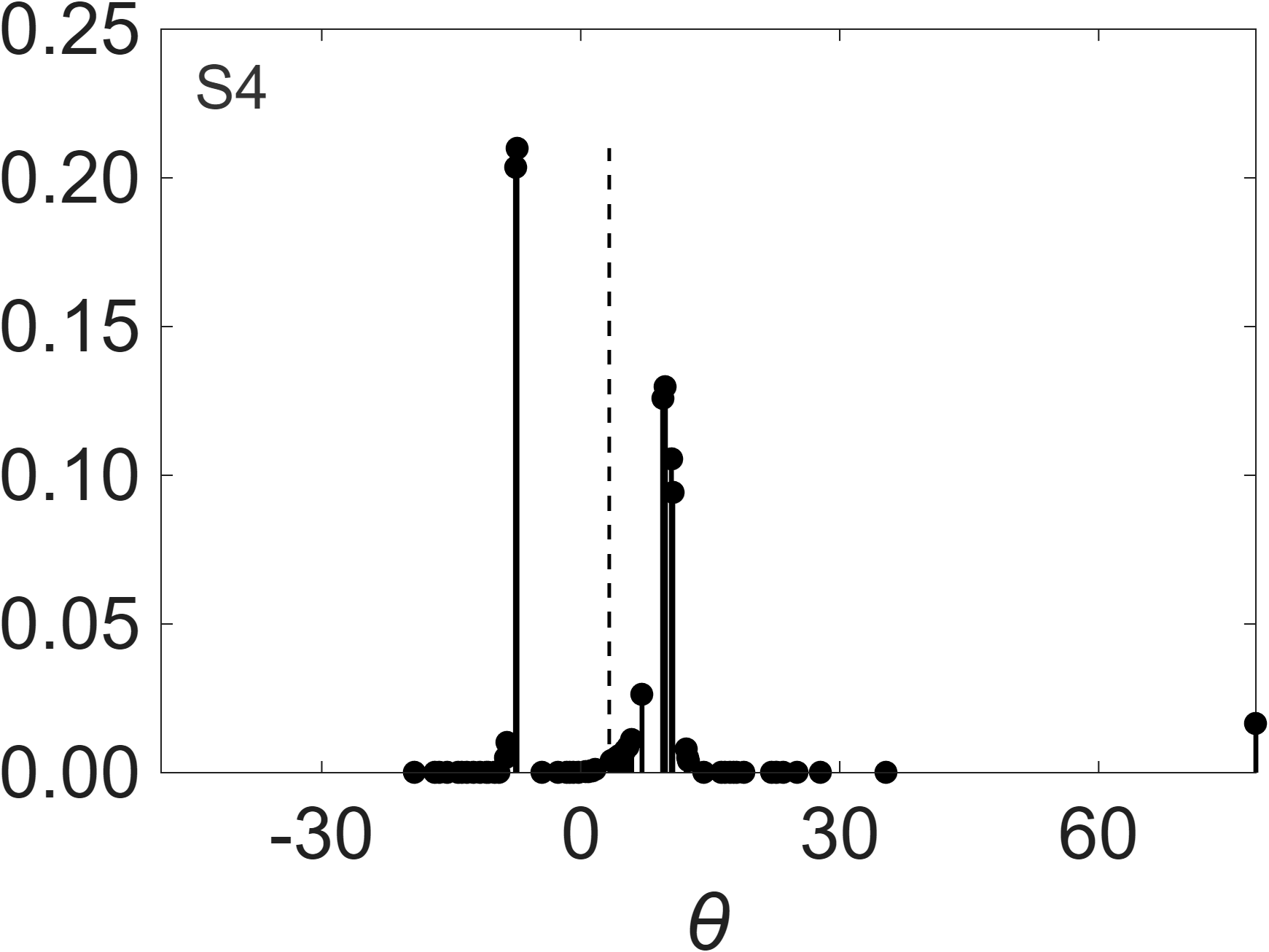}
\end{tabular}
\end{center}
\label{fig:star_app_prior_diag}
{\footnotesize {\em Notes}: In each row, the left panel is Gaussian independent-subgroup EB and the right panel is NPMLE independent-subgroup EB for the same stratum. Rows correspond to $S_1$--$S_4$.}
\end{figure}

Figure~\ref{fig:star_app_diag} reports objective-specific designs for the independent-subgroup versions of Gaussian EB and NPMLE EB. 

\begin{figure}[htbp]
\caption{STAR reading: objective-specific designs under independent-subgroup EB priors}
\begin{center}
\begin{tabular}{cc}
\includegraphics[width=0.47\textwidth]{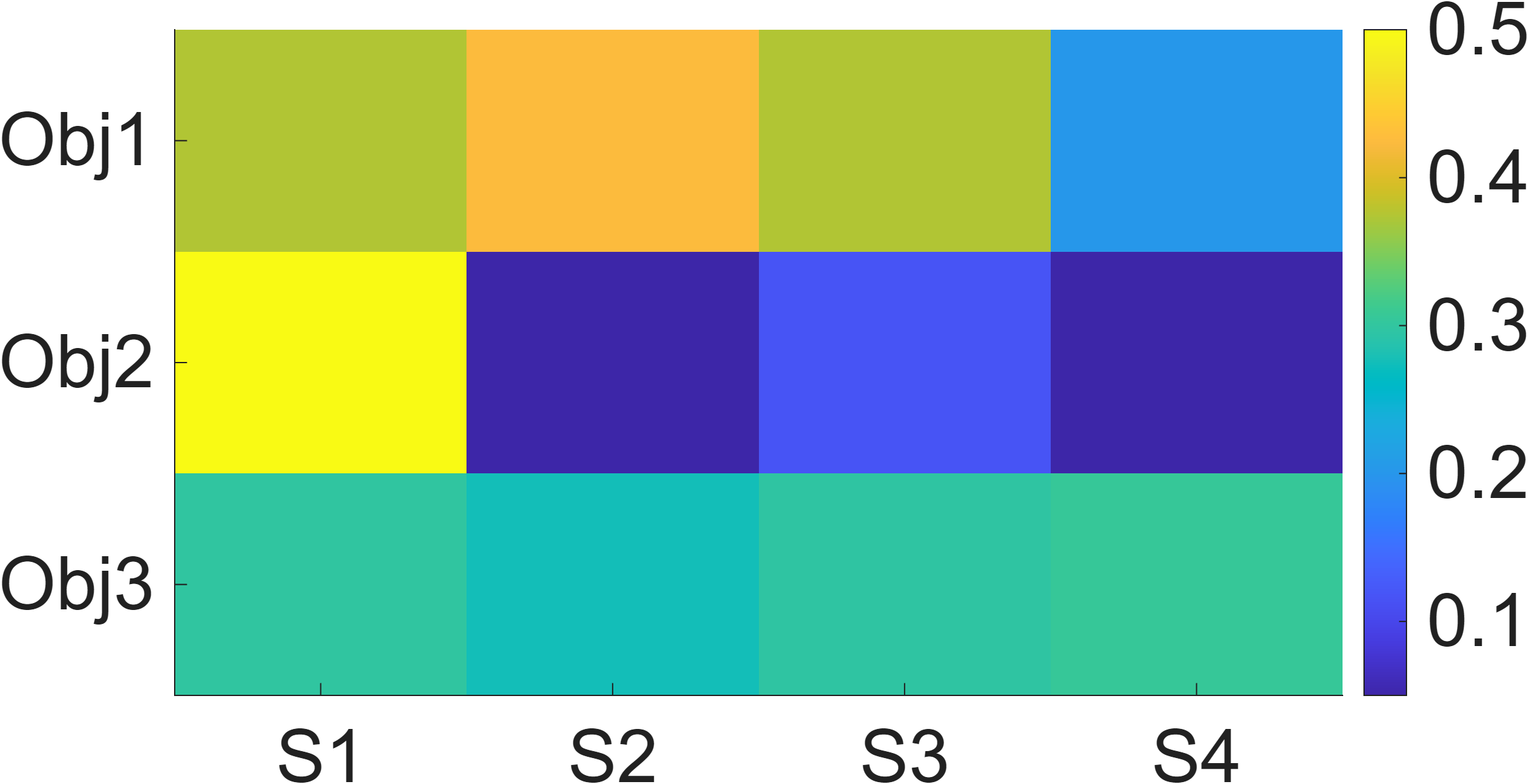} &
\includegraphics[width=0.47\textwidth]{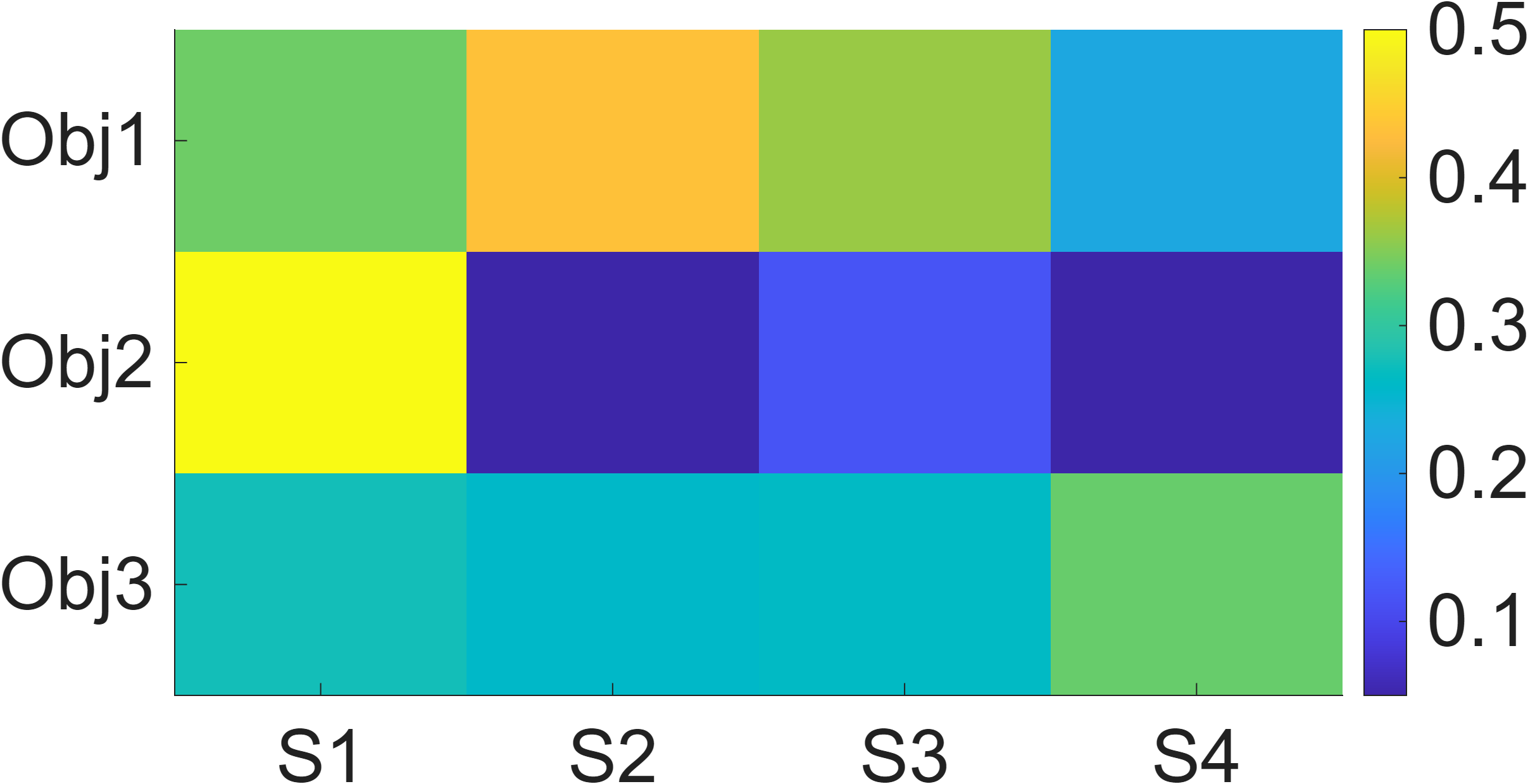}
\end{tabular}
\end{center}
\label{fig:star_app_diag}
{\footnotesize {\em Notes}: Left panel is Gaussian EB under the independent-subgroup specification; right panel is NPMLE EB under the independent-subgroup specification. Rows are Objective I--III and columns are $S_1$--$S_4$.}
\end{figure}

Figures~\ref{fig:star_app_obj1}--\ref{fig:star_app_obj3} show, for each objective, the full prior-robustness comparison across five priors: no information, Gaussian EB (independent/joint), and NPMLE EB (independent/joint).

\begin{figure}[htbp]
\caption{STAR reading: Objective I across all priors}
\begin{center}
\includegraphics[width=0.78\textwidth]{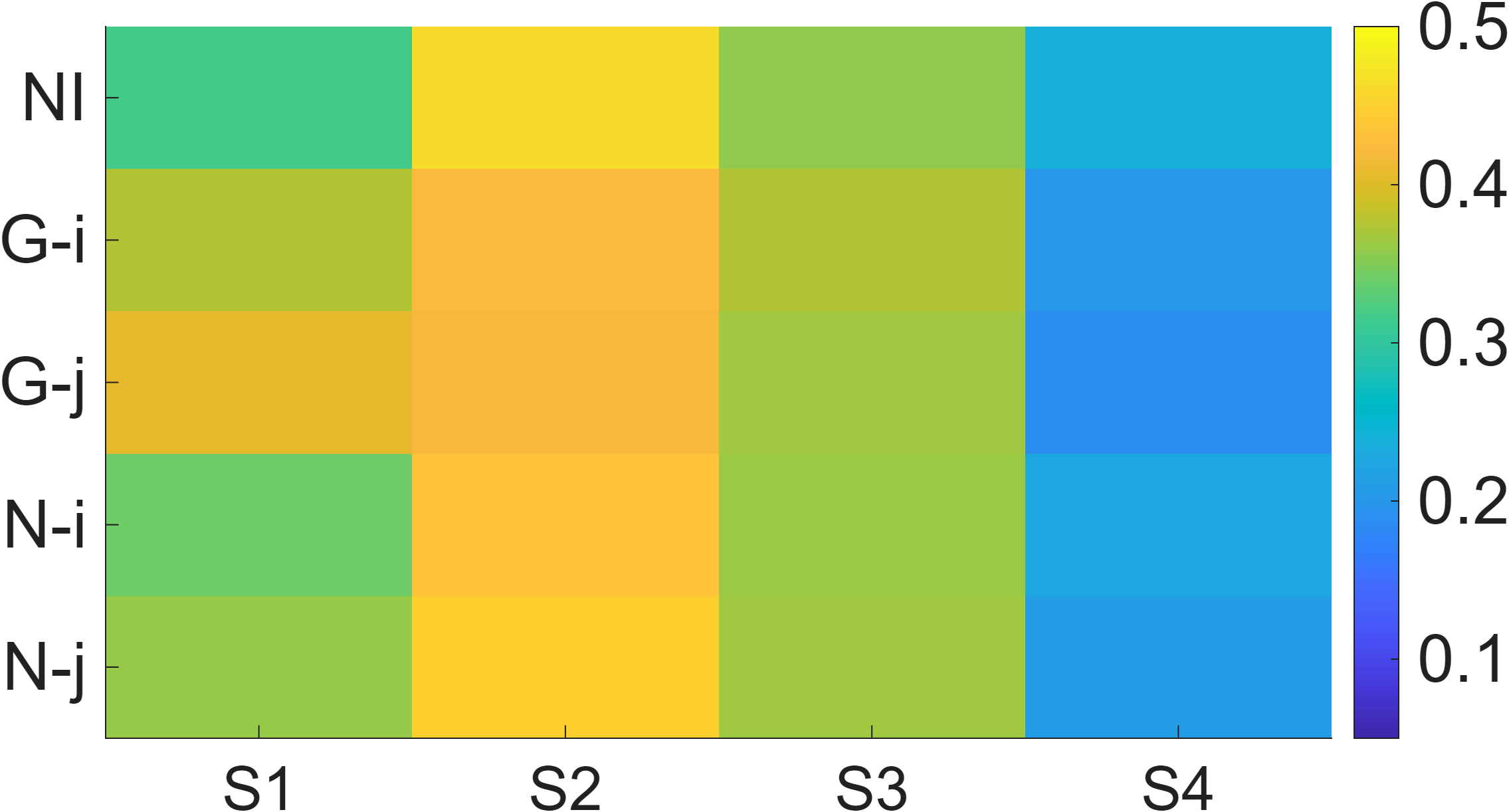}
\end{center}
\label{fig:star_app_obj1}
{\footnotesize {\em Notes}: Rows are priors (NI, G-i, G-j, N-i, N-j); columns are strata $S_1$--$S_4$. Here NI is the no-information benchmark, G-i and G-j are Gaussian EB with independent-subgroup and joint priors, and N-i and N-j are NPMLE EB with independent-subgroup and joint priors.}
\end{figure}

\begin{figure}[htbp]
\caption{STAR reading: Objective II across all priors}
\begin{center}
\includegraphics[width=0.78\textwidth]{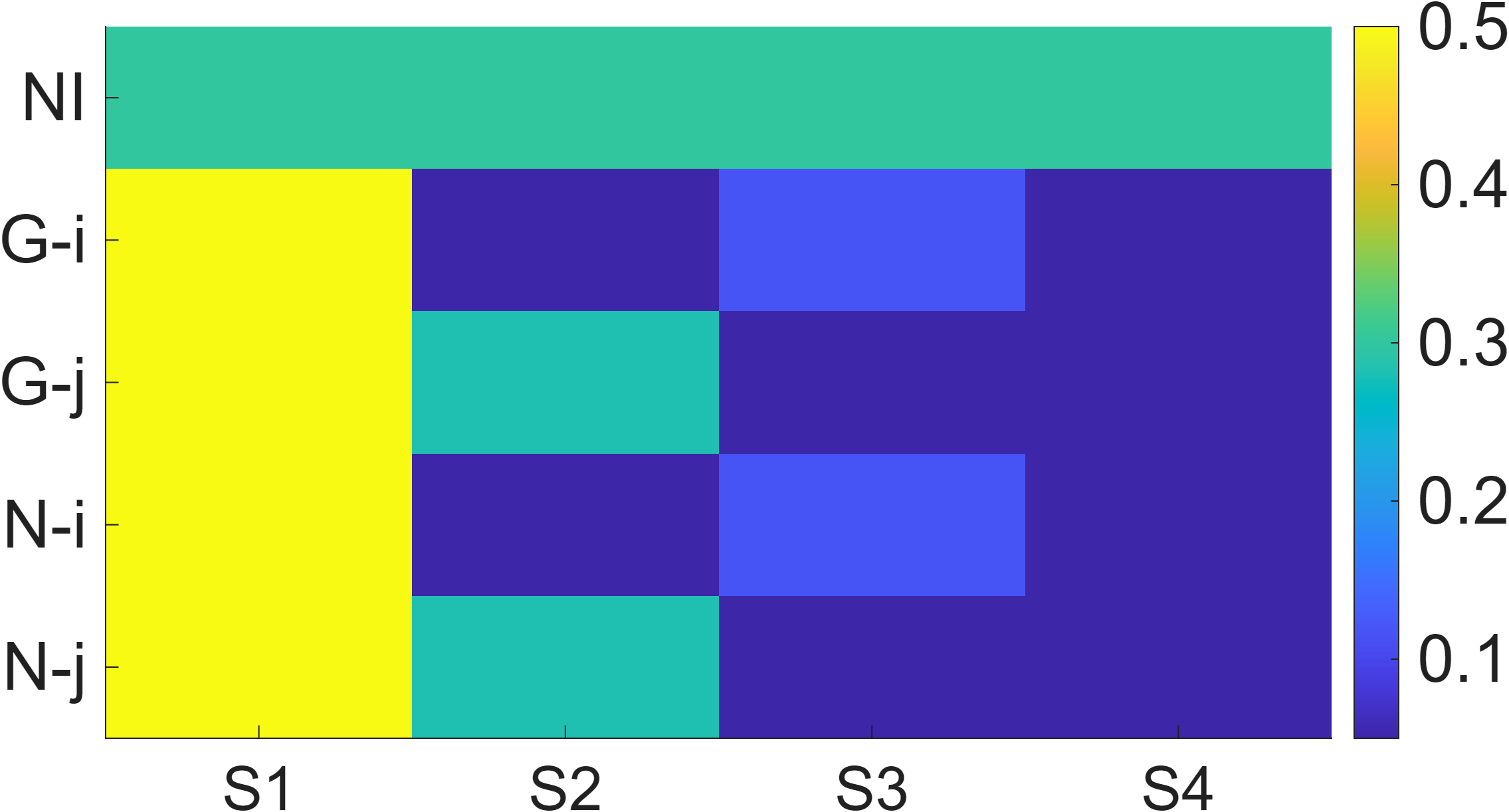}
\end{center}
\label{fig:star_app_obj2}
{\footnotesize {\em Notes}: Rows are priors (NI, G-i, G-j, N-i, N-j); columns are strata $S_1$--$S_4$. Here NI is the no-information benchmark, G-i and G-j are Gaussian EB with independent-subgroup and joint priors, and N-i and N-j are NPMLE EB with independent-subgroup and joint priors.}
\end{figure}

\begin{figure}[htbp]
\caption{STAR reading: Objective III across all priors}
\begin{center}
\includegraphics[width=0.78\textwidth]{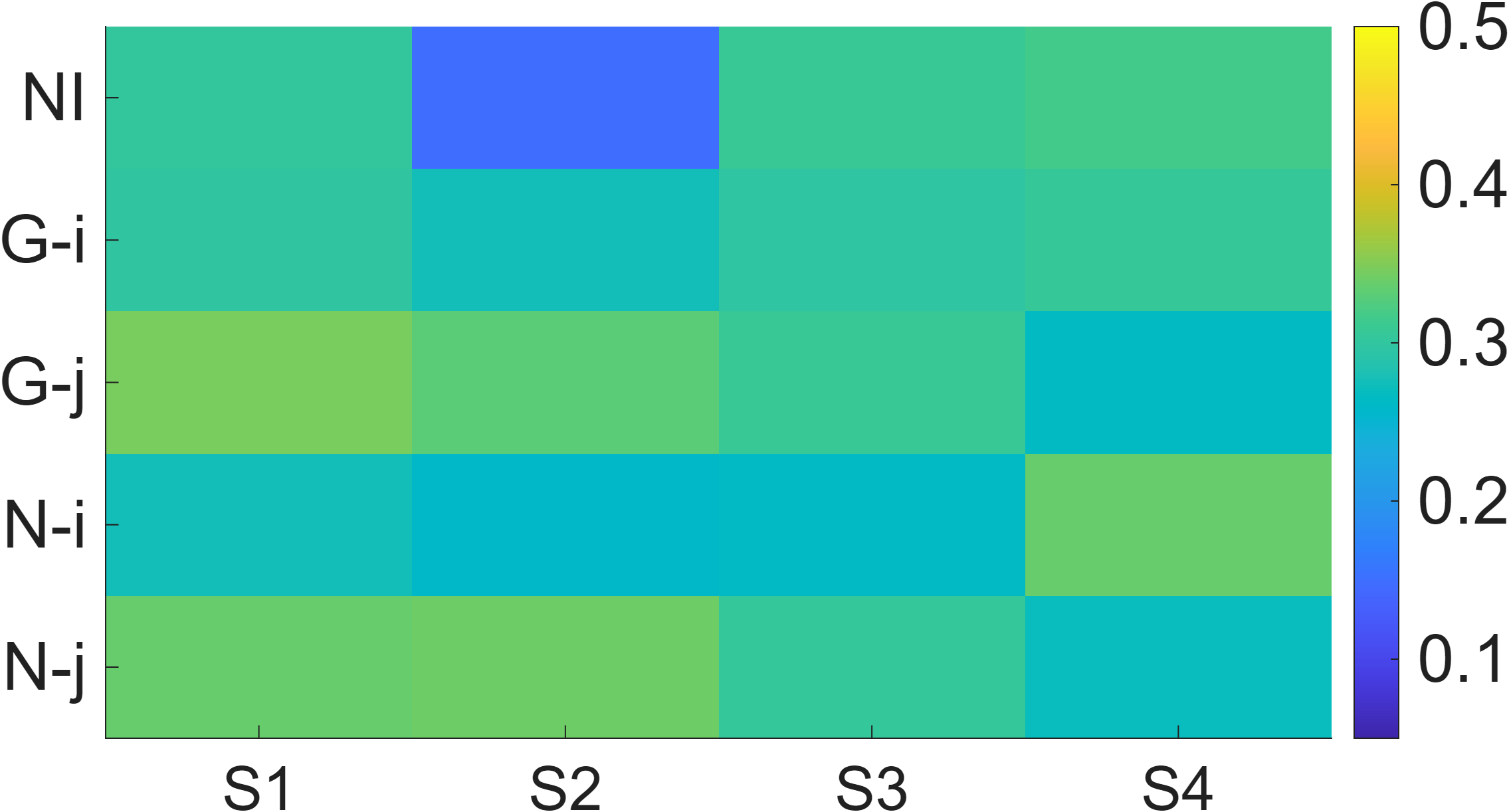}
\end{center}
\label{fig:star_app_obj3}
{\footnotesize {\em Notes}: Rows are priors (NI, G-i, G-j, N-i, N-j); columns are strata $S_1$--$S_4$. Here NI is the no-information benchmark, G-i and G-j are Gaussian EB with independent-subgroup and joint priors, and N-i and N-j are NPMLE EB with independent-subgroup and joint priors.}
\end{figure}

\end{appendix}

\end{document}